\title[Dynamic Game Semantics]
  {Dynamic Game Semantics}
\author[N. Yamada]
  {N\ls O\ls R\ls I\ls H\ls I\ls R\ls O\ns
   Y\ls A\ls M\ls A\ls D\ls A$^\dagger$\ns and\ns S\ls A\ls M\ls S\ls O\ls N\ns A\ls B\ls
   R\ls A\ls M\ls S\ls K\ls Y$^\dagger$ \\
   $^\dagger$Department of Computer Science, University of Oxford \\ Wolfson Building, Parks Rd, OX1 3QD
   \if0 Email: \texttt{norihiro.yamada@cs.ox.ac.uk} \\ Email: \texttt{samson.abramsky@cs.ox.ac.uk} \fi }
\date{October 2018}
\newtheorem{definition}{Definition}[section]
\newtheorem{example}{Example}[section]
\newtheorem{theorem}{Theorem}[section]
\newtheorem{lemma}{Lemma}[section]
\newtheorem{proposition}{Proposition}[section]
\newtheorem{corollary}{Corollary}[section]
\newtheorem{notation}{Notation}
\begin{document}

\label{firstpage}
\maketitle

\begin{abstract}
The present paper gives a mathematical, in particular, \emph{syntax-independent}, formulation of \emph{intensionality} and \emph{dynamics} of computation in terms of games and strategies.
Specifically, we give a game semantics for a higher-order programming language that distinguishes programs with the same value yet different algorithms (or intensionality), equipped with the \emph{hiding operation} on strategies that precisely corresponds to the (small-step) operational semantics (or dynamics) of the language.
Categorically, our games and strategies give rise to a \emph{cartesian closed bicategory}, and our game semantics forms an instance of a generalization of the standard interpretation of functional programming languages in cartesian closed categories.
This work is intended to be the first step towards a mathematical (both categorical and game-semantic) foundation of intensional and dynamic aspects of logic and computation; our approach should be applicable to a wide range of logics and computations.
\end{abstract}


\ifprodtf \newpage \else \vspace*{-1\baselineskip}\fi

\section{Introduction}
\label{Introduction}
In \cite{girard1989proofs}, J.-Y. Girard mentions the dichotomy between the \emph{static} and the \emph{dynamic} viewpoints in logic and computation; the former identifies terms (i.e., proofs or programs) with their \emph{denotations} (i.e., results of their computations in an ideal sense), while the latter focuses on their \emph{senses} (i.e., algorithms or intensionality) and dynamics (i.e., proof-normalization or reduction).
This distinction has been certainly reflected in the two mutually complementary semantics of programming languages: \emph{denotational} and \emph{operational} ones \cite{amadio1998domains,winskel1993formal,gunter1992semantics}.
He points out that a \emph{mathematical} formulation of the former has been well-developed, based on Scott's beautiful \emph{domain theory} \cite{scott1976data,gierz2003continuous,abramsky1994domain}, but it is not the case for the latter; the treatment of senses has been based on ad-hoc \emph{syntactic manipulation}.
He then emphasizes the importance of \emph{mathematics of senses}: 
\begin{quote}
The establishment  of  a  truly  operational  semantics  of  algorithms  is  perhaps  the  most important problem in computer science \cite{girard1989proofs}.
\end{quote}

The present work addresses this problem; specifically, it gives an interpretation $\llbracket \_ \rrbracket_{\mathcal{D}}$ of a programming language $\mathcal{L}$ with a small-step operational semantics $\to$ and a syntax-independent operation $\mathcal{H}$ that satisfy the following \emph{\bfseries dynamic correspondence property (DCP)}: $\mathsf{M_1 \to M_2}$ if and only if (a.k.a. iff) $\llbracket \mathsf{M_1} \rrbracket_{\mathcal{D}} \neq \llbracket \mathsf{M_2} \rrbracket_{\mathcal{D}}$ and $\mathcal{H}(\llbracket \mathsf{M_1} \rrbracket_{\mathcal{D}}) = \llbracket \mathsf{M_2} \rrbracket_{\mathcal{D}}$ for any programs $\mathsf{M_1}$ and $\mathsf{M_2}$ of $\mathcal{L}$.
\if0
and the diagram 
\begin{diagram}
\mathsf{M_1} & \to & \mathsf{M_2} \\
\dDotsto^{\llbracket \_ \rrbracket_{\mathcal{D}}} & & \dDotsto_{\llbracket \_ \rrbracket_{\mathcal{D}}} \\
\llbracket \mathsf{M_1} \rrbracket_{\mathcal{D}} & \blacktriangleright & \llbracket \mathsf{M_2} \rrbracket_{\mathcal{D}} 
\end{diagram}
commutes for any programs $\mathsf{M_1}$ and $\mathsf{M_2}$ of $\mathscr{L}$.
\fi
Note that the `only if' and `if' directions correspond respectively to certain soundness and completeness properties of the interpretation $\mathcal{H}$ of $\rightarrow$. 
Note also that the interpretation $\llbracket \_ \rrbracket_{\mathcal{D}}$ is \emph{finer} than the usual (sound) denotational semantics because $\mathsf{M_1 \to M_2}$ implies $\llbracket \mathsf{M_1} \rrbracket_{\mathcal{D}} \neq \llbracket \mathsf{M_2} \rrbracket_{\mathcal{D}}$.
Thus, the interpretation $\llbracket \_ \rrbracket_{\mathcal{D}}$ and the operation $\mathcal{H}$ capture \emph{intensionality} and \emph{dynamics} of computation, respectively.

Although our framework is intended to be a \emph{general} approach, being applicable to a wide range of logics and computations, as the first step, we focus on a finite fragment of the programming language \emph{PCF} \cite{scott1993type,plotkin1977lcf} customized for our aim.

\subsection{Game Semantics}
Our approach is based on \emph{game semantics} \cite{abramsky1997semantics,hyland1997game}, a particular kind of denotational semantics of logic and computation, in which formulas (or types) and proofs (or programs) are interpreted  as \emph{games} and \emph{strategies}, respectively. 

We employ game semantics for its conceptual naturality and mathematical precision, which has been demonstrated by various \emph{full completeness} and \emph{full abstraction} results \cite{curien2007definability} in the literature, leading to a conceptually and mathematically deeper understanding of logic and computation.  
Also, game semantics is very flexible: It has modeled a wide range of formal systems and programming languages by simply varying constraints on strategies \cite{abramsky1999game}, which enables us to compare and relate various concepts \emph{syntax-independently}.
We hope that these advantages of game semantics are true also for intensionality and dynamics of logic and computation. 

A \emph{game}, roughly, is a certain kind of a rooted forest whose branches represent possible `developments' or \emph{(valid) positions} of a `game in the usual sense' (such as chess, poker, etc.).
\emph{Moves} of a game are nodes of the game, where some moves are distinguished and called \emph{initial}; only initial moves can be the first element (or occurrence) of a position of the game. 
\emph{Plays} of a game are (finitely or infinitely) increasing sequences $(\boldsymbol{\epsilon}, m_1, m_1 m_2, \dots)$ of positions of the game, where $\boldsymbol{\epsilon}$ is the \emph{empty sequence}. 
For our purpose, it suffices to focus on rather standard \emph{sequential} (as opposed to \emph{concurrent} \cite{abramsky1999concurrent}), \emph{unpolarized} (as opposed to \emph{polarized }\cite{laurent2004polarized}) games played by two participants, \emph{Player}, representing a `computational agent', and \emph{Opponent}, representing an `environment', in each of which Opponent always starts a play (i.e., unpolarized), and then they alternately and separately perform moves (i.e., sequential) allowed by the rules of the game.
Strictly speaking, a position of each game is not just a finite sequence of moves: Each occurrence $m$ of Opponent's or O- (resp. Player's or P-) non-initial move in a position is assigned or \emph{points to} a previous occurrence $m'$ of P- (resp. O-) move in the position, representing that $m$ is performed specifically as a response to $m'$. 

A \emph{strategy} on a game, on the other hand, is what tells Player which move (together with a pointer) she should make at each of her turns in the game.
Hence, a game semantics $\llbracket \_ \rrbracket_{\mathcal{G}}$ of a programming language $\mathcal{L}$ interprets a type $\mathsf{A}$ of $\mathcal{L}$ as a game $\llbracket \mathsf{A} \rrbracket_{\mathcal{G}}$ that specifies possible plays between Player and Opponent, and a term $\mathsf{M : A}$\footnote{For simplicity, here we focus on \emph{closed} terms, i.e., ones with the \emph{empty context}.} of $\mathcal{L}$ as a strategy $\llbracket \mathsf{M} \rrbracket_{\mathcal{G}}$ on the game $\llbracket \mathsf{A} \rrbracket_{\mathcal{G}}$ that describes for Player how to play on $\llbracket \mathsf{A} \rrbracket_{\mathcal{G}}$; an execution of the term $\mathsf{M}$ is then modeled as a play of the game $\llbracket \mathsf{A} \rrbracket_{\mathcal{G}}$ in which Player follows $\llbracket \mathsf{M} \rrbracket_{\mathcal{G}}$.

Let us consider a simple example. 
The simplest game is the \emph{terminal game} $T$, which has no moves, and thus it has only the trivial position $\boldsymbol{\epsilon}$.

As another example, consider the \emph{natural number game} $N$, which is the following rooted tree (which is infinite in width):
\begin{diagram}
& & q & & \\
& \ldTo(2, 2) \ldTo(1, 2) & \dTo & \rdTo(1, 2) \ \dots & \\
0 & 1 & 2 & 3 & \dots
\end{diagram}
in which a play starts with Opponent's question $q$ (`What is your number?') and ends with Player's answer $n \in \mathbb{N}$ (`My number is $n$!'), where $\mathbb{N}$ is the set of all natural numbers, and $n$ points to $q$ (though this pointer is omitted in the above diagram). A strategy $\underline{10}$ on $N$, for instance, that corresponds to $10 \in \mathbb{N}$ can be represented by the map $q \mapsto 10$ equipped with a pointer from $10$ to $q$ (though it is the only choice).
In the following, pointers of most strategies are obvious, and thus we often omit them.

As yet another example, consider the game $N \multimap N$ of \emph{linear functions} \cite{girard1987linear} (also written informally $N_{[0]} \multimap N_{[1]}$) on natural numbers, whose typical maximal position is $q_{[1]} q_{[0]} n_{[0]} m_{[1]}$, where $n, m \in \mathbb{N}$, and $(\_)_{[i]}$ for $i = 0, 1$ are arbitrary, unspecified `tags' to distinguish the two copies of $N$ (in the rest of the paper, we employ a similar notation for three or more copies of a game in the obvious manner too), or diagrammatically\footnote{The diagram is depicted as above only to clarify which component game each move belongs to; it should be read just as a finite sequence, namely, $q_{[1]} q_{[0]} n_{[0]} m_{[1]}$, equipped with the pointers represented by the arrows in the diagram.}:
\begin{center}
\begin{tabular}{ccc}
$N_{[0]}$ & $\multimap$ & $N_{[1]}$ \\ \hline 
&&\tikzmark{cmultimap1} $q_{[1]}$ \tikzmark{cmultimap3} \\
\tikzmark{cmultimap2} $q_{[0]}$ \tikzmark{dmultimap1}&& \\
\tikzmark{dmultimap2} $n_{[0]}$&& \\
&&$m_{[1]}$ \tikzmark{dmultimap3}
\end{tabular}
\begin{tikzpicture}[overlay, remember picture, yshift=.25\baselineskip]
\draw [->] ({pic cs:dmultimap1}) to ({pic cs:cmultimap1});
\draw [->] ({pic cs:dmultimap2}) [bend left] to ({pic cs:cmultimap2});
\draw [->] ({pic cs:dmultimap3}) [bend right] to ({pic cs:cmultimap3});
\end{tikzpicture}
\end{center}
which can be read as follows:
\begin{enumerate}
\item Opponent's question $q_{[1]}$ for an output (`What is your output?');
\item Player's question $q_{[0]}$ for an input (`Wait, what is your input?);
\item Opponent's answer, say, $n_{[0]}$ to $q_{[0]}$ (`OK, here is an input $n$.');
\item Player's answer, say, $m_{[1]}$ to $q_{[1]}$ (`Alright, the output is then $m$.').
\end{enumerate}
A strategy $\mathit{succ}$ on this game that corresponds to the (linear) successor function can be represented by the map $q_{[1]} \mapsto q_{[0]}, q_{[1]} q_{[0]} n_{[0]} \mapsto n+1_{[1]}$, where $n$ ranges over $\mathbb{N}$, or diagrammatically:
\begin{center}
\begin{tabular}{ccc}
$N_{[0]}$ & $\stackrel{\mathit{succ}}{\multimap}$ & $N_{[1]}$ \\ \hline 
&&\tikzmark{cmultimap11} $q_{[1]}$ \tikzmark{cmultimap13} \\
\tikzmark{cmultimap12} $q_{[0]}$ \tikzmark{dmultimap11}&& \\
\tikzmark{dmultimap12} $n_{[0]}$&& \\
&&$n+1_{[1]}$ \tikzmark{dmultimap13}
\end{tabular}
\begin{tikzpicture}[overlay, remember picture, yshift=.25\baselineskip]
\draw [->] ({pic cs:dmultimap11}) to ({pic cs:cmultimap11});
\draw [->] ({pic cs:dmultimap12}) [bend left] to ({pic cs:cmultimap12});
\draw [->] ({pic cs:dmultimap13}) [bend right] to ({pic cs:cmultimap13});
\end{tikzpicture}
\end{center}

\subsection{Static Game Semantics}
Game semantics is often said to be an \emph{intensional}, \emph{dynamic} semantics for a category of games and strategies is usually not well-pointed, and plays of a game may be regarded as `intensional, dynamic interactions' between the participants of the game.
However, it has been employed as denotational semantics, and thus it is in particular \emph{sound}: If two programs evaluate to the same value, then their denotations in conventional game semantics are identical.
Consequently, conventional game semantics $\llbracket \_ \rrbracket_{\mathcal{G}}$ is actually \emph{extensional} and \emph{static} in the sense that if there is a reduction $\mathsf{M_1 \to M_2}$ in syntax, then the equation $\llbracket \mathsf{M_1} \rrbracket_{\mathcal{G}} = \llbracket \mathsf{M_2} \rrbracket_{\mathcal{G}}$ holds in the semantics (i.e., it does not capture the dynamics $\mathsf{M_1 \to M_2}$ or the intensional difference between $\mathsf{M_1}$ and $\mathsf{M_2}$).
In other words, it is \emph{not} intensional or dynamic in the sense that it does not satisfy DCPs.

Therefore, to establish mathematics of senses, we need to introduce a more dynamic, intensional refinement of games and strategies so that it satisfies DCPs for logical systems and programming languages. 
To get some insights to develop such games and strategies, let us see how conventional game semantics fails to be dynamic or intensional.
The point in a word is that `internal communication' between strategies for their composition is \emph{a priori} `hidden', and thus the resulting strategy is always in `normal form'.
For instance, the composition $\mathit{succ} ; \mathit{double} : N \multimap N$ of strategies $\mathit{succ} : N \multimap N$ and $\mathit{double} : N \multimap N$, implementing the successor and the doubling (linear) functions, respectively,
\begin{center}
\begin{tabular}{ccccccccc}
$N_{[0]}$ & $\stackrel{\mathit{succ}}{\multimap}$ & $N_{[1]}$ &&&& $N_{[2]}$ & $\stackrel{\mathit{double}}{\multimap}$ & $N_{[3]}$ \\ \cline{1-3} \cline{7-9}
&&\tikzmark{csucc31} $q_{[1]}$ \tikzmark{csucc33}&&&&&&\tikzmark{cdouble31} $q_{[3]}$ \tikzmark{cdouble33} \\
\tikzmark{csucc32} $q_{[0]}$ \tikzmark{dsucc31}&&&&&&\tikzmark{cdouble32} $q_{[2]}$ \tikzmark{ddouble31}&& \\
\tikzmark{dsucc32} $m_{[0]}$ &&&&&&\tikzmark{ddouble32} $n_{[2]}$&& \\
&&$m+1_{[1]}$ \tikzmark{dsucc33} &&&&&&$2 \cdot n_{[3]}$ \tikzmark{ddouble33} 
\end{tabular}
\begin{tikzpicture}[overlay, remember picture, yshift=.25\baselineskip]
\draw [->] ({pic cs:dsucc31}) to ({pic cs:csucc31});
\draw [->] ({pic cs:dsucc32}) [bend left] to ({pic cs:csucc32});
\draw [->] ({pic cs:dsucc33}) [bend right] to ({pic cs:csucc33});
\draw [->] ({pic cs:ddouble31}) to ({pic cs:cdouble31});
\draw [->] ({pic cs:ddouble32}) [bend left] to ({pic cs:cdouble32});
\draw [->] ({pic cs:ddouble33}) [bend right] to ({pic cs:cdouble33});
\end{tikzpicture}
\end{center}
is formed as follows. First, by `internal communication', we mean that Player plays the role of Opponent in the intermediate component games $N_{[1]}$ and $N_{[2]}$ just by `copy-catting' her last moves, resulting in the following play:
\begin{center}
\begin{tabular}{ccccccc}
$N_{[0]}$ & $\stackrel{\mathit{succ}}{\multimap}$ & $N_{[1]}$ && $N_{[2]}$ & $\stackrel{\mathit{double}}{\multimap}$ & $N_{[3]}$ \\ \hline
&&&&&& \tikzmark{cPC2} $q_{[3]}$ \tikzmark{cPC1} \\
&&&&\tikzmark{cPC3} \fbox{$q_{[2]}$} \tikzmark{dPC2} && \\
&& \tikzmark{cPC4} \fbox{$q_{[1]}$} \tikzmark{dPC3} &&&& \\
\tikzmark{cPC5} $q_{[0]}$ \tikzmark{dPC4} &&&&&& \\
\tikzmark{dPC5} $n_{[0]}$ &&&&&& \\
&&\tikzmark{dPC6} \fbox{$n+1_{[1]}$} &&&& \\
&&&&\tikzmark{dPC7} \fbox{$n+1_{[2]}$} && \\
&&&&&&$2 \cdot (n+1)_{[3]}$ \tikzmark{dPC1}
\end{tabular}
\begin{tikzpicture}[overlay, remember picture, yshift=.25\baselineskip]
\draw [->] ({pic cs:dPC1}) [bend right] to ({pic cs:cPC1});
\draw [->] ({pic cs:dPC2}) to ({pic cs:cPC2});
\draw [->] ({pic cs:dPC3}) to ({pic cs:cPC3});
\draw [->] ({pic cs:dPC4}) to ({pic cs:cPC4});
\draw [->] ({pic cs:dPC5}) [bend left] to ({pic cs:cPC5});
\draw [->] ({pic cs:dPC6}) [bend left] to ({pic cs:cPC4});
\draw [->] ({pic cs:dPC7}) [bend left] to ({pic cs:cPC3});
\end{tikzpicture}
\end{center}
where each move for `internal communication' is marked by a square box just for clarity, and the pointer from $q_{[1]}$ to $q_{[2]}$ is added because the move $q_{[1]}$ is no longer initial. 
Importantly, it is assumed that Opponent plays on the game $N_{[0]} \multimap N_{[3]}$, `seeing' only moves of $N_{[0]}$ or $N_{[3]}$.
The resulting play is to be read as follows:
\begin{enumerate}

\item Opponent's question $q_{[3]}$ for an output in $N_{[0]} \multimap N_{[3]}$ (`What is your output?');

\item Player's question \fbox{$q_{[2]}$} by $\mathit{double}$ for an input in $N_{[2]} \multimap N_{[3]}$ (`What is an input?');

\item \fbox{$q_{[2]}$} triggers the question \fbox{$q_{[1]}$} for an output in $N_{[0]} \multimap N_{[1]}$ (`What is an output?');

\item Player's question $q_{[0]}$ by $\mathit{succ}$ for an input in $N_{[0]} \multimap N_{[1]}$ (`Wait, what is an input?');

\item Opponent's answer, say, $n_{[0]}$ to $q_{[0]}$ in $N_{[0]} \multimap \  N_{[3]}$ (`Here is an input $n$.');

\item Player's answer \fbox{$n+1_{[1]}$} to \fbox{$q_{[1]}$} by $\mathit{succ}$ in $N_{[0]} \multimap N_{[1]}$ (`The output is then $n+1$.');

\item \fbox{$n+1_{[1]}$} triggers the answer \fbox{$n+1_{[2]}$} to \fbox{$q_{[2]}$} in $N_{[2]} \multimap N_{[3]}$ (`Here is the input $n+1$.');

\item Player's answer $2 \cdot (n+1)_{[3]}$ to $q_{[3]}$ by $\mathit{double}$ in $N_{[2]} \multimap N_{[3]}$ (`The output is then $2 \cdot (n+1)$!').

\end{enumerate}

Next, `hiding' means to hide or delete every move with a square box from the play, resulting in the strategy for the (linear) function $n \mapsto 2 \cdot (n + 1)$ as expected:
\begin{center}
\begin{tabular}{ccc}
$N_{[0]}$ & $\stackrel{\mathit{succ} ; \mathit{double}}{\multimap}$ & $N_{[3]}$ \\ \hline
&&\tikzmark{cPCH1} $q_{[3]}$ \tikzmark{cPCH3} \\
\tikzmark{cPCH2} $q_{[0]}$ \tikzmark{dPCH1}&& \\
\tikzmark{dPCH2} $n_{[0]}$&& \\
&&$2 \cdot (n + 1)_{[3]}$ \tikzmark{dPCH3} 
\end{tabular}
\begin{tikzpicture}[overlay, remember picture, yshift=.25\baselineskip]
\draw [->] ({pic cs:dPCH1}) to ({pic cs:cPCH1});
\draw [->] ({pic cs:dPCH2}) [bend left] to ({pic cs:cPCH2});
\draw [->] ({pic cs:dPCH3}) [bend right] to ({pic cs:cPCH3});
\end{tikzpicture}
\end{center}
Note that it is `hiding' that makes the resulting play a valid one on the game $N \multimap N$. 

Now, let us plug in the strategy $\underline{5}^T : q_{[5]} \mapsto 5_{[5]}$ on the game $T_{[4]} \multimap N_{[5]}$, which coincides with $N$ up to `tags'. The composition $\underline{5}^T ; \mathit{succ} ; \mathit{double} : T \multimap N$\footnote{Composition of strategies is \emph{associative} \cite{abramsky1997semantics,hyland1997game,abramsky1999game}; thus, the order of applying composition does not matter.} is computed again by `internal communication': 
\begin{center}
\begin{tabular}{ccccccccc}
$T_{[4]}$& $\stackrel{\underline{5}^T}{\multimap}$ & $N_{[5]}$ & $N_{[0]}$ & $\stackrel{\mathit{succ}}{\multimap}$ & $N_{[1]}$ & $N_{[2]}$ & $\stackrel{\mathit{double}}{\multimap}$ & $N_{[3]}$ \\ \hline
&&&&&&&& \tikzmark{cfive2} $q_{[3]}$ \tikzmark{cfive1} \\
&&&&&& \tikzmark{cfive3} \fbox{$q_{[2]}$} \tikzmark{dfive2} && \\
&&&&& \tikzmark{cfive4} \fbox{$q_{[1]}$} \tikzmark{dfive3} &&& \\
&&&\tikzmark{cfive5}  \fbox{$q_{[0]}$} \tikzmark{dfive4} &&&&& \\
&& \tikzmark{cfive6} \fbox{$q_{[5]}$} \tikzmark{dfive5} &&&&&& \\
&& \tikzmark{dfive6} \fbox{$5_{[5]}$} &&&&&& \\
&&& \fbox{$5_{[0]}$} \tikzmark{dfive7} &&&&& \\
&&&&& \tikzmark{dfive9} \fbox{$6_{[1]}$} &&& \\
&&&&&& \fbox{$6_{[2]}$} \tikzmark{dfive8} && \\
&&&&&&&&$12_{[3]}$ \tikzmark{dfive1}
\end{tabular}
\begin{tikzpicture}[overlay, remember picture, yshift=.25\baselineskip]
\draw [->] ({pic cs:dfive1}) [bend right] to ({pic cs:cfive1});
\draw [->] ({pic cs:dfive2}) to ({pic cs:cfive2});
\draw [->] ({pic cs:dfive3}) to ({pic cs:cfive3});
\draw [->] ({pic cs:dfive4}) to ({pic cs:cfive4});
\draw [->] ({pic cs:dfive5}) to ({pic cs:cfive5});
\draw [->] ({pic cs:dfive6}) [bend left] to ({pic cs:cfive6});
\draw [->] ({pic cs:dfive7}) [bend right] to ({pic cs:dfive4});
\draw [->] ({pic cs:dfive8}) [bend right] to ({pic cs:dfive2});
\draw [->] ({pic cs:dfive9}) [bend left] to ({pic cs:cfive4});
\end{tikzpicture}
\end{center}
plus `hiding':
\begin{center}
\begin{tabular}{ccc}
$T_{[4]}$ & $\stackrel{\underline{5}^T ; \mathit{succ} ; \mathit{double}}{\multimap}$ &$N_{[3]}$ \\ \hline
&&$q_{[3]}$ \tikzmark{cfive11} \\
&&$12_{[3]}$ \tikzmark{dfive11}
\end{tabular}
\begin{tikzpicture}[overlay, remember picture, yshift=.25\baselineskip]
\draw [->] ({pic cs:dfive11}) [bend right] to ({pic cs:cfive11});
\end{tikzpicture}
\end{center}

In syntax, on the other hand, assuming that there are a (ground) type $\iota$ of natural numbers, a numeral $\underline{\mathsf{n}}$ of type $\iota$ for each $n \in \mathbb{N}$, and constants $\mathsf{succ}$ and $\mathsf{double}$ of type $\iota$ for the successor and the doubling functions, respectively, equipped with the operational semantics $\mathsf{succ} \ \! \underline{\mathsf{n}} \to \underline{\mathsf{n+1}}$ and $\mathsf{double} \ \! \underline{\mathsf{n}} \to \underline{\mathsf{2 \cdot n}}$ for all $n \in \mathbb{N}$ in an arbitrary functional programming language, the program $\mathsf{p_1 \stackrel{\mathrm{df. }}{\equiv} \lambda x . (\lambda y . \ \! double \ \! y) ((\lambda z . \ \! succ \ \! z) \ \! x)}$ represents the syntactic composition $\mathsf{succ} ; \mathsf{double}$. When it is applied to the numeral $\underline{\mathsf{5}}$, we have the following chain of reductions:
\begin{align*}
\mathsf{p_1} \ \! \underline{\mathsf{5}} &\to^\ast \mathsf{(\lambda x. \ \! \mathsf{double} \ \! (\mathsf{succ} \ \! x)) \ \! \underline{\mathsf{5}}} \\
&\to^\ast \mathsf{\mathsf{double} \ \! (\mathsf{succ} \ \! \underline{\mathsf{5}})} \\
&\to^\ast \mathsf{\mathsf{double} \ \! \underline{\mathsf{6}}} \\
&\to^\ast \underline{\mathsf{12}}.
\end{align*}
Therefore, it seems that reduction in syntax corresponds in game semantics to `hiding internal communication'. 
As seen in the above example, however, this game-semantic normalization is \emph{a priori} executed and thus invisible in conventional game semantics $\llbracket \_ \rrbracket_{\mathcal{G}}$. As a result, the two programs $\mathsf{p_1} \ \! \underline{\mathsf{5}}$ and $\underline{\mathsf{12}}$ are interpreted by $\llbracket \_ \rrbracket_{\mathcal{G}}$ as the same strategy. 
Moreover, observe that moves with a square box describe \emph{intensionality} or \emph{step-by-step processes} to compute an output from an input, but they are invisible after `hiding'. 
Thus, e.g., a program $\mathsf{p_2 \stackrel{\mathrm{df. }}{\equiv} \lambda x . (\lambda y . \ \! \mathsf{succ} \ \! y)(\lambda v . (\lambda z . \ \!\mathsf{succ} \ \! z)((\lambda w . \ \! \mathsf{double} \ \! w) \ \! v) \ \! x)}$, representing the same function as $\mathsf{p_1}$ yet a different algorithm $\mathsf{double} ; \mathsf{succ} ; \mathsf{succ}$ is modeled as:
\begin{equation*}
\llbracket \mathsf{p_2} \rrbracket_{\mathcal{G}} = \llbracket \mathsf{double} ; \mathsf{succ} ; \mathsf{succ} \rrbracket_{\mathcal{G}} = \llbracket \mathsf{succ} ; \mathsf{double} \rrbracket_{\mathcal{G}} = \llbracket \mathsf{p_1} \rrbracket_{\mathcal{G}}.
\end{equation*}

To sum up, we have observed the following:
\begin{enumerate}

\item \textsc{(Reduction as hiding).}
Reduction in syntax corresponds in game semantics to `hiding intermediate moves (i.e., moves with a square box)';

\item \textsc{(A priori normalization).}
However, the `hiding' process is \emph{a priori} executed in conventional game semantics, and thus strategies are always in `normal form';

\item \textsc{(Intermediate moves as intensionality).}
Also, `intermediate moves' constitute \emph{intensionality} of computation; however, they are not captured in conventional game semantics again due to the a priori execution of the `hiding' operation.

\end{enumerate}

\subsection{Dynamic Games and Strategies}
From these observations, we have obtained a promising solution: to define a variant of games and strategies, in which `intermediate moves' are not a priori `hidden', representing intensionality of logic and computation, and the \emph{hiding operations} $\mathcal{H}$ on the games and strategies that `hide intermediate moves' in a step-by-step fashion, interpreting dynamics of logic and computation.
Let us call such a variant of games (resp. strategies) \emph{\bfseries dynamic games} (resp. \emph{\bfseries dynamic strategies}).

In doing so, we shall develop mathematical structures that are conceptually natural and mathematically elegant. This effort is to inherit the natural, intuitive nature of conventional game semantics so that the resulting interpretation would be insightful, convincing and useful.
Also, mathematics often leads to a `correct' formulation: If a definition gives rise to neat mathematical structures, then it is likely to succeed in capturing the essence of concepts and phenomena of concern, and subsume various instances (n.b., recall that our aim is to establish \emph{mathematics} of senses).
In fact, dynamic games and strategies are a natural generalization of conventional games and strategies, and they satisfy beautiful algebraic laws; as a consequence, they form a \emph{cartesian closed bicategory (CCB)} in the sense of \cite{ouaknine1997two}\footnote{N.b., for the present work, it suffices to know that a CCB is a generalized CCC in the sense that the equational axioms of CCCs are required to hold only \emph{up to 2-cell isomorphisms}.} $\mathcal{LDG}$ (Definition~\ref{DefCCBoCCDG}), in which 0- (resp. 1-) cells are certain dynamic games (resp. dynamic strategies), and 2-cells are the \emph{extensional equivalence} between 1-cells; the countably-infinite iteration of the hiding operations $\mathcal{H}$ on dynamic games and strategies induces the 2-functor $\mathcal{H}^\omega : \mathcal{LDG} \to \mathcal{LMG}$, where the CCC $\mathcal{LMG}$ of conventional games and strategies can be seen as an `extensionally collapsed' $\mathcal{LDG}$.

\subsection{Dynamic Game Semantics}
We then give, as the main result of the present work, a game semantics $\llbracket \_ \rrbracket_{\mathcal{DG}}$ of \emph{finitary PCF} (i.e., the simply-typed $\lambda$-calculus equipped with the boolean type) in $\mathcal{LDG}$ that together with the hiding operation $\mathcal{H}$ satisfies the DCP (Corollary~\ref{CoroFirstDynamicGameSemantics}), which we call \emph{\bfseries dynamic game semantics} as it captures dynamics and intensionality of computation better than conventional ones.
We select finitary PCF as our target language since a simple language would be appropriate for the first work on dynamic game semantics.

Note that it does not make much sense to ask whether full abstraction holds for dynamic game semantics as its aim is rather to capture intensionality of computation.

Also, the semantics does not satisfy faithfulness: The semantic equation is of course \emph{finer than $\beta$-equivalence} but also \emph{coarser than $\alpha$-equivalence}, e.g., non-$\alpha$-equivalent terms $\mathsf{(\lambda x . \ \! \underline{0}) \ \! \underline{1}}$ and $\mathsf{(\lambda x . \ \! \underline{0}) \ \! \underline{2}}$ are interpreted to be the same in dynamic game semantics, which is because the semantic equation captures \emph{algorithmic difference} of terms, while $\alpha$-equivalence distinguishes how they are constructed even if their algorithms coincide (n.b., this point calls for (syntax-independent) mathematics of senses).

On the other hand, it makes sense to ask if full completeness holds for dynamic game semantics.
In fact, we shall establish \emph{dynamic full completeness} (Corollary~\ref{CoroDynamicFullCompleteness}).

\subsection{Our Contribution and Related Work}
To the best of our knowledge, the present work is the first syntax-independent characterization of dynamics and intensionality of computation in the sense of DCPs.

The work closest in spirit is Girard's \emph{geometry of interaction (GoI)} \cite{girard1989geometry,girard1990geometry,girard1995geometry,girard2003geometry,girard2011geometry,girard2013geometry}. However, GoI appears mathematically \emph{ad-hoc} for it does not conform to the standard categorical semantics of type theories \cite{lambek1988introduction,pitts2001categorical,crole1993categories,jacobs1999categorical}; also, it does not capture the \emph{step-by-step} process of reduction in the sense of DCPs.
In contrast, dynamic game semantics refines the standard semantics and does satisfy a DCP.

\if0
From the opposite, ``semantics-first'' point of view, $\mathcal{TDG}$ \emph{per se} can be seen as a mathematical model of computation in the same sense as \emph{Turing machines} \cite{turing1936computable} that captures syntactic practice. Seeing the literature, Turing machines, for instance, are the first syntax-independent model of computation, but their computational processes are too \emph{low-level} to match those of programming languages; the \emph{$\mathit{\lambda}$-calculus} \cite{church1936unsolvable,church1940formulation} is the origin of functional programming, and so it is rather \emph{syntactic}.
\fi

Next, the idea of exhibiting `intermediate moves' in the composition of strategies is \emph{nothing new}; there are game-semantic approaches \cite{dimovski2005data,greenland2005game,ong2006model} that give such moves an official status. However, because their aims are rather to develop a tool for program analysis and verification, they do not study in depth mathematical structures thereof, give an intensional game semantics that refines the standard categorical semantics or formulate a \emph{step-by-step} `hiding' process.
Therefore, our contribution for this point is to study algebraic structures of games and strategies when we do not a priori `hide intermediate moves' and refine the standard categorical semantics in such a way that satisfies DCPs.


Also, there are several approaches to model dynamics of computation by \emph{2-categories} \cite{seely1987modelling,hilken1996towards,mellies2005axiomatic}. In these papers, however, the horizontal composition of 1-cells is the \emph{normalizing} one, which is why the structure is 2-categories rather than bicategories.\footnote{N.b., the unit law \emph{on the nose} does not hold if the composition is non-normalizing.}
In addition, the 2-cells of their 2-categories are rewriting, while the 2-cells of our bicategory are the external equivalence between 1-cells; note that 2-cells in a bicategory cannot interpret rewriting unless the horizontal composition is normalizing since associativity of non-normalizing composition with respect to such 2-cells does not hold.\footnote{N.b., there is no rewriting between 1-cells $(f;g);h$ and $f;(g;h)$ if the composition is non-normalizing.} 
Thus, although their motivations are similar to ours, our \emph{bicategorical} approach seems novel, interpreting an application of terms by non-normalizing composition, the extensional equivalence of terms by 2-cells and rewriting by the hiding operation $\mathcal{H}$.
Moreover, their frameworks are categorical, while we instantiate our categorical model by game semantics.
Furthermore, neither of the previous work establishes a DCP. 

Finally, note that the present work has some implications from theoretical as well as practical viewpoints.
From the theoretical perspective, it enables us to study dynamics and intensionality of computation as purely \emph{mathematical} (or \emph{semantic}) concepts, just like any concepts in pure mathematics such as differentiation and integration in calculus, homotopy in topology, etc. 
Thus, we would be able to rigorously analyze the essence of these concepts, ignoring superfluous syntactic details. 
From the practical point, on the other hand, it might become a useful tool for language analysis and design, e.g., our variant of finitary PCF would not exist without the present work.

\subsection{Structure of the paper}
The rest of the present paper proceeds as follows. 
This introduction ends with fixing some notations.
Then, Section~\ref{DynamicBicategoricalSemantics} formulates our target programming language and its bicategorical semantics that satisfies the DCP so that it remains to establish its game-semantic instance. 
Next, Section~\ref{DynamicGamesAndStrategies} introduces dynamic games and strategies and studies their basic algebraic structures, and Section~\ref{DynamicGameSemantics} gives dynamic game semantics of the language. 
Finally, Section~\ref{ConclusionAndFutureWorkOfDynamicGameSemantics} draws a conclusion and proposes some future work.

\begin{notation*}
We use the following notations throughout the paper:
\begin{itemize}


\item We use bold letters $\boldsymbol{s}, \boldsymbol{t}, \boldsymbol{u}, \boldsymbol{v}$, etc. for sequences, in particular $\boldsymbol{\epsilon}$ for the \emph{empty sequence}, and letters $a, b, c, d, m, n, x, y, z$, etc. for elements of sequences;

\item Given $k \in \mathbb{N}$, we write $\overline{k}$ for the finite set $\{ 1, 2, \dots, k \ \! \} \subseteq \mathbb{N}$ (n.b., $\overline{0} = \emptyset$);

\item We often abbreviate a finite sequence $\boldsymbol{s} = (x_1, x_2, \dots, x_{|\boldsymbol{s}|})$ as $x_1 x_2 \dots x_{|\boldsymbol{s}|}$, where $|\boldsymbol{s}|$ denotes the \emph{length} (i.e., the number of elements) of $\boldsymbol{s}$, and write $\boldsymbol{s}(i)$, where $i \in \overline{|\boldsymbol{s}|}$, as another notation for $x_i$;


\item A \emph{concatenation} of sequences is represented by the juxtaposition of them, but we often write $a \boldsymbol{s}$, $\boldsymbol{t} b$, $\boldsymbol{u} c \boldsymbol{v}$ for $(a) \boldsymbol{s}$, $\boldsymbol{t} (b)$, $\boldsymbol{u} (c) \boldsymbol{v}$, etc., and also write $\boldsymbol{s} . \boldsymbol{t}$ for $\boldsymbol{s t}$;

\item We define $\boldsymbol{s}^n \stackrel{\mathrm{df. }}{=} \underbrace{\boldsymbol{s} \boldsymbol{s} \cdots \boldsymbol{s}}_n$ for a sequence $\boldsymbol{s}$ and a natural number $n \in \mathbb{N}$;

\item We write $\mathsf{Even}(\boldsymbol{s})$ (resp. $\mathsf{Odd}(\boldsymbol{s})$) iff $\boldsymbol{s}$ is of even-length (resp. odd-length);

\item We define $S^\mathsf{P} \stackrel{\mathrm{df. }}{=} \{ \boldsymbol{s} \in S \mid \mathsf{P}(\boldsymbol{s}) \}$ for a set $S$ of sequences and $\mathsf{P} \in \{ \mathsf{Even}, \mathsf{Odd} \}$;

\item $\boldsymbol{s} \preceq \boldsymbol{t}$ means $\boldsymbol{s}$ is a \emph{prefix} of $\boldsymbol{t}$, i.e., $\boldsymbol{t} = \boldsymbol{s} . \boldsymbol{u}$ for some sequence $\boldsymbol{u}$, and given a set $S$ of sequences, we define $\mathsf{Pref}(S) \stackrel{\mathrm{df. }}{=} \{ \boldsymbol{s} \mid \exists \boldsymbol{t} \in S . \ \! \boldsymbol{s} \preceq \boldsymbol{t} \ \! \}$;

\item For a poset $P$ and a subset $S \subseteq P$, $\mathsf{Sup}(S)$ denotes the \emph{supremum} of $S$;


\item $X^* \stackrel{\mathrm{df. }}{=} \{ x_1 x_2 \dots x_n \mid n \in \mathbb{N}, \forall i \in \overline{n} . \ \! x_i \in X \ \! \}$ for each set $X$;

\item For a function $f : A \to B$ and a subset $S \subseteq A$, we define $f \upharpoonright S : S \to B$ to be the \emph{restriction} of $f$ to $S$, and $f^\ast : A^\ast \to B^\ast$ by $f^\ast(a_1 a_2 \dots a_n) \stackrel{\mathrm{df. }}{=} f(a_1) f(a_2) \dots f(a_n) \in B^\ast$ for all $a_1 a_2 \dots a_n \in A^\ast$;


\item Given sets $X_1, X_2, \dots, X_n$, and $i \in \overline{n}$, we write $\pi_i$ (or $\pi_i^{(n)}$) for the \emph{$i^{\text{th}}$-projection function} $X_1 \times X_2 \times \dots \times X_n \to X_i$ that maps $(x_1, x_2, \dots, x_n) \mapsto x_i$;

\item $\simeq$ denote the \emph{Kleene equality}, i.e., $x \simeq y \stackrel{\mathrm{df. }}{\Leftrightarrow} (x \downarrow \wedge \ y \downarrow \wedge \ x = y) \vee (x \uparrow \wedge \ y \uparrow)$, where we write $x \downarrow$ if an element $x$ is defined, and $x \uparrow$ otherwise.

\end{itemize}
\end{notation*}

\section{Dynamic Bicategorical Semantics}
\label{DynamicBicategoricalSemantics}
Let us first present a \emph{categorical} description of how dynamic games and strategies capture dynamics and intensionality of logic and computation, and show that it is a \emph{refinement} of the standard categorical semantics of type theories \cite{lambek1988introduction,pitts2001categorical,crole1993categories,jacobs1999categorical}.

\subsection{Beta-Categories of Computation}
The categorical structure for our interpretation of logic and computation is \emph{$\beta$-categories of computation (BoCs)}, a certain kind of bicategories whose 2-cells are the \emph{extensional equivalence} between 1-cells, equipped with an \emph{evaluation} satisfying certain axioms.

Let us first introduce a more general notion of \emph{$\beta$-categories}, which are categories \emph{up to an equivalence relation on morphisms}:
\begin{definition}[$\beta$-categories]
A \emph{\bfseries $\boldsymbol{\beta}$-category} is a pair $\mathcal{C} = (\mathcal{C}, \simeq)$ that consists of:
\begin{itemize}

\item A class $\mathsf{ob}(\mathcal{C})$ of \emph{\bfseries objects}, where we usually write $A \in \mathcal{C}$ for $A \in \mathsf{ob}(\mathcal{C})$;

\item A class $\mathcal{C}(A, B)$ of \emph{\bfseries $\boldsymbol{\beta}$-morphisms} from $A$ to $B$ for each pair $A, B \in \mathcal{C}$, where we often write $f : A \rightarrow B$ for $f \in \mathcal{C}(A, B)$ if $\mathcal{C}$ is obvious from the context;

\item A (class) function $\mathcal{C}(A, B) \times \mathcal{C}(B, C) \stackrel{;_{A, B, C}}{\rightarrow} \mathcal{C}(A, C)$, called the \emph{\bfseries $\boldsymbol{\beta}$-composition} on $\beta$-morphisms from $A$ to $B$ and from $B$ to $C$, for each triple $A, B, C \in \mathcal{C}$;

\item A $\beta$-morphism $\mathit{id}_A \in \mathcal{C}(A, A)$, called the \emph{\bfseries $\boldsymbol{\beta}$-identity} on $A$, for each $A \in \mathcal{C}$;

\item An equivalence (class) relation $\simeq_{A, B}$ on $\mathcal{C}(A, B)$, called the \emph{\bfseries equivalence} on $\beta$-morphisms from $A$ to $B$, for each pair $A, B \in \mathcal{C}$

\end{itemize}
where we also write $\mathcal{C}(B, C) \times \mathcal{C}(A, B) \stackrel{\circ_{A, B, C}}{\rightarrow} \mathcal{C}(A, C)$ for the $\beta$-composition $;_{A, B, C}$ and often omit the subscripts on $;_{A, B, C}$, $\circ_{A, B, C}$ and $\simeq_{A, B}$, such that it satisfies:
\begin{align*}
(f ; g) ; h &\simeq f ; (g ; h) \\
f ; \mathit{id}_B &\simeq f \\
\mathit{id}_A ; f &\simeq f \\
f \simeq f' \wedge g \simeq g' &\Rightarrow f ; g \simeq f' ; g' 
\end{align*}
for any $A, B, C, D \in \mathcal{C}$, $f, f' : A \rightarrow B$, $g, g' : B \rightarrow C$ and $h : C \rightarrow D$.
Moreover, it is \emph{\bfseries cartesian closed} iff:
\begin{itemize}

\item There is an object $T \in \mathcal{C}$, called a \emph{\bfseries $\boldsymbol{\beta}$-terminal object}, equipped with a $\beta$-morphism $!_A : A \rightarrow T$, called the \emph{\bfseries canonical $\boldsymbol{\beta}$-morphism} on $A$, for each $A \in \mathcal{C}$ that satisfies:
\begin{equation*} 
!_A \simeq t \ \text{for any $t : A \rightarrow T$;}
\end{equation*}

\item There is an object $A \times B \in \mathcal{C}$ for each pair $A, B \in \mathcal{C}$, called a \emph{\bfseries $\boldsymbol{\beta}$-(binary) product} of $ A$ and $B$, equipped with $\beta$-morphisms $\pi^{A, B}_1 : A \times B \rightarrow A$ and $\pi^{A, B}_2 : A \times B \rightarrow B$, called the first and the second \emph{\bfseries $\boldsymbol{\beta}$-projections} of $A \times B$, respectively, and an assignment $\langle \_, \_ \rangle$ of a $\beta$-morphism $\langle a, b \rangle^C_{A, B} : C \rightarrow A \times B$, called the \emph{\bfseries $\boldsymbol{\beta}$-pairing} of $a$ and $b$, to given $C \in \mathcal{C}$, $a : C \rightarrow A$ and $b : C \rightarrow B$, that satisfies:
\begin{align*}
\langle a, b \rangle^C_{A, B} ; \pi^{A, B}_1 &\simeq a \\
\langle a, b \rangle^C_{A, B} ; \pi^{A, B}_2 &\simeq b \\
\langle h ; \pi_1^{A, B}, h ; \pi_2^{A, B} \rangle^C_{A, B} &\simeq h \ \text{for any $h : C \rightarrow A \times B$} \\
a \simeq a' \wedge b \simeq b' \Rightarrow \langle a, b \rangle &\simeq \langle a', b' \rangle \ \text{for any $a' : C \rightarrow A$ and $b' : C \rightarrow B$;} 
\end{align*}

\item There are an object $C^B \in \mathcal{C}$ and a $\beta$-morphism $\mathit{ev}_{B, C} : C^B \times B \rightarrow C$, called the \emph{\bfseries $\boldsymbol{\beta}$-exponential} and the \emph{\bfseries $\boldsymbol{\beta}$-evaluation} of $B$ and $C$, respectively, for each pair $B, C \in \mathcal{C}$, equipped with an assignment $\Lambda$ of a $\beta$-morphism $\Lambda_{A, B, C}(k) : A \rightarrow C^B$ (also written $\Lambda^{B, C}_A(k)$ or $\Lambda_A(k)$), called the \emph{\bfseries $\boldsymbol{\beta}$-currying} of $k$, to given $A \in \mathcal{C}$ and $k : A \times B \rightarrow C$, that satisfies:
\begin{align*}
\langle \pi_1^{A, B} ; \Lambda_{A, B, C}(k), \pi_2^{A, B} \rangle^{A \times B}_{C^B, B} ; \mathit{ev}_{B, C} &\simeq k \\
\Lambda_{A, B, C}(\langle \pi_1^{A, B} ; l, \pi_2^{A, B} \rangle^{A \times B}_{C^B, B} ; \mathit{ev}_{B, C}) &\simeq l \ \text{for any $l : A \rightarrow C^B$} \\
k \simeq k' \Rightarrow \Lambda_{A, B, C}(k) &\simeq \Lambda_{A, B, C}(k') \ \text{for any $k' : A \times B \rightarrow C$} 
\end{align*}

\end{itemize}
where we often omit the sub/superscripts on $\pi_i^{A, B}$, $\langle \_, \_ \rangle^C_{A, B}$, $\mathit{ev}_{B, C}$ and $\Lambda_{A, B, C}$.
\end{definition}

That is, a (resp. cartesian closed) $\beta$-category $\mathcal{C} = (\mathcal{C}, \simeq)$ is a (resp. cartesian closed) category \emph{up to $\simeq$} (i.e., the equation $=$ on morphisms is replaced with the equivalence relation $\simeq$ on 1-cells), where the prefix `$\beta$-' signifies the compromise `up to $\simeq$'. 
Alternatively, regarding objects and $\beta$-morphisms of $\mathcal{C}$ as 0-cells and 1-cells, respectively, and defining 2-cells by $\mathcal{C}(A, B)(d, c) \stackrel{\mathrm{df. }}{=} \begin{cases} \{ \simeq \} &\text{if $d \simeq c$;} \\ \emptyset &\text{otherwise} \end{cases}$ for any $A, B \in \mathcal{C}$ and $d, c : A \rightarrow B$, where $\{ \simeq \}$ is any singleton set, we may identify $\mathcal{C}$ with a (resp. cartesian closed \cite{ouaknine1997two}) \emph{bicategory} whose 2-cells are only the trivial one.

We are now ready to define \emph{$\beta$-categories of computation (BoCs)}:
\begin{definition}[BoCs]
\label{DefBoCs}
A \emph{\bfseries $\boldsymbol{\beta}$-category of computation (BoC)} is a $\beta$-category $\mathcal{C} = (\mathcal{C}, \simeq)$ equipped with a (class) function $\mathcal{E}$ on $\beta$-morphisms of $\mathcal{C}$, called the \emph{\bfseries evaluation} (of computation), that satisfies:
\begin{itemize}

\item \textsc{(Subject reduction).} $\mathcal{E}(f) : A \rightarrow B$ for all $A, B \in \mathcal{C}$ and $f : A \rightarrow B$;

\item \textsc{(Termination).} $f \downarrow$ for all $A, B \in \mathcal{C}$ and $f : A \rightarrow B$;

\item \textsc{($\beta$-identities).} $\mathit{id}_A \in \mathcal{V}_{\mathcal{C}}(A, A)$ for all $A \in \mathcal{C}$; 

\item \textsc{(Evaluation).} $f \simeq f' \Leftrightarrow \exists v \in \mathcal{V}_{\mathcal{C}}(A, B) . \ \! f \downarrow v \wedge f' \downarrow v$ for all $A, B \in \mathcal{C}$ and $f, f' : A \rightarrow B$


\end{itemize}
where $\mathcal{V}_{\mathcal{C}}(A, B) \stackrel{\mathrm{df. }}{=} \{ v \in \mathcal{C}(A, B) \mid \mathcal{E}(v) = v \ \! \}$, whose elements are called \emph{\bfseries values} from $A$ to $B$, and we write $f \downarrow$, or specifically $f \downarrow \mathcal{E}^{n}(f)$, if $\mathcal{E}^{n}(f) \in \mathcal{V}_{\mathcal{C}}(A, B)$ for some $n \in \mathbb{N}$.\footnote{Note that if $\mathcal{E}^{n_1}(f), \mathcal{E}^{n_2}(f) \in \mathcal{V}_{\mathcal{C}}(A, B)$ for any $n_1, n_2 \in \mathbb{N}$, then clearly $\mathcal{E}^{n_1}(f) = \mathcal{E}^{n_2}(f)$, where $\mathcal{E}^n$ denotes the $n$-times iteration of $\mathcal{E}$ for all $n \in \mathbb{N}$.} 
It is \emph{\bfseries cartesian closed}, which we call a \emph{\bfseries cartesian closed BoC (CCBoC)}, iff so is $\mathcal{C}$ as a $\beta$-category, all the canonical $\beta$-morphisms, the $\beta$-projections and the $\beta$-evaluations of $\mathcal{C}$ are values, and all the $\beta$-pairing and the $\beta$-currying of $\mathcal{C}$ preserve values. 
\end{definition}

\begin{convention*}
Since the equivalence $\simeq$ of a BoC $\mathcal{C}$ may be completely recovered from the evaluation $\mathcal{E}$, we usually specify the BoC by a pair $\mathcal{C} = (\mathcal{C}, \mathcal{E})$.
If $f \downarrow \mathcal{E}^{n}(f)$ for some $n \in \mathbb{N}$, then we call $\mathcal{E}^{n}(f)$ the \emph{\bfseries value} of $f$ and also write $\mathcal{E}^\omega(f)$ for it.
\end{convention*}

The intuition behind Definition~\ref{DefBoCs} is as follows. 
In a BoC $\mathcal{C} = (\mathcal{C}, \mathcal{E})$, $\beta$-morphisms are (possibly \emph{intensional} but not necessarily `effective') \emph{computations} with the domain and the codomain (objects) specified, and values are \emph{extensional} computations such as functions (as graphs). 
The $\beta$-composition is `non-normalizing composition' or \emph{concatenation} of computations, and $\beta$-identities are \emph{unit computations} (they are just like identity functions). 
The execution of a computation $f$ is achieved by \emph{evaluating} it into a unique value $\mathcal{E}^\omega(f)$, which corresponds to \emph{dynamics} of computation.\footnote{In the present work, every dynamic strategy (or $\beta$-morphism) becomes a value by a \emph{finite} iteration of the hiding operation (or evaluation) due to the axiom on labeling functions (Definition~\ref{DefDynamicArenas}), and thus the axiom Termination (Definition~\ref{DefBoCs}) makes sense. Of course, if we consider another, in particular finer, evaluation of computations (which is left as future work), then this point may no longer hold.}
In addition, the equivalence relation $\simeq$ witnesses the \emph{extensional equivalence} between $\beta$-morphisms modulo $\mathcal{E}^\omega$.
The four axioms then should make sense from this perspective. 
In this way, a BoC provides a `universe' of dynamic, intensional computations.  

\if0
\begin{remark*}
It would be reasonable and interesting to relax the fourth axiom of BoCs by allowing non-trivial 2-cells.
However, such a generalization is not necessary for the present thesis, and therefore for simplicity we have defined BoCs as in Definition~\ref{DefBoCs}.
\end{remark*}
\fi

\if0
Since 2-cells of a BoC are only the trivial ones, we may forget them as the following lemma implies: 
\begin{lemma}[Characterization lemma of BoCs]
\label{LemCharacterizationOfBoCs}
A (resp. cartesian closed) BoC is equivalent to a (resp. cartesian closed) category equipped with an evaluation satisfying the four axioms of Definition~\ref{DefBoCs} in which the composition (resp. the composition, the pairing and the currying) preserves $\simeq$, and the required equations hold up to $\simeq$.
\end{lemma}
\begin{proof}
For the necessity, let $\mathcal{C} = (\mathcal{C}, \mathcal{E})$ be a BoC. 
Clearly, its 0-cells (as objects), 1-cells (as morphisms), horizontal identities (as identities) and composition on 1-cells (as composition) form a category in which associativity and unit law hold only up to $\simeq$. 
Note that the horizontal composition of 2-cells implies that the composition of morphisms preserves $\simeq$. 
If $\mathcal{C}$ is cartesian closed in the sense of \cite{ouaknine1997two}, then it induces a cartesian closed structure of that category in which the pairing and the currying preserve $\simeq$ (by these operations on 2-cells), and the required equations hold up to $\simeq$.

For the sufficiency, given a category $\mathcal{C}'$ equipped with an evaluation $\mathcal{E}'$ satisfying the four axioms in which the composition preserves $\simeq$, and associativity and unit law hold up to $\simeq$, we construct a BoC $\mathcal{C}' = (\mathcal{C}', \mathcal{E}')$ as follows. 
Its 0- and 1-cells are objects and morphisms in $\mathcal{C}'$, and its horizontal identities and composition on 1-cells are the identities and the composition in $\mathcal{C}'$.
Its 2-cells are the induced equivalence relation $\simeq$; its vertical identities as well as its vertical and horizontal compositions on 2-cells are the obvious (and unique) ones.
Note that the horizontal composition on 2-cells is well-defined as the composition on morphisms in $\mathcal{C}'$ preserves $\simeq$.
Also, note that the functoriality of the horizontal compositions trivially holds.
Moreover, the natural 2-cell isomorphisms for associativity and unit law exist simply by the axiom 2-cells on $\mathcal{C}'$ and the fact that associativity and unit law of $\mathcal{C}'$ hold up to $\simeq$, where note that the required coherence conditions trivially hold.
If the category $\mathcal{C}'$ is cartesian closed in which the pairing and the currying preserve $\simeq$, and the required equations hold up to $\simeq$, then in the same manner it just gives the corresponding cartesian closed structure on the BoC $\mathcal{C}'$ in the sense defined in \cite{ouaknine1997two}.

Clearly, these two constructions are mutually inverses, completing the proof.
\end{proof}

Thanks to Lemma~\ref{LemCharacterizationOfBoCs}, we do not have to care about subtleties in CCBs such as coherence. 
Also, it suffices for giving a cartesian closed BoC (CCBoC) to specify a CCC together with an evaluation that satisfies the required axioms up to $\simeq$. 
\fi

It is easy to see that a BoC $\mathcal{C} = (\mathcal{C}, \mathcal{E})$ induces the category $\mathcal{V}_{\mathcal{C}}$ given by:
\begin{itemize}

\item Objects are those of $\mathcal{C}$;

\item Morphisms $A \to B$ are elements in $\mathcal{V}_{\mathcal{C}}(A, B)$, i.e., values from $A$ to $B$ in $\mathcal{C}$;

\item The composition of morphisms $u : A \to B$ and $v : B \to C$ is $\mathcal{E}^\omega(u ; v) : A \to C$;

\item Identities are $\beta$-identities in $\mathcal{C}$.

\end{itemize}
Regarding the BoC $\mathcal{C}$ as the trivial bicategory as already specified above, and the category $\mathcal{V}_{\mathcal{C}}$ as the trivial 2-category, the evaluation $\mathcal{E}$ induces the 2-functor $\mathcal{E}^\omega : \mathcal{C} \to \mathcal{V}_{\mathcal{C}}$ that maps $A \mapsto A$ for 0-cells $A$, $f \mapsto \mathcal{E}^\omega(f)$ for 1-cells $f$, and $\simeq \ \mapsto \ =$ for 2-cells $\simeq$.
Clearly, $\mathcal{V}_{\mathcal{C}}$ is cartesian closed if so is $\mathcal{C}$, where canonical morphisms into a terminal object, projections, evaluations, pairing and currying of $\mathcal{V}_{\mathcal{C}}$ are respectively the corresponding `$\beta$-ones' in $\mathcal{C}$. 

The point here is that we may decompose the standard interpretation $\llbracket \_ \rrbracket_{\mathcal{S}}$ of functional programming languages in a CCC $\mathcal{V}_{\mathcal{C}}$ \cite{lambek1988introduction,pitts2001categorical,crole1993categories,jacobs1999categorical} as a more intensional interpretation $\llbracket \_ \rrbracket_{\mathcal{D}}$ in a CCBoC $\mathcal{C} = (\mathcal{C}, \mathcal{E})$ and the full evaluation $\mathcal{E}^\omega : \mathcal{C} \to \mathcal{V}_{\mathcal{C}}$, i.e., $\llbracket \_ \rrbracket_{\mathcal{S}} = \mathcal{E}^\omega(\llbracket \_ \rrbracket_{\mathcal{D}})$, and talk about \emph{intensional difference} between computations: Terms $\mathsf{M}$ and $\mathsf{M'}$ are interpreted to be \emph{intensionally equal} if $\llbracket \mathsf{M} \rrbracket_{\mathcal{D}} = \llbracket \mathsf{M'} \rrbracket_{\mathcal{D}}$ and \emph{extensionally equal} if $\llbracket \mathsf{M} \rrbracket_{\mathcal{D}} \simeq \llbracket \mathsf{M'} \rrbracket_{\mathcal{D}}$.
Also, the \emph{one-step} evaluation $\mathcal{E}$ is to capture the small-step operational semantics of the target language, i.e., to satisfy the DCP (see Definition~\ref{DefDCP} for the precise definition specialized to our target language). 


\subsection{Finitary PCF}
\label{FPCF}
Next, let us introduce our target programming language for dynamic game semantics. 

First, recall that there is a one-to-one correspondence between \emph{PCF B\"{o}hm trees} (i.e., terms of PCF in \emph{$\eta$-long normal form}) \cite{amadio1998domains} and innocent, well-bracketed strategies \cite{hyland2000full,abramsky1999game,curien2006notes}; this highlight in the literature of game semantics is called \emph{strong definability}.
Naturally, we would like to exploit the strong definability result to establish the first instance of dynamic game semantics as the task would be easier than otherwise. 

On the other hand, the higher-order functional programming language \emph{PCF} \cite{scott1993type,plotkin1977lcf} has the \emph{natural number type} and the \emph{fixed-point combinators}, which make PCF B\"{o}hm trees \emph{infinitary} in width and depth, respectively. 
However, we would like to select, as the first target language for dynamic game semantics, the simplest one possible because then the idea and the mechanism would be most visible. 
For this reason, let us choose \emph{finitary PCF}, i.e., the finite fragment of PCF that has only the \emph{boolean type} as the ground type (or equivalently, the \emph{simply-typed $\lambda$-calculus} \cite{church1940formulation,sorensen2006lectures} equipped with the boolean type).

We then define a simple small-step operational semantics (or reduction strategy) of finitary PCF whose execution order is obvious from types and has an immediate counterpart in dynamic game semantics.

\begin{remark*}
Note that an execution of \emph{linear head reduction (LHR)} \cite{danos2004head} corresponds in a step-by-step fashion to an `internal communication' between strategies \cite{danos1996game}.
Hence, one may wonder if it would be better to employ LHR as the operational semantics of finitary PCF; however, note that:
\begin{itemize}

\item The correspondence is \emph{not} between terms and strategies;

\item LHR is executed by \emph{linear substitution}, which makes the calculus very different from the usual $\lambda$-calculus with $\beta$-reduction.


\end{itemize}
By these two points, we have conjectured that it would require significantly more work than the present work to establish a game-semantic DCP with respect to LHR, and therefore we leave it as future work.
\end{remark*}

In the following, we give the precise definition of the resulting target programming language (viz., finitary PCF equipped with the small-step operational semantics).

\begin{notation*}
We employ the following notations:

\begin{itemize}

\item Let $\mathscr{V}$ be a countably infinite set of \emph{variables}, written $\mathsf{x}$, $\mathsf{y}$, $\mathsf{z}$, etc., for which we assume the \emph{variable convention} (or \emph{Barendregt's convention} \cite{Hankin1994-HANLCA-2}\footnote{I.e., we assume that in any term of concern every bound variable is chosen to be different from any free variable occurring in that mathematical context.});

\item We use sans-serif letters such as $\mathsf{\Gamma}$, $\mathsf{A}$ and $\mathsf{a}$ for syntactic objects and $\equiv$ for syntactic equality up to \emph{$\alpha$-equivalence}, i.e., up to renaming of bound variables.

\end{itemize}
\end{notation*}

\begin{definition}[FPCF]
\label{DefFPCF}
The \emph{\bfseries finitary PCF (FPCF)} is a functional programming language defined as follows: 
\begin{itemize}

\item \textsc{(Types).} A \emph{\bfseries type} $\mathsf{A}$ is an expression generated by the grammar:
\begin{equation*}
\mathsf{A} \stackrel{\mathrm{df. }}{\equiv} o \ | \ \mathsf{A_1} \Rightarrow \mathsf{A_2}
\end{equation*}
where $o$ is the \emph{\bfseries boolean type} and $\mathsf{A_1 \Rightarrow A_2}$ is the \emph{\bfseries function type} from $\mathsf{A_1}$ to $\mathsf{A_2}$ ($\mathsf{\Rightarrow}$ is right associative).
We write $\mathsf{A, B, C}$, etc. for types.
Note that each type $\mathsf{A}$ may be written uniquely of the form $\mathsf{A_1} \Rightarrow \mathsf{A_2} \Rightarrow \dots \Rightarrow \mathsf{A_k} \Rightarrow o$, where $k \in \mathbb{N}$.

\item \textsc{(Raw-terms).} A \emph{\bfseries raw-term} $\mathsf{M}$ is an expression generated by the grammar:
\begin{equation*}
\mathsf{M} \stackrel{\mathrm{df. }}{\equiv} \mathsf{x} \mid \mathsf{tt} \mid \mathsf{ff} \mid \mathsf{case(M)[M_1 ; M_2]} \mid \lambda \mathsf{x^A . M} \mid \mathsf{M_1 M_2}
\end{equation*}
where $\mathsf{x}$ ranges over variables, and $\mathsf{A}$ over types. 
We call $\mathsf{tt}$, $\mathsf{ff}$, $\lambda \mathsf{x^A . M}$ and $\mathsf{M_1 M_2}$ respectively the \emph{\bfseries true constant}, the \emph{\bfseries false constant}, an \emph{\bfseries abstraction} and an \emph{\bfseries application}.
We write $\mathsf{M, P, Q, R}$, etc. for raw-terms and often omit $\mathsf{A}$ in an abstraction $\lambda \mathsf{x^A}$; an application is always left-associative, e.g., $\mathsf{M_1M_2M_3}$ may be written informally $\mathsf{(M_1M_2)M_3}$.
The set $\mathscr{FV}(\mathsf{M}) \subseteq \mathscr{V}$ of all \emph{\bfseries free variables} occurring in a raw-term $\mathsf{M}$ is defined by the following induction on $\mathsf{M}$:
\begin{align*}
\mathscr{FV}(\mathsf{x}) &\stackrel{\mathrm{df. }}{=} \{ \mathsf{x} \} \\
\mathscr{FV}(\mathsf{tt}) &\stackrel{\mathrm{df. }}{=} \mathscr{FV}(\mathsf{ff}) \stackrel{\mathrm{df. }}{=} \emptyset \\
\mathscr{FV}(\mathsf{case(M)[M_1 ; M_2]}) &\stackrel{\mathrm{df. }}{=} \mathscr{FV}(\mathsf{M}) \cup \mathscr{FV}(\mathsf{M_1}) \cup \mathscr{FV}(\mathsf{M_2}) \\ 
\mathscr{FV}(\lambda \mathsf{x . M}) &\stackrel{\mathrm{df. }}{=} \mathscr{FV}(\mathsf{M}) \setminus \{ \mathsf{x} \} \\
\mathscr{FV}(\mathsf{M_1 M_2}) &\stackrel{\mathrm{df. }}{=} \mathscr{FV}(\mathsf{M_1}) \cup \mathscr{FV}(\mathsf{M_2}).
\end{align*}

\item \textsc{(Contexts).} A \emph{\bfseries context} is a finite sequence $\mathsf{x_1 : A_1, x_2 : A_2, \dots, x_k : A_k}$ of (variable : type)-pairs with $\mathsf{x_i \neq x_j}$ if $i \neq j$, where $i, j \in \overline{k}$.
We write $\mathsf{\Gamma}$, $\mathsf{\Delta}$, $\mathsf{\Theta}$, etc. for contexts.

\item \textsc{(Terms).} A \emph{\bfseries term} is an expression of the form $\mathsf{\Gamma \vdash M : B}$, where $\mathsf{\Gamma}$ is a \emph{\bfseries context}, $\mathsf{M}$ is a raw-term, and $\mathsf{B}$ is a type, generated by the following \emph{\bfseries typing rules}:
\begin{align*}
&\textsc{(B)} \frac{ \ \mathsf{b} \in \{ \mathsf{tt}, \mathsf{ff} \} \ }{ \ \mathsf{\Gamma \vdash b} : o \ } \ \ \ \textsc{(C1)} \frac{ \ \begin{aligned} &\mathsf{A} \equiv \mathsf{A_1} \Rightarrow \mathsf{A_2} \Rightarrow \dots \Rightarrow \mathsf{A_k} \Rightarrow o \ \ \mathsf{\Gamma} \equiv \mathsf{\Delta, \Theta} \\
&\forall i \in \overline{k} . \ \! \mathsf{\Gamma \vdash V_i : A_i} \wedge \sharp(\mathsf{V_i}) = 0 \wedge \mathsf{x} \not \in \mathscr{FV}(\mathsf{V_i}) \\ &\forall j \in \overline{2} . \ \! \mathsf{\Gamma \vdash W_j} : o \wedge \sharp{(\mathsf{W_j})} = 0 \wedge \mathsf{x} \not \in \mathscr{FV}(\mathsf{W_j}) \end{aligned} \ }{ \ \mathsf{\Delta, x : A, \Theta \vdash case(x V_1 V_2 \dots V_k)[W_1 ;  W_2]} : o \ } \\ \\
&\textsc{(C2)} \frac{ \ \mathsf{\Gamma \vdash M} : o \ \ \forall j \in \overline{2} . \ \! \mathsf{\Gamma \vdash P_j} : o \ }{ \ \mathsf{\Gamma \vdash case(M)[P_1 ; P_2]} : o \ } \ \ \ \textsc{(L)} \frac{ \ \mathsf{\Gamma, x : A \vdash M : B} \ }{ \ \mathsf{\Gamma} \vdash \lambda \mathsf{x^A . \ \! M} : \mathsf{A \Rightarrow B} \ }  \\ \\
&\textsc{(A)} \frac{ \ \mathsf{\Gamma \vdash M_1 : A \Rightarrow B \ \ \Gamma \vdash M_2 : A} \ }{ \ \mathsf{\Gamma \vdash M_1 M_2 : B} \ } 
\end{align*}
where $\sharp(\mathsf{\Gamma \vdash M : B}) \in \mathbb{N}$, often abbreviated as $\sharp(\mathsf{M})$, is the \emph{\bfseries execution number} of each term $\mathsf{\Gamma \vdash M : B}$ defined by the following induction on $\mathsf{\Gamma \vdash M : B}$:
\begin{itemize}

\item $\sharp(\mathsf{b}) \stackrel{\mathrm{df. }}{=} 0$ if $\mathsf{b} \in \{ \mathsf{tt}, \mathsf{ff} \}$;

\item $\sharp(\mathsf{case(x V_1 V_2 \dots V_k)[W_1 ; W_2]}) \stackrel{\mathrm{df. }}{=} 0$;

\item $\sharp(\mathsf{case(M)[P_1 ; P_2]}) \stackrel{\mathrm{df. }}{=} 0$;

\item $\sharp(\lambda \mathsf{x^A . M}) \stackrel{\mathrm{df. }}{=} \sharp(\mathsf{M})$;

\item $\sharp(\mathsf{M_1 M_2}) \stackrel{\mathrm{df. }}{=} \max (\sharp(\mathsf{M_1}), \sharp(\mathsf{M_1})) + 1$. 

\end{itemize}

We write $\mathsf{\Gamma} \vdash \{ \mathsf{M} \}_e : \mathsf{B}$ for the term $\mathsf{\Gamma} \vdash \mathsf{M} : \mathsf{B}$ such that $\sharp(\mathsf{M}) = e$.
Also, we often omit the context and/or the type of a term if it does not bring confusion. 
A \emph{\bfseries program} (resp. a \emph{\bfseries value}) is a term generated by the rules B, C1, L and A (resp. B, C1 and L).
A \emph{\bfseries subterm} of a term $\mathsf{\Gamma \vdash M : B}$ is a term that occurs in the deduction of $\mathsf{\Gamma \vdash M : B}$, where note that a deduction (tree) of each term of FPCF is clearly \emph{unique}.

\begin{remark*}
The rules C2 above and $\vartheta_4$ below are necessary for `intermediate terms' during an evaluation of a program into a value.
\end{remark*}

\if0
\emph{\bfseries Atomic terms} are the following values:
\begin{align*}
&\textsc{(N)}\frac{ \ n \in \mathbb{N} \ }{ \ \mathsf{\Gamma \vdash \{ \underline{n} \}_{\underline{0}} : N} \ } \ \ \ \textsc{(Var)} \frac{}{ \ \mathsf{\Gamma, x : A, \Delta \vdash \{ \underline{x}^A \}_{\underline{0}} : A } \ } \ \ \ \textsc{(Succ)}\frac{}{ \ \mathsf{\Gamma \vdash \{ succ \}_{\underline{0}} : N \Rightarrow N} \ } \\ \\ 
&\textsc{(Pred)} \frac{}{ \ \mathsf{\Gamma \vdash \{ pred \}_{\underline{0}} : N \Rightarrow N} \ } \ \ \ \textsc{(Cond)}\frac{}{ \ \mathsf{\Gamma \vdash \{ cond \}_{\underline{0}} : N \Rightarrow N \Rightarrow N \Rightarrow N} \ } \\ \\
&\textsc{(Itr)} \frac{}{ \ \mathsf{\Gamma \vdash \{ itr^A \}_{\underline{0}} : (A \Rightarrow A) \Rightarrow A \Rightarrow N \Rightarrow A} \ }
\end{align*}
where:
\begin{itemize}

\item $\mathsf{A \equiv A_1 \Rightarrow A_2 \Rightarrow \dots \Rightarrow A_k \Rightarrow N}$;

\item $\mathsf{\underline{x}^A \stackrel{\mathrm{df. }}{\equiv} \lambda x_1^{A_1} x_2^{A_2} \dots x_k^{A_k} . \ \! \mathsf{case} (x \underline{x_1}^{A_1} \underline{x_2}^{A_2} \dots \underline{x_k}^{A_k}) [y \mapsto y]}$ (we often write $\mathsf{\underline{x}}$ for $\mathsf{\underline{x}^A}$);

\item $\mathsf{\mathsf{succ} \stackrel{\mathrm{df. }}{\equiv} \lambda x^{\mathsf{N}} . \ \! \mathsf{case} (x) [ y \mapsto y . \underline{1}]}$;

\item $\mathsf{\mathsf{pred} \stackrel{\mathrm{df. }}{\equiv} \lambda x^{\mathsf{N}} . \ \! \mathsf{case} (x) [\underline{0} \mapsto \underline{\mathsf{0}}, y . \underline{1} \mapsto y]}$;

\item $\mathsf{\mathsf{cond} \stackrel{\mathrm{df. }}{\equiv} \lambda x^{\mathsf{N}} y^{\mathsf{N}} z^{\mathsf{N}} . \ \! \mathsf{case} (z) [\underline{0} \mapsto \underline{x}^{\mathsf{N}}, y . \underline{1} \mapsto \underline{y}^{\mathsf{N}}]}$;

\item $\mathsf{\mathsf{itr}^A \stackrel{\mathrm{df. }}{\equiv} \lambda f^{A \Rightarrow A} x^{A} y^{\mathsf{N}} x_1^{A_1} x_2^{A_2} \dots x_k^{A_k} . \ \! \mathsf{case} (y) [z \mapsto \mathit{nf}(\underline{f}^z \underline{x} \underline{x_1} \underline{x_2} \dots \underline{x_k})]}$;

\item $\mathsf{\underline{f}^{z} \underline{x} \stackrel{\mathrm{df. }}{\equiv} \displaystyle \underbrace{\mathsf{\underline{f}(\underline{f}( \dots \underline{f}(\underline{f}}}_{\mathit{n}} \underline{x}) \dots ))}$ if $\mathsf{z \equiv \underline{n}}$, and $\mathit{nf}$ is defined below. 

\end{itemize}
\fi


\item \textsc{($\beta \vartheta$-reduction).} The \emph{\bfseries $\boldsymbol{\beta \vartheta}$-reduction} $\to_{\beta \vartheta}$ on terms is the \emph{contextual closure}, i.e., the closure with respect to the typing rules, of the union of the following five rules:
\begin{align*}
(\lambda \mathsf{x . \ \! M}) \mathsf{P} &\to_{\beta} \mathsf{M [P / x]} \\
\mathsf{case (tt) [M_1 ; M_2]} &\to_{\vartheta_1} \! \mathsf{M_1} \\
\mathsf{case (ff) [M_1 ; M_2]} &\to_{\vartheta_2} \! \mathsf{M_2} \\
\mathsf{case (case (x \boldsymbol{\mathsf{V}})[W_1 ; W_2]) [M_1 ; M_2]} &\to_{\vartheta_3} \!\mathsf{case (x \boldsymbol{\mathsf{V}})[case(W_1)[M_1 ; M_2] ; case(W_2)[M_1 ; M_2]]} \\
\mathsf{case (case (M)[P_1 ; P_2]) [Q_1 ; Q_2]} &\to_{\vartheta_4} \! \mathsf{case (M)[case(P_1)[Q_1 ; Q_2] ; case(P_2)[Q_1 ; Q_2]]}
\end{align*}
where $\mathsf{M[P/x]}$ denotes the \emph{capture-free substitution} \cite{Hankin1994-HANLCA-2} of $\mathsf{P}$ for $\mathsf{x}$ in $\mathsf{M}$, and $\mathsf{x \boldsymbol{\mathsf{V}}}$ abbreviates $\mathsf{x V_1 V_2 \dots V_k}$ of the rule C1.
We write $\mathit{nf}(\mathsf{M})$ for the \emph{normal form} of each term $\mathsf{M}$ with respect to $\rightarrow_{\beta \vartheta}$, i.e., $\mathit{nf}(\mathsf{M})$ is a term such that $\mathsf{M} \rightarrow_{\beta \vartheta}^{\ast} \mathit{nf}(\mathsf{M})$ and $\mathit{nf}(\mathsf{M}) \not \rightarrow_{\beta \vartheta} \mathsf{M'}$ for any term $\mathsf{M'}$, which uniquely exists by Theorems~\ref{ThmCR} and \ref{ThmNormalization} given below.
The \emph{\bfseries parallel $\boldsymbol{\beta \vartheta}$-reduction} $\rightrightarrows_{\beta \vartheta}$ on terms evaluates each term $\mathsf{M}$ in a single-step to its normal form $\mathit{nf}(\mathsf{M})$.

\item \textsc{(Operational semantics).} The \emph{\bfseries (small-step) operational semantics} (or the \emph{\bfseries reduction strategy}) $\to$ on programs $\mathsf{M}$ is the `simultaneous execution' of $\rightrightarrows_{\beta \vartheta}$ on all subterms of $\mathsf{M}$ with the execution number $1$, or more precisely $\rightarrow$ is defined by:
\begin{equation*}
\mathsf{M} \rightarrow \begin{cases} \mathsf{V} &\text{if $\mathsf{M \equiv M_1 M_2}$, $\sharp(\mathsf{M_1M_2}) = 1$ and $\mathsf{M_1 M_2} \rightrightarrows_{\beta \vartheta} \mathsf{V}$;} \\ \mathsf{M'_1 M'_2} &\text{if $\mathsf{M \equiv M_1 M_2}$, $\sharp(\mathsf{M_1 M_2) \geqslant 2}$ and $\mathsf{M_i} \rightarrow \mathsf{M'_i}$ for $i = 1, 2$;} \\ \mathsf{\lambda x^A . \ \! \tilde{M}'} &\text{if $\mathsf{M \equiv \lambda x^A . \ \! \tilde{M}}$ and $\mathsf{\tilde{M}} \rightarrow \mathsf{\tilde{M}'}$.} \end{cases}
\end{equation*}
\if0
\begin{align*}
\mathsf{M_1M_2} &\rightarrow \mathsf{V} \ \text{if $\sharp(\mathsf{M_1M_2}) = 1$ and $\mathsf{M_1 M_2} \rightrightarrows_{\beta \vartheta} \mathsf{V}$;} \\
\mathsf{M_1M_2} &\rightarrow \mathsf{M'_1M'_2} \ \text{if $\sharp(\mathsf{M_1 M_2) \geqslant 2}$ and $\mathsf{M_i} \rightarrow \mathsf{M'_i}$ for $i = 1, 2$;} \\
\mathsf{\lambda x^A . \ \! M} &\rightarrow \mathsf{\lambda x^A . \ \! M'} \ \text{if $\mathsf{M} \rightarrow \mathsf{M'}$.}
\end{align*}
\fi
\end{itemize}

\begin{remark*}
The operational semantics $\rightarrow$ of FPCF might appear a bit unusual, but as we shall see, it has a natural game-semantic counterpart, i.e., it makes sense from the game-semantic point of view. 
\end{remark*}

$\boldsymbol{\mathsf{Eq(FPCF)}}$ is the equational theory that consists of judgements $\mathsf{\Gamma \vdash M = M' : B}$, where $\mathsf{\Gamma \vdash M : B}$ and $\mathsf{\Gamma \vdash M' : B}$ are terms of FPCF such that $\mathit{nf}(\mathsf{M}) \equiv \mathit{nf}(\mathsf{M'})$.

\end{definition}



Note that values of FPCF are PCF B\"{o}hm trees except that the `bottom term' $\mathsf{\bot}$ and the \emph{natural number type} $\iota$ are excluded; the $\beta \vartheta$-reduction $\rightarrow_{\beta \vartheta}$ is essentially taken from Section 6 of the book \cite{amadio1998domains}.

\begin{remark*}
Let $\mathsf{A} \equiv \mathsf{A_1} \Rightarrow \mathsf{A_2} \Rightarrow \dots \Rightarrow \mathsf{A_k} \Rightarrow o$ be an arbitrary type of FPCF. 
Note that an expression of the form $\mathsf{\Delta, x:A, \Theta} \vdash \mathsf{x} : \mathsf{A}$ is \emph{not} a term of FPCF, but instead there is another $\mathsf{\Delta, x:A, \Theta} \vdash \underline{\mathsf{x}}^{\mathsf{A}} : \mathsf{A}$, where $\mathsf{\underline{x}^A \stackrel{\mathrm{df. }}{\equiv} \lambda x_1^{A_1} x_2^{A_2} \dots x_k^{A_k} . \ \! \mathsf{case} (x \underline{x_1}^{A_1} \underline{x_2}^{A_2} \dots \underline{x_k}^{A_k}) [tt ; ff]}$, which \emph{is} a term of FPCF.
We often write $\mathsf{\underline{x}}$ for $\mathsf{\underline{x}^A}$ if it does not bring confusion. 
\end{remark*}

Thus, FPCF computes as follows. 
Given a program $\mathsf{\Gamma} \vdash \{ \mathsf{M} \}_e : \mathsf{B}$, it produces a \emph{finite} chain of \emph{finitary} rewriting
\begin{equation}
\label{ComputationOfSystemT}
\mathsf{M} \to \mathsf{M_1} \to \mathsf{M_2} \to \dotsm \to \mathsf{M_e}
\end{equation}
where $\mathsf{M_e}$ is a value.
Note that the program $\mathsf{M}$ is constructed from values by a finite number of applications, and the computation (\ref{ComputationOfSystemT}) is executed in the \emph{first-applications-first-evaluated} fashion, e.g., if $\mathsf{M \equiv (V_1 V_2) ( (V_3 V_4) (V_5 V_6) )}$ and $e = 3$, where $\mathsf{V_1, V_2, \dots, V_6}$ are values, then the computation (\ref{ComputationOfSystemT}) would be of the form
\begin{equation*}
\mathsf{(V_1 V_2) ( (V_3 V_4) (V_5 V_6) ) \to V_7 (V_8 V_9) \to V_7 V_{10} \to V_{11}}
\end{equation*}
where $\mathsf{V_7 \equiv \mathit{nf}(V_1 V_2)}$, $\mathsf{V_8 \equiv \mathit{nf}(V_3 V_4)}$, $\mathsf{V_9 \equiv \mathit{nf}(V_5 V_6)}$, $\mathsf{V_{10} \equiv \mathit{nf}(V_8 V_9)}$ and $\mathsf{V_{11} \equiv \mathit{nf}(V_7 V_{10})}$.

\if0
\begin{example}
\label{ExDouble}
Consider a program $\vdash \mathsf{double} : \mathsf{N} \Rightarrow \mathsf{N}$ that doubles a given number, where $\mathsf{double} \stackrel{\mathrm{df. }}{\equiv} \mathsf{itr}^{\mathsf{N}} \mathsf{succ}^2 \ \! \underline{\mathsf{0}}$ and $\mathsf{succ}^2 \stackrel{\mathrm{df. }}{\equiv} \mathsf{\lambda} \mathsf{x}^{\mathsf{N}} . \ \! \mathsf{succ}(\mathsf{succ} \ \! \underline{\mathsf{x}})$. 
Then, it computes as follows:
\begin{align*}
\mathsf{double} &\to_{T_\vartheta}^2 \ \! \mathsf{itr}^{\mathsf{N}} \mathsf{(++) \ \! \underline{\mathsf{0}}}, \text{where $++ \stackrel{\mathrm{df. }}{\equiv} \mathsf{\lambda x^N . \ \! case(\underline{x})[y \mapsto case(y . \underline{1})[z \mapsto z . \underline{1}]]}$} \\
&\to_{T_\vartheta} \mathsf{\lambda u^N v^N . \ \! case (v) [i \mapsto \mathit{nf} ((\underline{f}[\mathsf{++}/f])^i \underline{u}) ] \ \! \underline{0}} \\
&\to_{T_\vartheta}\mathsf{\lambda v^N . \ \! case (v) [i \mapsto \mathit{nf} ((\underline{f}[\mathsf{++}/f])^i \underline{u} [ \underline{0} / u])]} \\
&\to_{T_\vartheta}\mathsf{\lambda v^N . \ \! case (v) [i \mapsto \mathit{nf} ((\underline{f}[\mathsf{++}/f])^i \underline{0})]} \equiv \mathsf{\lambda v^{\mathsf{N}} . \ \! \mathsf{case} (v) [ \underline{0} | \underline{2} | \underline{4} | \dots ]}
\end{align*} 
where $\mathit{nf} (\mathsf{\underline{f}[\mathsf{++}/f]}) \equiv \mathsf{\lambda w^N . \ \! case(w) [k \mapsto case(k) [l \mapsto case(l.\underline{1})[r \mapsto case(r.\underline{1})[m \mapsto m]]]]}$.
The last equation is easily derived by induction on $n \in \mathbb{N}$.
Note that here we use the \emph{finitary} syntactic sugar for \emph{infinite} terms and \emph{infinitary} rewriting, evaluating to a `genuine value'.
It is easy to see that this finite representability holds for any program.
\end{example}
\fi

The rest of the present section is devoted to showing that the computation (\ref{ComputationOfSystemT}) of FPCF in fact correctly works (Corollary~\ref{CoroCorrectness}). 

First, by the following Proposition~\ref{PropUniqueTyping} and Theorem~\ref{ThmSubjectReduction}, it makes sense that $\to_{\beta \vartheta}$ is defined \emph{on terms} (not on raw-terms):
\begin{proposition}[Unique typing]
\label{PropUniqueTyping}
If $\mathsf{\Gamma} \vdash \{ \mathsf{M} \}_e : \mathsf{B}$ and $\mathsf{\Gamma} \vdash \{ \mathsf{M} \}_{e'} : \mathsf{B'}$, then $e = e'$ and $\mathsf{B \equiv B'}$.
\end{proposition}
\begin{proof}
By induction on the construction of $\mathsf{\Gamma \vdash M : B}$.
\end{proof}

\begin{lemma}[Free variable lemma]
\label{LemFreeVariables}
If $\mathsf{\Gamma \vdash M : B}$, and $\mathsf{x} \in \mathscr{V}$ occurs free in $\mathsf{M}$, then $\mathsf{x : A}$ occurs in $\mathsf{\Gamma}$ for some type $\mathsf{A}$.
\end{lemma}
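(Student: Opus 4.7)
The plan is a routine induction on the derivation of $\mathsf{\Gamma \vdash M : B}$, exploiting the uniqueness of the derivation tree (Proposition~\ref{PropUniqueTyping}). For each typing rule, I would verify that every variable free in the conclusion's raw-term is declared in the conclusion's context, assuming the statement for every premise.

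First I would dispatch the base cases. For rule \textsc{(N)}, the raw-term $\mathsf{\underline{n}}$ has no free variables, so the claim holds vacuously. For rule \textsc{(C1)}, the raw-term is $\mathsf{case(x V_1 \dots V_k)[y \mapsto W_y]}$ in context $\mathsf{\Gamma, x : A, \Delta}$; the head variable $\mathsf{x}$ is explicitly declared with type $\mathsf{A}$, and any other free variable occurs free in some $\mathsf{V_i}$ or some $\mathsf{W_n}$, hence is declared in $\mathsf{\Gamma, x : A, \Delta}$ by the induction hypothesis applied to the corresponding premise. Rule \textsc{(C2)} is analogous: a free variable of $\mathsf{case(M)[y \mapsto P_y]}$ is free either in $\mathsf{M}$ or in some $\mathsf{P_n}$, and each premise shares the same context $\mathsf{\Gamma}$, so the induction hypothesis applies directly.

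The only rule that requires any care is \textsc{(L)}. Here the conclusion is $\mathsf{\Gamma \vdash \lambda x^A . M : A \Rightarrow B}$ from premise $\mathsf{\Gamma, x : A \vdash M : B}$. A variable $\mathsf{y}$ free in $\mathsf{\lambda x^A . M}$ is free in $\mathsf{M}$ and, by the variable convention, distinct from $\mathsf{x}$; the induction hypothesis gives $\mathsf{y : C}$ in $\mathsf{\Gamma, x : A}$ for some $\mathsf{C}$, and since $\mathsf{y \not\equiv x}$, the declaration $\mathsf{y : C}$ must already occur in $\mathsf{\Gamma}$. Finally, for rule \textsc{(A)}, a free variable of $\mathsf{M N}$ is free in $\mathsf{M}$ or in $\mathsf{N}$, and both premises use the context $\mathsf{\Gamma}$, so the induction hypothesis closes the case.

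I do not expect any real obstacle: the contextual closure of $\to_{\beta\vartheta}$, the execution numbers, and the case-arity infinitarity play no role here, since the statement concerns only the shape of contexts. The one subtlety worth flagging is the abstraction case, where one must either appeal to the variable convention or argue directly from $\mathsf{y \not\equiv x}$ to move the declaration from $\mathsf{\Gamma, x : A}$ back into $\mathsf{\Gamma}$.
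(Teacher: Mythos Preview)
Your proposal is correct and follows exactly the approach the paper takes: the paper's proof is the single line ``By induction on the construction of $\mathsf{\Gamma \vdash M : B}$'', and you have simply spelled out the routine case analysis that this entails. One minor remark: in your \textsc{(C1)} case, note that the premises are stated with context $\mathsf{\Gamma}$ rather than $\mathsf{\Gamma, x : A, \Delta}$, so the induction hypothesis places the free variables of the $\mathsf{V_i}$ and $\mathsf{W_n}$ in $\mathsf{\Gamma}$, which is then contained in the conclusion's context; your wording is compatible with this but could be read ambiguously.
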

\begin{proof}
By induction on the construction of $\mathsf{\Gamma \vdash M : B}$.
\end{proof}
 
\begin{lemma}[EW-lemma]
\label{LemExchangeAndWeakening}
If $\mathsf{x_1 : A_1, x_2 : A_2, \dots, x_k : A_k} \vdash \{ \mathsf{M} \}_e : \mathsf{B}$, then:
\begin{enumerate}

\item $\mathsf{x_{\sigma(1)} : A_{\sigma (1)}, x_{\sigma(2)} : A_{\sigma(2)}, \dots, x_{\sigma(k)} : A_{\sigma (k)}} \vdash \{ \mathsf{M} \}_e : \mathsf{B}$ for any permutation $\sigma$ of $\overline{k}$;

\item $\mathsf{x_1 : A_1, x_2 : A_2, \dots, x_k : A_k, x_{k+1} : A_{k+1}} \vdash \{ \mathsf{M} \}_{e} : \mathsf{B}$ for any variable $\mathsf{x_{k+1}} \in \mathscr{V}$ and type $\mathsf{A_{k+1}}$ such that $\mathsf{x_{k+1} \not \equiv x_i}$ for $i = 1, 2, \dots, k$.

\end{enumerate}
\end{lemma}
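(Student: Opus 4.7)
The plan is to establish both statements by a single induction on the derivation of $\mathsf{x_1:A_1,\ldots,x_k:A_k \vdash \{M\}_{\underline{e}} : B}$, since the two claims are tightly coupled: weakening a $\lambda$-abstraction forces one to exchange the bound variable back to the rightmost slot. To make this interleaving clean, I would prove the unified strengthening: whenever $\mathsf{\Gamma \vdash \{M\}_{\underline{e}} : B}$ holds and $\mathsf{\Gamma'}$ is a well-formed context (no repeated variable) containing every variable-type pair of $\mathsf{\Gamma}$, one has $\mathsf{\Gamma' \vdash \{M\}_{\underline{e}} : B}$. Exchange and weakening are then immediate instances.

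I would then verify the induction step rule by rule. The axiom (N) is trivial since its conclusion does not consult the context. Rules (C2) and (A) preserve the context across their premises and conclusion, so the induction hypothesis lifts each premise to $\mathsf{\Gamma'}$ and reapplying the rule delivers the result. For (L) with premise $\mathsf{\Gamma_0, x:A \vdash M' : B'}$ and conclusion $\mathsf{\Gamma_0 \vdash \lambda x^A.M' : A \Rightarrow B'}$, the variable convention supplies $x \notin \mathsf{\Gamma'}$, making $\mathsf{\Gamma', x:A}$ a well-formed extension of $\mathsf{\Gamma_0, x:A}$; the induction hypothesis together with a final application of (L) then closes the case.

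The delicate case is (C1), because its premises $\mathsf{\Gamma_0 \vdash V_i : A_i}$ and $\mathsf{\Gamma_0 \vdash W_n : N}$ live in the prefix $\mathsf{\Gamma_0}$ of the conclusion context $\mathsf{\Gamma_0, x:A, \Delta}$, whereas a target $\mathsf{\Gamma'}$ may place variables of $\mathsf{\Gamma_0}$ on either side of $x:A$. Here the right move is to split $\mathsf{\Gamma' = \Gamma'_L, x:A, \Gamma'_R}$ at the unique occurrence of $x$, then use the induction hypothesis on the premise subderivations to re-type $\mathsf{V_i}$ and $\mathsf{W_n}$ over any well-formed rearrangement of $\mathsf{\Gamma'_L}$ that still contains $\mathsf{\Gamma_0}$; reapplying (C1) then yields the desired conclusion at $\mathsf{\Gamma'}$.

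The main obstacle is precisely this asymmetry of (C1): one must arrange that the premises lift to a context sitting entirely to the left of $x:A$ in $\mathsf{\Gamma'}$. This is exactly why it is cleaner to bundle exchange and weakening into one inductive statement, since the induction hypothesis then gives enough flexibility to simultaneously reorder $\mathsf{\Gamma'_L}$ and extend it with the pairs that end up to the right of $x:A$ in the target context. All other cases, as well as the specializations to the two named claims, are then entirely routine.
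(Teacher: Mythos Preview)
Your plan matches the paper's, which says only ``by induction on the construction of the derivation''; you have also correctly isolated (C1) as the one non-routine case.

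The gap is in how you close (C1). To reapply the rule with conclusion $\mathsf{\Gamma'_L, x{:}A, \Gamma'_R \vdash case(x\,V_1 \dots V_k)[\dots] : N}$ you must supply the premises at the literal prefix $\mathsf{\Gamma'_L}$, not at some extension of it. Your inductive hypothesis yields $\mathsf{\Gamma'' \vdash V_i : A_i}$ only when $\mathsf{\Gamma''}$ contains $\mathsf{\Gamma_0}$; if a pair from $\mathsf{\Gamma_0}$ lands in $\mathsf{\Gamma'_R}$, then $\mathsf{\Gamma'_L}$ fails this, and ``extending $\mathsf{\Gamma'_L}$ with the pairs that end up to the right'' changes the context fed to (C1) and hence moves the conclusion away from $\mathsf{\Gamma'}$. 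Concretely, from $\mathsf{y{:}N,\ x{:}N{\Rightarrow}N \vdash case(x\,V_1)[\dots] : N}$ with $\mathsf{V_1 \equiv case(y)[\dots]}$ one cannot obtain $\mathsf{x{:}N{\Rightarrow}N,\ y{:}N \vdash case(x\,V_1)[\dots] : N}$ via (C1) as literally written, since that would require $\mathsf{\vdash V_1 : N}$ in the empty context. This is less a flaw in your strategy than an artefact of how (C1) is phrased: the paper's own substitution and subject-reduction proofs already treat (C1) as though its premises lived in the full conclusion context $\mathsf{\Gamma, x{:}A, \Delta}$, and under that intended reading (C1) is as routine as (C2), so your unified induction goes through without the asymmetry you flagged.
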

\begin{proof}
By induction on the construction of $\mathsf{x_1 : A_1, x_2 : A_2, \dots, x_k : A_k \vdash M : B}$.
\end{proof}

\begin{lemma}[Substitution lemma]
\label{LemSubstitution}
If $\mathsf{\Gamma, x : A} \vdash \{ \mathsf{P} \}_{e} : \mathsf{B}$ and $\mathsf{\Gamma} \vdash \mathsf{Q} : \mathsf{A}$, then $\mathsf{\Gamma} \vdash \{ \mathsf{P[Q/x]} \}_{e} : \mathsf{B}$.
\end{lemma}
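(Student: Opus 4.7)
The plan is to proceed by induction on the derivation of $\mathsf{\Gamma, x : A \vdash \{P\}_{\underline{e}} : B}$. By Lemma~\ref{LemExchangeAndWeakening} we may without loss of generality assume $\mathsf{x : A}$ sits at the end of the context (matching the statement), and the same lemma will be invoked inside the L case to push $\mathsf{Q}$ under newly introduced binders. The base case N is immediate since $\mathsf{\underline{n}[Q/x] \equiv \underline{n}}$ and the typing rule N re-applies with execution number $\underline{0}$.

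The genuinely interesting case is C1, where $\mathsf{P \equiv case(z V_1 \dots V_k)[y \mapsto W_y]}$ for some variable $\mathsf{z}$. Here I would split on whether $\mathsf{z \equiv x}$. If $\mathsf{z \not\equiv x}$, substitution commutes past the head and re-applying C1 using the inductive hypotheses on each $\mathsf{V_i}$ and each $\mathsf{W_n}$ suffices. If $\mathsf{z \equiv x}$, then $\mathsf{A \equiv A_1 \Rightarrow \dots \Rightarrow A_k \Rightarrow N}$, so the substituted scrutinee becomes $\mathsf{Q V_1[Q/x] \dots V_k[Q/x]}$, which is generally \emph{not} derivable by C1 any more (since $\mathsf{Q}$ need not be a variable). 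However, $k$ applications of rule A, combined with the inductive hypotheses on the $\mathsf{V_i}$, produce a derivation of $\mathsf{\Gamma \vdash Q V_1[Q/x] \dots V_k[Q/x] : N}$ of some execution number; rule C2 then yields the required judgement with outer execution number $\underline{0}$. Recognising that substitution may push a C1-derivation into a C2-derivation is the heart of the argument, and this is precisely what the presence of both C1 and C2 in the type system is designed to accommodate.

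The remaining cases are routine. For C2, the inductive hypotheses apply directly to the scrutinee and to each branch, and C2 re-assembles them at execution number $\underline{0}$. For L, $\mathsf{P \equiv \lambda y^{A'} . M}$: the variable convention supplies $\mathsf{y \not\equiv x}$ and $\mathsf{y}$ not free in $\mathsf{Q}$, so $\mathsf{P[Q/x] \equiv \lambda y^{A'} . M[Q/x]}$; Lemma~\ref{LemExchangeAndWeakening} lifts $\mathsf{\Gamma \vdash Q : A}$ to $\mathsf{\Gamma, y : A' \vdash Q : A}$, the inductive hypothesis handles the body preserving the execution number $\underline{e}$, and rule L closes. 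For A, substitution distributes over application; the inductive hypotheses yield typings for both subterms with unchanged execution numbers, and rule A re-assembles with the same $\mathsf{Max}$ and increment.

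The main obstacle is the C1 case with $\mathsf{z \equiv x}$; everything else is essentially bookkeeping. Alongside this, one must verify throughout that execution numbers line up — in particular, that the outer case expression in the C1-to-C2 transition retains execution number $\underline{0}$ regardless of the possibly large execution number of the newly built scrutinee, which is automatic from the formulation of C2. With these pieces in place the induction closes, giving the desired typing of $\mathsf{P[Q/x]}$ with execution number $\underline{e}$.
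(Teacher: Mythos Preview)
Your proposal is correct and follows essentially the same route as the paper: induction on the typing derivation, with the only non-routine case being C1 when the head variable coincides with $\mathsf{x}$, resolved by building the scrutinee via repeated applications of rule A and then closing with C2 at execution number $\underline{0}$. In fact you are more explicit about this C1-to-C2 transition than the paper's published text, which simply remarks that the $\mathsf{z \equiv x}$ case ``may be handled in the same way as the above case''; your account makes clear why C2 (not C1) is what actually re-applies after substitution.
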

\begin{proof}
By induction on $|\mathsf{P}|$ with the help of Lemmata~\ref{LemFreeVariables} and \ref{LemExchangeAndWeakening}.
\if0
\begin{itemize}

\item If $\mathsf{\Gamma, x : A} \vdash \{ \mathsf{b} \}_{0} : o$, where $\mathsf{b} \in \{ \mathsf{tt}, \mathsf{ff} \}$, and $\mathsf{\Gamma} \vdash \mathsf{Q} : \mathsf{A}$, then the claim trivially holds as $\mathsf{\Gamma} \vdash \{ \mathsf{b} \}_{0} : o$ by the rule B.

\item If $\mathsf{\Gamma, x : A} \vdash \{ \mathsf{\lambda z^C . P} \}_{d} : \mathsf{C \Rightarrow B}$ and $\mathsf{\Gamma \vdash Q : A}$, then $\mathsf{\Gamma, x : A, z : C} \vdash \{ \mathsf{P} \}_{d} : \mathsf{B}$; thus, $\mathsf{\Gamma, z : C, x : A} \vdash \{ \mathsf{P} \}_{d} : \mathsf{B}$ by Lemma~\ref{LemExchangeAndWeakening}. 
Then, $\mathsf{\Gamma, z : C} \vdash \{ \mathsf{P[Q/x]} \}_{d} : \mathsf{B}$ by the induction hypothesis, whence $\mathsf{\Gamma} \vdash \{ \mathsf{\lambda z^C . (P[Q/x])} \equiv \mathsf{(\lambda z^C . P)[Q/x]} \}_{d} : \mathsf{C \Rightarrow B}$ by the rule L.

\item If $\mathsf{\Gamma, x : A} \vdash \{ \mathsf{case (y V_1 V_2 \dots V_k) [W_1 ; W_2]} \}_{0} : o$ and $\mathsf{\Gamma \vdash Q : A}$ such that $\mathsf{y \neq x}$, then $\mathsf{y : C \in \Gamma}$ for some type $\mathsf{C} \equiv \mathsf{C_1} \Rightarrow \mathsf{C_2} \Rightarrow \dots \Rightarrow \mathsf{C_k} \Rightarrow o$ by Lemma~\ref{LemFreeVariables}, $\mathsf{\Gamma \setminus \{ y : C \}, x : A} \vdash \{ \mathsf{V_i} \}_{0} : \mathsf{C_i}$ for $i = 1, 2, \dots, k$, and $\mathsf{\Gamma \setminus \{ y : C \}, x : A} \vdash \{ \mathsf{W_j} \}_{0} : o$ for $j = 1, 2$. 
By the induction hypothesis, $\mathsf{\Gamma \setminus \{ y : C \}} \vdash \{ \mathsf{V_i[Q/x]} \}_{0} : \mathsf{C_i}$ for $i = 1, 2, \dots, k$, and $\mathsf{\Gamma \setminus \{ y : C \}} \vdash \{ \mathsf{W_j[Q/x]} \}_{0} : o$ for $j = 1, 2$. 
By the rule C1, we get $\mathsf{\Gamma} \vdash \{ \mathsf{case (y (V_1[Q/x]) (V_2[Q/x]) \dots (V_k[Q/x])) [W_1[Q/x] ; W_2[Q/x]]} \}_{0} : o$, which is $\mathsf{\Gamma} \vdash \{ \mathsf{case (y V_1 V_2 \dots V_k) [W_1 ; W_2][Q/x]} \}_{0} : o$.

\item If $\mathsf{\Gamma, x : A} \vdash \{ \mathsf{case (x V_1 V_2 \dots V_k) [W_1 ; W_2]} \}_{0} : o$ and $\mathsf{\Gamma} \vdash \mathsf{Q} : \mathsf{A}$, where $\mathsf{A} \equiv \mathsf{A_1} \Rightarrow \mathsf{A_2} \Rightarrow \dots \Rightarrow \mathsf{A_k} \Rightarrow o$, then $\mathsf{\Gamma} \vdash \{ \mathsf{V_i} \}_{0} : \mathsf{A_i}$ for $i = 1, 2, \dots, k$, and $\mathsf{\Gamma} \vdash \{ \mathsf{W_j} \}_{0} : o$ for $j = 1, 2$. 
By Lemma~\ref{LemExchangeAndWeakening}, $\mathsf{\Gamma, x : A} \vdash \{ \mathsf{V_i} \}_{0} : \mathsf{A_i}$ for $i = 1, 2, \dots, k$, and $\mathsf{\Gamma, x : A} \vdash \{ \mathsf{W_j} \}_{0} : o$ for $j = 1, 2$.
By the induction hypothesis, $\mathsf{\Gamma} \vdash \{ \mathsf{V_i[Q/x]} \}_{0} : \mathsf{A_i}$ for $i = 1, 2, \dots, k$, and $\mathsf{\Gamma} \vdash \{ \mathsf{W_j[Q/x]} \}_{0} : o$ for $j = 1, 2$. 
Now, by an iterated application of the rule A, we obtain $\mathsf{\Gamma} \vdash \mathsf{Q (V_1[Q/x]) (V_2[Q/x]) \dots (V_k[Q/x])} : o$.
By the rule C2, $\mathsf{\Gamma} \vdash \{ \mathsf{case (Q (V_1[Q/x]) (V_2[Q/x]) \dots (V_k[Q/x])) [W_1[Q/x] ; W_2[Q/x]]} \}_{0} : o$, i.e., $\mathsf{\Gamma} \vdash \{ \mathsf{case (x V_1 V_2 \dots V_k) [W_1 ; W_2] [Q/x]} \}_{0} : o$.

\item If $\mathsf{\Gamma, x : A} \vdash \{ \mathsf{M N} \}_{m+1} : \mathsf{B}$ and $\mathsf{\Gamma} \vdash \mathsf{Q} : \mathsf{A}$, then $\mathsf{\Gamma, x : A} \vdash \{ \mathsf{M} \}_{d} : \mathsf{C \Rightarrow B}$, and $\mathsf{\Gamma, x : A} \vdash \{ \mathsf{N} \}_{e} : \mathsf{C}$ for some type $\mathsf{C}$ such that $\max (d, e) = m$. 
By the induction hypothesis, we have $\mathsf{\Gamma} \vdash \{ \mathsf{M[Q/x]} \}_{d} : \mathsf{C \Rightarrow B}$ and $\mathsf{\Gamma} \vdash \{ \mathsf{N[Q/x]} \}_{e} : \mathsf{C}$. 
Thus, by the rule A, we may obtain $\mathsf{\Gamma} \vdash \{ \mathsf{M[Q/x] N[Q/x]} \}_{\max (d, e)+1} : \mathsf{B}$, i.e., $\mathsf{\Gamma} \vdash \{ \mathsf{M N [Q/x]} \}_{m+1} : \mathsf{B}$.

\item If $\mathsf{\Gamma, x : A} \vdash \{ \mathsf{case (M) [M_1 ; M_2]} \}_{0} : o$ and $\mathsf{\Gamma \vdash Q : A}$, then $\mathsf{\Gamma, x : A} \vdash \mathsf{M} : o$, and $\mathsf{\Gamma, x : A}  \vdash \mathsf{M_j} : o$ for $j = 1, 2$. 
By the induction hypothesis, $\mathsf{\Gamma} \vdash \mathsf{M[Q/x]} : o$, and $\mathsf{\Gamma} \vdash \mathsf{M_j[Q/x]} : o$ for $j = 1, 2$. 
Thus, $\mathsf{\Gamma} \vdash \{ \mathsf{case (M[Q/x]) [M_1[Q/x] ; M_2[Q/x]]} \}_{0} : o$ by the rule C2, i.e., $\mathsf{\Gamma} \vdash \{ \mathsf{case (M) [M_1 ; M_2] [Q/x]} \}_{0} : o$

\end{itemize}
which completes the proof. 
\fi
\end{proof}

\begin{theorem}[Subject reduction]
\label{ThmSubjectReduction}
If $\mathsf{\Gamma \vdash M : B}$ and $\mathsf{M \to_{\beta \vartheta} R}$, then $\mathsf{\Gamma \vdash R : B}$.
\end{theorem}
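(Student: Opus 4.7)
The plan is to proceed by structural induction on the derivation of $\mathsf{M \to_{\beta \vartheta} R}$, which by the definition of contextual closure decomposes into four head-reduction base cases (one per axiom $\beta, \vartheta_1, \vartheta_2, \vartheta_3$) and a congruence case for each typing rule under which the reduction might take place.

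For the base cases, inversion of typing combined with the lemmas already at hand does the work. If $\mathsf{(\lambda x . M) P \to_\beta M[P/x]}$, then the judgment $\mathsf{\Gamma \vdash (\lambda x . M) P : B}$ must be derived by rule A from $\mathsf{\Gamma \vdash \lambda x . M : A \Rightarrow B}$ and $\mathsf{\Gamma \vdash P : A}$; inverting L yields $\mathsf{\Gamma, x : A \vdash M : B}$, and Lemma~\ref{LemSubstitution} delivers $\mathsf{\Gamma \vdash M[P/x] : B}$. For $\vartheta_1$, the LHS $\mathsf{case(\underline{n})[\bm{M}]}$ can only be typed by C2 (since $\underline{\mathsf{n}}$ is not of the variable-application form required by C1), so inversion immediately gives $\mathsf{\Gamma \vdash M_n : N}$. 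For $\vartheta_2$ and $\vartheta_3$ I would invert C2 on the outer case (and, for $\vartheta_2$, C1 on the inner), extract the typings of the $\mathsf{V_i}, \mathsf{P_y}, \mathsf{Q_n}$ (respectively $\mathsf{M}, \mathsf{P_x}, \mathsf{Q_n}$), and then rebuild the reduct: each new branch $\mathsf{case(P_y)[\bm{Q}]}$ is typable by C2 with exec $\underline{0}$, after which the outer $\mathsf{case}$ is typed by C1 for $\vartheta_2$ (respectively by C2 for $\vartheta_3$). Execution numbers cause no trouble here because C2 admits branches of arbitrary exec.

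For the congruence cases, the inductive hypothesis supplies a typing of the reduced subterm and the ambient typing rule is reapplied. The only delicate point is rule C1, which demands exec $\underline{0}$ for the $\mathsf{V_i}$ and $\mathsf{W_y}$: a reduction inside such a subterm may yield a reduct of positive execution number (for instance, a $\vartheta_1$-reduction inside a C2-typed sub-subterm can install a branch of higher exec), so C1 may no longer apply in its original form. The resolution is to fall back on C2: the raw-term $\mathsf{x V_1 \dots V_k}$ is typable by iterated uses of A (since $\mathsf{x : A_1 \Rightarrow \dots \Rightarrow A_k \Rightarrow N}$), so the outer $\mathsf{case}$ can be retyped by C2, still yielding exec $\underline{0}$ and type $\mathsf{N}$. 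The other rules (N vacuously, C2, L, A) propagate the IH with no such complication. The main obstacle is precisely this exec-$\underline{0}$ rigidity of C1, and it is dissolved once one notices that C2 uniformly subsumes C1 for the purposes of merely assigning a type.
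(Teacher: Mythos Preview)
Your induction on the reduction derivation, with base cases for $\beta,\vartheta_1,\vartheta_2,\vartheta_3$ and congruence cases for each typing rule, is exactly the paper's approach, and your handling of the base cases is correct.

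The gap is in your resolution of the C1 congruence case. You rightly observe that a reduction inside some $\mathsf{V_i}$ or $\mathsf{W_y}$ could raise its execution number above $\underline{0}$, potentially blocking a reapplication of C1. But your proposed fallback---typing the head $\mathsf{x\,V_1\cdots V_k}$ by iterated uses of rule A and then invoking C2---does not go through in this calculus: there is \emph{no typing rule for a bare variable}. Rule A would require a derivable premise $\mathsf{\Gamma \vdash x : A_1 \Rightarrow \cdots \Rightarrow A_k \Rightarrow N}$, and no rule among N, C1, C2, L, A produces such a judgement; a variable enters a derivation only through the fixed pattern in the conclusion of C1 itself. (The paper even remarks, when defining the semantic interpretation, that $\mathsf{\Gamma \vdash x : A}$ is not a term of the system.) Hence C2 is simply unavailable as a substitute here, and the obstacle you detected is not removed by your argument. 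For comparison, the paper's own proof just reapplies C1 after the induction hypothesis without discussing execution numbers; in the regime it actually uses (configurations) every exec-$\underline{0}$ subterm is a value and hence a normal form, so the C1 congruence case is vacuous---but for general terms the difficulty you spotted is genuine, and your patch does not close it.
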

\begin{proof}
By induction on the structure $\mathsf{M \to_{\beta \vartheta} R}$ with the help of Lemma~\ref{LemSubstitution}.
\if0
In the following, let us write $[\mathsf{P_1} ; \mathsf{P_2}] ; [\mathsf{Q_1} ; \mathsf{Q_2}]$ for $[\mathsf{case(P_1)[Q_1 ; Q_2]} ; \mathsf{case(P_2)[Q_1 ; Q_2]}]$.
\begin{itemize}

\item If $\mathsf{M \equiv (\lambda x^A . P) Q}$ and $\mathsf{R \equiv P[Q/x]}$, then $\mathsf{\Gamma \vdash Q : A}$ and $\mathsf{\Gamma, x : A \vdash P :  B}$. 
Then, we have, by Lemma~\ref{LemSubstitution}, $\mathsf{\Gamma \vdash P[Q/x] : B}$.

\item If $\mathsf{M} \equiv \mathsf{case (tt) [M_1 ; M_2]}$ and $\mathsf{R} \equiv \mathsf{M_1}$, then $\mathsf{B} \equiv o$. 
From $\mathsf{\Gamma} \vdash \mathsf{case (tt) [M_1 ; M_2]} : o$, we may conclude that $\mathsf{\Gamma} \vdash \mathsf{M_1} : o$.
The case where $\mathsf{M} \equiv \mathsf{case (ff) [M_1 ; M_2]}$ and $\mathsf{R} \equiv \mathsf{M_2}$ is analogous. 

\item If $\mathsf{M \equiv \mathsf{case} (\mathsf{case} (P) [P'_1 ; P'_2]) [Q'_1 ; Q'_2]}$ and $\mathsf{R \equiv \mathsf{case} (P) [P'_1 ; P'_2] ; [Q'_1 ; Q'_2]}$, then $\mathsf{B} \equiv o$. 
Then, we clearly have $\mathsf{\Gamma} \vdash \mathsf{case (P) [P'_1 ; P'_2]} : o$, from which we deduce $\mathsf{\Gamma} \vdash \mathsf{P} : o$,  $\mathsf{\Gamma} \vdash \mathsf{P'_j} : o$ and $\mathsf{\Gamma} \vdash \mathsf{Q'_j} : o$ for $j = 1, 2$. 
Thus, $\mathsf{\Gamma} \vdash \mathsf{case (P'_j) [Q'_1 ; Q'_2]} : o$ for $j = 1, 2$ by the rule C2. 
Hence, $\mathsf{\Gamma} \vdash \mathsf{case (P) [P'_1 ; P'_2] ; [Q'_1 ; Q'_2]} : o$ by the rule C2.

\item If $\mathsf{M} \equiv \mathsf{case (case (x \boldsymbol{\mathsf{V}}) [W_1 ; W_2]) [U_1 ; U_2]}$ and $\mathsf{R} \equiv \mathsf{case (x \boldsymbol{\mathsf{V}})[W_1 ;W_2] ; [U_1 ; U_2]}$, then it is handled in a similar manner to the above case.

\item If $\mathsf{M \equiv \lambda x^A . P}$, $\mathsf{B \equiv A \Rightarrow C}$, $\mathsf{R \equiv \lambda x^A . Q}$ and $\mathsf{P \to_{\beta \vartheta} Q}$, then $\mathsf{\Gamma, x : A \vdash P : C}$. By the induction hypothesis, $\mathsf{\Gamma, x : A \vdash Q : C}$. Hence, $\mathsf{\Gamma \vdash \lambda x^A . Q : B}$ by the rule L.

\item If $\mathsf{M} \equiv \mathsf{case (P) [P'_1 ; P'_2]}$ and $\mathsf{R} \equiv \mathsf{case (Q) [Q'_1 ; Q'_2]}$ such that just one $\to_{\beta \vartheta}$ holds in the following conjunction $\mathsf{(P \to_{\beta \vartheta} Q \vee P \equiv Q)} \wedge \mathsf{(P'_1 \to_{\beta \vartheta} Q'_1 \vee P'_1 \equiv Q'_1)} \wedge \mathsf{(P'_2 \to_{\beta \vartheta} Q'_2 \vee P'_2 \equiv Q'_2)}$, then $\mathsf{B} \equiv o$. 
In either case, it follows from the induction hypothesis and the rule C2 that $\mathsf{\Gamma} \vdash \mathsf{case (Q) [Q'_1 ; Q'_2]} : o$.

\item If $\mathsf{M \equiv \mathsf{case} (x V_1 V_2 \dots V_k) [V'_1 ; V'_2]}$ and $\mathsf{R \equiv \mathsf{case} (x W_1 W_2 \dots W_k) [W'_1 ; W'_2]}$ such that just one $\to_{\beta \vartheta}$ holds in the following conjunction $\mathsf{(V_1 \to_{\beta \vartheta} W_1 \vee V_1 \equiv W_1)} \wedge \mathsf{(V_2 \to_{\beta \vartheta} W_2 \vee V_2 \equiv W_2)} \wedge \dots \wedge \mathsf{(V_k \to_{\beta \vartheta} W_k \vee V_k \equiv W_k)} \wedge \mathsf{(V'_1 \to_{\beta \vartheta} W'_1 \vee V'_1 \equiv W'_1)} \wedge \mathsf{(V'_2 \to_{\beta \vartheta} W'_2 \vee V'_2 \equiv W'_2)}$, then $\mathsf{B} \equiv o$. 
In either case, it immediately follows from the induction hypothesis and the rule C1 that $\mathsf{\Gamma} \vdash \mathsf{case (x W_1 \dots W_k) [W'_1 ; W'_2]} : o$.

\item If $\mathsf{M \equiv P Q}$, $\mathsf{R \equiv T S}$ and $\mathsf{(P \to_{\beta \vartheta} T \wedge Q \equiv S) \vee (Q \to_{\beta \vartheta} S \wedge P \equiv T)}$ with just one conjunct valid, then clearly $\mathsf{\Gamma \vdash P : A \Rightarrow B}$ and $\mathsf{\Gamma \vdash Q : A}$ for some type $\mathsf{A}$. By the induction hypothesis, if $\mathsf{P \to_{\beta \vartheta} T}$, then $\mathsf{\Gamma \vdash T : A \Rightarrow B}$ and $\mathsf{\Gamma \vdash S : A}$; thus, $\mathsf{\Gamma \vdash T S : B}$ by the rule L. 
The other case is analogous.

\end{itemize}
We have considered all the cases for $\mathsf{M \to_{\beta \vartheta} R}$, establishing the theorem.
\fi
\end{proof}

Next, we show that $\rightrightarrows_{\beta \vartheta}$ is well-defined (Theorems~\ref{ThmCR} and \ref{ThmNormalization}). 
\begin{lemma}[Hindley-Rosen]
\label{LemHR}
Let $R_1$ and $R_2$ be binary relations on the set $\mathcal{T}$ of all terms, and let us write $\to_{R_i}$ for the contextual closure of $R_i$ for $i = 1, 2$. If $\to_{R_1}$ and $\to_{R_2}$ are Church-Rosser, and satisfy $\forall \mathsf{M, P, Q} \in \mathcal{T} . \ \! \mathsf{M \to_{R_1}^* \! P} \wedge \mathsf{M \to_{R_2}^* \! Q \Rightarrow \exists R \in \mathcal{T} . \ \! P \to_{R_2}^* \! R} \wedge \mathsf{Q \to_{R_1}^* \! R}$, then $\to_{R_1 \cup R_2}$ is Church-Rosser.
\end{lemma}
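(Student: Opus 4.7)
The plan is to prove the diamond property for an auxiliary relation built from the reflexive-transitive closures $\twoheadrightarrow_i := \to_{R_i}^*$ ($i=1,2$), and then deduce Church-Rosser for $\to_{R_1 \cup R_2}$ from that. The overall picture is to ``tile'' a divergent reduction in $\to_{R_1 \cup R_2}$ using two kinds of squares: those closed by Church-Rosser of each individual $\to_{R_i}$, and those closed by the commutation hypothesis.

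First I would observe the routine identity
\[
\to_{R_1 \cup R_2}^{*} \;=\; (\to_{R_1} \cup \to_{R_2})^{*} \;=\; (\twoheadrightarrow_1 \cup \twoheadrightarrow_2)^{*},
\]
so it suffices to show confluence of $\twoheadrightarrow \;:=\; \twoheadrightarrow_1 \cup \twoheadrightarrow_2$ (whose reflexive-transitive closure coincides with $\to_{R_1\cup R_2}^{*}$). From Church-Rosser of $\to_{R_i}$ I get that each $\twoheadrightarrow_i$ has the diamond property, and the commutation hypothesis is, by definition, the diamond property between $\twoheadrightarrow_1$ and $\twoheadrightarrow_2$.

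Next I would establish the one-step diamond property for $\twoheadrightarrow$: given $P \mathrel{{}_{{}}{\twoheadleftarrow}} M \twoheadrightarrow Q$, do a four-case analysis on whether each arrow is of type $1$ or $2$. The two ``pure'' cases are closed by the diamond property of $\twoheadrightarrow_1$ and $\twoheadrightarrow_2$ respectively; the two ``mixed'' cases are closed directly by the commutation hypothesis. In every case the closing arrows lie in $\twoheadrightarrow$, so $\twoheadrightarrow$ has the diamond property.

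Finally, I would invoke the standard fact that a relation with the diamond property is confluent, hence its reflexive-transitive closure $\twoheadrightarrow^{*}$ is confluent; by the identity above, $\to_{R_1 \cup R_2}$ is Church-Rosser. The only nontrivial step is the mixed-case closure, which is precisely where the hypothesis is engineered to apply, so I expect no genuine obstacle beyond carefully recording the case split. If I wanted to be self-contained about the passage from diamond property to confluence of the transitive closure, I would include a short strip-lemma argument: fix one side of the divergence and induct on the length of the other, closing each elementary square by the diamond property.
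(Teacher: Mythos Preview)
Your argument is correct and is precisely the ``simple diagram chase'' the paper alludes to (the paper gives no further details beyond that phrase and a citation). Your decomposition into the identity $\to_{R_1\cup R_2}^{*}=(\to_{R_1}^{*}\cup\to_{R_2}^{*})^{*}$, the four-case diamond for the union of the two reflexive-transitive closures, and the passage from diamond to confluence is the standard proof of Hindley--Rosen, so nothing is missing.
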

\begin{proof}
By simple `diagram chase'; see \cite{Hankin1994-HANLCA-2} for the details.
\end{proof}

\begin{theorem}[Church-Rosser]
\label{ThmCR}
The $\beta \vartheta$-reduction $\to_{\beta \vartheta}$ is Church-Rosser.
\end{theorem}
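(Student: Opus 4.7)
The plan is to invoke the Hindley--Rosen lemma (Lemma~\ref{LemHR}) with $R_1 \stackrel{\mathrm{df.}}{=} \beta$ and $R_2 \stackrel{\mathrm{df.}}{=} \vartheta_1 \cup \vartheta_2 \cup \vartheta_3$, thereby reducing Church--Rosser of $\to_{\beta\vartheta}$ to three subgoals: \emph{(i)} $\to_\beta$ is Church--Rosser; \emph{(ii)} $\to_\vartheta$ is Church--Rosser; and \emph{(iii)} the commutation condition $\mathsf{M \to_\beta^* \! P} \wedge \mathsf{M \to_\vartheta^* \! Q}$ implies $\exists \mathsf{R} . \ \! \mathsf{P \to_\vartheta^* \! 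R} \wedge \mathsf{Q \to_\beta^* \! R}$.

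For \emph{(i)} I would run the standard Tait--Martin-L\"of argument, adapted to the infinitary $\mathsf{case}$-constructor: introduce an auxiliary \emph{parallel $\beta$-reduction} (distinct from the paper's fire-\emph{all}-redexes $\rightrightarrows_{\beta\vartheta}$) that contracts any chosen set of $\beta$-redexes simultaneously, verify inductively that it contains $\to_\beta$ and is contained in $\to_\beta^*$, prove the substitutivity lemma that it is preserved by capture-free substitution in both positions, and deduce the diamond property by induction on the derivation of the first step. The countably infinite branches $\bm{\mathsf{M'}}$ in $\mathsf{case(M)[\bm{\mathsf{M'}}]}$ only force the inductive clauses to quantify over $\mathbb{N}$, which is harmless. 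For \emph{(ii)} I would apply the same recipe with a parallel $\vartheta$-reduction. Since $\vartheta_2$ is literally an instance of $\vartheta_3$, the genuine overlaps are triply-nested $\mathsf{case}$ expressions, and each such critical pair is closed by one additional $\vartheta_3$-step on each side, meeting at the common right-associated form.

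Subgoal \emph{(iii)} will be the main obstacle. By two standard diagram chases it suffices to prove the one-step version: $\mathsf{M \to_\beta P}$ and $\mathsf{M \to_\vartheta Q}$ imply the existence of $\mathsf{R}$ with $\mathsf{P \to_\vartheta^* \! R}$ and $\mathsf{Q \to_\beta^* \! R}$. The key technical lemma is that $\to_\vartheta$ commutes with capture-free substitution: $\mathsf{M \to_\vartheta M'}$ implies $\mathsf{M[N/x] \to_\vartheta M'[N/x]}$, and $\mathsf{N \to_\vartheta N'}$ implies $\mathsf{M[N/x] \to_\vartheta^* \! M[N'/x]}$, where the star is needed because $\mathsf{x}$ may occur multiply or not at all in $\mathsf{M}$. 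Using these, a case analysis on the relative positions of the contracted $\beta$- and $\vartheta$-redexes closes the diamond: if the two redexes are disjoint or strictly nested the argument is routine, and the only delicate situation is a $\beta$-contraction $\mathsf{(\lambda x . P_0) N \to P_0[N/x]}$ that substitutes $\mathsf{N}$ into the head position of some $\mathsf{case}$ inside $\mathsf{P_0}$ and thereby creates a new $\vartheta_2$- or $\vartheta_3$-redex in $\mathsf{P}$; here one first contracts the residual $\vartheta$-step inside $\mathsf{P}$ and then aligns both sides. I expect no new ideas are needed beyond careful residual tracking.
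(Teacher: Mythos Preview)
Your proposal is correct and follows essentially the same route as the paper: apply Hindley--Rosen (Lemma~\ref{LemHR}) to $\to_\beta$ and $\to_\vartheta$, with Church--Rosser of each component established separately and the commutation reduced to its one-step form via a positional case analysis, using that $\to_\vartheta$ commutes with substitution. One small correction: the ``delicate situation'' you single out --- a $\beta$-step \emph{creating} a new $\vartheta$-redex --- is irrelevant to the one-step commutation, which concerns only the $\vartheta$-redex already present in $\mathsf{M}$; the genuine case split is purely positional (the $\vartheta$-redex lies inside the $\beta$-body, inside the $\beta$-argument where it may be duplicated, around the $\beta$-redex, or disjoint from it), and since a $\beta$-redex and a $\vartheta$-redex have different head constructors there is no non-nested overlap to worry about.
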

\begin{proof}
First, it is easy to see that the \emph{$\vartheta$-reduction} $\to_{\vartheta} \ \! \stackrel{\mathrm{df. }}{=} \ \! \bigcup_{i=1}^4 \! \to_{\vartheta_i}$ satisfies the diamond-property, and thus it is Church-Rosser. 

Also, we may show that:
\begin{equation}
\label{semidiamond}
\mathsf{M \to_\beta P \wedge M \to_\vartheta Q} \Rightarrow \exists \mathsf{R} . \ \! \mathsf{P \to_\vartheta^\ast R \wedge Q \to_\beta R}
\end{equation}
for all terms $\mathsf{M}$, $\mathsf{P}$ and $\mathsf{Q}$, where note the asymmetry of $\to_\vartheta$ and $\to_\beta$, by a case analysis on the relation between $\beta$- and $\vartheta$-redexes in $\mathsf{M}$:
\begin{itemize}

\item If the $\beta$-redex is inside the $\vartheta$-redex, then it is easy to see that (\ref{semidiamond}) holds;

\item If the $\vartheta$-redex is inside the body of the function subterm of the $\beta$-redex, then it suffices to show that $\to_{\vartheta}$ commutes with substitution, but it is straightforward;

\item If $\vartheta$-redex is inside the argument of the $\beta$-redex, then it may be duplicated by a finite number $n$, but whatever the number $n$ is, (\ref{semidiamond}) clearly holds;

\item If the $\beta$- and $\vartheta$-redexes are disjoint, then (\ref{semidiamond}) trivially holds.
 
\end{itemize}

It then follows from (\ref{semidiamond}) that:
\begin{equation}
\label{SemiChurchRosser}
\mathsf{M \to_\beta P \wedge M \to_\vartheta^\ast Q} \Rightarrow \exists \mathsf{R} . \ \! \mathsf{P \to_\vartheta^\ast R \wedge Q \to_\beta R}
\end{equation}
which in turn implies that:
\begin{equation}
\label{ChurchRosser}
\mathsf{M \to_\beta^\ast P \wedge M \to_\vartheta^\ast Q} \Rightarrow \exists \mathsf{R} . \ \! \mathsf{P \to_\vartheta^\ast R \wedge Q \to_\beta^\ast R}
\end{equation}
for all terms $\mathsf{M}$, $\mathsf{P}$ and $\mathsf{Q}$. 
Applying Lemma~\ref{LemHR} to (\ref{ChurchRosser}) (or equivalently by the well-known `diagram chase' argument on $\to_\beta^\ast$ and $\to_\vartheta^\ast$), we may conclude that the $\beta \vartheta$-reduction $\to_{\beta \vartheta} \ \! = \ \! \to_{\beta} \cup \to_{\vartheta}$ is Church-Rosser, completing the proof.
\end{proof}

\if0
Finally, by a slight modification of the proof of \emph{strong normalization (SN)} of $\to_\beta$ (for the simply-typed $\lambda$-calculus) in \cite{Hankin1994-HANLCA-2}, we establish SN of $\rightarrow_{\beta \vartheta}$, i.e., there is no infinite chain of $\rightarrow_{\beta \vartheta}$ (Theorem~\ref{ThmNormalization}).

\begin{definition}[SC \cite{Hankin1994-HANLCA-2}]
A term $\mathsf{\Gamma \vdash M : A}$ is  \emph{\bfseries strongly computable (SC)} \if0 \footnote{It is just an auxiliary notion and (a priori) nothing to do with the Church-Turing computability.} \fi iff it satisfies either of the following:
\begin{itemize}

\item $\mathsf{A} \equiv o$ and $\mathsf{\Gamma} \vdash \mathsf{M} : o$ is strongly normalizing (SN) with respect to $\rightarrow_{\beta \vartheta}$; 

\item $\mathsf{A \equiv B \Rightarrow C}$ and $\mathsf{\Gamma \vdash ML : C}$ is SC for any SC term $\mathsf{\Gamma \vdash L : B}$.

\end{itemize}
\end{definition}

\begin{lemma}[SN of variables]
\label{LemSNOfVar}
If terms $\mathsf{\Gamma, \Delta \vdash M_i : A_i}$ for $i = 1, 2, \dots, k$ are SN (with respect to $\rightarrow_{\beta \vartheta}$), then for any type $\mathsf{B}$ so is the term 
\begin{equation*}
\mathsf{\Gamma, x : A_1 \Rightarrow A_2 \Rightarrow \dots \Rightarrow A_k \Rightarrow B, \Delta \vdash \underline{x} M_1 M_2 \dots M_k : B}.
\end{equation*}
\end{lemma}
\begin{proof}
First, assuming $\mathsf{A_i} \equiv \mathsf{A_{i, 1}} \Rightarrow \mathsf{A_{i, 2}} \Rightarrow \dots \Rightarrow \mathsf{A_{i, k_i}} \Rightarrow o$, we have: 
\begin{align*}
\mathsf{\underline{x_i}^{A_i}[M_i/x_i] \equiv \lambda x_{i, 1}^{A_{i, 1}} x_{i, 2}^{A_{i, 2}} \dots x_{i, k_i}^{A_{i, k_i}} . \ \! case(M_i \underline{x_{i, 1}} \underline{x_{i, 2}} \dots \underline{x_{i, k_i}})[tt ; ff]}
\end{align*}
where it is not hard to see, with Theorem~\ref{ThmCR}, that the subterm $\mathsf{M_i \underline{x_{i, 1}} \underline{x_{i, 2}} \dots \underline{x_{i, k_i}}}$ is SN for $i = 1, 2, \dots, k$.

Therefore, since we have
\begin{align*}
&\mathsf{\underline{x} M_1 M_2 \dots M_k} \\
\equiv \ &\mathsf{(\lambda x_1^{A_1} x_2^{A_2} \dots x_k^{A_k} y_1^{B_1} y_2^{B_2} \dots y_l^{B_l} . \ \! case(x \underline{x_1} \underline{x_2} \dots \underline{x_k} \underline{y_1} \underline{y_2} \dots \underline{y_l})[tt ; ff]) M_1 M_2 \dots M_k} \\
\to_{\beta \vartheta}^\ast \ &\mathsf{\lambda y_1^{B_1} y_2^{B_2} \dots y_l^{B_l} . \ \! case(x (\underline{x_1}[M_1/x_1]) (\underline{x_2}[M_2/x_2]) \dots (\underline{x_k}[M_k/x_k]) \underline{y_1} \underline{y_2} \dots \underline{y_l})[tt ; ff]}
\end{align*}
where $\mathsf{B} \equiv \mathsf{B_1} \Rightarrow \mathsf{B_2} \Rightarrow \dots \mathsf{B_l} \Rightarrow o$, we may conclude that $\mathsf{\underline{x} M_1 M_2 \dots M_k}$ is SN.
\end{proof}

\begin{lemma}[SC and SN]
\label{LemSCAndSN}
Let $\mathsf{B}$ be a type.
\begin{enumerate}

\item Every term of the form $\mathsf{\Gamma \vdash \underline{x} M_1 M_2 \dots M_k : B}$, where $\mathsf{M_1}, \mathsf{M_2}, \dots, \mathsf{M_k}$ are all SN (with respect to $\rightarrow_{\beta \vartheta}$), is SC; 

\item Every SC term of type $\mathsf{B}$ is SN.

\end{enumerate}
\end{lemma}
\begin{proof}
By induction on the type $\mathsf{B}$.
First, for the base case, assume that $\mathsf{B} \equiv o$.
\begin{enumerate}

\item $\mathsf{\Gamma} \vdash \mathsf{\underline{x} M_1 M_2 \dots M_k} : o$ is SN by Lemma~\ref{LemSNOfVar}, and thus SC by the definition.

\item If a term $\mathsf{\Gamma} \vdash \mathsf{M} : o$ is SC, then it is SN by the definition. 

\end{enumerate}

Next, for the inductive step, assume that $\mathsf{B \equiv C \Rightarrow D}$ for some types $\mathsf{C}$ and $\mathsf{D}$.
\begin{enumerate}

\item Let $\mathsf{\Gamma \vdash \underline{x} M_1 M_2 \dots M_k : C \Rightarrow D}$ be a term such that $\mathsf{M_i}$ is SN for $i = 1, 2, \dots, k$.
Given any SC term $\mathsf{\Gamma \vdash L : C}$, which is SN by the induction hypothesis (2), the application $\mathsf{\Gamma \vdash \underline{x} M_1 M_2 \dots M_k L : D}$ is SC by the induction hypothesis (1).
Hence, by the definition, we have shown that $\mathsf{\Gamma \vdash \underline{x} M_1 M_2 \dots M_k : C \Rightarrow D}$ is SC.

\item Assume that a term $\mathsf{\Gamma \vdash M : C \Rightarrow D}$ is SC. 
Note that the term $\mathsf{\Gamma \vdash \underline{x} : C}$ is SC by the induction hypothesis (1).
Thus, the application $\mathsf{\Gamma \vdash M \underline{x} : D}$ is SC by the definition, whence it is SN by the induction hypothesis (2).
From this fact, it immediately follows that the subterm $\mathsf{\Gamma \vdash M : C \Rightarrow D}$ is SN.

\end{enumerate}
\end{proof}

\begin{lemma}[SC of abstractions]
\label{LemSCOfAbstractions}
If terms $\mathsf{\Gamma \vdash L : A}$ and $\mathsf{\Gamma \vdash M[L/x] : B}$ are SC, and $\mathsf{x}$ does not occur free in $\mathsf{\Gamma, x : A \vdash M : B}$, then $\mathsf{\Gamma \vdash \lambda x^A . M : A \Rightarrow B}$ is SC.
\end{lemma}
\begin{proof}
Assume without loss of generality that $\mathsf{A} \equiv \mathsf{A_1} \Rightarrow \mathsf{A_2} \Rightarrow \dots \Rightarrow \mathsf{A_k} \Rightarrow o$.
Given SC terms $\mathsf{\Gamma \vdash M_i : A_i}$ ($i = 1, 2, \dots, k$), the application $\mathsf{\Gamma} \vdash \mathsf{M[L/x] M_1 M_2 \dots M_k} : o$ is SC by the definition, whence it is SN (with respect to $\rightarrow_{\beta \vartheta}$) again by the definition.
It then immediately follows that $\mathsf{\Gamma} \vdash \mathsf{(\lambda x . M) L M_1 M_2 \dots M_k} : o$ is SN, i.e., it is SC, whence the application $\mathsf{\Gamma \vdash (\lambda x . M) L : B}$ is SC by the definition. 
\end{proof}

\begin{lemma}[SC-lemma]
\label{LemSCLemma}
A term $\mathsf{\Gamma \vdash M[L_1/x_1, L_2/x_2, \dots, L_n/x_n] : B}$ obtained from $\mathsf{\Delta \vdash M : B}$ by substituting $\mathsf{\Gamma \vdash L_i : A_i}$ for $\mathsf{x_i}$ in $\mathsf{M}$ is SC if so are $\mathsf{L_i}$ ($i = 1, 2, \dots, n$).
\end{lemma}
\begin{proof}
By induction on the construction of the term $\mathsf{\Delta \vdash M : B}$.
Let us write $\mathsf{\widetilde{M}}$ for the substituted term $\mathsf{M[L_1/x_1, L_2/x_2, \dots, L_n/x_n]}$.
\begin{itemize}

\item If $\mathsf{M}$ is $\mathsf{\underline{x}^B}$, then we need induction on $\mathsf{B}$.
If $\mathsf{B} \equiv o$, then the substituted term is $\mathsf{\Gamma} \vdash \mathsf{\underline{x}}^o\mathsf{[L/x]} \equiv \mathsf{case(L)[tt ; ff]} : o$, which is clearly SN since so is $\mathsf{L}$; thus, $\mathsf{\underline{x}}^o\mathsf{[L/x]}$ is SC by the definition.
If $\mathsf{B} \equiv \mathsf{B_1} \Rightarrow \mathsf{B_2} \Rightarrow \dots \Rightarrow \mathsf{B_k} \Rightarrow o$, then the substituted term $\mathsf{\Gamma \vdash \underline{x}^B[L/x] \equiv \lambda y_1^{B_1} y_2^{B_2} \dots y_k^{B_k} . \ \! case(L \underline{y_1} \underline{y_2} \dots \underline{y_k})[tt ; ff]: B}$ is SC: Given SC terms $\mathsf{\Gamma \vdash R_j : B_j}$ ($j = 1, 2, \dots, k$), the application $\mathsf{\Gamma} \vdash \mathsf{(\underline{x}^B[L/x]) R_1 R_2 \dots R_k} : o$ evaluates to $\mathsf{\Gamma} \vdash \mathsf{case(L \underline{y_1}[R_1/y_1] \underline{y_2}[R_2/y_2] \dots \underline{y_k}[R_k/y_k]) [tt ; ff]} : o$, where each $\mathsf{\Gamma} \vdash \mathsf{\underline{y_j}[R_j/y_j]} : \mathsf{B_j}$ is SC by the induction hypothesis ($j = 1, 2, \dots, k$), and therefore $\mathsf{\Gamma} \vdash \mathsf{(\underline{x}^B[L/x]) R_1 R_2 \dots R_k} : o$ is SN by Theorem~\ref{ThmCR}, whence it is SC by the definition. 
In this way, we have shown that $\mathsf{\Gamma \vdash \underline{x}^B[L/x] : B}$ is SC.
\if0
\begin{enumerate}

\item $\mathsf{L \underline{y_1} \underline{y_2} \dots \underline{y_k}}$ is SC by the definition, and thus SN by Lemma~\ref{LemSCAndSN} (2);

\item Thus, $\mathsf{\underline{x}^B[L/x]}$ is clearly SN;

\item Then, given SC terms $\mathsf{\Gamma \vdash R_j : B_j}$ ($j = 1, 2, \dots, k$), which are SN again by Lemma~\ref{LemSCAndSN} (2), the application $\mathsf{\Gamma} \vdash \mathsf{(\underline{x}^B[L/x]) R_1 R_2 \dots R_k} : o$ is SN by Lemma~\ref{LemSNLemma}, whence it is SC by the definition.

\end{enumerate}
\fi

\item If $\mathsf{M \equiv PQ}$, then $\mathsf{\widetilde{M} \equiv \widetilde{P} \widetilde{Q}}$.
By the induction hypothesis, $\mathsf{\widetilde{P}}$ and $\mathsf{\widetilde{Q}}$ are both SC, and thus so is $\mathsf{\widetilde{M}}$ by the definition. 

\item If $\mathsf{M \equiv \lambda x^A . P}$, then $\mathsf{\widetilde{M} \equiv \lambda x . \widetilde{P}}$.
Given an SC term $\mathsf{\Gamma \vdash Q : A}$, $\mathsf{\widetilde{P}[Q/x]}$ is SC by the induction hypothesis; thus, by Lemma~\ref{LemSCOfAbstractions}, $\mathsf{(\lambda x . \widetilde{P})Q}$ is SC.
Hence, by the definition, we have shown that $\mathsf{\widetilde{M}}$ is SC.

\item If $\mathsf{M} \equiv \mathsf{case(x V_1 V_2 \dots V_k)[W_1 ; W_2]}$, then we have $\mathsf{\widetilde{M}} \equiv \mathsf{case(x \widetilde{V}_1 \widetilde{V}_2 \dots \widetilde{V}_k)[\widetilde{W}_1 ; \widetilde{W}_2]}$.
Then, $\mathsf{\widetilde{V}_i}$ ($i = 1, 2, \dots, k$) and $\mathsf{\widetilde{W}_j}$ ($j = 1, 2$) are SC by the induction hypothesis, and thus SN by Lemma~\ref{LemSCAndSN}.
Hence, $\mathsf{\widetilde{M}}$ is SN, and thus SC by the definition. 

\item If $\mathsf{M} \equiv \mathsf{case(P)[Q_1 ; Q_2]}$, then it can be dealt with in similarly to the above case.

\end{itemize}
By the induction on the term $\mathsf{\Delta} \vdash \mathsf{M} : \mathsf{B}$, we have established the lemma. 
\end{proof}
\fi

Now, we show \emph{strong normalization} of $\rightarrow_{\beta \vartheta}$, i.e., there is no infinite chain of $\rightarrow_{\beta \vartheta}$:
\begin{theorem}[SN]
\label{ThmNormalization}
The $\beta \vartheta$-reduction $\rightarrow_{\beta \vartheta}$ is strongly normalizing (SN).
\end{theorem}
\begin{proof}
By a slight, straightforward modification of the proof of strong normalization of the simply-typed $\lambda$-calculus in \cite{Hankin1994-HANLCA-2}.
\end{proof}

Thus, it follows from Theorems~\ref{ThmCR} and \ref{ThmNormalization} that the normal form $\mathit{nf}(\mathsf{M})$ of each term $\mathsf{M}$ of FPCF (with respect to $\rightarrow_{\beta \vartheta}$) uniquely exists. 
Moreover, we have:
\begin{theorem}[Normal forms are values]
\label{ThmNormalFormsAreValues}
The normal form $\mathit{nf}(\mathsf{M})$ of every program $\mathsf{M}$ (with respect to $\rightarrow_{\beta \vartheta}$) is a value.
\end{theorem}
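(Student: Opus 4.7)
The plan is to prove that every $\rightrightarrows_{\beta\vartheta}$-normal form is a value, from which the theorem follows immediately: since $\mathit{nf}(\mathsf{M})$ is a term by Theorem~\ref{ThmSubjectReduction} (applied iteratively along the reduction chain), and it is by definition in normal form, it must then be a value. So the task reduces to showing, by induction on the typing derivation of an arbitrary term $\mathsf{N}$ in normal form, that $\mathsf{N}$ is generated only by the rules N, C1 and L.

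The easy cases are disposed of first. If the last rule is N, then $\mathsf{N} \equiv \underline{\mathsf{n}}$ and we are done. If it is C1, then $\mathsf{N}$ is already a value by definition. If it is L, then $\mathsf{N} \equiv \lambda x^\mathsf{A}.\mathsf{M'}$ with $\mathsf{M'}$ itself a normal form (any redex in $\mathsf{M'}$ would be a redex in $\mathsf{N}$), so by the induction hypothesis $\mathsf{M'}$ is a value, whence $\mathsf{N}$ is a value by L.

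The substance of the proof is to rule out the remaining two constructors at the root. Suppose the last rule is A, so $\mathsf{N} \equiv \mathsf{M_1\,M_2}$ with $\mathsf{\Gamma \vdash M_1 : A \Rightarrow B}$; then $\mathsf{M_1}$ is in normal form and, by the induction hypothesis, a value of function type. Inspection of the value constructors shows that values built by N or C1 have type $\mathsf{N}$, so $\mathsf{M_1}$ must be built by L, i.e., $\mathsf{M_1 \equiv \lambda x^A.P}$. But then $\mathsf{N}$ is a $\beta$-redex, contradicting normality. Suppose instead the last rule is C2, so $\mathsf{N \equiv case(M')[\bm{P}]}$ with $\mathsf{M'}$ in normal form and of type $\mathsf{N}$. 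By the induction hypothesis $\mathsf{M'}$ is a value of type $\mathsf{N}$, and inspecting the value constructors again, the only possibilities are $\mathsf{M' \equiv \underline{n}}$ (rule N) or $\mathsf{M' \equiv case(x V_1 \dots V_k)[\bm{W}]}$ (rule C1). In the first case $\mathsf{N}$ is a $\vartheta_1$-redex, in the second a $\vartheta_2$-redex (equivalently, a $\vartheta_3$-redex); either way normality is contradicted. Thus the C2 case cannot occur for a normal form, and the induction closes.

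I expect the main obstacle to be purely bookkeeping rather than conceptual, namely the infinitary branching of $\mathsf{case}$-terms: the parallel reduction $\rightrightarrows_{\beta\vartheta}$ fires all redexes simultaneously, so normality of $\mathsf{N}$ entails that \emph{every} branch $\mathsf{P_n}$ (and in the C1 case every $\mathsf{V_i}$ and $\mathsf{W_n}$) is itself in normal form, and the induction must be set up to apply to all of them uniformly. Once this is phrased carefully, the content of the argument is just the exhaustive exclusion described above, exploiting the crucial type-theoretic fact that every value of type $\mathsf{N}$ is either a numeral or a C1-style $\mathsf{case}$ and every value of function type is a $\lambda$-abstraction.
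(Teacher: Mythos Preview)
Your argument is correct and is the standard direct proof: the root rule of a $\beta\vartheta$-normal term cannot be A (the function part would be a value of arrow type, hence a $\lambda$, creating a $\beta$-redex) and cannot be C2 (the scrutinee would be a value of type $\mathsf{N}$, hence a numeral or a C1-case, creating a $\vartheta$-redex). The paper, by contrast, gives no argument at all: it simply cites Amadio--Curien, where the analogous fact is established in the course of showing that PCF B\"ohm trees are closed under composition. Your self-contained treatment is therefore strictly more informative than the paper's. One small tightening: the C1 case is not ``already a value by definition'', since a value is a term whose \emph{entire} derivation uses only N, C1 and L; you must invoke the induction hypothesis on each $\mathsf{V_i}$ and $\mathsf{W_n}$ there, exactly as you acknowledge in your final paragraph. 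The infinitary branching is harmless once terms are read as well-founded (though possibly infinitely wide) trees, which is the intended reading in this calculus, so ordinary structural induction applies.
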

\begin{proof}
It has been shown in \cite{amadio1998domains} during the proof to show that PCF B\"{o}hm trees are closed under composition.
\end{proof}

Therefore, we have shown that the operational semantics $\rightarrow$ is well-defined:
\begin{corollary}[Correctness of operational semantics]
\label{CoroCorrectness}
If $\mathsf{\Gamma} \vdash \{ \mathsf{M} \}_{e} : \mathsf{B}$ is a program, and $e > 1$ (resp. $e = 1$), then there exists a unique program (resp. value) $\mathsf{\Gamma} \vdash \{ \mathsf{M'} \}_{e-1} : \mathsf{B}$ that satisfies $\mathsf{M} \rightarrow \mathsf{M'}$.
\end{corollary}
\begin{proof}
By Theorems~\ref{ThmSubjectReduction}, \ref{ThmCR}, \ref{ThmNormalization} and \ref{ThmNormalFormsAreValues}.
\end{proof}

\if0
From these theorems, we may conclude that:
\begin{corollary}[Well-defined operational semantics]
The operational semantics $\to_{T_\vartheta}$ is well-defined in the following sense:
\begin{enumerate}

\item If $\mathsf{\Gamma \vdash M : B}$, then a raw-term $\mathsf{N}$ such that $\mathsf{M \to_{T_\vartheta} N}$ uniquely exists, and it satisfies $\mathsf{\Gamma \vdash N : B}$

\item The operational semantics $\to_{T_\vartheta}$ is normalizing.

\end{enumerate}
\end{corollary}
\fi

\subsection{Dynamic Bicategorical Semantics of Finitary PCF}
Next, we present a general, categorical recipe to give semantics of FPCF in a CCBoC in such a way that satisfies the DCP.

\begin{definition}[Structures for FPCF]
\label{DefStructuresForFPCF}
A \emph{\bfseries structure} for FPCF in a CCBoC $\mathcal{C} = (\mathcal{C}, \mathcal{E})$ is a tuple $\mathcal{S} = (\mathscr{B}, 1, \times, \pi, \Rightarrow, \mathit{ev}, \underline{\mathit{tt}}, \underline{\mathit{ff}}, \vartheta)$ such that:
\begin{itemize}

\item $\mathscr{B} \in \mathcal{C}$;

\item $1$, $(\times, \pi_1, \pi_2)$ and $(\Rightarrow, \mathit{ev})$ are respectively a $\beta$-terminal object, a $\beta$-product (with $\beta$-projections) and a $\beta$-exponential (with $\beta$-evaluations) in $\mathcal{C}$; 

\item $\underline{\mathit{tt}}, \underline{\mathit{ff}}: 1 \rightarrow \mathscr{B}$ and  $\vartheta : \mathscr{B} \times (\mathscr{B} \times \mathscr{B}) \rightarrow \mathscr{B}$ are values in $\mathcal{C}$.

\end{itemize}
The \emph{\bfseries interpretation} $\llbracket \_ \rrbracket_{\mathcal{C}}^{\mathcal{S}}$ of FPCF induced by $\mathcal{S}$ in $\mathcal{C}$ assigns an object $\llbracket \mathsf{A} \rrbracket_{\mathcal{C}}^{\mathcal{S}} \in \mathcal{C}$ to each type $\mathsf{A}$, an object $\llbracket \mathsf{\Gamma} \rrbracket_{\mathcal{C}}^{\mathcal{S}} \in \mathcal{C}$ to each context $\mathsf{\Gamma}$, and a $\beta$-morphism $\llbracket \mathsf{M} \rrbracket_{\mathcal{C}}^{\mathcal{S}} : \llbracket \mathsf{\Gamma} \rrbracket_{\mathcal{C}}^{\mathcal{S}} \to \llbracket \mathsf{B} \rrbracket_{\mathcal{C}}^{\mathcal{S}}$ to each term $\mathsf{\Gamma \vdash M : B}$ as follows:
\begin{itemize}

\item \textsc{(Types).}  $\llbracket o \rrbracket_{\mathcal{C}}^{\mathcal{S}} \stackrel{\mathrm{df. }}{=} \mathscr{B}$ and $\llbracket \mathsf{A \Rightarrow B} \rrbracket_{\mathcal{C}}^{\mathcal{S}} \stackrel{\mathrm{df. }}{=} \llbracket \mathsf{A} \rrbracket_{\mathcal{C}}^{\mathcal{S}} \Rightarrow \llbracket \mathsf{B} \rrbracket_{\mathcal{C}}^{\mathcal{S}}$;

\item \textsc{(Contexts).} $\llbracket \boldsymbol{\epsilon} \rrbracket_{\mathcal{C}}^{\mathcal{S}} \stackrel{\mathrm{df. }}{=} 1$ and $\llbracket \mathsf{\Gamma, x : A} \rrbracket_{\mathcal{C}}^{\mathcal{S}} \stackrel{\mathrm{df. }}{=} \llbracket \mathsf{\Gamma} \rrbracket_{\mathcal{C}}^{\mathcal{S}} \times \llbracket \mathsf{A} \rrbracket_{\mathcal{C}}^{\mathcal{S}}$; 

\item \textsc{(Terms).} 
\begin{align*}
\llbracket \mathsf{\Gamma} \vdash \mathsf{tt} : o \rrbracket_{\mathcal{C}}^{\mathcal{S}} &\stackrel{\mathrm{df. }}{=} \mathcal{E}^\omega(!_{\llbracket \mathsf{\Gamma} \rrbracket_{\mathcal{C}}^{\mathcal{S}}} ; \underline{\mathit{tt}}) \\
\llbracket \mathsf{\Gamma} \vdash \mathsf{ff} : o \rrbracket_{\mathcal{C}}^{\mathcal{S}} &\stackrel{\mathrm{df. }}{=} \mathcal{E}^\omega(!_{\llbracket \mathsf{\Gamma} \rrbracket_{\mathcal{C}}^{\mathcal{S}}} ; \underline{\mathit{ff}}) \\
\llbracket \mathsf{\Gamma \vdash \lambda x . M : A \Rightarrow B} \rrbracket_{\mathcal{C}}^{\mathcal{S}} &\stackrel{\mathrm{df. }}{=} \Lambda_{\llbracket \mathsf{\Gamma} \rrbracket_{\mathcal{C}}^{\mathcal{S}}, \llbracket \mathsf{A} \rrbracket_{\mathcal{C}}^{\mathcal{S}}, \llbracket \mathsf{B} \rrbracket_{\mathcal{C}}^{\mathcal{S}}} (\llbracket \mathsf{\Gamma, x : A \vdash M : B} \rrbracket_{\mathcal{C}}^{\mathcal{S}}) \\
\llbracket \mathsf{\Gamma \vdash M N : B} \rrbracket_{\mathcal{C}}^{\mathcal{S}} &\stackrel{\mathrm{df. }}{=} \langle \llbracket \mathsf{\Gamma \vdash M : A \Rightarrow B} \rrbracket_{\mathcal{C}}^{\mathcal{S}}, \llbracket \mathsf{\Gamma \vdash N : A} \rrbracket_{\mathcal{C}}^{\mathcal{S}} \rangle_{\llbracket \mathsf{A \Rightarrow B} \rrbracket_{\mathcal{C}}^{\mathcal{S}}, \llbracket \mathsf{A} \rrbracket_{\mathcal{C}}^{\mathcal{S}}}^{\llbracket \mathsf{\Gamma} \rrbracket_{\mathcal{C}}^{\mathcal{S}}} ; \mathit{ev}_{\llbracket \mathsf{A} \rrbracket_{\mathcal{C}}^{\mathcal{S}}, \llbracket \mathsf{B} \rrbracket_{\mathcal{C}}^{\mathcal{S}}} \\
\llbracket \mathsf{\Gamma} \vdash \mathsf{case(x \boldsymbol{\mathsf{V}})[\mathsf{W_1} ; \mathsf{W_2}]} : o \rrbracket_{\mathcal{C}}^{\mathcal{S}} &\stackrel{\mathrm{df. }}{=} \mathcal{E}^\omega(\langle \llbracket \mathsf{\Gamma} \vdash \mathsf{x \boldsymbol{\mathsf{V}}} : o \rrbracket_{\mathcal{C}}^{\mathcal{S}}, \langle \llbracket \mathsf{\Gamma} \vdash \mathsf{W_1} : o \rrbracket_{\mathcal{C}}^{\mathcal{S}}, \llbracket \mathsf{\Gamma} \vdash \mathsf{W_2} : o \rrbracket_{\mathcal{C}}^{\mathcal{S}} \rangle \rangle ; \vartheta) \\
\llbracket \mathsf{\Gamma} \vdash \mathsf{case(M)[P_1 ; P_2]} : o \rrbracket_{\mathcal{C}}^{\mathcal{S}} &\stackrel{\mathrm{df. }}{=} \mathcal{E}^\omega(\langle \llbracket \mathsf{\Gamma} \vdash \mathsf{M} : o \rrbracket_{\mathcal{C}}^{\mathcal{S}}, \langle \llbracket \mathsf{\Gamma} \vdash \mathsf{P_1} : o \rrbracket_{\mathcal{C}}^{\mathcal{S}}, \llbracket \mathsf{\Gamma} \vdash \mathsf{P_2} : o \rrbracket_{\mathcal{C}}^{\mathcal{S}} \rangle \rangle ; \vartheta) 
\end{align*}
where $\llbracket \mathsf{\Gamma \vdash x : A} \rrbracket_{\mathcal{C}}^{\mathcal{S}} : \llbracket \mathsf{\Gamma} \rrbracket_{\mathcal{C}}^{\mathcal{S}} \rightarrow \llbracket \mathsf{A} \rrbracket_{\mathcal{C}}^{\mathcal{S}}$ (n.b., $\mathsf{\Gamma \vdash x : A}$ is not a term of FPCF, but we need it for the application $\mathsf{x \boldsymbol{\mathsf{V}}}$) is the obvious (possibly iterated) $\beta$-projection.
\end{itemize}
Moreover, the structure $\mathcal{S}$ is \emph{\bfseries standard} iff it satisfies the following five axioms:
\begin{enumerate}

\item The maps $\Lambda_{A, B, C}$ and $\langle \_, \_ \rangle_{A, B}^C$ in $\mathcal{C}$ are bijections for each triple $A, B, C \in \mathcal{C}$; 

\item The object $\mathscr{B}$, a $\beta$-product and a $\beta$-exponential of $\mathcal{C}$ are pairwise distinct; \if0 \footnote{We distinguish objects and $\beta$-morphisms in BoCs \emph{on-the-nose} as it is standard when giving a categorical semantics of type theories, e.g., in the CCBoC $\mathcal{LDG}$ of dynamic games and strategies (Definition~\ref{DefCCBoCCDG}) we have $A \neq T \& A$ and $B \neq T \Rightarrow B$ for any dynamic games $A$ and $B$.} \fi

\item Each $\beta$-composition that occurs as the interpretation of a term is not a value;

\item A $\beta$-currying and a $\beta$-composition of $\mathcal{C}$ that occur as the interpretations of terms never coincide;

\item The $\beta$-evaluation $\mathit{ev}_{A, B}$ for any $A, B \in \mathcal{C}$ is a mono with respect to the $\beta$-composition, i.e., $f ; \mathit{ev}_{A, B} = f' ; \mathit{ev}_{A, B} \Rightarrow f = f'$ for any $C \in \mathcal{C}$ and $f, f' : C \rightarrow B^A \times A$ in $\mathcal{C}$.

\end{enumerate}
\end{definition}

Clearly, the interpretation $\llbracket \_ \rrbracket_{\mathcal{C}}^{\mathcal{S}}$ followed by $\mathcal{E}^\omega$, i.e., $\mathcal{E}^\omega(\llbracket \_ \rrbracket_{\mathcal{C}}^{\mathcal{S}})$, coincides with the standard categorical interpretation of the equational theory $\mathsf{Eq(FPCF)}$ in the CCC $\mathcal{V}_{\mathcal{C}}$ \cite{lambek1988introduction,pitts2001categorical,crole1993categories,jacobs1999categorical}. 
In this sense, we have refined the standard categorical semantics of type theories.

At this point, let us recall the DCP (see Section~\ref{Introduction}) specifically for the interpretation of FPCF induced by a structure in a CCBoC: 
\begin{definition}[DCP for FPCF]
\label{DefDCP}
The interpretation $\llbracket \_ \rrbracket_{\mathcal{C}}^{\mathcal{S}}$ of FPCF induced by a structure $\mathcal{S}$ for FPCF in a CCBoC $\mathcal{C} = (\mathcal{C}, \mathcal{E})$ satisfies the \emph{\bfseries dynamic correspondence property (DCP)} iff for any programs $\mathsf{M_1}$ and $\mathsf{M_2}$ of FPCF we have:
\begin{equation*}
\mathsf{M_1} \rightarrow \mathsf{M_2} \Leftrightarrow \llbracket \mathsf{M_1} \rrbracket_{\mathcal{C}}^{\mathcal{S}} \neq \llbracket \mathsf{M_2} \rrbracket_{\mathcal{C}}^{\mathcal{S}} \wedge \mathcal{E}(\llbracket \mathsf{M_1} \rrbracket_{\mathcal{C}}^{\mathcal{S}}) = \llbracket \mathsf{M_2} \rrbracket_{\mathcal{C}}^{\mathcal{S}}.
\end{equation*}
\if0
$\mathsf{M_1} \rightarrow \mathsf{M_2}$ exactly when $\llbracket \mathsf{M_1} \rrbracket_{\mathcal{C}}^{\mathcal{S}} \neq \llbracket \mathsf{M_2} \rrbracket_{\mathcal{C}}^{\mathcal{S}}$ and the diagram 
\begin{diagram}
\mathsf{M_1} & \to & \mathsf{M_2} \\
\dDotsto^{\llbracket \_ \rrbracket_{\mathcal{C}}^{\mathcal{S}}} & & \dDotsto_{\llbracket \_ \rrbracket_{\mathcal{C}}^{\mathcal{S}}} \\
\llbracket \mathsf{M_1} \rrbracket_{\mathcal{C}}^{\mathcal{S}} & \blacktriangleright & \llbracket \mathsf{M_2} \rrbracket_{\mathcal{C}}^{\mathcal{S}}
\end{diagram}
commutes, where $\llbracket \mathsf{M_1} \rrbracket_{\mathcal{C}}^{\mathcal{S}} \blacktriangleright \llbracket \mathsf{M_2} \rrbracket_{\mathcal{C}}^{\mathcal{S}} \stackrel{\mathrm{df. }}{\Leftrightarrow} \llbracket \mathsf{M_2} \rrbracket_{\mathcal{C}}^{\mathcal{S}} = \mathcal{E}(\llbracket \mathsf{M_1} \rrbracket_{\mathcal{C}}^{\mathcal{S}})$.
\fi
\end{definition}

Now, we reduce the DCP for FPCF to the following:
\begin{definition}[PDCP for FPCF]
\label{DefPDCP}
The interpretation $\llbracket \_ \rrbracket_{\mathcal{C}}^{\mathcal{S}}$ of FPCF induced by a structure $\mathcal{S}$ for FPCF in a CCBoC $\mathcal{C} = (\mathcal{C}, \mathcal{E})$ satisfies the \emph{\bfseries pointwise dynamic correspondence property (PDCP)} iff for each term $\mathsf{\Gamma} \vdash \{ \mathsf{M} \}_{e} : \mathsf{B}$ it satisfies:
\begin{equation*}
\mathcal{E}(\llbracket \mathsf{M} \rrbracket_{\mathcal{C}}^{\mathcal{S}}) = \begin{cases} \Lambda \circ \mathcal{E}(\llbracket \mathsf{P} \rrbracket_{\mathcal{C}}^{\mathcal{S}}) &\text{if $\mathsf{M \equiv \lambda x . P}$;} \\ 
\llbracket \mathsf{W} \rrbracket_{\mathcal{C}}^{\mathcal{S}} \ \text{such that $\llbracket \mathsf{W} \rrbracket_{\mathcal{C}}^{\mathcal{S}} \neq \llbracket \mathsf{M} \rrbracket_{\mathcal{C}}^{\mathcal{S}}$} &\text{if $\mathsf{M \equiv UV}$, $e = 1$ and $\mathsf{UV} \rightarrow \mathsf{W}$;} \\
\langle \mathcal{E}(\llbracket \mathsf{L} \rrbracket_{\mathcal{C}}^{\mathcal{S}}), \mathcal{E}(\llbracket \mathsf{R} \rrbracket_{\mathcal{C}}^{\mathcal{S}}) \rangle ; \mathit{ev} &\text{if $\mathsf{M \equiv LR}$ and $e > 1$;} \\
\llbracket \mathsf{M} \rrbracket_{\mathcal{C}}^{\mathcal{S}} &\text{otherwise.} 
\end{cases}
\end{equation*}
\end{definition}

\begin{lemma}[P-lemma]
\label{LemPDCPLemma}
If the interpretation $\llbracket \_ \rrbracket_{\mathcal{C}}^{\mathcal{S}}$ induced by a standard structure $\mathcal{S}$ for FPCF in a CCBoC $\mathcal{C} = (\mathcal{C}, \mathcal{E})$ satisfies the PDCP, then $\mathcal{E}(\llbracket \mathsf{M} \rrbracket_{\mathcal{C}}^{\mathcal{S}}) \neq \llbracket \mathsf{M} \rrbracket_{\mathcal{C}}^{\mathcal{S}} \Leftrightarrow \sharp(\mathsf{M}) \geqslant 1$ for all terms $\mathsf{M}$.
\end{lemma}
\begin{proof}
By induction on the construction of $\mathsf{M}$, where the first and the fifth axioms on standardness of $\mathcal{S}$ is essential. 
\end{proof}

\begin{theorem}[Standard bicategorical semantics of FPCF]
\label{ThmDynamicSemanticsOfSystemTVartheta}
The interpretation $\llbracket \_ \rrbracket_{\mathcal{C}}^{\mathcal{S}}$ of FPCF induced by a standard structure $\mathcal{S}$ for FPCF in a CCBoC $\mathcal{C} = (\mathcal{C}, \mathcal{E})$ satisfies the DCP if it satisfies the PDCP.
\end{theorem}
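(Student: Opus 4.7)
The plan is to reduce the DCP for a general configuration to the PDCP applied simultaneously at each innermost redex. The key observation is that the execution number of a configuration syntactically tracks the nesting depth of rule-A applications, and this corresponds precisely to the nesting depth of horizontal compositions in the categorical interpretation; the FCFE-fashion of $\mathcal{E}$ is exactly what realises one simultaneous ``unwinding'' of the innermost layer.

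First I would analyse the structure of configurations. By the typing rules, every configuration is built from values (rules N, C1, L) using rule A. An application $\mathsf{MN}$ has execution number $\mathsf{Max(e,e')+1}$, so a subterm of $\mathsf{M}$ has execution number $1$ iff it is of the form $\mathsf{M_1 M_2}$ with $\mathsf{M_1, M_2}$ values of execution number $0$; since the function subterm has function type and the only values of function type are $\lambda$-abstractions (rule L), every such subterm has the shape $\mathsf{(\lambda x^A . \ \! V)W}$ with $\mathsf{V, W}$ values. Thus $\mathsf{M \to_{\textit{T}_\vartheta} M'}$ replaces each such innermost redex in $\mathsf{M}$ simultaneously by its $\beta\vartheta$-normal form (a value, by Theorem~\ref{ThmNormalFormsAreValues}) and decrements every other execution number.

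Next I would match this with the interpretation. By induction on the structure of $\mathsf{\Gamma \vdash M : B}$, the 1-cell $\llbracket \mathsf{M} \rrbracket_{\mathcal{C}}^{\mathcal{T}}$ is built compositionally: the clauses for numerals, $\lambda$-abstractions, and $\mathsf{case}$-expressions over applied variables produce values (because $\mathcal{E}^\omega$ appears in the clauses for $\mathsf{\underline{n}}$ and C1-cases, and currying of values yields values), while only rule A introduces a horizontal composition $\langle -, - \rangle ; \mathit{ev}$ whose constituents are themselves 1-cells. Consequently, the syntactic subterms of execution number $1$ are in bijection with the innermost non-value compositions inside $\llbracket \mathsf{M} \rrbracket_{\mathcal{C}}^{\mathcal{T}}$, and the FCFE-evaluation $\mathcal{E}$ by definition acts precisely on these innermost sites in parallel.

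The third step applies PDCP pointwise. At each innermost location corresponding to $\mathsf{(\lambda x^A . \ \! V) W \to_{\textit{T}_\vartheta} U}$, PDCP gives
\begin{equation*}
\mathcal{E}(\llbracket \mathsf{\Gamma \vdash (\lambda x^A . \ \! V) W : B} \rrbracket_{\mathcal{C}}^{\mathcal{T}}) = \llbracket \mathsf{\Gamma \vdash U : B} \rrbracket_{\mathcal{C}}^{\mathcal{T}}.
\end{equation*}
Substituting these equalities into the surrounding categorical context (which consists of pairings, projections, currying, and $\mathit{ev}$, all of which are value-level 1-cells unchanged by $\mathcal{E}$) yields $\mathcal{E}(\llbracket \mathsf{\Gamma \vdash M : B} \rrbracket_{\mathcal{C}}^{\mathcal{T}}) = \llbracket \mathsf{\Gamma \vdash M' : B} \rrbracket_{\mathcal{C}}^{\mathcal{T}}$, which is the ``only if'' direction of the DCP; the ``if'' direction follows because distinct normal forms give rise to distinct values under $\mathcal{E}^\omega$.

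The main obstacle is Step 2, i.e., rigorously relating the syntactic execution number to the categorical nesting depth, and showing that ``substituting equals at the innermost sites'' inside the compound expression $\langle \llbracket \mathsf{M_1} \rrbracket, \llbracket \mathsf{M_2} \rrbracket \rangle ; \mathit{ev}$ really yields a \emph{strict} equality of 1-cells rather than merely a $\cong$. This hinges on the fact that the surrounding plumbing is already value-level and that FCFE evaluation is defined to leave such plumbing untouched, so the pointwise strict equalities provided by PDCP propagate up the term without being weakened to $\cong$.
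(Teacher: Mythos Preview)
Your overall strategy coincides with the paper's: identify the execution-number-$1$ subterms as the innermost $(\lambda x^{\mathsf{A}}.\,\mathsf{V})\mathsf{W}$ redexes, line these up with the innermost horizontal compositions via the FCFE clause of PDCP, and apply the PDCP equality at each site to obtain $\mathcal{E}(\llbracket \mathsf{M} \rrbracket) = \llbracket \mathsf{M'} \rrbracket$. The paper's own proof is in fact much terser than your sketch and simply asserts this correspondence.

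There is, however, one genuine omission. The DCP also requires the \emph{inequality} $\llbracket \mathsf{(\lambda x^A . V) W} \rrbracket \neq \llbracket \mathsf{U} \rrbracket$, and your argument does not establish it; your closing remark about the ``if'' direction and ``distinct normal forms give rise to distinct values under $\mathcal{E}^\omega$'' neither secures this inequality nor addresses the converse implication. The paper supplies the inequality directly from the \textsc{Composition} axiom of a BoC (Definition~\ref{DefBoCs}): since $\llbracket \mathsf{(\lambda x^A . V) W} \rrbracket = \langle \Lambda(\llbracket \mathsf{V} \rrbracket), \llbracket \mathsf{W} \rrbracket \rangle ; \mathit{ev}$ is literally a horizontal composite, that axiom gives $\mathcal{E}(f;g) \neq f;g$, whence $\llbracket \mathsf{U} \rrbracket = \mathcal{E}(\llbracket \mathsf{(\lambda x^A . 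V) W} \rrbracket) \neq \llbracket \mathsf{(\lambda x^A . V) W} \rrbracket$. You should replace your final sentence with this argument.
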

\begin{proof}
In the following, we abbreviate $\llbracket \_ \rrbracket_{\mathcal{C}}^{\mathcal{S}}$ as $\llbracket \_ \rrbracket$.
Assume that $\llbracket \_ \rrbracket$ satisfies the PDCP.
We show $\mathsf{M} \rightarrow \mathsf{M'} \Leftrightarrow \llbracket \mathsf{M} \rrbracket \neq \llbracket \mathsf{M'} \rrbracket \wedge \mathcal{E}(\llbracket \mathsf{M} \rrbracket) = \llbracket \mathsf{M'} \rrbracket$ for any programs $\mathsf{\Gamma} \vdash \{ \mathsf{M} \}_{e} : \mathsf{B}$ and $\mathsf{\Gamma} \vdash \{ \mathsf{M'} \}_{e'} : \mathsf{B}$ of FPCF by induction on the construction of $\mathsf{M}$:
\begin{itemize}

\item If $\mathsf{M} \equiv \mathsf{tt}$ or $\mathsf{M} \equiv \mathsf{ff}$, then there is no term $\mathsf{M'}$ such that $\mathsf{M} \rightarrow \mathsf{M'}$, and there is no $\beta$-morphism $f'$ in $\mathcal{C}$ such that $\llbracket \mathsf{M} \rrbracket \neq f' \wedge \mathcal{E}(\llbracket \mathsf{M} \rrbracket) = f'$ because $\mathcal{E}(\llbracket \mathsf{M} \rrbracket) = \llbracket \mathsf{M} \rrbracket$.

\item If $\mathsf{\Gamma} \vdash \mathsf{M} \equiv \mathsf{case(x V_1 V_2 \dots V_k)[W_1 ; W_2]} : o$, then it can be handled in the same manner as the above case.

\item If $\mathsf{\Gamma \vdash M \equiv \lambda x^A . P : A \Rightarrow C}$, then we have:
\begin{align*}
\mathsf{M} \to \mathsf{M'} &\Leftrightarrow \mathsf{M' \equiv \lambda x . P'} \wedge \mathsf{P} \to \mathsf{P'} \ \text{for some program $\mathsf{P'}$ and variable $\mathsf{x}$} \\
&\Leftrightarrow \mathsf{M' \equiv \lambda x . P'} \wedge \llbracket \mathsf{P} \rrbracket \neq \llbracket \mathsf{P'} \rrbracket \wedge \mathcal{E}(\llbracket \mathsf{P} \rrbracket) = \llbracket \mathsf{P'} \rrbracket \ \text{for some $\mathsf{P'}$ and $\mathsf{x}$} \\ 
&\text{(by the induction hypothesis)} \\
&\Leftrightarrow \llbracket \mathsf{P} \rrbracket \neq \Lambda^{-1}(\llbracket \mathsf{M'} \rrbracket) \wedge \mathcal{E}(\llbracket \mathsf{P} \rrbracket) = \Lambda^{-1}(\llbracket \mathsf{M'} \rrbracket) \\ 
&\text{(n.b., for $\Leftarrow$, $\Lambda^{-1}(\llbracket \mathsf{M'} \rrbracket) \downarrow$ implies that $\mathsf{M'}$ must be a currying as $\mathcal{S}$ is standard)} \\
&\Leftrightarrow \Lambda^{-1}(\llbracket \mathsf{M} \rrbracket) \neq \Lambda^{-1}(\llbracket \mathsf{M'} \rrbracket) \wedge \Lambda^{-1} \circ \mathcal{E}(\llbracket \mathsf{M} \rrbracket) = \Lambda^{-1}(\llbracket \mathsf{M'} \rrbracket) \ \text{(as $\mathcal{E}(\llbracket \mathsf{M} \rrbracket) = \Lambda \circ \mathcal{E} (\llbracket \mathsf{P} \rrbracket)$)} \\
&\Leftrightarrow \llbracket \mathsf{M} \rrbracket \neq \llbracket \mathsf{M'} \rrbracket \wedge \mathcal{E}(\llbracket \mathsf{M} \rrbracket) = \llbracket \mathsf{M'} \rrbracket \ \text{(by the bijectivity of $\Lambda$).}
\end{align*}

\item If $\mathsf{M \equiv LR}$, $\sharp(\mathsf{L}) \geqslant 1$ and  $\sharp(\mathsf{R}) \geqslant 1$, then we have:
\begin{align*}
\mathsf{M} \to \mathsf{M'} &\Leftrightarrow \mathsf{M'} \equiv \mathsf{L'R'} \wedge \mathsf{L} \to \mathsf{L'} \wedge \mathsf{R} \rightarrow \mathsf{R'} \ \text{for some programs $\mathsf{L'}$ and $\mathsf{R'}$} \\
&\Leftrightarrow \mathsf{M'} \equiv \mathsf{L'R'} \wedge \llbracket \mathsf{L} \rrbracket \neq \llbracket \mathsf{L'} \rrbracket \wedge \mathcal{E}(\llbracket \mathsf{L} \rrbracket) = \llbracket \mathsf{L'} \rrbracket \wedge \llbracket \mathsf{R} \rrbracket \neq \llbracket \mathsf{R'} \rrbracket \wedge \mathcal{E}(\llbracket \mathsf{R} \rrbracket) = \llbracket \mathsf{R'} \rrbracket \\
&\text{for some $\mathsf{L'}$ and $\mathsf{R'}$ (by the induction hypothesis)} \\
&\Leftrightarrow \llbracket \mathsf{M'} \rrbracket = \langle \mathcal{E}(\llbracket \mathsf{L} \rrbracket), \mathcal{E}(\llbracket \mathsf{R} \rrbracket) \rangle ; \mathit{ev} \wedge \llbracket \mathsf{L} \rrbracket \neq \mathcal{E}(\llbracket \mathsf{L} \rrbracket) \wedge \llbracket \mathsf{R} \rrbracket \neq \mathcal{E}(\llbracket \mathsf{R} \rrbracket) \\
&\text{(n.b., $\Leftarrow$ holds by the third and the fourth axioms on standardness of $\mathcal{S}$)} \\
&\Leftrightarrow \llbracket \mathsf{M'} \rrbracket = \mathcal{E}(\llbracket \mathsf{LR} \rrbracket) \wedge \llbracket \mathsf{L} \rrbracket \neq \mathcal{E}(\llbracket \mathsf{L} \rrbracket) \wedge \llbracket \mathsf{R} \rrbracket \neq \mathcal{E}(\llbracket \mathsf{R} \rrbracket) \\
&\text{(because the interpretation $\llbracket \_ \rrbracket$ satisfies the PDCP)} \\
&\Leftrightarrow \llbracket \mathsf{M'} \rrbracket = \mathcal{E}(\llbracket \mathsf{M} \rrbracket) \ \text{(by Lemma~\ref{LemPDCPLemma})} \\
&\Leftrightarrow \llbracket \mathsf{M'} \rrbracket = \mathcal{E}(\llbracket \mathsf{M} \rrbracket) \wedge \mathcal{E}(\llbracket \mathsf{M} \rrbracket) \neq \llbracket \mathsf{M} \rrbracket \ \text{(again by Lemma~\ref{LemPDCPLemma}).}
\end{align*}

\item If $\mathsf{M \equiv LR}$, $\sharp(\mathsf{L}) = 0$ and  $\sharp(\mathsf{R}) \geqslant 1$, then we have:
\begin{align*}
\mathsf{M} \rightarrow \mathsf{M'} &\Leftrightarrow \mathsf{M'} \equiv \mathsf{LR'} \wedge \mathsf{R} \rightarrow \mathsf{R'} \ \text{for some program $\mathsf{R'}$} \\
&\Leftrightarrow \mathsf{M'} \equiv \mathsf{LR'} \wedge \llbracket \mathsf{R} \rrbracket \neq \llbracket \mathsf{R'} \rrbracket \wedge \mathcal{E}(\llbracket \mathsf{R} \rrbracket) = \llbracket \mathsf{R'} \rrbracket \ \text{for some $\mathsf{R'}$} \\
&\text{(by the induction hypothesis)} \\
&\Leftrightarrow \llbracket \mathsf{M'} \rrbracket = \langle \llbracket \mathsf{L} \rrbracket, \mathcal{E}(\llbracket \mathsf{R} \rrbracket) \rangle ; \mathit{ev} \wedge \llbracket \mathsf{R} \rrbracket \neq \mathcal{E}(\llbracket \mathsf{R} \rrbracket) \\
&\text{(n.b., $\Leftarrow$ holds as in the above case)} \\
&\Leftrightarrow \llbracket \mathsf{M'} \rrbracket = \langle \llbracket \mathsf{L} \rrbracket, \mathcal{E}(\llbracket \mathsf{R} \rrbracket) \rangle ; \mathit{ev} \ \text{(by Lemma~\ref{LemPDCPLemma})} \\
&\Leftrightarrow \llbracket \mathsf{M'} \rrbracket = \mathcal{E}(\llbracket \mathsf{M} \rrbracket) \ \text{(by Lemma~\ref{LemPDCPLemma} and the PDCP of the interpretation $\llbracket \_ \rrbracket$)} \\
&\Leftrightarrow \llbracket \mathsf{M'} \rrbracket = \mathcal{E}(\llbracket \mathsf{M} \rrbracket) \wedge \mathcal{E}(\llbracket \mathsf{M} \rrbracket) \neq \llbracket \mathsf{M} \rrbracket \ \text{(again by Lemma~\ref{LemPDCPLemma}).}
\end{align*}

\item If $\mathsf{M \equiv LR}$, $\sharp(\mathsf{L}) \geqslant 1$ and  $\sharp(\mathsf{R}) = 0$, then it is handled similarly to the above case.

\item If $\mathsf{M \equiv LR}$, $\sharp(\mathsf{L}) = 0$ and  $\sharp(\mathsf{R}) = 0$, then we have:
\begin{align*}
\mathsf{M} \rightarrow \mathsf{M'} &\Leftrightarrow \mathcal{E}(\llbracket \mathsf{M} \rrbracket) = \llbracket \mathsf{M'} \rrbracket \ \text{(since the interpretation $\llbracket \_ \rrbracket$ satisfies the PDCP)} \\
&\Leftrightarrow \mathcal{E}(\llbracket \mathsf{M} \rrbracket) = \llbracket \mathsf{M'} \rrbracket \wedge \mathcal{E}(\llbracket \mathsf{M} \rrbracket) \neq \llbracket \mathsf{M} \rrbracket \ \text{(by Lemma~\ref{LemPDCPLemma})}
\end{align*}

\end{itemize}
which completes the proof.
\if0
Then, the operational semantics $\to_{T_\vartheta}$ and the evaluation $\mathcal{E}$ compute the corresponding applications and concatenations. Thus, it remains for the DCP to show $\llbracket \mathsf{\Gamma \vdash (\lambda x^A . \ \! V) W : B} \rrbracket_{\mathcal{C}}^{\mathcal{S}} \neq \llbracket \mathsf{\Gamma \vdash U : B} \rrbracket_{\mathcal{C}}^{\mathcal{S}}$ for every rewriting $\mathsf{(\lambda x^A . \ \! V) W \to_{T_\vartheta} U}$ on values. But it clearly holds by the axiom composition on the BoC $\mathcal{C}$.
\fi
\end{proof}


To summarize the present section, we have defined bicategorical `universes' of dynamic, intensional computations, viz., (CC)BoCs, presented the simple functional programming language FPCF, and given an interpretation of the latter in the former as well as a sufficient condition, namely, the PDCP, for the interpretation to satisfy the DCP.
Hence, our research problem (described in Section~\ref{Introduction}) has been reduced to giving a standard structure for FPCF in a game-semantic CCBoC that satisfies the PDCP.

\section{Dynamic Games and Strategies}
\label{DynamicGamesAndStrategies}
The present section introduces dynamic games and strategies and studies their algebraic structures.
The main idea of dynamic games and strategies is to introduce the distinction between \emph{internal} and \emph{external} moves to conventional games and strategies; internal moves constitute `internal communication' between dynamic strategies, representing \emph{intensionality} of computation, and they are to be \emph{a posteriori} `hidden' by the \emph{hiding operation}, capturing \emph{dynamics} of computation.
Conceptually, external moves are `official' ones for the underlying game, while internal moves are supposed to be `invisible' to Opponent for they represent how Player `internally' computes the next external move.

Dynamic games and strategies are based on the variant given in \cite{abramsky1999game}, which we call \emph{static} games and strategies (more generally, to distinguish our `dynamic concepts' from conventional ones, we add the word \emph{static} in front of the corresponding notions in \cite{abramsky1999game}, e.g., static arenas, static legal positions, etc.); this choice is because the variant combines good points of the two best-known variants: \emph{AJM-games} \cite{abramsky2000full} and \emph{HO-games} \cite{hyland2000full}: It interprets the \emph{linear decomposition} of implication \cite{girard1987linear}, and it is \emph{flexible} enough to model a wide range of programming features \cite{abramsky1999game}. 
We have chosen this variant with the hope that our framework is also applicable to various formal systems and programming languages.

\if0
\begin{convention*}
In order to distinguish `dynamic concepts' from the corresponding ones in Chapter~\ref{GamesAndStrategies}, we sometimes add the word \emph{static} in front of the latter, e.g., static arenas, static games, etc. in the present chapter. 
\end{convention*}
\fi

\subsection{Dynamic Arenas and Legal Positions}
Just like static games \cite{abramsky1999game}, dynamic games are based on (the `dynamic generalizations' of) \emph{arenas} and \emph{legal positions}.
An arena defines the basic components of a game, which in turn induces a set of legal positions that specifies the basic rules of the game. 
Let us first introduce these preliminary concepts.

\begin{definition}[Dynamic arenas]
\label{DefDynamicArenas}
A \emph{\bfseries dynamic arena} is a triple 
\begin{equation*}
G = (M_G, \lambda_G, \vdash_G) 
\end{equation*}
such that:
\begin{itemize}

\item $M_G$ is a set, whose elements are called \emph{\bfseries moves};

\item $\lambda_G$ is a function $M_G \to \{ \mathsf{O}, \mathsf{P} \} \times \{ \mathsf{Q}, \mathsf{A} \} \times \mathbb{N}$, called the \emph{\bfseries labeling function}, that satisfies $\mu(G) \stackrel{\mathrm{df. }}{=} \mathsf{Sup}(\{ \lambda_G^{\mathbb{N}}(m) \mid m \in M_G \ \! \}) \in \mathbb{N}$;

\item $\vdash_G$ is a subset of $(\{ \star \} \cup M_G) \times M_G$, where $\star$ is an arbitrary element such that $\star \not \in M_G$, called the \emph{\bfseries enabling relation}, that satisfies:
\begin{itemize}

\item \textsc{(E1).} If $\star \vdash_G m$, then $\lambda_G(m) = \mathsf{O}\mathsf{Q}0$ and $n = \star$ whenever $n \vdash_G m$;

\item \textsc{(E2).} If $m \vdash_G n$ and $\lambda_G^{\mathsf{QA}}(n) = \mathsf{A}$, then $\lambda_G^{\mathsf{QA}}(m) = \mathsf{Q}$ and $\lambda_G^{\mathbb{N}}(m) = \lambda_G^{\mathbb{N}}(n)$;

\item \textsc{(E3).} If $m \vdash_G n$ and $m \neq \star$, then $\lambda_G^{\mathsf{OP}}(m) \neq \lambda_G^{\mathsf{OP}}(n)$;

\item \textsc{(E4).} If $m \vdash_G n$, $m \neq \star$ and $\lambda_G^{\mathbb{N}}(m) \neq \lambda_G^{\mathbb{N}}(n)$, then $\lambda_G^{\mathsf{OP}}(m) = \mathsf{O}$ 

\end{itemize}
\end{itemize}
in which $\lambda_G^{\mathsf{OP}} \stackrel{\mathrm{df. }}{=} \pi_1 \circ \lambda_G : M_G \to \{ \mathsf{O}, \mathsf{P} \}$, $\lambda_G^{\mathsf{QA}} \stackrel{\mathrm{df. }}{=} \pi_2 \circ \lambda_G : M_G \to \{ \mathsf{Q}, \mathsf{A} \}$ and $\lambda_G^{\mathbb{N}} \stackrel{\mathrm{df. }}{=} \pi_3 \circ \lambda_G : M_G \to \mathbb{N}$.
A move $m \in M_G$ is \emph{\bfseries initial} if $\star \vdash_G m$, an \emph{\bfseries O-move} (resp. a \emph{\bfseries P-move}) if $\lambda_G^{\mathsf{OP}}(m) = \mathsf{O}$ (resp. if $\lambda_G^{\mathsf{OP}}(m) = \mathsf{P}$), a \emph{\bfseries question} (resp. an \emph{\bfseries answer}) if $\lambda_G^{\mathsf{QA}}(m) = \mathsf{Q}$ (resp. if $\lambda_G^{\mathsf{QA}}(m) = \mathsf{A}$), and \emph{\bfseries internal} or \emph{\bfseries $\boldsymbol{\lambda_G^{\mathbb{N}}(m)}$-internal}  (resp. \emph{\bfseries external}) if $\lambda_G^{\mathbb{N}}(m) > 0$ (resp. if $\lambda_G^{\mathbb{N}}(m) = 0$).
Any $\boldsymbol{s} \in M_G^\ast$ is \emph{\bfseries $\boldsymbol{d}$-complete} if it ends with a move $m$ such that $\lambda_G^{\mathbb{N}}(m) = 0 \vee \lambda_G^{\mathbb{N}}(m) > d$, where $d \in \mathbb{N} \cup \{ \omega \}$, and $\omega$ is the least transfinite ordinal.
\end{definition}

Recall that a static arena $G$ \cite{abramsky1999game} determines possible \emph{moves} of a game, each of which is Opponent's/Player's question/answer, where the third parity $\lambda_G^{\mathbb{N}}$ is not included, and specifies which move $n$ can be performed for each move $m$ by the relation $m \vdash_G n$ (and $\star \vdash_G m$ means that $m$ can \emph{initiate} a play). 
The axioms on a static arena are the following:
\begin{itemize}

\item \textsc{(E1).} An initial move must be Opponent's question, and an initial move cannot be enabled by any move;

\item \textsc{(The first point of E2).} An answer must be performed for a question;

\item \textsc{(E3).} An O-move must be performed for a P-move, and vice versa.

\end{itemize}

Thus, a dynamic arena is a static arena equipped with the \emph{\bfseries priority order} $\lambda_G^{\mathbb{N}}$ on moves that satisfies additional axioms on the priority order; it is called so for it determines the `priority order' of moves to be `hidden' by the hiding operations on dynamic games (Definition~\ref{DefHidingOperationOnDynamicGames}) and on dynamic strategies (Definition~\ref{DefHidingOperationOnDynamicStrategies}).
We need all natural numbers for $\lambda_G^{\mathbb{N}}$, not only the internal/external (I/E) distinction, to define a \emph{step-by-step} execution of the hiding operations.
Conversely, dynamic arenas are generalized static arenas: A static arena is equivalent to a dynamic arena whose moves are all external.

The additional axioms for dynamic arenas $G$ are intuitively natural ones:
\begin{itemize}

\item We require a \emph{finite} upper bound $\mu(G)$ of the priority orders for it is conceptually natural and technically necessary for concatenation of dynamic games (Definition~\ref{DefConcatenationOfDynamicGames}) to be well-defined and for the hiding operation on dynamic games to terminate;

\item The axiom E1 adds the equation $\lambda_G^{\mathbb{N}}(m_0) = 0$ for all $m_0 \in M_G^{\mathsf{Init}}  \stackrel{\mathrm{df. }}{=} \{ m \in M_G \mid \star \vdash m \ \! \}$ since Opponent cannot `see' internal moves;

\item The second requirement of the axiom E2 states that the priority orders between a `QA-pair' must coincide, which is intuitively reasonable;

\item The additional axiom E4 states that only Player can make a move for a previous one if they have different priority orders for internal moves are `invisible' to Opponent (as we shall see, if $\lambda_G^{\mathbb{N}}(m_1) = k_1 < k_2 = \lambda_G^{\mathbb{N}}(m_2)$, then after the $k_1$-many iteration of the hiding operation, $m_1$ and $m_2$ become external and internal, respectively, i.e., the I/E-parity of moves is \emph{relative}, which is why E4 is not only concerned with I/E-parity but more fine-grained priority orders).

\end{itemize}

\begin{convention*}
Henceforth, an \emph{\bfseries arena} refers to a dynamic arena by default. 
\end{convention*}

\begin{example}
\label{ExTerminalArena}
The \emph{\bfseries terminal arena} $T$ is given by $T \stackrel{\mathrm{df. }}{=} (\emptyset, \emptyset, \emptyset)$.
\end{example}

\begin{example}
\label{ExFlatArenas}
The \emph{\bfseries flat arena} $\mathit{flat}(S)$ on a given set $S$ is given by $M_{\mathit{flat}(S)} \stackrel{\mathrm{df. }}{=} \{ q \} \cup S$, where $q$ is any element with $q \not \in S$; $\lambda_{\mathit{flat}(S)} : q \mapsto \mathsf{OQ}0, (m \in S) \mapsto \mathsf{PA}0$; $\vdash_{\mathit{flat}(S)} \ \stackrel{\mathrm{df. }}{=} \{ (\star, q) \} \cup \{ (q, m) \mid m \in S \ \! \}$.
For instance, $N \stackrel{\mathrm{df. }}{=} \mathit{flat}(\mathbb{N})$ is the arena of natural numbers, and $\boldsymbol{2} \stackrel{\mathrm{df. }}{=} \mathit{flat}(\mathbb{B})$, where $\mathbb{B} \stackrel{\mathrm{df. }}{=} \{ \mathit{tt}, \mathit{ff} \}$, is the arena of booleans.
\end{example}

As already mentioned, interactions between Opponent and Player in a (dynamic or static) game are represented by certain finite sequences of moves of the underlying arena, equipped with \emph{pointers} (Definition~\ref{DefJSequences}) that specify the occurrence of a move in the sequence for which each occurrence of a non-initial move in the sequence is performed.
Technically, pointers are to distinguish similar but different computations; see \cite{abramsky1999game,curien2006notes} for this point.

\if0
\begin{notation}
The $i^\text{th}$-occurrence of an element $a$ in a sequence $\boldsymbol{s}$ is temporalily (see the convention below) written $a[i]$.
\end{notation}
\fi

\begin{definition}[Occurrences of moves]
\label{DefMoveOccurrences}
Given a finite sequence $\boldsymbol{s} \in M_G^\ast$ of moves of an arena $G$, an \emph{\bfseries occurrence (of a move)} in $\boldsymbol{s}$ is a pair $(\boldsymbol{s}(i), i)$ such that $i \in \overline{|\boldsymbol{s}|}$. 
More specifically, we call the pair $(\boldsymbol{s}(i), i)$ an \emph{\bfseries initial occurrence} (resp. a \emph{\bfseries non-initial occurrence}) in $\boldsymbol{s}$ if $\star \vdash_G \boldsymbol{s}(i)$ (resp. otherwise).
\end{definition}

\begin{definition}[J-sequences \cite{hyland2000full,abramsky1999game}]
\label{DefJSequences}
A \emph{\bfseries justified (j-) sequence} of an arena $G$ is a pair $\boldsymbol{s} = (\boldsymbol{s}, \mathcal{J}_{\boldsymbol{s}})$ of a finite sequence $\boldsymbol{s} \in M_G^\ast$ and a map $\mathcal{J}_{\boldsymbol{s}} : \overline{|\boldsymbol{s}|} \rightarrow \{ 0 \} \cup \overline{|\boldsymbol{s}|-1}$ such that for all $i \in \overline{|\boldsymbol{s}|}$ $\mathcal{J}_{\boldsymbol{s}}(i) = 0$ if $\star \vdash_G \boldsymbol{s}(i)$, and $0 < \mathcal{J}_{\boldsymbol{s}}(i) < i \wedge \boldsymbol{s}({\mathcal{J}_{\boldsymbol{s}}(i)}) \vdash_G \boldsymbol{s}(i)$ otherwise.
The occurrence $(\boldsymbol{s}({\mathcal{J}_{\boldsymbol{s}}(i)}), \mathcal{J}_{\boldsymbol{s}}(i))$ is called the \emph{\bfseries justifier} of a non-initial occurrence $(\boldsymbol{s}(i), i)$ in $\boldsymbol{s}$.
We also say that $(\boldsymbol{s}(i), i)$ is \emph{\bfseries justified} by $(\boldsymbol{s}({\mathcal{J}_{\boldsymbol{s}}(i)}), \mathcal{J}_{\boldsymbol{s}}(i))$, or there is a \emph{\bfseries pointer} from the former to the latter.
\end{definition}

The idea is that each non-initial occurrence in a j-sequence must be performed for a specific previous occurrence, viz., its justifier, in the j-sequence. 

\begin{convention*}
By abuse of notation, we usually keep the pointer structure $\mathcal{J}_{\boldsymbol{s}}$ of each j-sequence $\boldsymbol{s} = (\boldsymbol{s}, \mathcal{J}_{\boldsymbol{s}})$ implicit and often abbreviate occurrences $(\boldsymbol{s}(i), i)$ in $\boldsymbol{s}$ as $\boldsymbol{s}(i)$.
Also, we usually write $\mathcal{J}_{\boldsymbol{s}}(\boldsymbol{s}(i)) = \boldsymbol{s}(j)$ if $\mathcal{J}_{\boldsymbol{s}}(i) = j$.
This convention is mathematically imprecise, but it does not bring any serious confusion in practice. 
\end{convention*}

\begin{notation*}
We write $\mathscr{J}_G$ for the set of all j-sequences of an arena $G$.
We write $\boldsymbol{s} = \boldsymbol{t}$ for any $\boldsymbol{s}, \boldsymbol{t} \in \mathscr{J}_G$ if $\boldsymbol{s}$ and $\boldsymbol{t}$ are the same j-sequence of $G$, i.e., $\boldsymbol{s} = \boldsymbol{t}$ and $\mathcal{J}_{\boldsymbol{s}} = \mathcal{J}_{\boldsymbol{t}}$.
\end{notation*}

\begin{definition}[J-subsequences]
\label{DefJSubsequences}
Given an arena $G$ and a j-sequence $\boldsymbol{s} \in \mathscr{J}_G$, a \emph{\bfseries j-subsequence} of $\boldsymbol{s}$ is a j-sequence $\boldsymbol{t} \in \mathscr{J}_G$ that satisfies:
\begin{itemize}

\item $\boldsymbol{t}$ is a subsequence of $\boldsymbol{s}$, for which we write $\boldsymbol{t} = (\boldsymbol{s}(i_1), \boldsymbol{s}(i_2), \dots, \boldsymbol{s}(i_{|\boldsymbol{t}|}))$; 

\item $\mathcal{J}_{\boldsymbol{t}}(\boldsymbol{s}(i_r)) = \boldsymbol{s}(i_l)$ iff there are occurrences $\boldsymbol{s}(j_1), \boldsymbol{s}(j_2), \dots, \boldsymbol{s}(j_{k})$ in $\boldsymbol{s}$ eliminated in $\boldsymbol{t}$, where $l, r, k \in \mathbb{N}$ and  $1 \leqslant l < r \leqslant |\boldsymbol{t}|$, such that $\mathcal{J}_{\boldsymbol{s}}(\boldsymbol{s}(i_r)) = \boldsymbol{s}(j_1) \wedge \mathcal{J}_{\boldsymbol{s}}(\boldsymbol{s}(j_1)) = \boldsymbol{s}(j_2) \wedge \dots \wedge \mathcal{J}_{\boldsymbol{s}}(\boldsymbol{s}(j_{k-1})) = \boldsymbol{s}(j_{k}) \wedge \mathcal{J}_{\boldsymbol{s}}(\boldsymbol{s}(j_{k})) = \boldsymbol{s}(i_l)$.

\end{itemize}
\end{definition}

We now consider justifiers, j-sequences and arenas from the `external point of view':
\begin{definition}[External justifiers]
Let $G$ be an arena, and assume $\boldsymbol{s} \in \mathscr{J}_G$ and $d \in \mathbb{N} \cup \{ \omega \}$.
Each non-initial occurrence $n$ in $\boldsymbol{s}$ has a unique sequence of justifiers $m m_1 m_2 \dots m_k n$ $(k \geqslant 0)$, i.e., $\mathcal{J}_{\boldsymbol{s}}(n) = m_k$, $\mathcal{J}_{\boldsymbol{s}}(m_k) = m_{k-1}$, \dots, $\mathcal{J}_{\boldsymbol{s}}(m_{2}) = m_1$ and $\mathcal{J}_{\boldsymbol{s}}(m_1) = m$, such that $\lambda_G^{\mathbb{N}}(m) = 0 \vee \lambda_G^{\mathbb{N}}(m) > d$ and $0 < \lambda_G^{\mathbb{N}}(m_i) \leqslant d$ for $i = 1, 2, \dots, k$. 
We call $m$ the \emph{\bfseries $\boldsymbol{d}$-external justifier} of $n$ in $\boldsymbol{s}$.
\end{definition}

\begin{notation*}
We write $\mathcal{J}_{\boldsymbol{s}}^{\circleddash d}(n)$ for the $d$-external justifier of $n$ in a j-sequence $\boldsymbol{s}$.
\end{notation*}

Note that $d$-external justifiers are a simple generalization of justifiers because 0-external justifiers coincide with justifiers (as there is no `0-internal' move).
More generally, $d$-external justifiers are justifiers after the $d$-times iteration of the hiding operation, as we shall see shortly.

\begin{definition}[External j-subsequences]
\label{DefHidingOnJSequences}
Let $G$ be an arena, $\boldsymbol{s} \in \mathscr{J}_G$ and $d \in \mathbb{N} \cup \{ \omega \}$. 
The \emph{\bfseries $\boldsymbol{d}$-external j-subsequence} $\mathcal{H}^d_G(\boldsymbol{s})$ of $\boldsymbol{s}$ is obtained from $\boldsymbol{s}$ by deleting occurrences of internal moves $m$ such that $0 < \lambda_G^{\mathbb{N}}(m) \leqslant d$ and equipping it with the pointers $\mathcal{J}_{\mathcal{H}^d_G(\boldsymbol{s})} : n \mapsto \mathcal{J}_{\boldsymbol{s}}^{\circleddash d}(n)$ (more precisely, $\mathcal{J}_{\mathcal{H}^d_G(\boldsymbol{s})}$ is the obvious \emph{restriction} of $\mathcal{J}_{\boldsymbol{s}}^{\circleddash d}$).
\end{definition}

\if0
\begin{definition}[Hiding operation on justified sequences]
Let $G$ be an arena, and $d \in \mathbb{N} \cup \{ \omega \}$. We define the \emph{\bfseries $\boldsymbol{d}$-hiding operation} $\mathcal{H}_G^d$ on justified sequences in $G$ by $\boldsymbol{s} \mapsto \mathcal{H}_G^d(\boldsymbol{s})$.
\end{definition}
\fi

\begin{definition}[External arenas]
\label{DefExternalDynamicArenas}
Let $G$ be an arena, and $d \in \mathbb{N} \cup \{ \omega \}$. 
The \emph{\bfseries $\boldsymbol{d}$-external arena} $\mathcal{H}^d(G)$ of $G$ is given by:
\begin{itemize}

\item $M_{\mathcal{H}^d(G)} \stackrel{\mathrm{df. }}{=} \{ m \in M_G \mid \lambda_G^{\mathbb{N}}(m) = 0 \vee \lambda_G^{\mathbb{N}}(m) > d \ \! \}$;

\item $\lambda_{\mathcal{H}^d(G)} \stackrel{\mathrm{df. }}{=} \lambda_G^{\circleddash d} \upharpoonright M_{\mathcal{H}^d(G)}$, where $\lambda_G^{\circleddash d} \stackrel{\mathrm{df. }}{=} \langle \lambda_G^{\mathsf{OP}}, \lambda_G^{\mathsf{QA}}, n \mapsto \lambda_G^{\mathbb{N}} (n) \circleddash d \rangle$, and $n \circleddash d \stackrel{\mathrm{df. }}{=} \begin{cases} n - d &\text{if $n \geqslant d$;} \\ 0 &\text{otherwise} \end{cases}$ for all $n \in \mathbb{N}$;

\item $m \vdash_{\mathcal{H}^d(G)} n \stackrel{\mathrm{df. }}{\Leftrightarrow} \exists k \in \mathbb{N}, m_1, m_2, \dots, m_{2k-1}, m_{2k} \in M_G \setminus M_{\mathcal{H}^d(G)} . \ \! m \vdash_G m_1 \wedge \forall i \in \overline{k} . \ \! m_{2i-1} \vdash_G m_{2i} \wedge m_{2k} \vdash_G n$ ($\Leftrightarrow m \vdash_G n$ if $k = 0$).

\end{itemize}
\end{definition}

That is, the $d$-external arena $\mathcal{H}^d(G)$ is obtained from the arena $G$ by deleting internal moves $m$ such that $0 < \lambda_G^{\mathbb{N}}(m) \leqslant d$, decreasing by $d$ the priority orders of the remaining moves and `concatenating' the enabling relation to form the `$d$-external' one. 

\begin{convention*}
Given $d \in \mathbb{N} \cup \{ \omega \}$, we regard $\mathcal{H}^d$ as an operation on dynamic arenas $G$, and $\mathcal{H}_G^d$ as an operation on j-sequences $\boldsymbol{s} \in \mathscr{J}_G$.
\end{convention*}

Now, let us establish:

\begin{lemma}[External closure lemma]
\label{LemExternalClosureLemma}
If $G$ is an arena, then, for all $d \in \mathbb{N} \cup \{ \omega \}$, so is $\mathcal{H}^d(G)$, and $\mathcal{H}_G^d(\boldsymbol{s}) \in \mathscr{J}_{\mathcal{H}^d(G)}$ for all $\boldsymbol{s} \in \mathscr{J}_G$. 
\end{lemma}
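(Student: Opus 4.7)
The proof naturally decomposes into two parts: verifying that $\mathcal{H}^d(G)$ satisfies the axioms of a dynamic arena, and checking that $\mathcal{H}_G^d(\bm{s})$ is a valid j-sequence of $\mathcal{H}^d(G)$ for every $\bm{s} \in \mathscr{J}_G$. For the arena axioms, I would first note that $\mu(\mathcal{H}^d(G)) \leqslant \mu(G) \circleddash d < \omega$, where the case $d = \omega$ is even simpler since $\lambda_G^{\mathbb{N}}(m) > \omega$ is impossible for $m \in M_G$, so only external moves (all with degree $0$) survive.

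Next I would verify the enabling axioms E1--E4 of $\mathcal{H}^d(G)$ by reducing each case to the corresponding axiom of $G$. For \textsc{E1}, any chain $\star \vdash_G m_1 \vdash_G \cdots \vdash_G m_{2k} \vdash_G m$ forces $k = 0$ by E1 of $G$ (initial moves have no further enablers and are external), so $\star \vdash_G m$ directly, which yields $\lambda_{\mathcal{H}^d(G)}(m) = \mathsf{OQ}0$ and uniqueness. For \textsc{E2}, if $n$ is an answer, the degree-preservation clause of E2 would force each intermediate $m_i$ to share the degree of $n$; but $\lambda_G^{\mathbb{N}}(n)$ is either $0$ or $>d$, while each $m_i$ has degree in $(0,d]$, which is impossible unless $k = 0$. \textsc{E3} follows because OP-labels strictly alternate along any enabling chain in $G$ by E3 of $G$, and the chain $m \vdash_G m_1 \vdash_G \cdots \vdash_G m_{2k} \vdash_G n$ has odd length $2k+1$, flipping OP-labels an odd number of times. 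For \textsc{E4}, since the starting move $m$ has degree outside $(0,d]$ while $m_1$ has degree in $(0,d]$, their degrees differ, so E4 of $G$ yields $\lambda_G^{\mathsf{OP}}(m) = \mathsf{O}$.

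For the j-sequence part, moves of $\mathcal{H}_G^d(\bm{s})$ obviously lie in $M_{\mathcal{H}^d(G)}$ by construction, and initial occurrences of $\bm{s}$ remain initial in $\mathcal{H}_G^d(\bm{s})$ because they are $\mathsf{OQ}0$ (hence external) by E1. The remaining task is to check that for each non-initial occurrence $n$ in $\mathcal{H}_G^d(\bm{s})$, its pointer $m := \mathcal{J}_{\bm{s}}^{\circleddash d}(n)$ satisfies $m \vdash_{\mathcal{H}^d(G)} n$. Unwinding the definition of $d$-external justifier, we have a chain $m \vdash_G m_1 \vdash_G \cdots \vdash_G m_k \vdash_G n$ with every $m_i$ of degree in $(0,d]$ and $m$ of degree $0$ or $>d$, so all intermediates are removed in $\mathcal{H}^d(G)$. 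To match the form in Definition~\ref{DefExternalDynamicArenas}, I must prove $k$ is \emph{even}.

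The main obstacle is precisely this parity argument, which I would handle by combining E3 and E4: the step $m \vdash_G m_1$ crosses a degree boundary, so by E4 of $G$, $\lambda_G^{\mathsf{OP}}(m) = \mathsf{O}$ and hence $\lambda_G^{\mathsf{OP}}(m_1) = \mathsf{P}$ by E3; the interior steps $m_i \vdash_G m_{i+1}$ force strict OP-alternation by E3, giving $\lambda_G^{\mathsf{OP}}(m_k) = \mathsf{P}$ if $k$ is odd and $\mathsf{O}$ if $k$ is even; and the final step $m_k \vdash_G n$ again crosses a degree boundary, so E4 of $G$ demands $\lambda_G^{\mathsf{OP}}(m_k) = \mathsf{O}$, forcing $k$ to be even. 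This yields the required witness for $m \vdash_{\mathcal{H}^d(G)} n$ and completes the verification that $\mathcal{H}_G^d(\bm{s}) \in \mathscr{J}_{\mathcal{H}^d(G)}$.
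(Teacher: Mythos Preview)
Your proof is correct and follows essentially the same approach as the paper: reduce each axiom E1--E4 of $\mathcal{H}^d(G)$ to the corresponding axiom of $G$ via the enabling-chain definition, then establish the j-sequence claim by an E3/E4 parity argument on the chain of justifiers. Two small remarks: in your E2 step, axiom E2 of $G$ only directly forces the \emph{last} intermediate $m_{2k}$ (not every $m_i$) to share $n$'s degree, though that alone already yields the contradiction; and your E4 step should also dispatch the direct case $k=0$, where $\lambda_{\mathcal{H}^d(G)}^{\mathbb{N}}(m)\neq\lambda_{\mathcal{H}^d(G)}^{\mathbb{N}}(n)$ implies $\lambda_G^{\mathbb{N}}(m)\neq\lambda_G^{\mathbb{N}}(n)$ so E4 of $G$ applies immediately. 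For the j-sequence clause the paper instead splits on whether the surviving occurrence is a P-move or an O-move, showing $k=0$ outright in the O-case via E4; your uniform parity argument subsumes both cases at once, which is a slight streamlining.
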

\begin{proof}
The case $d = 0$ is trivial; thus, assume $d > 0$. Clearly, the set $M_{\mathcal{H}^d(G)}$ of moves and the labeling function $\lambda_{\mathcal{H}^d(G)}$ are well-defined. Now, let us verify the axioms for the enabling relation $\vdash_{\mathcal{H}^d(G)}$:
\begin{itemize}

\item \textsc{(E1).} Note that $\star \vdash_{\mathcal{H}^d(G)} m \Leftrightarrow \star \vdash_G m$ (because $\Leftarrow$ is immediate, and $\Rightarrow$ holds by E4 on $G$ as initial moves are all external).
Thus, if $\star \vdash_{\mathcal{H}^d(G)} m$, then $\lambda_{\mathcal{H}^d(G)} (m) = \lambda_G^{\circleddash d}(m) =  \mathsf{O}\mathsf{Q}0$, and $n \vdash_{\mathcal{H}^d(G)} m \Rightarrow n = \star$.

\item \textsc{(E2).} Assume $m \vdash_{\mathcal{H}^d(G)} n$ and $\lambda_{\mathcal{H}^d(G)}^\mathsf{QA} (n) = \mathsf{A}$.  
If $m \vdash_G n$, then $\lambda_{\mathcal{H}^d(G)}^\mathsf{QA} (m) =\lambda_G^\mathsf{QA} (m) = \mathsf{Q}$ and
$\lambda_{\mathcal{H}^d(G)}^\mathbb{N} (m) =\lambda_G^\mathbb{N} (m) \circleddash d = \lambda_G^\mathbb{N} (n) \circleddash d = \lambda_{\mathcal{H}^d(G)}^\mathbb{N} (n)$.
Otherwise, i.e., there are some $k \in \mathbb{N}^+ \stackrel{\mathrm{df. }}{=} \{ n \in \mathbb{N} \mid n > 0 \ \! \}$ and $m_1, m_2, \dots, m_{2k} \in M_G \setminus M_{\mathcal{H}^d(G)}$ such that $m \vdash_G m_1 \wedge \forall i \in \overline{k} . \ \! m_{2i-1} \vdash_G m_{2i} \wedge m_{2k} \vdash_G n$, then in particular $m_{2k} \vdash_G n$ with $\lambda_{G}^\mathsf{QA} (n) = \mathsf{A}$, but $\lambda_G^\mathbb{N} (m_{2k}) \neq \lambda_G^\mathbb{N} (n)$, a contradiction. 

\item \textsc{(E3).} Assume $m \vdash_{\mathcal{H}^d(G)} n$ and $m \neq \star$. 
If $m \vdash_G n$, then $\lambda_{\mathcal{H}^d(G)}^\mathsf{OP} (m) = \lambda_G^\mathsf{OP} (m) \neq \lambda_G^\mathsf{OP} (n) = \lambda_{\mathcal{H}^d(G)}^\mathsf{OP} (n)$.
If $\exists k \in \mathbb{N}^+, m_1, m_2, \dots, m_{2k} \in M_G \setminus M_{\mathcal{H}^d(G)}. \ \! m \vdash_G m_1 \wedge \forall i \in \overline{k} . \ \! m_{2i-1} \vdash_G m_{2i} \wedge m_{2k} \vdash_G n$, then $\lambda_{\mathcal{H}^d(G)}^\mathsf{OP} (m) = \lambda_G^\mathsf{OP} (m) = \lambda_G^\mathsf{OP} (m_2) = \lambda_G^\mathsf{OP} (m_4) = \dots = \lambda_G^\mathsf{OP} (m_{2k}) \neq \lambda_G^\mathsf{OP} (n) = \lambda_{\mathcal{H}^d(G)}^\mathsf{OP} (n)$.

\item \textsc{(E4).} Assume $m \vdash_{\mathcal{H}^d(G)} n$, $m \neq \star$ and $\lambda_{\mathcal{H}^d(G)}^\mathbb{N}(m) \neq \lambda_{\mathcal{H}^d(G)}^\mathbb{N}(n)$. 
Then, we have $\lambda_{G}^\mathbb{N}(m) \neq \lambda_{G}^\mathbb{N}(n)$. 
If $m \vdash_G n$, then it is trivial; otherwise, i.e., there are some $k \in \mathbb{N}^+$, $m_1, m_2, \dots, m_{2k} \in M_G \setminus M_{\mathcal{H}^d(G)}$ with the same property as in the case of E3 above, $\lambda^{\mathsf{OP}}_{\mathcal{H}^d(G)}(m) = \lambda^{\mathsf{OP}}_{G}(m) = \mathsf{O}$ by E3 on $G$ since $\lambda^{\mathbb{N}}_G(m) \neq \lambda^{\mathbb{N}}_G(m_1)$.
\end{itemize}
Hence, we have shown that the structure $\mathcal{H}^d(G)$ forms a well-defined arena.

Next, let $\boldsymbol{s} \in \mathscr{J}_G$; we have to show $\mathcal{H}_G^d(\boldsymbol{s}) \in \mathscr{J}_{\mathcal{H}^d(G)}$.
Assume that $m$ is a non-initial occurrence in $\mathcal{H}_G^d(\boldsymbol{s})$.
By the definition, the $d$-external justifier $m_0 \stackrel{\mathrm{df. }}{=} \mathcal{J}_{\mathcal{H}_G^d(\boldsymbol{s})}(m)$ occurs in $\mathcal{H}_G^d(\boldsymbol{s})$. 
If $m$ is a P-move, then the sequence of justifiers $m_0 \vdash_G m_1 \vdash_G \dots \vdash_G m_k \vdash m$ satisfies $\mathsf{Even}(k)$ by the axioms E3 and E4 on $G$, so that $m_0 \vdash_{\mathcal{H}^d(G)} m$ by the definition.
If $m$ is an O-move, then the justifier $m'_0 \stackrel{\mathrm{df. }}{=} \mathcal{J}_{\boldsymbol{s}}(m)$ satisfies $\lambda_G^{\mathbb{N}}(m'_0) = \lambda_G^{\mathbb{N}}(m)$ by the axiom E4 on $G$, and so $m'_0 \vdash_{\mathcal{H}^d(G)} m$ by the definition. 
Since $m$ is arbitrary, we have shown that $\mathcal{H}_G^d(\boldsymbol{s}) \in \mathscr{J}_{\mathcal{H}^d(G)}$, completing the proof.
\end{proof}

Next, let us introduce a useful lemma:
\begin{lemma}[Stepwise hiding on arenas]
\label{LemStepwiseHidingOnDynamicArenas}
Given an arena $G$, we have $\widetilde{\mathcal{H}}^i(G) = \mathcal{H}^i(G)$ for all $i \in \mathbb{N}$, where $\widetilde{\mathcal{H}}^i$ denotes the $i$-times iteration of $\mathcal{H}^1$.
\end{lemma}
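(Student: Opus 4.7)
\medskip

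\noindent\textbf{Proof proposal.}
The plan is to proceed by induction on $i$. The base case $i=0$ is immediate: both $\widetilde{\mathcal{H}}^{0}(G)$ and $\mathcal{H}^{0}(G)$ are $G$ itself, since no move has degree of internality in the interval $(0,0]$ and $n \circleddash 0 = n$. For the inductive step, assuming $\widetilde{\mathcal{H}}^{i}(G) = \mathcal{H}^{i}(G)$, I would establish $\mathcal{H}^{1}(\mathcal{H}^{i}(G)) = \mathcal{H}^{i+1}(G)$ by comparing the three components (moves, labeling function, enabling relation) of each dynamic arena separately.

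The move sets and labeling functions are straightforward bookkeeping. A move $m \in M_G$ survives in $\mathcal{H}^{1}(\mathcal{H}^{i}(G))$ exactly when $m \in M_{\mathcal{H}^{i}(G)}$ (i.e., $\lambda_{G}^{\mathbb{N}}(m)=0$ or $>i$) and $\lambda_{\mathcal{H}^{i}(G)}^{\mathbb{N}}(m) = \lambda_{G}^{\mathbb{N}}(m) \circleddash i$ is either $0$ or $>1$, which jointly are equivalent to $\lambda_{G}^{\mathbb{N}}(m)=0$ or $>i+1$. The OP- and QA-components of the labeling function are untouched throughout, and for the $\mathbb{N}$-component I would verify once and for all that $(n \circleddash i) \circleddash 1 = n \circleddash (i+1)$ by a three-case split on whether $n \leqslant i$, $n = i+1$, or $n > i+1$.

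The genuine content of the proof is the coincidence of the two enabling relations. The easy direction ($m \vdash_{\mathcal{H}^{1}(\mathcal{H}^{i}(G))} n$ implies $m \vdash_{\mathcal{H}^{i+1}(G)} n$) is obtained by substituting each $\vdash_{\mathcal{H}^{i}(G)}$-edge with its defining chain in $G$ through moves of degree in $(0,i]$ and observing that the total number of intermediate moves is a sum of even numbers plus $2k$, hence even, and that every intermediate move lies in $(0,i+1]$. The converse is the main obstacle: given a chain $m \vdash_G b_1 \vdash_G \dots \vdash_G b_{2l} \vdash_G n$ with all $b_j$ having degree in $(0,i+1]$, I need to extract from it the $(i+1)$-internal $b_j$'s as a witnessing list $m_1,\dots,m_{2k}$ and check that (i) their number is even, and (ii) each of the $k+1$ sub-segments between consecutive $m_j$'s (and the endpoints) contains an even number of $\leqslant i$-internal intermediate moves.

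The key trick for the hard direction will be to exploit E3 and E4 on $G$ in tandem. Consider one such sub-segment $x \vdash_G y_1 \vdash_G \dots \vdash_G y_p \vdash_G z$, where $x,z \in \{0,i+1\} \cup \{i+2,i+3,\dots\}$-internal and each $y_j$ is $\leqslant i$-internal. Since these degree ranges are disjoint, E4 forces both $x$ and $y_p$ to be O-moves, so by E3 the parity along the segment gives $p$ even. Therefore each sub-segment contributes an even count, so the total $2l$ equals $(\text{sum of evens}) + k$, forcing $k$ to be even as well. This yields the required $\vdash_{\mathcal{H}^{i}(G)}$-chain and closes the induction. I anticipate that the parity/E4 combination is the only non-routine point; everything else is direct verification from the definitions.
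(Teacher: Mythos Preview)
Your proposal is correct and follows the same inductive scheme as the paper's proof, isolating the enabling relation as the only nontrivial component and handling it via the E3/E4 parity argument (which the paper leaves implicit in its chain of equivalences). One harmless slip: extracting the $(i{+}1)$-internal moves yields one more sub-segment than moves extracted (so $2k{+}1$ sub-segments once you know the count is $2k$, not $k{+}1$), but your parity bookkeeping $2l = (\text{sum of evens}) + (\text{number extracted})$ is unaffected.
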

\begin{proof}
By induction on $i$.
\if0
Let $G$ be an arena. 
We establish $\widetilde{\mathcal{H}}^i(G) = \mathcal{H}^i(G)$ for all $i \in \mathbb{N}$ by induction on $i$. 
The base case $i = 0$ is trivial.
For the inductive step $i + 1$, note that $\widetilde{\mathcal{H}}^{i+1}(G) = \mathcal{H}^1(\widetilde{\mathcal{H}}^{i}(G)) = \mathcal{H}^1(\mathcal{H}^{i}(G))$ by the induction hypothesis; thus, it suffices to show $\mathcal{H}^{i+1}(G) = \mathcal{H}^1(\mathcal{H}^{i}(G))$.
For the sets of moves, we clearly have:
\begin{align*}
M_{\mathcal{H}^{i+1}(G)} &= \{ m \in M_G \mid \lambda_G^{\mathbb{N}}(m) = 0 \vee \lambda_G^{\mathbb{N}}(m) > i+1 \ \! \} \\
&= \{ m \in M_{\mathcal{H}^i(G)} \mid \lambda_G^{\mathbb{N}}(m) = 0 \vee \lambda_G^{\mathbb{N}}(m) > i+1 \ \! \} \\
&= \{ m \in M_{\mathcal{H}^i(G)} \mid \lambda_{\mathcal{H}^i(G)}^{\mathbb{N}}(m) = 0 \vee \lambda_{\mathcal{H}^i(G)}^{\mathbb{N}}(m) > 1 \ \! \} \\
&= M_{\mathcal{H}^1(\mathcal{H}^i(G))}.
\end{align*}

Next, the labeling functions clearly coincide:
\begin{align*}
\lambda_{\mathcal{H}^{i+1}(G)} &= \lambda_G^{\circleddash (i+1)} \upharpoonright M_{\mathcal{H}^{i+1}(G)} \\
&= (\lambda_G^{\circleddash i} \upharpoonright M_{\mathcal{H}^i(G)})^{\circleddash 1} \upharpoonright M_{\mathcal{H}^1(\mathcal{H}^i(G))} \\
&= \lambda_{\mathcal{H}^i(G)}^{\circleddash 1} \upharpoonright M_{\mathcal{H}^1(\mathcal{H}^i(G))} \\
&= \lambda_{\mathcal{H}^1(\mathcal{H}^i(G))}.
\end{align*}

Finally, for the enabling relations between $m$ and $n$, if $m = \star$, then it is trivial: 
\begin{align*}
\star \vdash_{\mathcal{H}^{i+1}(G)} n &\Leftrightarrow \star \vdash_G n \\
&\Leftrightarrow \star \vdash_{\mathcal{H}^i(G)} n \\
&\Leftrightarrow \star \vdash_{\mathcal{H}^1(\mathcal{H}^i(G))} n.
\end{align*}
Thus, assume $m \neq \star$; then, we have:
\begin{align*}
&m \vdash_{\mathcal{H}^{i+1}(G)} n \\
\Leftrightarrow \ &\exists k \in \mathbb{N}, m_1, m_2, \dots, m_{2k} \in M_G \setminus M_{\mathcal{H}^{i+1}(G)} . \ \! m \vdash_G m_1 \wedge \forall h \in \overline{k} . \ \! m_{2h-1} \vdash_G m_{2h} \wedge m_{2k} \vdash_G n \\
\Leftrightarrow \ &(m \vdash_{\mathcal{H}^i(G)} n) \vee \exists k, l \in \mathbb{N}^+ . \ \! l \leqslant k \wedge \exists m_1, m_2, \dots, m_{2k} \in M_G \setminus M_{\mathcal{H}^{i+1}(G)}, \\ & \ m_{2 j_1 -1}, m_{2 j_1}, m_{2 j_2 -1}, m_{2 j_2}, \dots, m_{2 j_l -1}, m_{2 j_l} \in M_{\mathcal{H}^{i}(G)} \setminus M_{\mathcal{H}^{i+1}(G)} . \ \!  m \vdash_G m_1 \\ &\wedge \forall h \in \overline{k} . \ \! m_{2h-1} \vdash_G m_{2h} \wedge m_{2k} \vdash_G n \\
\Leftrightarrow \ &(m \vdash_{\mathcal{H}^i(G)} n) \vee \exists l \in \mathbb{N}^+, m'_{1}, m'_{2}, \dots, m'_{2l} \in M_{\mathcal{H}^i(G)} \setminus M_{\mathcal{H}^1(\mathcal{H}^i(G))} . \ \! m \vdash_{\mathcal{H}^i(G)} m'_{1} \\ &\wedge h \in \overline{l} . \ \! m'_{2h-1} \vdash_{\mathcal{H}^i(G)} m'_{2h} \wedge m'_{2l} \vdash_{\mathcal{H}^i(G)} n \\
\Leftrightarrow \ &m \vdash_{\mathcal{H}^1(\mathcal{H}^i(G))} n
\end{align*}
which completes the proof.
\fi
\end{proof}

Thus, we may just focus on $\mathcal{H}^1$: Henceforth, we write $\mathcal{H}$ for $\mathcal{H}^1$ and call it the \emph{\bfseries hiding operation (on arenas)}; $\mathcal{H}^i$ for each $i \in \mathbb{N}$ denotes the $i$-times iteration of $\mathcal{H}$.

We may establish a similar inductive property for j-sequences:
\begin{lemma}[Stepwise hiding on j-sequences]
\label{LemStepwiseHidingOnJSequences}
Given a j-sequence $\boldsymbol{s} \in \mathscr{J}_G$ of an arena $G$, we have $\mathcal{H}_G^{i+1}(\boldsymbol{s}) = \mathcal{H}^1_{\mathcal{H}^i(G)}(\mathcal{H}_G^i(\boldsymbol{s}))$ for all $i \in \mathbb{N}$.
\end{lemma}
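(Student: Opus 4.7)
The plan is to verify the equality directly for each fixed $i \in \mathbb{N}$, by analysing separately (a) the underlying sequences of moves and (b) the pointer structures on both sides. No induction on $i$ is needed; we simply unfold the definitions and invoke Lemma~\ref{LemStepwiseHidingOnDynamicArenas} to know that the ambient arenas agree, so that both $\mathcal{H}_G^{i+1}(\bm{s})$ and $\mathcal{H}^1_{\mathcal{H}^i(G)}(\mathcal{H}_G^i(\bm{s}))$ live in $\mathscr{J}_{\mathcal{H}^{i+1}(G)} = \mathscr{J}_{\mathcal{H}^1(\mathcal{H}^i(G))}$, which gives a well-defined target set of j-sequences in which to compare them.

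For (a), $\mathcal{H}_G^{i+1}(\bm{s})$ is, by Definition, the subsequence of $\bm{s}$ obtained by deleting every occurrence of a $d'$-internal move of $G$ with $0 < d' \leqslant i+1$. On the other side, $\mathcal{H}_G^i(\bm{s})$ deletes exactly those occurrences with $0 < \lambda_G^{\mathbb{N}} \leqslant i$; then $\mathcal{H}^1_{\mathcal{H}^i(G)}$ further deletes the $1$-internal moves of $\mathcal{H}^i(G)$, which by the formula $\lambda_{\mathcal{H}^i(G)}^{\mathbb{N}} = \lambda_G^{\mathbb{N}} \circleddash i$ are precisely the moves $m$ with $\lambda_G^{\mathbb{N}}(m) = i+1$. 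Hence the two procedures delete the same set of occurrences and produce the same underlying subsequence of $\bm{s}$.

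For (b), I would fix a surviving non-initial occurrence $n$ and walk back along its chain of justifiers in $\bm{s}$, $n_0 = n, \ n_{k+1} = \mathcal{J}_{\bm{s}}(n_k)$, stopping when we first reach an $n_k$ that is either external in $G$ or has $\lambda_G^{\mathbb{N}}(n_k) > i+1$; call that index $k^\ast$. By definition, the LHS pointer from $n$ is $n_{k^\ast}$. On the RHS, the $i$-external justifier of $n$ in $\bm{s}$ is the least $n_{k_1}$ with $k_1 > 0$ surviving in $\mathcal{H}^i(G)$, and since survivors of $\mathcal{H}^{i+1}(G)$ are survivors of $\mathcal{H}^i(G)$ we have $k_1 \leqslant k^\ast$. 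If $n_{k_1}$ is not $1$-internal in $\mathcal{H}^i(G)$, then $\lambda_G^{\mathbb{N}}(n_{k_1}) \in \{0\} \cup \{d : d > i+1\}$, so $k_1 = k^\ast$ and we are done. Otherwise $\lambda_G^{\mathbb{N}}(n_{k_1}) = i+1$, and the $1$-external-justifier procedure in $\mathcal{H}^i(G)$ continues to the next surviving index $k_2 > k_1$, and so on. This yields a strictly increasing sequence $k_1 < k_2 < \dots < k_r$ all of whose intermediate entries have $\lambda_G^{\mathbb{N}} = i+1$ and whose final entry $n_{k_r}$ is external or has $\lambda_G^{\mathbb{N}} > i+1$, i.e., $k_r = k^\ast$. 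Thus the two pointer computations agree on every surviving occurrence.

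The main obstacle is step (b): one must carefully classify, along each justifier chain in $\bm{s}$, those occurrences that are $(0, i]$-internal (hidden in the first application of $\mathcal{H}^i$) versus those that are exactly $(i+1)$-internal (hidden in the subsequent $\mathcal{H}^1_{\mathcal{H}^i(G)}$), and check that the two-stage skip is exactly the one-stage $(i+1)$-external-justifier skip. The axioms E2 and E4 of Definition~\ref{DefDynamicArenas} guarantee that justifier chains behave coherently with degrees of internality, so the argument goes through without delicate case splits; once these observations are assembled the combination of (a) and (b) yields the desired equality of j-sequences.
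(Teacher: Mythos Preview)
Your proposal is correct and follows essentially the same approach as the paper: both arguments show that the two sides delete the same set of occurrences (your part (a)) and carry the same pointers (your part (b)), with the paper compressing (b) into the single assertion $\mathcal{J}_{\mathcal{H}^i_G(\bm{s})}^{\circleddash 1} = (\mathcal{J}_{\bm{s}}^{\circleddash i})^{\circleddash 1} = \mathcal{J}_{\bm{s}}^{\circleddash (i+1)}$ where you spell out the justifier-chain walk. The paper nominally frames the argument as an induction on $i$ but never invokes the induction hypothesis, so your remark that no induction is needed is apt; also, the axioms E2 and E4 are not actually required for (b)---the chain argument is purely combinatorial once E1 guarantees that every justifier chain terminates at an external (initial) move.
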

\begin{proof}
By induction on $i$, where note that $\mathcal{H}_G^{i+1}(\boldsymbol{s}), \mathcal{H}^1_{\mathcal{H}^i(G)}(\mathcal{H}_G^i(\boldsymbol{s})) \in \mathscr{J}_{\mathcal{H}^{i+1}(G)}$ by Lemmata~\ref{LemExternalClosureLemma} and \ref{LemStepwiseHidingOnDynamicArenas}.
\if0
We show the equation by induction on $i \in \mathbb{N}$. The base case $i = 0$ is trivial.

Consider the inductive step $i+1$.
Recall that $\mathcal{H}^{i+1}_G(\boldsymbol{s})$ is obtained from $\boldsymbol{s}$ by deleting occurrences $m$ with $1 \leqslant \lambda_G^\mathbb{N}(m) \leqslant i+1$ and equipping it with the pointers $\mathcal{J}_{\boldsymbol{s}}^{\circleddash (i+1)}$. 
On the other hand, $\mathcal{H}^1_{\mathcal{H}^i(G)}(\mathcal{H}_G^i(\boldsymbol{s}))$ is obtained from $\mathcal{H}_G^i(\boldsymbol{s})$ by deleting  occurrences $m$ with $\lambda_{\mathcal{H}^i(G)}^\mathbb{N}(m) = 1$ and equipping it with the pointers $\mathcal{J}_{\mathcal{H}^i_G(\boldsymbol{s})}^{\circleddash 1} = (\mathcal{J}_{\boldsymbol{s}}^{\circleddash i})^{\circleddash 1} = \mathcal{J}_{\boldsymbol{s}}^{\circleddash (i+1)}$. 
Then, since $\lambda_{\mathcal{H}^i(G)}^\mathbb{N}(m) = 1 \Leftrightarrow \lambda_G^\mathbb{N}(m) = i+1$ and $\mathcal{H}^{i}_G(\boldsymbol{s})$ is obtained from $\boldsymbol{s}$ by deleting occurrences $m$ with $1 \leqslant \lambda_G^\mathbb{N}(m) \leqslant i$, they are clearly the same j-sequence of $\mathcal{H}^{i+1}(G)$. 
\fi
\end{proof}

Lemma~\ref{LemStepwiseHidingOnJSequences} implies that the equation 
\begin{equation}
\label{HidingEquation}
\mathcal{H}_G^i(\boldsymbol{s}) = \mathcal{H}^1_{\mathcal{H}^{i-1}(G)} \circ \mathcal{H}^1_{\mathcal{H}^{i-2}(G)} \circ \dots \circ \mathcal{H}^1_{\mathcal{H}^1(G)} \circ \mathcal{H}^1_G(\boldsymbol{s})
\end{equation}
holds for any arena $G$, $\boldsymbol{s} \in \mathscr{J}_G$ and $i \in \mathbb{N}$ (n.b., the equation (\ref{HidingEquation}) means $\boldsymbol{s} = \boldsymbol{s}$ if $i = 0$).
Thus, we may focus on the operation $\mathcal{H}_G^1$ on j-sequences, where $G$ ranges over all arenas. 
Henceforth, we write $\mathcal{H}_G$ for $\mathcal{H}_G^1$ and call it the \emph{\bfseries hiding operation on j-sequences of $\boldsymbol{G}$}; $\mathcal{H}_G^i$ for each $i \in \mathbb{N}$ denotes the operation on the right-hand side of (\ref{HidingEquation}).  

Now, to deal with external j-subsequences in a mathematically rigorous manner, let us extend the hiding operation on j-sequences to that on j-subsequences (Definition~\ref{DefJSubsequences}):
\begin{definition}[Point-wise hiding on j-sequences]
Let $\boldsymbol{s} \in \mathscr{J}_G$ be a j-sequence of an arena $G$. 
Given an occurrence $m$ in $\boldsymbol{s}$, we define $\widehat{\mathcal{H}}_G^m(\boldsymbol{s})$ to be the j-subsequence of $\boldsymbol{s}$ that consists of occurrences in $\boldsymbol{s}$ different from $m$ if $m$ is 1-internal, and $\boldsymbol{s}$ otherwise.
Moreover, given a subsequence $\boldsymbol{t} = m_1 m_2 \dots m_k$ of (the underlying finite sequence of) $\boldsymbol{s}$ and a permutation $\sigma$ on $\overline{k}$, we define $\widehat{\mathcal{H}}_G^{{\boldsymbol{t}}, \sigma}(\boldsymbol{s}) \stackrel{\mathrm{df. }}{=} \widehat{\mathcal{H}}_G^{m_{\sigma(k)}} \circ \cdots \circ \widehat{\mathcal{H}}_G^{m_{\sigma(2)}} \circ \widehat{\mathcal{H}}_G^{m_{\sigma(1)}}(\boldsymbol{s})$.
\end{definition}

The point here is that the hiding operation on j-sequences can be executed in the `move-wise' fashion in any order: 
\begin{lemma}[Move-wise lemma]
\label{LemMoveWiseLemmaForHidingOnJSequences}
Let $G$ be an arena, and $\boldsymbol{s} \in \mathscr{J}_G$.
\begin{enumerate}

\item $\widehat{\mathcal{H}}_G^{\boldsymbol{t}, \sigma_1}(\boldsymbol{s}) = \widehat{\mathcal{H}}_G^{\boldsymbol{t}, \sigma_2}(\boldsymbol{s})$ for any subsequence $\boldsymbol{t}$ of $\boldsymbol{s}$ and permutations $\sigma_1$ and $\sigma_2$ on $\overline{|\boldsymbol{t}|}$;

\item $\widehat{\mathcal{H}}_G^{\boldsymbol{s}, \sigma}(\boldsymbol{s}) = \mathcal{H}_G(\boldsymbol{s})$ for any permutation $\sigma$ on $\overline{|\boldsymbol{s}|}$.

\end{enumerate}
\end{lemma}
\begin{proof}
Immediate from the definition. 
\if0
It suffices to establish, for each $\boldsymbol{s} = m_1 m_2 \dots m_k \in \mathscr{J}_G$, the equation
\begin{equation*}
\mathcal{H}_G(\boldsymbol{s}) = \widehat{\mathcal{H}}_G^{\boldsymbol{s}}(m_1) \widehat{\mathcal{H}}_G^{\boldsymbol{s}}(m_2) \dots \widehat{\mathcal{H}}_G^{\boldsymbol{s}}(m_k).
\end{equation*}
First, it is clear that $\mathcal{H}_G(\boldsymbol{s})$ and $\widehat{\mathcal{H}}_G^{\boldsymbol{s}}(m_1) \widehat{\mathcal{H}}_G^{\boldsymbol{s}}(m_2) \dots \widehat{\mathcal{H}}_G^{\boldsymbol{s}}(m_k)$ are both the subsequence of $\boldsymbol{s}$ obtained from $\boldsymbol{s}$ by deleting 1-internal moves. Thus, it suffices to show that each move $m$ in $\widehat{\mathcal{H}}_G^{\boldsymbol{s}}(m_1)\widehat{\mathcal{H}}_G^{\boldsymbol{s}}(m_2) \dots \widehat{\mathcal{H}}_G^{\boldsymbol{s}}(m_k)$ points to $\mathcal{J}_{\boldsymbol{s}}^{\circleddash 1}(m)$.
Let $m$ be any non-1-internal move in $\boldsymbol{s}$. For the pointer from $m$ in $\widehat{\mathcal{H}}_G^{\boldsymbol{s}}(m_1)\widehat{\mathcal{H}}_G^{\boldsymbol{s}}(m_2) \dots \widehat{\mathcal{H}}_G^{\boldsymbol{s}}(m_k)$, it suffices to consider the subsequence $n n_1 n_2 \dots n_l m$ of $\boldsymbol{s}$, where $n_1, n_2, \dots, n_l$ are $1$-internal but $n$ is not, satisfying $\mathcal{J}_{\boldsymbol{s}}(m) = n_l, \mathcal{J}_{\boldsymbol{s}}(n_l) = n_{l-1}, \dots, \mathcal{J}_{\boldsymbol{s}}(n_2) = n_1, \mathcal{J}_{\boldsymbol{s}}(n_1) = n$, since the operation on other moves does not affect the pointer from $m$.
Applying $\widehat{\mathcal{H}}_G^{\boldsymbol{s}}$ to $n_1, n_2, \dots, n_l$ in any order, the resulting pointer from $m$ clearly points to $n$, i.e., $\mathcal{J}_{\boldsymbol{s}}^{\circleddash 1}(m)$.
\fi
\end{proof}

By Lemma~\ref{LemMoveWiseLemmaForHidingOnJSequences}, we have established the `move-wise' procedure to execute the hiding operation $\mathcal{H}_G$ on j-sequences of a given arena $G$, where the order of deleting moves is irrelevant.
Then, e.g., it follows that $\mathcal{H}_G(\boldsymbol{stuv}) = \widehat{\mathcal{H}}_G^{\boldsymbol{v}, \nu} \circ \widehat{\mathcal{H}}_G^{\boldsymbol{u}, \mu} \circ \widehat{\mathcal{H}}_G^{\boldsymbol{t}, \tau} \circ \widehat{\mathcal{H}}_G^{\boldsymbol{s}, \sigma}(\boldsymbol{stuv})$ for any arena $G$ and $\boldsymbol{stuv} \in \mathscr{J}_G$, where $\sigma$, $\tau$, $\mu$ and $\nu$ are arbitrary permutations on $\overline{|\boldsymbol{s}|}$, $\overline{|\boldsymbol{t}|}$, $\overline{|\boldsymbol{u}|}$ and $\overline{|\boldsymbol{v}|}$, respectively, which will be useful in the rest of the paper.

\begin{convention*}
Thanks to Lemma~\ref{LemMoveWiseLemmaForHidingOnJSequences}, we henceforth dispense with the notation $\widehat{\mathcal{H}}_G^{\boldsymbol{s}, \sigma}$, where $G$ ranges over arenas, $\boldsymbol{s}$ over j-sequences of $G$, and $\sigma$ over permutations on $\overline{|\boldsymbol{s}|}$, implicitly admitting any order of `move-wise' execution of the operation $\mathcal{H}_G$.
Also, we write, abusing notation, $\mathcal{H}_G(\boldsymbol{s}) . \mathcal{H}_G(\boldsymbol{t}) . \mathcal{H}_G(\boldsymbol{u}) . \mathcal{H}_G(\boldsymbol{v})$ for $\widehat{\mathcal{H}}_G^{\boldsymbol{v}, \nu} \circ \widehat{\mathcal{H}}_G^{\boldsymbol{u}, \mu} \circ \widehat{\mathcal{H}}_G^{\boldsymbol{t}, \tau} \circ \widehat{\mathcal{H}}_G^{\boldsymbol{s}, \sigma}(\boldsymbol{stuv})$ given above, so that $\mathcal{H}_G(\boldsymbol{stuv}) = \mathcal{H}_G(\boldsymbol{s}) . \mathcal{H}_G(\boldsymbol{t}) . \mathcal{H}_G(\boldsymbol{u}) . \mathcal{H}_G(\boldsymbol{v})$.
\end{convention*}

Next, let us recall the notion of `relevant part' of previous moves, called \emph{views}:
\begin{definition}[Views \cite{abramsky1999game}] 
\label{DefViews}
Given a j-sequence $\boldsymbol{s}$ of an arena $G$, the \emph{\bfseries Player (P-) view} $\lceil \boldsymbol{s} \rceil_G$ and the \emph{\bfseries Opponent (O-) view} $\lfloor \boldsymbol{s} \rfloor_G$ (we often omit the subscript $G$) are given by the following induction on $|\boldsymbol{s}|$: 
\begin{itemize}

\item$\lceil \boldsymbol{\epsilon} \rceil_G \stackrel{\mathrm{df. }}{=} \boldsymbol{\epsilon}$;

\item $\lceil \boldsymbol{s} m \rceil_G \stackrel{\mathrm{df. }}{=} \lceil \boldsymbol{s} \rceil_G . m$ if $m$ is a P-move;

\item $\lceil \boldsymbol{s} m \rceil_G \stackrel{\mathrm{df. }}{=} m$ if $m$ is initial;

\item $\lceil \boldsymbol{s} m \boldsymbol{t} n \rceil_G \stackrel{\mathrm{df. }}{=} \lceil \boldsymbol{s} \rceil_G . m n$ if $n$ is an O-move with $\mathcal{J}_{\boldsymbol{s} m \boldsymbol{t} n}(n) = m$;

\item $\lfloor \boldsymbol{\epsilon} \rfloor_G \stackrel{\mathrm{df. }}{=} \boldsymbol{\epsilon}$;

\item $\lfloor \boldsymbol{s} m \rfloor_G \stackrel{\mathrm{df. }}{=} \lfloor \boldsymbol{s} \rfloor_G . m$ if $m$ is an O-move;

\item $\lfloor \boldsymbol{s} m \boldsymbol{t} n \rfloor_G \stackrel{\mathrm{df. }}{=} \lfloor \boldsymbol{s} \rfloor_G . m n$ if $n$ is a P-move with $\mathcal{J}_{\boldsymbol{s} m \boldsymbol{t} n}(n) = m$

\end{itemize}
where the justifiers of the remaining occurrences in $\lceil \boldsymbol{s} \rceil$ (resp. $\lfloor \boldsymbol{s} \rfloor$) are unchanged if they occur in $\lceil \boldsymbol{s} \rceil$ (resp. $\lfloor \boldsymbol{s} \rfloor$), and undefined otherwise. 
A \emph{\bfseries view} is a P- or O-view. 
\end{definition}

The idea behind Definition~\ref{DefViews} is as follows.
For a j-sequence $\boldsymbol{t}m$ of an arena $G$ such that $m$ is a P-move (resp. an O-move), the P-view $\lceil \boldsymbol{t} \rceil$ (resp. the O-view $\lfloor \boldsymbol{t} \rfloor$) is intended to be the currently `relevant part' of $\boldsymbol{t}$ for Player (resp. Opponent). That is, Player (resp. Opponent) is concerned only with the last O-move (resp. P-move), its justifier and that justifier's P-view (resp. O-view), which then recursively proceeds.

We are now ready to introduce a `dynamic generalization' of static legal positions:
\begin{definition}[Dynamic legal positions]
\label{DefDynamicLegalPositions}
Given an arena $G$, a \emph{\bfseries dynamic legal position} of $G$ is a j-sequence $\boldsymbol{s} \in \mathscr{J}_G$ that satisfies:
\begin{itemize}

\item \textsc{(Alternation).} If $\boldsymbol{s} = \boldsymbol{s_1} m n \boldsymbol{s_2}$, then $\lambda_G^\mathsf{OP} (m) \neq \lambda_G^\mathsf{OP} (n)$;


\item \textsc{(Generalized visibility).} If $\boldsymbol{s} = \boldsymbol{t} m \boldsymbol{u}$ with $m$ non-initial, and $d \in \mathbb{N} \cup \{ \omega \}$ satisfy $\lambda_G^{\mathbb{N}}(m) = 0 \vee \lambda_G^{\mathbb{N}}(m) > d$, then $\mathcal{J}^{\circleddash d}_{\boldsymbol{s}}(m)$ occurs in $\lceil \mathcal{H}_G^d(\boldsymbol{t}) \rceil_{\mathcal{H}^d(G)}$ if $m$ is a P-move, and it occurs in $\lfloor \mathcal{H}_G^d(\boldsymbol{t}) \rfloor_{\mathcal{H}^d(G)}$ if $m$ is an O-move;

\item \textsc{(IE-switch).} If $\boldsymbol{s} = \boldsymbol{s_1} m n \boldsymbol{s_2}$ with $\lambda_G^{\mathbb{N}}(m) \neq \lambda_G^{\mathbb{N}}(n)$, then $m$ is an O-move.

\end{itemize}
\end{definition}

\begin{notation*}
$\mathscr{L}_G$ denotes the set of all dynamic legal positions of a dynamic arena $G$. 
\end{notation*}

Recall that a static legal position \cite{abramsky1999game} of a static arena is a j-sequence of the arena that satisfies alternation and \emph{visibility} (i.e., generalized visibility only for $d = 0$).
It specifies the basic rules of a static game in the sense that every `development' or \emph{(valid) position} of the game must be a legal position of the underlying arena (but the converse does not necessarily hold):
\begin{itemize}

\item In a position of the static game, Opponent always makes the first move by a question, and then Player and Opponent alternately play (by alternation), in which every non-initial move must be made for a specific previous move;

\item The justifier of each non-initial move occurring in the position must belong to the `relevant' part of previous moves occurring in the position (by visibility).
\end{itemize}

The additional axioms on dynamic legal positions are conceptually natural ones:
\begin{itemize}

\item Generalized visibility is a generalization of visibility, which requires that visibility must hold after any iteration of the hiding operation on j-sequences;

\item IE-switch states that only Player can change a priority order during a play as internal moves are `invisible' to Opponent, where the same remark as the one in the axiom E4 is applied for the finer distinction of priority orders than the I/E-parity.

\end{itemize}

Note that a dynamic legal position of a static arena, seen as a dynamic arena whose moves are all external, is clearly a static legal position, and vice versa. 
Hence, dynamic legal positions are in fact a generalization of static legal positions. 

\begin{convention*}
Henceforth, a \emph{\bfseries legal position} refers to a dynamic legal position by default. 
\end{convention*}

\subsection{Dynamic Games}
We are now ready to define the central notion of \emph{dynamic games}:
\begin{definition}[Dynamic games]
\label{DefDynamicGames}
A \emph{\bfseries dynamic game} is a quintuple 
\begin{equation*}
G = (M_G, \lambda_G, \vdash_G, P_G, \simeq_G)
\end{equation*}
such that:
\begin{itemize}

\item The triple $(M_G, \lambda_G, \vdash_G)$ forms an arena (Definition~\ref{DefDynamicArenas});

\item $P_G$ is a subset of $\mathscr{L}_G$, whose elements are called \emph{\bfseries (valid) positions} of $G$, that satisfies:
\begin{itemize}

\item \textsc{(P1).} $P_G$ is non-empty and prefix-closed;

\item \textsc{(DP2).} If $\boldsymbol{s} m n \in P_G^{\mathsf{Even}}$ and $\lambda_G^{\mathbb{N}}(n) > 0$, then $\exists r \in M_G . \ \! \boldsymbol{s} m n r \in P_G$;

\item \textsc{(DP3).} Given $\boldsymbol{t}r, \boldsymbol{t'} r' \in P_G^{\mathsf{Odd}}$ and $i \in \mathbb{N}$ such that $i < \lambda_G^{\mathbb{N}}(r) = \lambda_G^{\mathbb{N}}(r')$, if $\mathcal{H}_G^i(\boldsymbol{t}) = \mathcal{H}_G^i(\boldsymbol{t'})$, then $\mathcal{H}_G^i(\boldsymbol{t} r) = \mathcal{H}_G^i(\boldsymbol{t'} r')$;

\end{itemize}

\item $\simeq_G$ is an equivalence relation on $P_G$, called the \emph{\bfseries identification of (valid) positions}, that satisfies:
\begin{itemize}

\item \textsc{(I1).} $\boldsymbol{s} \simeq_G \boldsymbol{t} \Rightarrow |\boldsymbol{s}| = |\boldsymbol{t}|$;

\item \textsc{(I2).} $\boldsymbol{s} m \simeq_G \boldsymbol{t} n \Rightarrow \boldsymbol{s} \simeq_G \boldsymbol{t} \wedge \lambda_G(m) = \lambda_G(n) \wedge (m, n \in M_G^{\mathsf{Init}} \vee (\exists i \in \overline{|\boldsymbol{s}|} . \ \! \mathcal{J}_{\boldsymbol{s} m}(m) = \boldsymbol{s}(i) \wedge \mathcal{J}_{\boldsymbol{t} n}(n) = \boldsymbol{t}(i)))$;

\item \textsc{(DI3).} $\forall d \in \mathbb{N} \cup \{ \omega \} . \ \! \boldsymbol{s} \simeq_G^d \boldsymbol{t} \wedge \boldsymbol{s} m \in P_G \Rightarrow \exists \boldsymbol{t} n \in P_G . \ \! \boldsymbol{s} m \simeq_G^d \boldsymbol{t}n$, where $\boldsymbol{u} \simeq_G^d \boldsymbol{v} \stackrel{\mathrm{df. }}{\Leftrightarrow} \exists \boldsymbol{u'}, \boldsymbol{v'} \in P_G . \ \! \boldsymbol{u'} \simeq_G \boldsymbol{v'} \wedge \mathcal{H}_G^d(\boldsymbol{u'}) = \mathcal{H}_G^d(\boldsymbol{u}) \wedge \mathcal{H}_G^d(\boldsymbol{v'}) = \mathcal{H}_G^d(\boldsymbol{v})$ for all $\boldsymbol{u}, \boldsymbol{v} \in P_G$.

\end{itemize}

\end{itemize}
A \emph{\bfseries play} of $G$ is an finitely or infinitely increasing sequence of positions $(\boldsymbol{\epsilon}, m_1, m_1m_2, \dots)$ of $G$.
A dynamic game whose moves are all external is said to be \emph{\bfseries normalized}.
\end{definition}

Recall that a static game \cite{abramsky1999game} is a quintuple similar to a dynamic game except that the underlying arena is static, and it only satisfies the axioms P1, I1, I2 and I3 (i.e., DI3 only for $d = 0$).
The axiom P1 corresponds to the natural phenomenon that a non-empty `moment' or position of a game must have the previous `moment'. 
Identifications of positions are originally introduced in \cite{abramsky2000full} and also employed in Section 3.6 of \cite{mccusker1998games}.
They are to identify positions up to inessential details of `tags' for disjoint union, particularly for \emph{exponential} ! (Definition~\ref{DefExponential}); each position $\boldsymbol{s} \in P_G$ of a game $G$ is a representative of the equivalence class $[\boldsymbol{s}] \stackrel{\mathrm{df. }}{=} \{ \boldsymbol{t} \in P_G \mid \boldsymbol{t} \simeq_G \boldsymbol{s} \ \! \} \in P_G / \! \simeq_G$ which we take as primary. \if0 \footnote{As an informal and non-mathematical example, consider chess, in which, strictly speaking, the way of moving a piece from a grid $A$ to another grid $B$ of a chess board would vary among participants (or even among turns by the same participant) for it is a \emph{physical} object; however, such an inessential difference should be ignored as long as the piece may be recognized as moved from $A$ to $B$. This example illustrates the reason why we need an identification of positions of a game in practice.} (of course, we may define $\simeq_G \ \stackrel{\mathrm{df. }}{=} \{ (\boldsymbol{s}, \boldsymbol{s}) \mid \boldsymbol{s} \in P_G \ \! \}$ if there is no implementation details to be ignored in $G$, e.g., see Example~\ref{ExFlatGames} below). \fi
For this underlying idea, the three axioms I1, I2 and I3 should make sense.

The additional axioms DP2 and DP3 are in order to enable Player to `play alone', i.e., Opponent does not have to choose odd-length positions, for the internal part of a play since conceptually Opponent cannot `see' internal moves; technically, the axiom DP2 is to preserve totality of dynamic strategies under the hiding operation (Corollary~\ref{CoroPreservationOfConstraintsOnDynamicStrategiesUnderHiding}), and the axiom DP3 is for \emph{external consistency} of dynamic strategies: A dynamic strategy behaves always in the same manner from the viewpoint of Opponent, i.e., the external part of a play by a dynamic strategy does not depend on the internal part (Theorem~\ref{ThmExternalConsistency}).
Note that the axiom DP2 is slightly involved to be preserved under the hiding operation (Theorem~\ref{ThmExternalClosureOfDynamicGames}); it is necessary to generalize the axiom I3 to DI3 for the same reason.

\begin{remark*}
It is certainly simpler to dispense with the identification $\simeq_G$ of positions for each game $G$ by adopting a simpler formulation of exponential $!$ as in \cite{mccusker1998games}; however, it would be mathematically \emph{ad-hoc} because the cartesian closed structure of games and strategies would not arise via the standard Girard translation.
Recall that the aim of the present work is to establish \emph{mathematics} of dynamics and intensionality of logic and computation, where `good' mathematics should be robust and general, not ad-hoc; also, it is interesting as future work to extend the present work to \emph{linear} logic and computation.
For these reasons, we have decided to retain $\simeq_G$ as a structure of each game $G$.
Moreover, we shall establish various reasonable properties on identification of positions, which adds credibility of the notions of dynamic games and strategies. 
\end{remark*}

\begin{convention*}
Henceforth, a \emph{\bfseries game} refers to a dynamic game by default. 
\end{convention*}

\begin{example}
\label{ExTerminalGame}
The \emph{\bfseries terminal game} $T \stackrel{\mathrm{df. }}{=} (\emptyset, \emptyset, \emptyset, \{ \boldsymbol{\epsilon} \}, \{ (\boldsymbol{\epsilon}, \boldsymbol{\epsilon}) \})$ is the simplest game.
\end{example}

\begin{example}
\label{ExFlatGames}
The \emph{\bfseries flat game} $\mathit{flat}(S)$ on a given set $S$ is defined as follows. 
The triple $\mathit{flat}(S) = (M_{\mathit{flat}(S)}, \lambda_{\mathit{flat}(S)}, \vdash_{\mathit{flat}(S)})$ is the flat arena in Example~\ref{ExFlatArenas}, $P_{\mathit{flat}(S)} \stackrel{\mathrm{df. }}{=} \{ \boldsymbol{\epsilon}, q \ \! \} \cup \{ q m \mid m \in S \ \! \}$, and $\simeq_{\mathit{flat}(S)} \ \stackrel{\mathrm{df. }}{=} \{ (\boldsymbol{s}, \boldsymbol{s}) \mid \boldsymbol{s} \in P_{\mathit{flat}(S)} \ \! \}$.
For instance, $N \stackrel{\mathrm{df. }}{=} \mathit{flat}(\mathbb{N})$ is the game of natural numbers sketched in the introduction, and $\boldsymbol{2} \stackrel{\mathrm{df. }}{=} \mathit{flat}(\mathbb{B})$ is the game of booleans. 
Also, $\boldsymbol{0} \stackrel{\mathrm{df. }}{=} \mathit{flat}(\emptyset)$ is the \emph{\bfseries empty game}.
\end{example}

\if0
\begin{convention*}
Henceforth, \emph{\bfseries dynamic games} refer to \emph{economical} ones by default.
\end{convention*}
\fi

Also, let us define a substructure relation between games: 
\begin{definition}[Subgames]
\label{DefDynamicSubgames}
Given games $G$ and $H$, we say that $H$ is a \emph{\bfseries (dynamic) subgame} of $G$, written $H \trianglelefteqslant G$, iff $M_H \subseteq M_G$, $\lambda_H = \lambda_G \upharpoonright M_H$, $\vdash_H \ \subseteq \ \vdash_G \cap \ \! ((\{ \star \} \cup M_H) \times M_H)$, $P_H \subseteq P_G$, $\forall d \in \mathbb{N} \cup \{ \omega \} . \ \! \simeq_H^d \ = \ \simeq_G^d \cap \ \! (P_H \times P_H)$ and $\mu(H) = \mu(G)$.
\end{definition} 

For $H \trianglelefteqslant G$, the condition on the identifications of positions is required \emph{for all numbers $d \in \mathbb{N} \cup \{ \omega \}$} so that the dynamic subgame relation $\trianglelefteqslant$ is preserved under the hiding operation (Theorem~\ref{ThmExternalClosureOfDynamicGames}); the last equation $\mu(H) = \mu(G)$ is to preserve the relation $\trianglelefteqslant$ under \emph{concatenation} of dynamic games (Definition~\ref{DefConcatenationOfDynamicGames}).

We shall later focus on \emph{well-founded} games:
\begin{definition}[Well-founded games \cite{clairambault2010totality}]
\label{DefWellFoundedGames}
A game $G$ is \emph{\bfseries well-founded} if $\vdash_G$ is well-founded \emph{downwards}, i.e., there is no countably infinite sequence $(m_i)_{i \in \mathbb{N}}$ of moves $m_i \in M_G$ such that $\star \vdash_G m_0 \wedge \forall i \in \mathbb{N} . \ \! m_i \vdash_G m_{i+1}$.
\end{definition}

Now, let us define the \emph{hiding operation} on games:
\begin{definition}[Hiding operation on games]
\label{DefHidingOperationOnDynamicGames}
Given $d \in \mathbb{N} \cup \{ \omega \}$, the \emph{\bfseries $\boldsymbol{d}$-hiding operation (on games)} maps each game $G$ to its \emph{\bfseries $\boldsymbol{d}$-external game} $\mathcal{H}^d(G)$ given by:
\begin{itemize}

\item The triple $(M_{\mathcal{H}^d(G)}, \lambda_{\mathcal{H}^d(G)}, \vdash_{\mathcal{H}^d(G)})$ is the $d$-external arena $\mathcal{H}^d(G)$ of the underlying arena $G$ (Definition~\ref{DefExternalDynamicArenas});

\item $P_{\mathcal{H}^d(G)} \stackrel{\mathrm{df. }}{=} \{ \mathcal{H}^d_G(\boldsymbol{s}) \mid \boldsymbol{s} \in P_G \ \! \}$;

\item $\mathcal{H}^d_G(\boldsymbol{s}) \simeq_{\mathcal{H}^d(G)} \mathcal{H}^d_G(\boldsymbol{t}) \stackrel{\mathrm{df. }}{\Leftrightarrow} \boldsymbol{s} \simeq_G^d \boldsymbol{t}$.

\end{itemize}
\end{definition}

Now, we give the first main theorem of the present work:
\begin{theorem}[External closure of games]
\label{ThmExternalClosureOfDynamicGames}
Given $d \in \mathbb{N} \cup \{ \omega \}$, (resp. well-founded) games are closed under the operation $\mathcal{H}^d$, and $H \trianglelefteqslant G$ implies $\mathcal{H}^d(H) \trianglelefteqslant \mathcal{H}^d(G)$.
\end{theorem}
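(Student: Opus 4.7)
The plan is to verify each axiom in Definition~\ref{DefDynamicGames} for $\mathcal{H}^d(G)$ separately, reducing every condition to the corresponding axiom on $G$ via the bookkeeping Lemmata~\ref{LemStepwiseHidingOnDynamicArenas}, \ref{LemStepwiseHidingOnJSequences} and \ref{LemPointWiseLemmaForHidingOnJSequences}, and then to check that each clause of $\trianglelefteqslant$ is preserved. Lemma~\ref{LemExternalClosureLemma} already gives the dynamic-arena structure of $\mathcal{H}^d(G)$ and guarantees $\mathcal{H}^d_G(\bm{s}) \in \mathscr{J}_{\mathcal{H}^d(G)}$ for every $\bm{s}\in\mathscr{J}_G$, so four items remain: (i) each $\mathcal{H}^d_G(\bm{s})$ with $\bm{s}\in P_G$ is a dynamic legal position of $\mathcal{H}^d(G)$; (ii) $P_{\mathcal{H}^d(G)}$ satisfies P1 and DP2; (iii) $\simeq_{\mathcal{H}^d(G)}$ is a well-defined equivalence relation satisfying I1, I2 and DI3; and (iv) the subgame clauses transfer. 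The guiding identity throughout is $\mathcal{H}^e_{\mathcal{H}^d(G)} \circ \mathcal{H}^d_G = \mathcal{H}^{d+e}_G$, together with the matching identity $\mathcal{J}^{\circleddash e}_{\mathcal{H}^d_G(\bm{s})} = \mathcal{J}^{\circleddash(d+e)}_{\bm{s}}$ for external justifiers.

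For (i), alternation and IE-switch for $\mathcal{H}^d_G(\bm{s})$ reduce to a short case analysis on whether two consecutive occurrences in the hidden sequence were already adjacent in $\bm{s}$ or separated by a block of $d'$-internal moves ($0 < d' \leqslant d$); in the latter case, axiom IE-switch on $G$ applied at the two boundaries of the hidden block pins down both OP-labels, so that alternation and IE-switch in $\mathcal{H}^d(G)$ follow. Generalized visibility at level $e$ in $\mathcal{H}^d(G)$ then unfolds, via Lemma~\ref{LemStepwiseHidingOnJSequences} and the justifier identity above, to generalized visibility of $\bm{s}$ at level $d+e$ in $G$.

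For (ii), P1 is immediate: $\bm{\epsilon} \mapsto \bm{\epsilon}$ gives non-emptiness, and prefix-closedness follows from $\mathcal{H}^d_G(\bm{s}\bm{t}) = \mathcal{H}^d_G(\bm{s})\,\mathcal{H}^d_G(\bm{t})$ (Lemma~\ref{LemPointWiseLemmaForHidingOnJSequences}) together with the prefix-closedness of $P_G$. For DP2, given odd-length $\mathcal{H}^d_G(\bm{s})m,\mathcal{H}^d_G(\bm{t})n \in P^{\mathsf{Odd}}_{\mathcal{H}^d(G)}$ with $i < \lambda^{\mathbb{N}}_{\mathcal{H}^d(G)}(m) = \lambda^{\mathbb{N}}_{\mathcal{H}^d(G)}(n)$, I choose representatives in $P_G$ of the form $\bm{r} = \bm{r}^{\ast}m$ and $\bm{r}' = \bm{r}'^{\ast}n$ by absorbing any trailing block of internal moves into $\bm{r}^{\ast}, \bm{r}'^{\ast}$; both $\bm{r}, \bm{r}'$ then lie in $P_G^{\mathsf{Odd}}$ because $m,n$ are O-moves of $G$ (by the initial-move-is-O convention and the alternation of the hidden sequence). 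The hypothesis translates, via Lemma~\ref{LemStepwiseHidingOnJSequences}, into $\mathcal{H}^{d+i}_G(\bm{r}^{\ast}) = \mathcal{H}^{d+i}_G(\bm{r}'^{\ast})$, so DP2 on $G$ at level $d+i$ delivers the desired conclusion.

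For (iii), well-definedness is built into the definition of $\simeq^d_G$, which already quantifies over representatives, and axioms I1, I2 transfer directly from $G$ via a witnessing pair $\bm{u}',\bm{v}'$. The technical core is the identity $\mathcal{H}^d_G(\bm{s}) \simeq^e_{\mathcal{H}^d(G)} \mathcal{H}^d_G(\bm{t}) \Leftrightarrow \bm{s} \simeq^{d+e}_G \bm{t}$, obtained by unfolding the nested definitions and applying $\mathcal{H}^e_{\mathcal{H}^d(G)} \circ \mathcal{H}^d_G = \mathcal{H}^{d+e}_G$; once in place, DI3 on $\mathcal{H}^d(G)$ at level $e$ follows from DI3 on $G$ at level $d+e$ after lifting the move $m'$ to a $P_G$-representative exactly as in stage (ii). Stage (iv) is then a line-by-line check using $\mathcal{H}^d_H = \mathcal{H}^d_G$ on $P_H \subseteq P_G$ and $\mu(H) = \mu(G)$. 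The main obstacles, in my estimation, are the two reductions just flagged: (a) choosing the right $P_G$-lifts so that the parity and odd/even-length classifications line up with DP2 and DI3 on $G$, and (b) the equivalence $\simeq^e_{\mathcal{H}^d(G)} \leftrightarrow \simeq^{d+e}_G$, on which essentially every identification-axiom verification rests.
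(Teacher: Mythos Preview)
Your proposal is correct and follows essentially the same route as the paper: reduce the arena part to Lemma~\ref{LemExternalClosureLemma}, verify alternation/IE-switch by analysing the block of hidden moves between two surviving occurrences, reduce generalized visibility at level $e$ to level $d+e$ on $G$, handle P1 and DP2 by lifting to $P_G$-representatives, and pivot DI3 on the identity $\simeq^e_{\mathcal{H}^d(G)} \leftrightarrow \simeq^{d+e}_G$. If anything, your alternation argument (via IE-switch at the two boundaries of the hidden block) is cleaner than the paper's appeal to ``E3 and E4'', which are arena axioms and not directly about consecutive occurrences in a play; the legal-position axiom IE-switch is indeed what forces the boundary O-labels and hence the even parity of the hidden block.
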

\begin{proof}
Let $G$ be a game, and assume $d \in \mathbb{N} \cup \{ \omega \}$; we have to show that $\mathcal{H}^d(G)$ is a game.
By Lemma~\ref{LemExternalClosureLemma}, it suffices to show that j-sequences in $P_{\mathcal{H}^d(G)}$ are legal positions of the arena $\mathcal{H}^d(G)$, the set $P_{\mathcal{H}^d(G)}$ satisfies the axioms P1, DP2 and DP3, and the relation $\simeq_{\mathcal{H}^d(G)}$ is an equivalence relation on $P_{\mathcal{H}^d(G)}$ that satisfies the axioms I1, I2 and DI3.
Since $\mu(G) \in \mathbb{N}$, we assume $d \in \mathbb{N}$.

For alternation, assume $\boldsymbol{s_1} m n \boldsymbol{s_2} \in P_{\mathcal{H}^d(G)}$; we have to show $\lambda_{\mathcal{H}^d(G)}^\mathsf{OP} (m) \neq \lambda_{\mathcal{H}^d(G)}^\mathsf{OP} (n)$. 
We have $\mathcal{H}^d_G(\boldsymbol{t}_1 m m_1 m_2 \dots m_k n \boldsymbol{t}_2) = \boldsymbol{s_1} m n \boldsymbol{s_2}$ for some $\boldsymbol{t}_1 m m_1 m_2 \dots m_k n \boldsymbol{t_2} \in P_G$, where $\mathcal{H}^d_G(\boldsymbol{t_1}) = \boldsymbol{s_1}$, $\mathcal{H}^d_G(\boldsymbol{t_2}) = \boldsymbol{s_2}$ and $\mathcal{H}^d_G(m_1 m_2 \dots m_k) = \boldsymbol{\epsilon}$. 
Note that $(\lambda_G^\mathbb{N} (m) = 0 \vee \lambda_G^\mathbb{N} (m) > d) \wedge (\lambda_G^\mathbb{N} (n) = 0 \vee \lambda_G^\mathbb{N} (n) > d)$ and $0 < \lambda_G^\mathbb{N} (m_i) \leqslant d$ for $i = 1, 2, \dots, k$. 
By the axioms E3 and E4 on $G$, $k$ must be an even number, and thus $\lambda_{\mathcal{H}^d(G)}^\mathsf{OP} (m) = \lambda_G^\mathsf{OP} (m) = \lambda_G^\mathsf{OP} (m_2) = \lambda_G^\mathsf{OP} (m_4) = \dots = \lambda_G^\mathsf{OP} (m_{k}) \neq \lambda_G^\mathsf{OP} (n) = \lambda_{\mathcal{H}^d(G)}^\mathsf{OP} (n)$.

For generalized visibility, let $\boldsymbol{t} m \boldsymbol{u} \in P_{\mathcal{H}^d(G)}$ with $m$ non-initial. 
We have to show, for each $e \in \mathbb{N} \cup \{ \omega \}$, that if $\boldsymbol{t}m$ is $e$-complete, then: 
\begin{itemize}
\item if $m$ is a P-move, then the justifier $(\mathcal{J}^{\circleddash d}_{\boldsymbol{s}})^{\circleddash e}(m)$ occurs in $\lceil \mathcal{H}_{\mathcal{H}^d(G)}^e(\boldsymbol{t}) \rceil_{\mathcal{H}^e(\mathcal{H}^d(G))}$;
\item if $m$ is an O-move, then the justifier $(\mathcal{J}^{\circleddash d}_{\boldsymbol{s}})^{\circleddash e}(m)$ occurs in $\lfloor \mathcal{H}_{\mathcal{H}^d(G)}^e(\boldsymbol{t}) \rfloor_{\mathcal{H}^e(\mathcal{H}^d(G))}$.
\end{itemize}
Again, for $\mu(G) \in \mathbb{N}$, we may assume without loss of generality that $e \in \mathbb{N}$.
Note that the condition is then equivalent to:
\begin{itemize}
\item if $m$ is a P-move, then the justifier $\mathcal{J}_{\boldsymbol{s}}^{\circleddash (d+e)}(m)$ occurs in $\lceil \mathcal{H}_{G}^{d+e}(\boldsymbol{t'}) \rceil_{\mathcal{H}^{d+e}(G)}$;
\item if $m$ is an O-move, then the justifier $\mathcal{J}_{\boldsymbol{s}}^{\circleddash (d+e)}(m)$ occurs in $\lfloor \mathcal{H}_{G}^{d+e}(\boldsymbol{t'}) \rfloor_{\mathcal{H}^{d+e}(G)}$
\end{itemize}
where $\boldsymbol{t'}m \in P_G$ such that $\mathcal{H}^d_G(\boldsymbol{t'}m) = \boldsymbol{t}m$.
It holds by generalized visibility on $G$.

For IE-switch, let $\boldsymbol{s_1} m n \boldsymbol{s_2} \in P_{\mathcal{H}^d(G)}$ such that $\lambda_{\mathcal{H}^d(G)}^{\mathbb{N}}(m) \neq \lambda_{\mathcal{H}^d(G)}^{\mathbb{N}}(n)$.
Then, there is some $\boldsymbol{t_1} m \boldsymbol{u} n \boldsymbol{t_2} \in P_G$ such that $\mathcal{H}^d_G(\boldsymbol{t_1} m \boldsymbol{u} n \boldsymbol{t_2}) = \boldsymbol{s_1} m n \boldsymbol{s_2}$, where note that $\lambda_{G}^{\mathbb{N}}(m) \neq \lambda_{G}^{\mathbb{N}}(n)$.
Therefore, if $\boldsymbol{u} = \boldsymbol{\epsilon}$, then we clearly have $\lambda_{\mathcal{H}^d(G)}^{\mathsf{OP}}(m) = \mathsf{O}$ by IE-switch on $G$; otherwise, i.e., $\boldsymbol{u} = l \boldsymbol{u'}$, then we have the same conclusion as $\lambda_G^{\mathbb{N}}(m) \neq \lambda_G^{\mathbb{N}}(l)$.

We have established $P_{\mathcal{H}^d(G)} \subseteq \mathscr{L}_{\mathcal{H}^d(G)}$. 
Next, we verify the axioms P1, DP2 and DP3:
\begin{itemize}

\item \textsc{(P1).} Because $\boldsymbol{\epsilon} \in P_G$, we have $\boldsymbol{\epsilon} = \mathcal{H}^d_G(\boldsymbol{\epsilon}) \in P_{\mathcal{H}^d(G)}$; thus, $P_{\mathcal{H}^d(G)}$ is non-empty. 
For prefix-closure, let $\boldsymbol{s} m \in P_{\mathcal{H}^d(G)}$; we have to show $\boldsymbol{s} \in P_{\mathcal{H}^d(G)}$. 
There must be some $\boldsymbol{t} m \in P_G$ such that $\boldsymbol{s} m = \mathcal{H}^d_G(\boldsymbol{t} m) = \mathcal{H}^d_G(\boldsymbol{t}) m$. 
Thus, $\boldsymbol{s} = \mathcal{H}^d_G(\boldsymbol{t}) \in P_{\mathcal{H}^d(G)}$.

\item \textsc{(DP2).} If $\boldsymbol{s} m n \in P_{\mathcal{H}^d(G)}^{\mathsf{Even}}$ and $\lambda_{\mathcal{H}^d(G)}^{\mathbb{N}}(n) > 0$, then there is some $\boldsymbol{t} m \boldsymbol{u} n \in P_G^{\mathsf{Even}}$ such that $\mathcal{H}^d_G(\boldsymbol{t}m\boldsymbol{u}n) = \boldsymbol{s} m n$ and $\lambda_G^{\mathbb{N}}(n) > d > 0$.
Hence, by the axiom DP2 on $G$, there is some $\boldsymbol{t} m \boldsymbol{u} n r \in P_G$ such that $\lambda_G^{\mathbb{N}}(r) = \lambda_G^{\mathbb{N}}(n) > d$ by IE-switch on $G$.
Therefore, we have found $\boldsymbol{s} m n r = \mathcal{H}^d(\boldsymbol{t}m\boldsymbol{u}nr) \in P_{\mathcal{H}^d(G)}$, establishing DP2 on $\mathcal{H}^d(G)$.

\item \textsc{(DP3).} Assume $\boldsymbol{t} r, \boldsymbol{t'} r' \in P_{\mathcal{H}^d(G)}^{\mathsf{Odd}}$ and $i \in \mathbb{N}$ such that $i < \lambda_{\mathcal{H}^d(G)}^{\mathbb{N}}(r) = \lambda_{\mathcal{H}^d(G)}^{\mathbb{N}}(r')$ and $\mathcal{H}_{\mathcal{H}^d(G)}^i(\boldsymbol{t}) = \mathcal{H}_{\mathcal{H}^d(G)}^i(\boldsymbol{t'})$. 
We have some $\boldsymbol{u} r, \boldsymbol{u'} r' \in P_G$ with $\mathcal{H}^d_G(\boldsymbol{u}) = \boldsymbol{t}$ and $\mathcal{H}^d_G(\boldsymbol{u'}) = \boldsymbol{t'}$. 
Then, $\mathcal{H}_{G}^{d+i}(\boldsymbol{u}) = \mathcal{H}^i_{\mathcal{H}^d(G)}(\mathcal{H}^d_G(\boldsymbol{u})) = \mathcal{H}^i_{\mathcal{H}^d(G)}(\boldsymbol{t}) = \mathcal{H}^i_{\mathcal{H}^d(G)}(\boldsymbol{t'}) = \mathcal{H}^i_{\mathcal{H}^d(G)}(\mathcal{H}^d_G(\boldsymbol{u'})) = \mathcal{H}_{G}^{d+i}(\boldsymbol{u'})$.
Hence, by the axiom DP3 on $G$, we have $r = r'$ and $\mathcal{J}_{\boldsymbol{t} r}^{\circleddash i}(r) = \mathcal{J}_{\boldsymbol{u} r}^{\circleddash (d+i)}(r) = \mathcal{J}_{\boldsymbol{u'} r'}^{\circleddash (d+i)}(r') = \mathcal{J}_{\boldsymbol{t'} r'}^{\circleddash i}(r')$, establishing DP3 on $\mathcal{H}^d(G)$.
\end{itemize}

Next, $\simeq_{\mathcal{H}^d(G)}$ is a well-defined relation on $P_{\mathcal{H}^d(G)}$ since $\mathcal{H}_G^d(\boldsymbol{s}) \simeq_{\mathcal{H}^d(G)} \mathcal{H}_G^d(\boldsymbol{t})$ does not depend on the choice of representatives $\boldsymbol{s}, \boldsymbol{t} \in P_G$.
Also, it is straightforward to see that $\simeq_{\mathcal{H}^d(G)}$ is an equivalence relation.
Now, we show that $\simeq_{\mathcal{H}^d(G)}$ satisfies the axioms I1, I2 and DI3.
Note that I1 and I2 on $\simeq_{\mathcal{H}^d(G)}$ immediately follow from those on $\simeq_{G}$.
For DI3 on $\simeq_{\mathcal{H}^d(G)}$, if $\mathcal{H}^d_G(\boldsymbol{s}) \simeq_{\mathcal{H}^d(G)}^{e} \mathcal{H}^d_G(\boldsymbol{t})$, and $\mathcal{H}^d_G(\boldsymbol{s}) . m \in P_{\mathcal{H}^d(G)}$, where we may assume $e \neq \omega$, then $\exists \boldsymbol{s'}m \in P_G . \ \! \mathcal{H}^d_G(\boldsymbol{s'}m) = \mathcal{H}^d_G(\boldsymbol{s}) . m$, and so $\mathcal{H}_G^{d+e}(\boldsymbol{s'}) = \mathcal{H}_G^{d+e}(\boldsymbol{s}) \simeq_{\mathcal{H}^{d+e}(G)} \mathcal{H}_G^{d+e}(\boldsymbol{t})$.
By DI3 on $\simeq_G$, we may conclude that $\exists \boldsymbol{t} n \in P_G . \ \! \boldsymbol{s'} m \simeq_G^{d+e} \boldsymbol{t}n$, whence we obtain $\mathcal{H}^d_G(\boldsymbol{t}) . n \in P_{\mathcal{H}^d(G)}$ such that $\mathcal{H}^d_G(\boldsymbol{s}).m = \mathcal{H}^d_G(\boldsymbol{s'}m) \simeq_{\mathcal{H}^d(G)}^{e} \mathcal{H}^d_G(\boldsymbol{t}n) = \mathcal{H}^d_G(\boldsymbol{t}).n$.

Finally, the preservation of the dynamic subgame relation $\trianglelefteqslant$ under the operation $\mathcal{H}^d$ is clear from the definition, completing the proof.
\end{proof}

\begin{corollary}[Stepwise hiding on games]
For any game $G$, we have $\mathcal{H}^1(\mathcal{H}^i(G)) = \mathcal{H}^{i+1}(G)$ for all $i \in \mathbb{N}$.
\end{corollary}
\begin{proof}
By Lemmata~\ref{LemStepwiseHidingOnDynamicArenas} and \ref{LemStepwiseHidingOnJSequences}, it suffices to show the equation $\simeq_{\mathcal{H}^1(\mathcal{H}^i(G))} \ \! = \ \simeq_{\mathcal{H}^{i+1}(G)}$.
Then, given $\boldsymbol{s}, \boldsymbol{t} \in P_G$, we have:
\begin{align*}
\Leftrightarrow & \ \mathcal{H}_{\mathcal{H}^i(G)}^1(\mathcal{H}_G^i(\boldsymbol{s})) \simeq_{\mathcal{H}^1(\mathcal{H}^i(G))} \mathcal{H}_{\mathcal{H}^i(G)}^1(\mathcal{H}_G^{i}(\boldsymbol{t})) \\
\Leftrightarrow & \ \exists \mathcal{H}^i(\boldsymbol{s'}), \mathcal{H}^i(\boldsymbol{t'}) \in P_{\mathcal{H}^i(G)} . \ \! \mathcal{H}^i(\boldsymbol{s'}) \simeq_{\mathcal{H}^i(G)} \mathcal{H}^i(\boldsymbol{t'}) \wedge \mathcal{H}^1_{\mathcal{H}^i(G)}(\mathcal{H}^i(\boldsymbol{s'})) = \mathcal{H}^1_{\mathcal{H}^i(G)}(\mathcal{H}^i(\boldsymbol{s})) \\ & \ \wedge \mathcal{H}^1_{\mathcal{H}^i(G)}(\mathcal{H}^i(\boldsymbol{t'})) = \mathcal{H}^1_{\mathcal{H}^i(G)}(\mathcal{H}^i(\boldsymbol{t})) \\
\Leftrightarrow & \ \exists \boldsymbol{s''}, \boldsymbol{t''} \in P_G . \ \! \boldsymbol{s''} \simeq_G \boldsymbol{t''} \wedge \mathcal{H}_G^{i+1}(\boldsymbol{s''}) = \mathcal{H}_G^{i+1}(\boldsymbol{s}) \wedge \mathcal{H}_G^{i+1}(\boldsymbol{t''}) = \mathcal{H}_G^{i+1}(\boldsymbol{t}) \\
\Leftrightarrow & \ \mathcal{H}_G^{i+1}(\boldsymbol{s}) \simeq_{\mathcal{H}^{i+1}(G)} \mathcal{H}_G^{i+1}(\boldsymbol{t}) 
\end{align*}
which establishes the required equation.
\end{proof}

By the corollary, we may just focus on $\mathcal{H}^1$: 
\begin{convention*}
We write $\mathcal{H}$ for $\mathcal{H}^1$ and call it the \emph{\bfseries hiding operation (on games)}; $\mathcal{H}^i$ denotes the $i$-times iteration of $\mathcal{H}$ for all $i \in \mathbb{N}$.
\end{convention*}

\begin{corollary}[Hiding operation on legal positions]
\label{CoroHidingOperationOnDynamicLegalPositions}
Given an arena $G$ and a number $d \in \mathbb{N} \cup \{ \omega \}$, we have $\{ \mathcal{H}_G^d(\boldsymbol{s}) \mid \boldsymbol{s} \in \mathscr{L}_G \ \! \} = \mathscr{L}_{\mathcal{H}^d(G)}$.
\end{corollary}
\begin{proof}
Since there is an upper bound $\mu(G) \in \mathbb{N}$, it suffices to consider the case $d \in \mathbb{N}$.
Then, by Lemmata~\ref{LemStepwiseHidingOnDynamicArenas} and \ref{LemStepwiseHidingOnJSequences}, we may just focus on the case $d = 1$.

The inclusion $\{ \mathcal{H}_G(\boldsymbol{s}) \mid \boldsymbol{s} \in \mathscr{L}_G \ \! \} \subseteq \mathscr{L}_{\mathcal{H}(G)}$ is immediate by Theorem~\ref{ThmExternalClosureOfDynamicGames}.
For the other inclusion, let $\boldsymbol{t} \in \mathscr{L}_{\mathcal{H}(G)}$; we shall find some $\boldsymbol{s} \in \mathscr{L}_G$ such that
\begin{enumerate}
\item $\mathcal{H}_G(\boldsymbol{s}) = \boldsymbol{t}$;
\item 1-internal moves in $\boldsymbol{s}$ occur as even-length consecutive segments $m_1 m_2 \dots m_{2k}$, where $m_i$ justifies $m_{i+1}$ for $i= 1, 2, \dots, 2k-1$; 
\item $\boldsymbol{s}$ is 1-complete.
\end{enumerate}
We proceed by induction on $|\boldsymbol{t}|$. The base case $\boldsymbol{t} = \boldsymbol{\epsilon}$ is trivial.
For the inductive step, let $\boldsymbol{t} m \in \mathscr{L}_{\mathcal{H}(G)}$. 
Then, $\boldsymbol{t} \in \mathscr{L}_{\mathcal{H}(G)}$, and by the induction hypothesis there is some $\boldsymbol{s} \in \mathscr{L}_G$ that satisfies the three conditions (n.b., the first one is for $\boldsymbol{t}$).

If $m$ is initial, then $\boldsymbol{s} m \in \mathscr{L}_{G}$, and $\boldsymbol{s} m$ satisfies the three conditions. 
Thus, assume that $m$ is non-initial; we may write $\boldsymbol{t} m = \boldsymbol{t_1} n \boldsymbol{t_2} m$, where $m$ is justified by $n$. 

We then need a case analysis:
\begin{itemize}

\item Assume $n \vdash_G m$. 
We take $\boldsymbol{s} m$, where $m$ points to $n$.
Then, $\boldsymbol{s} m \in \mathscr{L}_G$ since:
\begin{itemize}

\item \textsc{(Justification).} It is immediate because $n \vdash_G m$.

\item \textsc{(Alternation).} By the condition 3 on $\boldsymbol{s}$, the last moves of $\boldsymbol{s}$ and $\boldsymbol{t}$ just coincide. 
Thus, the alternation condition holds for $\boldsymbol{s} m$. 

\item \textsc{(Generalized visibility).} It suffices to establish the visibility on $\boldsymbol{s} m$, as the other cases are included as the generalized visibility on $\boldsymbol{t} m$. It is straightforward to see that, by the condition 2 on $\boldsymbol{s}$, if the view of $\boldsymbol{t}$ contains $n$, then so does the view of $\boldsymbol{s}$. And since $\boldsymbol{t} m \in \mathscr{L}_{\mathcal{H}(G)}$, the view of $\boldsymbol{t}$ contains $n$. Hence, the view of $\boldsymbol{s}$ contains $n$ as well.

\item \textsc{(IE-switch).} Again, the last moves of $\boldsymbol{s}$ and $\boldsymbol{t}$ coincide by the condition 3 on $\boldsymbol{s}$; thus, IE-switch on $\boldsymbol{t} m$ can be directly applied. 

\end{itemize}
Also, it is easy to see that $\boldsymbol{s} m$ satisfies the three conditions.

\item Assume $n \neq \star$ and $\exists k \in \mathbb{N}^+, m_1, m_2, \dots, m_{2k} \in M_G \setminus M_{\mathcal{H}(G)}$ such that 
\begin{equation*}
n \vdash_G m_1 \wedge \forall i \in \overline{k} . \ \! m_{2i-1} \vdash_G m_{2i} \wedge m_{2k} \vdash_G m.
\end{equation*}
We then take $\boldsymbol{s} m_1 m_2 \dots m_{2k} m$, in which $m_1$ points to $n$, $m_{i}$ points to $m_{i-1}$ for $i = 2, 3, \dots, 2k$, and $m$ points to $m_{2k}$. 
Then, $\boldsymbol{s} m_1 m_2 \dots m_{2k} m \in \mathscr{L}_G$ because:
\begin{itemize}

\item \textsc{(Justification).} Obvious.

\item \textsc{(Alternation).} By the condition 3 on $\boldsymbol{s}$, the last moves of $\boldsymbol{s}$ and $\boldsymbol{t}$ just coincide. Thus, the alternation condition holds for $\boldsymbol{s} m_1 m_2 \dots m_{2k} m$. 

\item \textsc{(Generalized visibility).} By the same argument as the above case. 

\item \textsc{(IE-switch).} It clearly holds by the axiom E4.
\end{itemize}
Finally, it is easy to see that $\boldsymbol{s} m_1 m_2 \dots m_{2k} m$ satisfies the three conditions.
\end{itemize}
We have completed the case analysis. 
\end{proof}

\subsection{Constructions on Dynamic Games}
\label{ConstructionsOnDynamicGames}
Next, we show that dynamic games accommodate all the standard constructions on static games \cite{abramsky1999game}, i.e., they preserve the additional axioms for dynamic games, as well as some new constructions. 
This result implies that the notion of dynamic games (Definition~\ref{DefDynamicGames}) is in some sense `correct'.

\begin{convention*}
For brevity, we usually omit `tags' for disjoint union of sets. 
For instance, we write $x \in A + B$ iff $x \in A$ or $x \in B$ (not both); also, given relations $R_A \subseteq A \times A$ and $R_B \subseteq B \times B$, we write $R_A + R_B$ for the relation on the disjoint union $A + B$ such that $(x, y) \in R_A + R_B \stackrel{\mathrm{df. }}{\Leftrightarrow} (x, y) \in R_A \vee (x, y) \in R_B$ (not both). 
\end{convention*}

Let us begin with \emph{tensor (product)} $\otimes$. 
Roughly, a position of the tensor $A \otimes B$ of games $A$ and $B$ is an interleaving mixture of a position of $A$ and a position of $B$, in which an $AB$-parity change is made always by Opponent. 
Formally:
\begin{definition}[Tensor of games \cite{abramsky1999game}]
\label{DefTensorOfGames}
Given games $A$ and $B$, the \emph{\bfseries tensor (product)} $A \otimes B$ of $A$ and $B$ is defined by:
\begin{itemize}

\item $M_{A \otimes B} \stackrel{\mathrm{df. }}{=} M_A + M_B$;

\item $\lambda_{A \otimes B} \stackrel{\mathrm{df. }}{=} [\lambda_A, \lambda_B]$;

\item $\vdash_{A \otimes B} \ \stackrel{\mathrm{df. }}{=} \ \vdash_A + \ \! \vdash_B$;

\item $P_{A \otimes B} \stackrel{\mathrm{df. }}{=} \{ \boldsymbol{s} \in \mathscr{L}_{A \otimes B} \mid \boldsymbol{s} \upharpoonright A \in P_A, \boldsymbol{s} \upharpoonright B \in P_B \ \! \}$;

\item $\boldsymbol{s} \simeq_{A \otimes B} \boldsymbol{t} \stackrel{\mathrm{df. }}{\Leftrightarrow} \boldsymbol{s} \upharpoonright A \simeq_A \boldsymbol{t} \upharpoonright A \wedge \boldsymbol{s} \upharpoonright B \simeq_B \boldsymbol{t} \upharpoonright B \wedge \forall i \in \mathbb{N} . \ \! \boldsymbol{s}(i) \in M_A \Leftrightarrow \boldsymbol{t}(i) \in M_A$

\end{itemize}
where $\boldsymbol{s} \upharpoonright A$ (resp. $\boldsymbol{s} \upharpoonright B$) denotes the j-subsequence of $\boldsymbol{s}$ that consists of occurrences of moves of $A$ (resp. $B$).
\end{definition}

In fact, as explained in \cite{abramsky1997semantics}, in a position of a tensor $A \otimes B$, only Opponent can switch between the component games $A$ and $B$ (by alternation). 
\begin{example}
\label{ExTensor}
Consider the tensor $N \otimes N$ of the natural number game $N$ with itself, whose maximal position is either of the following forms:
\begin{center}
\begin{tabular}{ccccccccc}
$N_{[0]}$ & $\otimes$ & $N_{[1]}$ &&&& $N_{[0]}$ & $\otimes$ & $N_{[1]}$ \\ \cline{1-3} \cline{7-9}
\tikzmark{ctensor1} $q_{[0]}$&&&&&&&&\tikzmark{ctensor2} $q_{[1]}$ \\
\tikzmark{dtensor1} $n_{[0]}$&&&&&&&&\tikzmark{dtensor2} $m_{[1]}$ \\
&&\tikzmark{ctensor3} $q_{[1]}$&&&&\tikzmark{ctensor4} $q_{[0]}$&& \\
&&\tikzmark{dtensor3} $m_{[1]}$&&&&\tikzmark{dtensor4} $n_{[0]}$&&
\end{tabular}
\begin{tikzpicture}[overlay, remember picture, yshift=.25\baselineskip]
\draw [->] ({pic cs:dtensor1}) [bend left] to ({pic cs:ctensor1});
\draw [->] ({pic cs:dtensor2}) [bend left] to ({pic cs:ctensor2});
\draw [->] ({pic cs:dtensor3}) [bend left] to ({pic cs:ctensor3});
\draw [->] ({pic cs:dtensor4}) [bend left] to ({pic cs:ctensor4});
\end{tikzpicture}
\end{center}
where $n, m \in \mathbb{N}$, and $(\_)_{[i]}$ ($i = 0, 1$) are again arbitrary, unspecified `tags' such that $[0] \neq [1]$ to distinguish the two copies of $N$, and the arrows represent pointers.
Henceforth, however, we usually omit `tags' $(\_)_{[i]}$ unless it is strictly necessary. 
\end{example}

\if0
\begin{convention*}
We usually omit `tags' $(\_)_{[i]}$ unless it is strictly necessary. 
\end{convention*}
\fi

\begin{theorem}[Well-defined tensor of games]
\label{ThmWellDefinedTensorOnDynamicGames}
(Resp. well-founded) games are closed under tensor $\otimes$.
\end{theorem}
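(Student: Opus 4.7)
The plan is to verify, in turn, that the five components of $A \otimes B$ satisfy the axioms of Definition~\ref{DefDynamicGames}. For the underlying arena $(M_{A \otimes B}, \lambda_{A \otimes B}, \vdash_{A \otimes B})$, the classical axioms E1--E3 are immediate from the fact that $\vdash_{A \otimes B}$ is the disjoint sum $\vdash_A + \vdash_B$, so every enabling edge lies in a single component and inherits the corresponding property of $A$ or $B$. The new axiom E4 holds by the same token: if $m \vdash_{A \otimes B} n$ with differing degrees of internality, then $m$ and $n$ lie in the same component and E4 of that component applies. Finiteness of $\mu(A \otimes B) = \max(\mu(A), \mu(B))$ is clear.

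Next, I would check $P_{A \otimes B} \subseteq \mathscr{L}_{A \otimes B}$. Alternation and (ordinary) visibility are the standard verifications for tensor in \cite{abramsky1999game}. For \emph{generalized visibility}, the key step is a commutation observation: since $M_A$ and $M_B$ are disjoint and no enabling edges cross components, one has $\mathcal{H}_{A \otimes B}^d(\bm{s}) \upharpoonright A = \mathcal{H}_A^d(\bm{s} \upharpoonright A)$ and similarly for $B$, and moreover the P-view (resp.\ O-view) of a tensor position restricts on each component to the P-view (resp.\ O-view) there. Given a non-initial occurrence $m$ in some $\bm{s} m \bm{u} \in P_{A \otimes B}$ with $m \in M_A$, say, its $d$-external justifier lies in $A$, so the condition reduces to generalized visibility for $\bm{s} \upharpoonright A \in P_A$. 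For \emph{IE-switch}, suppose $\bm{s_1} m n \bm{s_2} \in P_{A \otimes B}$ with $\lambda^{\mathbb{N}}_{A \otimes B}(m) \neq \lambda^{\mathbb{N}}_{A \otimes B}(n)$: if $m,n$ lie in the same component, apply IE-switch there to its restriction; if they lie in different components, then at least one of them, say $m$, is internal ($\lambda^{\mathbb{N}}(m) > 0$), so $n$ cannot be initial in $B$ (since initial moves are external) and must therefore be justified by some earlier occurrence in $B$, which, combined with alternation and visibility of the tensor position, forces $m$ to be Opponent.

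Having established $P_{A \otimes B} \subseteq \mathscr{L}_{A \otimes B}$, axiom P1 is immediate from prefix-closure of $P_A$ and $P_B$. For DP2, given $\bm{s} m, \bm{s'} m' \in P_{A \otimes B}^{\mathsf{Odd}}$ with $i < \lambda^{\mathbb{N}}(m) = \lambda^{\mathbb{N}}(m')$ and $\mathcal{H}^i_{A \otimes B}(\bm{s}) = \mathcal{H}^i_{A \otimes B}(\bm{s'})$, the above commutation of hiding with component restriction reduces the claim to DP2 applied to the component containing $m$ and $m'$ (which must coincide, since any internal move unambiguously belongs to one component, as determined by tracing the hidden trajectory). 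Finally, that $\simeq_{A \otimes B}$ is an equivalence relation satisfying I1 and I2 is immediate from the component-wise definition; for DI3, given $\bm{s} \simeq_{A \otimes B}^d \bm{t}$ and $\bm{s} m \in P_{A \otimes B}$, one applies DI3 in the component of $m$ to produce the required $n$, then uses the interleaving data of $\bm{s}$ and $\bm{t}$ (tracked via the last clause of $\simeq_{A \otimes B}$) to assemble $\bm{t} n \in P_{A \otimes B}$.

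The main obstacle will be the compatibility of the hiding operation with the tensor restrictions $\bm{s} \mapsto \bm{s}\!\upharpoonright\! A$ and $\bm{s}\!\upharpoonright\! B$, i.e., the lemma that $\mathcal{H}^d_{A \otimes B}(\bm{s}) \!\upharpoonright\! A = \mathcal{H}^d_A(\bm{s} \!\upharpoonright\! A)$ and its counterpart for views. This is intuitively clear because $\vdash_{A \otimes B}$ does not cross components, so pointers and the $d$-external justifier operation respect the partition $M_A \sqcup M_B$; but it must be proved carefully by induction on $|\bm{s}|$ (or using the point-wise form of $\mathcal{H}$ from Lemma~\ref{LemPointWiseLemmaForHidingOnJSequences}), and it is this lemma that underwrites the reductions of generalized visibility, IE-switch, DP2, and DI3 to the corresponding component-wise axioms.
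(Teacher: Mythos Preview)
Your outline does unnecessary work and, more importantly, leaves a genuine gap in the DP2 verification.

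First, the unnecessary part: you propose to verify alternation, generalized visibility, and IE-switch for $P_{A\otimes B}$. But $P_{A\otimes B}$ is \emph{defined} as a subset of $\mathscr{L}_{A\otimes B}$ (Definition~\ref{DefTensorOfGames}), so these axioms hold automatically. The paper accordingly restricts attention to the arena axioms, DP2, and DI3.

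The real issue is DP2. You write that the claim ``reduces to DP2 applied to the component containing $m$ and $m'$ (which must coincide, since any internal move unambiguously belongs to one component, as determined by tracing the hidden trajectory).'' This is not an argument. Each of $m$ and $m'$ certainly lies in one component, but you must show they lie in the \emph{same} component, and your commutation lemma $\mathcal{H}^d_{A\otimes B}(\bm{s})\!\upharpoonright\! A = \mathcal{H}^d_A(\bm{s}\!\upharpoonright\! A)$ does not deliver this. The obstacle is that in tensor it is Opponent who may switch components, and here $m,m'$ are O-moves; so a priori $m$ could land in $A$ while $m'$ lands in $B$, even though $\mathcal{H}^i(\bm{s})=\mathcal{H}^i(\bm{s'})$.

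What closes the gap is the observation that an \emph{internal} O-move cannot switch components: by IE-switch the preceding P-move has the same (positive) degree, and one then argues (combining IE-switch in the components with the standard switching condition) that the O-move must stay put. The paper packages this reasoning as the ``double parity diagram'' (Table~\ref{TableDoubleParityDiagram}), which tracks the possible joint OP/IE-states of the two components and shows that from a state where one component awaits an internal O-move, only that component can be played. With this in hand, one first notes that the preceding P-moves of $\bm{s}$ and $\bm{s'}$ survive $\mathcal{H}^i$ and hence coincide, and then concludes that $m,m'$ lie in the common component of that P-move; only then does your component-wise reduction go through. Your ``tracing the hidden trajectory'' gestures at something, but it is precisely this no-internal-switching lemma that must be stated and proved. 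The same diagram is also what makes the interleaving reconstruction in DI3 work cleanly; your DI3 sketch is plausible but would likewise need it to assemble $\bm{\tilde s},\bm{\tilde t}\in P_{A\otimes B}$ from the component witnesses.
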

\begin{proof}
Since static games are closed under tensor $\otimes$ \cite{abramsky1999game}, it suffices to show that $\otimes$ preserves the condition on labeling function and the axioms E1, E2, E4, DP2, DP3 and DI3 (n.b., $\otimes$ clearly preserves well-foundedness of games).
However, non-trivial ones are just DP3 and DI3; thus, we just focus on these two axioms.
Let $A$ and $B$ be any games.
To verify DP3 on $A \otimes B$, let $\boldsymbol{s} l m n, \boldsymbol{s'} l' m' n' \in P_{A \otimes B}^{\mathsf{Odd}}$ and $i \in \mathbb{N}$ such that $\mathcal{H}_{A \otimes B}^i(\boldsymbol{s}lm) = \mathcal{H}_{A \otimes B}^i(\boldsymbol{s'}l'm')$ and $i < \lambda_{A \otimes B}^{\mathbb{N}}(n) = \lambda_{A \otimes B}^{\mathbb{N}}(n')$.
Note that $\lambda_{A \otimes B}^{\mathbb{N}}(m) = \lambda_{A \otimes B}^{\mathbb{N}}(n) = \lambda_{A \otimes B}^{\mathbb{N}}(n') = \lambda_{A \otimes B}^{\mathbb{N}}(m')$ by IE-switch.
At a first glance, it seems that $A \otimes B$ does not satisfy DP3 as Opponent may choose to play in $A$ or $B$ at will. 
It is, however, not the case for internal moves for $\boldsymbol{s} l m n \in P_{A \otimes B}^{\mathsf{Odd}}$ with $m$ internal implies $m, n \in M_A$ or $m, n \in M_B$. 
This property immediately follows from Table~\ref{TableDoubleParityDiagram} which shows all the possible transitions of OP- and IE-parities for a play of $A \otimes B$, where a state $(X^Y, Z^W)$ indicates that the next move of $A$ (resp. $B$) has the OP-parity $X$ (resp. $Z$) and the IE-parity $Y$ (resp. $W$).
\begin{table}
\begin{diagram}
(\mathsf{P}^\mathsf{I}, \mathsf{O}^\mathsf{E}) & & \lTo^{A} & & (\mathsf{O}^\mathsf{E}, \mathsf{O}^\mathsf{E}) & & \rTo^{B} & & (\mathsf{O}^\mathsf{E}, \mathsf{P}^\mathsf{I}) \\
\dCorresponds_A & & & \ldCorresponds^A & & \rdCorresponds^B & & & \dCorresponds^B \\
(\mathsf{O}^\mathsf{I}, \mathsf{O}^\mathsf{E}) & \rTo^A & (\mathsf{P}^\mathsf{E}, \mathsf{O}^\mathsf{E}) & & & & (\mathsf{O}^\mathsf{E}, \mathsf{P}^\mathsf{E}) & \lTo^B & (\mathsf{O}^\mathsf{E}, \mathsf{O}^\mathsf{I})
\end{diagram}
\caption{The double parity diagram for the tensor $A \otimes B$.}
\label{TableDoubleParityDiagram}
\end{table}
Note that $m = m'$ and $\mathcal{J}^{\circleddash i}_{\boldsymbol{s}lm}(m) = \mathcal{J}^{\circleddash i}_{\boldsymbol{s'}l'm'}(m')$ as $\mathcal{H}_{A \otimes B}^i(\boldsymbol{s}l) . m = \mathcal{H}_{A \otimes B}^i(\boldsymbol{s}lm) = \mathcal{H}_{A \otimes B}^i(\boldsymbol{s'}l'm') = \mathcal{H}_{A \otimes B}^i(\boldsymbol{s'}l') . m'$. 
Thus, $m$, $n$, $m'$ and $n'$ belong to the same component game. 
If $m, n, m', n' \in M_A$, then $(\boldsymbol{s}l \upharpoonright A) . m n, (\boldsymbol{s'}l' \upharpoonright A) . m' n' \in P_A^{\mathsf{Odd}}$, $\mathcal{H}_A^i((\boldsymbol{s}l \upharpoonright A) . m) = \mathcal{H}_{A \otimes B}^i(\boldsymbol{s}lm) \upharpoonright \mathcal{H}^i(A) = \mathcal{H}_{A \otimes B}^i(\boldsymbol{s'}l'm') \upharpoonright \mathcal{H}^i(A) = \mathcal{H}_A^i((\boldsymbol{s'}l' \upharpoonright A) . m')$ and $i < \lambda_A^{\mathbb{N}}(n) = \lambda_A^{\mathbb{N}}(n')$; thus by DP3 on $A$, we conclude that $n = n'$ and $\mathcal{J}^{\circleddash i}_{\boldsymbol{s}lmn}(n) = \mathcal{J}^{\circleddash i}_{(\boldsymbol{s}l \upharpoonright A) . mn}(n) = \mathcal{J}^{\circleddash i}_{(\boldsymbol{s'}l' \upharpoonright A) . m'n'}(n') = \mathcal{J}^{\circleddash i}_{\boldsymbol{s'}l'm'n'}(n')$.
The other case is completely analogous, showing that $A \otimes B$ satisfies DP3.

Finally, to show that $A \otimes B$ satisfies DI3, assume $\boldsymbol{s} \simeq_{A \otimes B}^d \boldsymbol{t}$ and $\boldsymbol{s} m \in P_{A \otimes B}$ for some $d \in \mathbb{N}$; we have to find some $\boldsymbol{t} n \in P_{A \otimes B}$ such that $\boldsymbol{s} m \simeq_{A \otimes B}^d \boldsymbol{t} n$.
Assume $m \in M_A$ for the other case is symmetric. 
Since $\boldsymbol{s} \simeq_{A \otimes B}^d \boldsymbol{t}$, we have some $\boldsymbol{s'} \simeq_{A \otimes B} \boldsymbol{t'}$ such that $\mathcal{H}_{A \otimes B}^d(\boldsymbol{s'}) = \mathcal{H}_{A \otimes B}^d(\boldsymbol{s})$ and $\mathcal{H}_{A \otimes B}^d(\boldsymbol{t'}) = \mathcal{H}_{A \otimes B}^d(\boldsymbol{t})$.
Thus, $\boldsymbol{s'} \upharpoonright A \simeq_A \boldsymbol{t'} \upharpoonright A$, $\mathcal{H}_A^d(\boldsymbol{s} \upharpoonright A) = \mathcal{H}_{A \otimes B}^d(\boldsymbol{s}) \upharpoonright \mathcal{H}^d(A) = \mathcal{H}_{A \otimes B}^d(\boldsymbol{s'}) \upharpoonright \mathcal{H}^d(A) = \mathcal{H}_A^d(\boldsymbol{s'} \upharpoonright A)$ and $\mathcal{H}_A^d(\boldsymbol{t} \upharpoonright A) = \mathcal{H}_{A \otimes B}^d(\boldsymbol{t}) \upharpoonright \mathcal{H}^d(A) = \mathcal{H}_{A \otimes B}^d(\boldsymbol{t'}) \upharpoonright \mathcal{H}^d(A) = \mathcal{H}_A^d(\boldsymbol{t'} \upharpoonright A)$, whence $\boldsymbol{s} \upharpoonright A \simeq_A^d \boldsymbol{t} \upharpoonright A$.
Similarly, $\boldsymbol{s} \upharpoonright B \simeq_B^d \boldsymbol{t} \upharpoonright B$ with $\boldsymbol{s'} \upharpoonright B \simeq_B \boldsymbol{t'} \upharpoonright B$, $\mathcal{H}_B^d(\boldsymbol{s} \upharpoonright B) = \mathcal{H}_B^d(\boldsymbol{s'} \upharpoonright B)$ and $\mathcal{H}_B^d(\boldsymbol{t} \upharpoonright B) = \mathcal{H}_B^d(\boldsymbol{t'} \upharpoonright B)$.
Now, since $(\boldsymbol{s} \upharpoonright A) . m = \boldsymbol{s}m \upharpoonright A \in P_A$, we have some $(\boldsymbol{t} \upharpoonright A) . n \in P_A$ such that $(\boldsymbol{s} \upharpoonright A) . m \simeq_{A}^d (\boldsymbol{t} \upharpoonright A) . n$, i.e., some $\boldsymbol{u} \simeq_A \boldsymbol{v}$ such that $\mathcal{H}_A^d(\boldsymbol{u}) = \mathcal{H}_A^d((\boldsymbol{s} \upharpoonright A) . m)$ and $\mathcal{H}_A^d(\boldsymbol{v}) = \mathcal{H}_A^d((\boldsymbol{t} \upharpoonright A) . n)$.
By Table~\ref{TableDoubleParityDiagram}, we may obtain a unique $\boldsymbol{\tilde{s}} \in P_{A \otimes B}$ from $\boldsymbol{u}$ and $\boldsymbol{s'} \upharpoonright B$ and a unique $\boldsymbol{\tilde{t}} \in P_{A \otimes B}$ from $\boldsymbol{v}$ and $\boldsymbol{t'} \upharpoonright B$ such that $\boldsymbol{\tilde{s}} \simeq_{A \otimes B} \boldsymbol{\tilde{t}}$, $\mathcal{H}_{A \otimes B}^d(\boldsymbol{\tilde{s}}) = \mathcal{H}_{A \otimes B}^d(\boldsymbol{s} m)$ and $\mathcal{H}_{A \otimes B}^d(\boldsymbol{\tilde{t}}) = \mathcal{H}_{A \otimes B}^d(\boldsymbol{t} n)$, establishing $\boldsymbol{s} m \simeq_{A \otimes B}^d \boldsymbol{t} n$.
\end{proof}

Next, let us recall \emph{linear implication} $\multimap$, which has been illustrated by examples in Section~\ref{Introduction}.
The linear implication $A \multimap B$ is intended to be the `space' of \emph{linear functions} from $A$ to $B$ in the sense of linear logic \cite{girard1987linear}, i.e., they consume exactly one input in $A$ to produce an output in $B$ (strictly speaking, they consume \emph{at most one} input since it is possible that no moves of $A$ are performed at all during a play of $A \multimap B$).

One additional point for \emph{dynamic} games is that we need to apply the $\omega$-hiding operation $\mathcal{H}^\omega$ to the domain $A$ since otherwise the linear implication $A \multimap B$ may not satisfy the axiom DP2 or DP3.
It conceptually makes sense too for the roles of Player and Opponent in $A$ are exchanged, and thus Player should not be able to `see' internal moves of $A$.
\begin{definition}[Linear implication between games \cite{abramsky1999game}]
\label{DefLinearImplication}
The \emph{\bfseries linear implication} $A \multimap B$ from a game $A$ to another $B$ is defined by: 
\begin{itemize}

\item $M_{A \multimap B} \stackrel{\mathrm{df. }}{=} M_{\mathcal{H}^\omega(A)} + M_B$;

\item $\lambda_{A \multimap B} \stackrel{\mathrm{df. }}{=} [\overline{\lambda_{\mathcal{H}^\omega(A)}}, \lambda_B]$, where $\overline{\lambda_{\mathcal{H}^\omega(A)}} \stackrel{\mathrm{df. }}{=} \langle \overline{\lambda_{\mathcal{H}^\omega(A)}^\textsf{OP}}, \lambda_{\mathcal{H}^\omega(A)}^\textsf{QA}, \lambda_{\mathcal{H}^\omega(A)}^\mathbb{N} \rangle$, and $\overline{\lambda_{G}^\textsf{OP}} (m) \stackrel{\mathrm{df. }}{=} \begin{cases} \mathsf{P} \ &\text{if $\lambda_{G}^\textsf{OP} (m) = \textsf{O}$;} \\ \textsf{O} \ &\text{otherwise} \end{cases}$ for any game $G$;

\item $\star \vdash_{A \multimap B} m \stackrel{\mathrm{df. }}{\Leftrightarrow} \star \vdash_B m$;

\item $m \vdash_{A \multimap B} n \ (m \neq \star) \stackrel{\mathrm{df. }}{\Leftrightarrow} (m \vdash_{\mathcal{H}^\omega(A)} n) \vee (m \vdash_B n) \vee (\star \vdash_B m \wedge \star \vdash_{\mathcal{H}^\omega(A)} n)$;

\item $P_{A \multimap B} \stackrel{\mathrm{df. }}{=} \{ \boldsymbol{s} \in \mathscr{L}_{\mathcal{H}^\omega(A) \multimap B} \mid \boldsymbol{s} \upharpoonright \mathcal{H}^\omega(A) \in P_{\mathcal{H}^\omega(A)}, \boldsymbol{s} \upharpoonright B \in P_B \ \! \}$;

\item $\boldsymbol{s} \simeq_{A \multimap B} \boldsymbol{t} \stackrel{\mathrm{df. }}{\Leftrightarrow} \boldsymbol{s} \upharpoonright \mathcal{H}^\omega(A) \simeq_{\mathcal{H}^\omega(A)} \boldsymbol{t} \upharpoonright \mathcal{H}^\omega(A) \wedge \boldsymbol{s} \upharpoonright B \simeq_B \boldsymbol{t} \upharpoonright B \wedge \forall i \in \mathbb{N} . \ \! \boldsymbol{s}(i) \in M_{\mathcal{H}^\omega(A)} \Leftrightarrow \boldsymbol{t}(i) \in M_{\mathcal{H}^\omega(A)}$

\end{itemize}
where pointers from an initial occurrence of $\mathcal{H}^\omega(A)$ to that of $B$ in $\boldsymbol{s}$ are deleted.
\end{definition}

Dually to $A \otimes B$, it is easy to see that during a play of $A \multimap B$ only Player may switch between $\mathcal{H}^\omega(A)$ and $B$ (again by alternation); see \cite{abramsky1997semantics} for the details.

\begin{example}
See again the examples of linear implication in Section~\ref{Introduction} to see how Definition~\ref{DefLinearImplication} actually works. 
\end{example}

\begin{theorem}[Well-defined linear implication between games]
\label{ThmWellDefinedLinearImplicationOnDynamicGames}
(Resp. well-founded) games are closed under linear implication.
\end{theorem}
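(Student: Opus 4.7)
The plan is to follow the blueprint of Theorem~\ref{ThmWellDefinedTensorOnDynamicGames}, since $\multimap$ is structurally similar to $\otimes$: it differs only by flipping OP-parity on the domain component, making initial moves belong exclusively to $B$, and allowing initial occurrences of $A$ to be enabled by initial occurrences of $B$. First I would reduce to the case where $A$ is normalized, using the definition $A \multimap B \stackrel{\mathrm{df. }}{=} \mathcal{H}^\omega(A) \multimap B$ and the fact that $\mathcal{H}^\omega(A)$ is a dynamic game by Theorem~\ref{ThmExternalClosureOfDynamicGames}. Under this assumption, every move of $A$ is external, so all internal moves of $A \multimap B$ come from $B$ alone, and $\mu(A \multimap B) = \mu(B) \in \mathbb{N}$.

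The verification splits into three blocks. First I would check the enabling axioms E1--E4. E1 is immediate since initial moves are exactly those initial in $B$, and no non-$\star$ element of $A \multimap B$ enables them. E2 follows because answers cannot be initial (by E1 on $A$ and $B$), so the extra enabling clause $\star \vdash_B m \wedge \star \vdash_A n$ does not interfere. For E3, OP-parity flips on both sides of any edge $m \vdash_A n$ under $\overline{\lambda_A^{\mathsf{OP}}}$, so opposite parities are preserved; the clause $\star \vdash_B m \wedge \star \vdash_A n$ is handled by noting that $m$ is an $\mathsf{O}$-move in $B$ while $n$ becomes a $\mathsf{P}$-move in $A \multimap B$. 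For E4, since $A$ is normalized, any edge with differing degrees of internality lies wholly in $B$, so the claim reduces to E4 on $B$.

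Next I would verify P1 and DP2. P1 is immediate from prefix-closure of $P_A$ and $P_B$ together with prefix-closure of $\mathscr{L}_{A \multimap B}$. For DP2, the key observation is the analogue of Table~\ref{TableDoubleParityDiagram} for linear implication: since an $\mathsf{O}$-move is internal only when the previous move is internal of the same degree (by IE-switch and alternation), any internal $\mathsf{O}$-move in a position of $A \multimap B$ must occur in $B$, because $A$ has no internal moves at all. Therefore, given $\bm{s}m, \bm{s'}m' \in P_{A \multimap B}^{\mathsf{Odd}}$ with $\mathcal{H}_{A \multimap B}^i(\bm{s}) = \mathcal{H}_{A \multimap B}^i(\bm{s'})$ and $i < \lambda^{\mathbb{N}}(m) = \lambda^{\mathbb{N}}(m')$, both $m$ and $m'$ lie in $B$, and DP2 follows by projecting onto $B$ and applying DP2 on $B$.

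Finally I would verify I1, I2 and DI3. I1 and I2 follow immediately from their $A$- and $B$-counterparts plus the component-tracking clause in the definition of $\simeq_{A \multimap B}$. For DI3, given $\bm{s} \simeq_{A \multimap B}^d \bm{t}$ and $\bm{s}m \in P_{A \multimap B}$, I would project to each component, derive $\bm{s} \!\upharpoonright\! A \simeq_A^d \bm{t} \!\upharpoonright\! A$ and $\bm{s} \!\upharpoonright\! B \simeq_B^d \bm{t} \!\upharpoonright\! B$, apply DI3 on whichever component contains $m$ to obtain a matching $n$, and reassemble the extended position in $A \multimap B$. The main obstacle here is the reassembly step: unlike for $\otimes$, Player in $A \multimap B$ may switch between components, so the constraints on when a given component move is legally playable are governed by a more intricate parity diagram (with initial moves of $A$ enabled by initial moves of $B$). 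I expect this combinatorial bookkeeping of OP/IE-parities together with the initial-pointer convention for $\multimap$ to be the one delicate point, though it is routine once one draws the analogue of Table~\ref{TableDoubleParityDiagram} and checks that the minimal-switching reassembly used in the tensor case still yields a legal position here.
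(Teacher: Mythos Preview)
Your proposal is correct and follows essentially the same approach as the paper: reduce to $A$ normalized, observe that all internal moves then lie in $B$, and derive DP2 by projecting onto $B$; DI3 is handled as in the tensor case. The paper is terser (it declares E1--E4, P1, I1, I2 trivial and only spells out DP2), but your more detailed treatment and your anticipation of the reassembly bookkeeping for DI3 are entirely in line with what the paper leaves implicit.
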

\begin{proof}
Again, it suffices to show the preservation property of the additional conditions on the labeling function and the axioms E1, E2, E4, DP2, DP3 and DI3.
For brevity, assume that $A$ is normalized and consider $A \multimap B$.
Again, non-trivial conditions are just DP3 and DI3, but DI3 may be shown in a way similar to the case of tensor.

To verify DP3, let $i \in \mathbb{N}$ and $\boldsymbol{s} l m n, \boldsymbol{s'} l' m' n' \in P_{A \multimap B}^{\mathsf{Odd}}$ such that $\mathcal{H}_{A \multimap B}^i(\boldsymbol{s}lm) = \mathcal{H}_{A \multimap B}^i(\boldsymbol{s'}l'm')$ and $i < \lambda_{A \multimap B}^{\mathbb{N}}(n) = \lambda_{A \multimap B}^{\mathbb{N}}(n')$.
Again, $m$ and $m'$ are both internal, and so $m$, $n$, $m'$ and $n'$ all belong to $B$.
Thus, $(\boldsymbol{s} l \upharpoonright B) . m n, (\boldsymbol{s'} l' \upharpoonright B) . m' n' \in P_{B}^{\mathsf{Odd}}$ such that $\mathcal{H}_B^i((\boldsymbol{s}l \upharpoonright B) . m) = \mathcal{H}_{A \multimap B}^i(\boldsymbol{s} l m) \upharpoonright \mathcal{H}^i(B) = \mathcal{H}_{A \multimap B}^i(\boldsymbol{s'} l' m') \upharpoonright \mathcal{H}^i(B) = \mathcal{H}_B^i((\boldsymbol{s'}l' \upharpoonright B) . m')$ and $i < \lambda_{B}^{\mathbb{N}}(n) = \lambda_{B}^{\mathbb{N}}(n')$; thus, by DP2 on $B$, we may conclude that $n = n'$ and $\mathcal{J}_{\boldsymbol{s} l m n}^{\circleddash i}(n) = \mathcal{J}_{(\boldsymbol{s}l \upharpoonright B) . mn}^{\circleddash i}(n) = \mathcal{J}_{(\boldsymbol{s}l \upharpoonright B) . mn}^{\circleddash i}(n') = \mathcal{J}_{\boldsymbol{s} l m n}^{\circleddash i}(n')$.
\end{proof}

Next, \emph{product} $\&$ forms the categorical product in the categories of static games and strategies \cite{abramsky1999game}.
A position of the product $A \& B$ is simply a position of $A$ or $B$:  
\begin{definition}[Product of games \cite{abramsky1999game}]
\label{DefProduct}
Given games $A$ and $B$, the \emph{\bfseries product} $A \& B$ of $A$ and $B$ is defined by:

\begin{itemize}

\item $M_{A \& B} \stackrel{\mathrm{df. }}{=} M_A + M_B$;

\item $\lambda_{A \& B} \stackrel{\mathrm{df. }}{=} [\lambda_A, \lambda_B]$;

\item $\vdash_{A \& B} \ \stackrel{\mathrm{df. }}{=} \ \vdash_A + \ \! \vdash_B$;

\item $P_{A \& B} \stackrel{\mathrm{df. }}{=} \{ \boldsymbol{s} \in \mathscr{L}_{A \& B} \mid (\boldsymbol{s} \upharpoonright A \in P_A \wedge \boldsymbol{s} \upharpoonright B = \boldsymbol{\epsilon}) \vee (\boldsymbol{s} \upharpoonright A = \boldsymbol{\epsilon} \wedge \boldsymbol{s} \upharpoonright B \in P_B) \ \! \}$;

\item $\boldsymbol{s} \simeq_{A \& B} \boldsymbol{t} \stackrel{\mathrm{df. }}{\Leftrightarrow} \boldsymbol{s} \simeq_A \boldsymbol{t} \vee \boldsymbol{s} \simeq_B \boldsymbol{t}$.

\end{itemize}
\end{definition}

\begin{example}
A maximal position of the product $\boldsymbol{2} \& N$ is either of the following forms:
\begin{center}
\begin{tabular}{ccccccccc}
$\boldsymbol{2}$ & $\&$ & $N$ &&&& $\boldsymbol{2}$ & $\&$ & $N$ \\ \cline{1-3} \cline{7-9}
$q$ \tikzmark{cproduct1} &&&&&&&&$q$ \tikzmark{cproduct2} \\
$b$ \tikzmark{dproduct1} &&&&&&&&$n$ \tikzmark{dproduct2}
\end{tabular}
\begin{tikzpicture}[overlay, remember picture, yshift=.25\baselineskip]
\draw [->] ({pic cs:dproduct1}) [bend right] to ({pic cs:cproduct1});
\draw [->] ({pic cs:dproduct2}) [bend right] to ({pic cs:cproduct2});
\end{tikzpicture}
\end{center}
where $b \in \mathbb{B}$ and $n \in \mathbb{N}$. 
\end{example}

Now, for our game-semantic CCBoC (given in Section~\ref{DynamicGameSemantics}), let us generalize product:
\begin{notation*}
Given a function $f : X \to Y$ and a subset $Z \subseteq X$, we write $f \downharpoonright Z : X \setminus Z \to Y$ for the restriction of $f$ to the subset $X \setminus Z \subseteq X$. 
\end{notation*}

\begin{definition}[Pairing of games]
\label{DefPairingOfDynamicGames}
The \emph{\bfseries pairing} $\langle L, R \rangle$ of games $L$ and $R$ such that $\mathcal{H}^\omega(L) \trianglelefteqslant C \multimap A$ and $\mathcal{H}^\omega(R) \trianglelefteqslant C \multimap B$ for some normalized games $A$, $B$ and $C$ is defined by:
\begin{itemize}

\item $M_{\langle L, R \rangle} \stackrel{\mathrm{df. }}{=} M_C + (M_L \setminus M_C) + (M_R \setminus M_C)$, where `tags' for the disjoint union is chosen in such a way that $\mathcal{H}^\omega(\langle L, R \rangle) \trianglelefteqslant C \multimap A \& B$ holds;

\item $\lambda_{\langle L, R \rangle} \stackrel{\mathrm{df. }}{=} [\overline{\lambda_C}, \lambda_L \downharpoonright M_C, \lambda_R \downharpoonright M_C]$;

\item $m \vdash_{\langle L, R \rangle} n \stackrel{\mathrm{df. }}{\Leftrightarrow} (\mathit{att}_{\langle L, R \rangle}(m) = \mathit{att}_{\langle L, R \rangle}(n) \vee \mathit{att}_{\langle L, R \rangle}(m) = C \vee \mathit{att}_{\langle L, R \rangle}(n) = C) \wedge (\mathit{peel}_{\langle L, R \rangle}(m) \vdash_L \mathit{peel}_{\langle L, R \rangle}(n) \vee \mathit{peel}_{\langle L, R \rangle}(m) \vdash_R \mathit{peel}_{\langle L, R \rangle}(n))$;

\item $P_{\langle L, R \rangle} \stackrel{\mathrm{df. }}{=} \{ \boldsymbol{s} \in \mathscr{L}_{L \& R} \mid (\boldsymbol{s} \upharpoonright L \in P_L \wedge \boldsymbol{s} \upharpoonright R = \boldsymbol{\epsilon}) \vee (\boldsymbol{s} \upharpoonright L = \boldsymbol{\epsilon} \wedge \boldsymbol{s} \upharpoonright R \in P_R) \ \! \}$;

\item $\boldsymbol{s} \simeq_{\langle L, R \rangle} \boldsymbol{t} \stackrel{\mathrm{df. }}{\Leftrightarrow} (\boldsymbol{s} \upharpoonright L = \boldsymbol{\epsilon} \Leftrightarrow \boldsymbol{t} \upharpoonright L = \boldsymbol{\epsilon}) \wedge \boldsymbol{s} \upharpoonright L \simeq_L \boldsymbol{t} \upharpoonright L \wedge \boldsymbol{s} \upharpoonright R \simeq_R \boldsymbol{t} \upharpoonright R$

\end{itemize}
where the map $\mathit{peel}_{\langle L, R \rangle} : M_{\langle L, R \rangle} \rightarrow M_L \cup M_R$ is the obvious left inverse of the `tagging' for $M_{\langle L, R \rangle}$, $\boldsymbol{s} \upharpoonright L$ (resp. $\boldsymbol{s} \upharpoonright R$) is the j-subsequence of $\boldsymbol{s}$ that consists of moves $x$ such that $\mathit{peel}_{\langle L, R \rangle}(x) \in M_L$ (resp. $\mathit{peel}_{\langle L, R \rangle}(x) \in M_R$) yet changed into $\mathit{peel}_{\langle L, R \rangle}(x)$, and the map $\mathit{att}_{\langle L, R \rangle} : M_{\langle L, R \rangle} \rightarrow \{ L, R, C \}$ is given by $\mathit{att}_{\langle L, R \rangle}(m) \stackrel{\mathrm{df. }}{=} \begin{cases} L &\text{if $\mathit{peel}_{\langle L, R \rangle}(m) \in M_L \setminus M_C$;} \\ R &\text{if $\mathit{peel}_{\langle L, R \rangle}(m) \in M_R \setminus M_C$;} \\ C &\text{otherwise (i.e., if $\mathit{peel}_{\langle L, R \rangle}(m) \in M_C$).} \end{cases}$
\end{definition}

Pairing of games is indeed a generalization of product for we have $\langle T \multimap A, T \multimap B \rangle = T \multimap A \& B$ for any games $A$ and $B$, where note that each game $G$ coincides with the linear implication $T \multimap G$ up to `tags'.
Also, we shall see that the \emph{(generalized) pairing $\langle \sigma, \tau \rangle$ of strategies} $\sigma : L$ and $\tau : R$ forms a strategy on the pairing $\langle L, R \rangle$ (Definition~\ref{DefPairingOfDynamicStrategies}).

\begin{theorem}[Well-defined pairing of games]
\label{ThmWellDefinedPairingOnDynamicGames}
If (resp. well-founded) games $L$ and $R$ satisfy $\mathcal{H}^\omega(L) \trianglelefteqslant C \multimap A$ and $\mathcal{H}^\omega(R) \trianglelefteqslant C \multimap B$ for normalized games $A$, $B$ and $C$, then the pairing $\langle L, R \rangle$ is a (resp. well-founded) game that satisfies $\mathcal{H}^\omega(\langle L, R \rangle) \trianglelefteqslant C \multimap A \& B$.
\end{theorem}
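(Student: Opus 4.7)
The proof follows the pattern of Theorems~\ref{ThmWellDefinedTensorOnDynamicGames} and \ref{ThmWellDefinedLinearImplicationOnDynamicGames}: verify the arena axioms E1--E4 and the finiteness of $\mu$, the position axioms P1 and DP2, and the identification axioms I1, I2 and DI3 for $\langle L, R \rangle$, and then establish the stated subgame relation. The organising observation is that, by construction of $P_{\langle L, R \rangle}$, every valid position lies on exactly one ``side'': either a position of $L$ with no $R$-exclusive moves, or a position of $R$ with no $L$-exclusive moves. Consequently, the axioms decompose into the corresponding ones on $L$ and on $R$ separately.

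For the arena axioms, the labeling $[\overline{\lambda_C}, \lambda_L \downharpoonright M_C, \lambda_R \downharpoonright M_C]$ is well-defined because the hypotheses $\mathcal{H}^\omega(L) \trianglelefteqslant C \multimap A$ and $\mathcal{H}^\omega(R) \trianglelefteqslant C \multimap B$ force the $M_C$-moves (which are necessarily external in $L$ and in $R$) to carry the same already-flipped polarities in both components, and flipping them back via $\overline{\lambda_C}$ is consistent with $\lambda_L$ and $\lambda_R$ on $M_C$. The enabling axioms E1--E4 and the finiteness $\mu(\langle L, R \rangle) = \max\{\mu(L), \mu(R)\} \in \mathbb{N}$ then follow componentwise from the corresponding properties of $L$ and $R$.

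For P1, non-emptiness and prefix-closure are immediate, since a prefix of a one-sided position is again one-sided. For DP2, given $\bm{s}lm, \bm{s'}l'm' \in P_{\langle L, R \rangle}^{\mathsf{Odd}}$ with $m$ internal, $i < \lambda_{\langle L,R \rangle}^{\mathbb{N}}(m) = \lambda_{\langle L,R \rangle}^{\mathbb{N}}(m')$ and $\mathcal{H}^i_{\langle L, R \rangle}(\bm{s}l) = \mathcal{H}^i_{\langle L, R \rangle}(\bm{s'}l')$, the common external trace together with IE-switch forces both positions onto the same side, whence DP2 reduces to DP2 on $L$ or on $R$. The identification axioms I1 and I2 follow directly from the definition of $\simeq_{\langle L, R \rangle}$, while DI3 is proved in the style of Theorem~\ref{ThmWellDefinedTensorOnDynamicGames}: if $\bm{s} \simeq^d_{\langle L, R \rangle} \bm{t}$ and $\bm{s}m \in P_{\langle L, R \rangle}$, then the $d$-hidden witnesses pin $\bm{s}$ and $\bm{t}$ to the same side, and DI3 on that side produces the required extension $\bm{t}n$. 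I expect the delicate step to be ensuring that this side is preserved under $\simeq^d$, i.e., ruling out a side-switch being ``hidden'' by $\mathcal{H}^d$; this is ultimately guaranteed because the initial move of any position selects the side and is external, hence visible after any hiding.

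For the second statement, normality of $\mathcal{H}^\omega(L)$ and $\mathcal{H}^\omega(R)$ together with $\mathcal{H}^\omega(L) \trianglelefteqslant C \multimap A$ and $\mathcal{H}^\omega(R) \trianglelefteqslant C \multimap B$ yields $M_{\mathcal{H}^\omega(L)} \setminus M_C \subseteq M_A$ and $M_{\mathcal{H}^\omega(R)} \setminus M_C \subseteq M_B$, giving $M_{\langle \mathcal{H}^\omega(L), \mathcal{H}^\omega(R) \rangle} \subseteq M_C + M_A + M_B = M_{C \multimap A \& B}$. Reading off the definitions of $\multimap$, $\&$ and $\langle \_, \_ \rangle$ (with $\overline{\lambda_C}$ matching on both sides), the labelings, enabling relations, valid positions and identifications of $\langle \mathcal{H}^\omega(L), \mathcal{H}^\omega(R) \rangle$ coincide with the restrictions of the corresponding data of $C \multimap A \& B$ to the inherited moves; $\mu$ is preserved since all moves involved are external. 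This yields $\langle \mathcal{H}^\omega(L), \mathcal{H}^\omega(R) \rangle \trianglelefteqslant C \multimap A \& B$, completing the proof.
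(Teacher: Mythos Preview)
Your proposal is correct and takes essentially the same approach as the paper, which simply states that the proof is ``similar to and simpler than the case of tensor $\otimes$.'' Your explicit identification of the one-sidedness of positions in $\langle L, R \rangle$ is precisely the simplification the paper alludes to, and your componentwise reduction of DP2 and DI3 to the corresponding axioms on $L$ or $R$ is the intended argument.
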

\begin{proof}
Similar to and simpler than the case of tensor.
\end{proof}

\if0
Also, it is easy to see:
\begin{theorem}[Well-defined exponential on dynamic games]
\label{ThmWellDefinedExponentialOnDynamicGames}
Dynamic games are closed under exponential $!$.
\end{theorem}
\begin{proof}
By essentially the same argument as the case of tensor $\otimes$.
\end{proof}
\fi

Now, let us recall \emph{exponential} $!$, which is essentially the countably-infinite iteration of tensor, i.e., $!A$ and $A \otimes A \otimes \dots$ coincide up to `tags'.
Precisely, it is defined as follows:
\begin{definition}[Exponential of games \cite{abramsky2000full,mccusker1998games}]
\label{DefExponential}
Given a game $A$, the \emph{\bfseries exponential} $!A$ of $A$ is defined by:
\begin{itemize}

\item $M_{!A} \stackrel{\mathrm{df. }}{=} M_A \times \mathbb{N}$;

\item $\lambda_{!A} : (a, i) \mapsto \lambda_A(a)$;

\item $\star \vdash_{!A} (a, i) \stackrel{\mathrm{df. }}{\Leftrightarrow} \star \vdash_A a$;

\item $(a, i) \vdash_{!A} (a', i') \stackrel{\mathrm{df. }}{\Leftrightarrow} i = i' \wedge a \vdash_A a'$;

\item $P_{!A} \stackrel{\mathrm{df. }}{=} \{ \boldsymbol{s} \in \mathscr{L}_{!A} \mid \forall i \in \mathbb{N} . \ \! \boldsymbol{s} \upharpoonright i \in P_A \ \! \}$;

\item $\boldsymbol{s} \simeq_{!A} \boldsymbol{t} \stackrel{\mathrm{df. }}{\Leftrightarrow} \exists \varphi \in \mathcal{P}(\mathbb{N}) . \ \! \forall i \in \mathbb{N} . \ \! \boldsymbol{s} \upharpoonright \varphi (i) \simeq_A \boldsymbol{t} \upharpoonright i \wedge \pi_2^\ast (\boldsymbol{s}) = (\varphi \circ \pi_2)^\ast(\boldsymbol{t})$

\end{itemize}
where $\boldsymbol{s} \upharpoonright i$ is the j-subsequence of $\boldsymbol{s}$ that consists of occurrences of moves of the form $(a, i)$ yet changed into $a$, and $\mathcal{P}(\mathbb{N})$ is the set of all permutations of natural numbers.
\end{definition}

\begin{example}
A typical position of the exponential $! \boldsymbol{2}$ is as follows:
\begin{center}
\begin{tabular}{c}
$! \boldsymbol{2}$ \\ \hline
\tikzmark{ExC1} $(q, 10)$ \\
\tikzmark{ExD1} $(\mathit{tt}, 10)$ \\
\tikzmark{ExC2} $(q, 100)$ \\
\tikzmark{ExD2} $(\mathit{ff}, 100)$ 
\end{tabular}
\begin{tikzpicture}[overlay, remember picture, yshift=.25\baselineskip]
\draw [->] ({pic cs:ExD1}) [bend left] to ({pic cs:ExC1});
\draw [->] ({pic cs:ExD2}) [bend left] to ({pic cs:ExC2});
\end{tikzpicture}
\end{center}
\end{example}

Now, it should be clear, from the definition of $\simeq_{!A}$, why we have equipped each game with an identification of positions: A particular choice of the `tag' $(\_, i)$ for an exponential $!A$ should not matter; since this identification may occur \emph{locally} in games in a \emph{nested} form, e.g., $!(!A \otimes B)$, $!A \multimap B$, etc., it gives a neat solution to define a \emph{tailored} identification $\simeq_G$ of positions as part of the structure of each game $G$.
It was first introduced by \cite{abramsky2000full} and also employed in \cite{mccusker1998games}.

Exponential enables us, via \emph{Girard's translation} \cite{girard1987linear} $A \Rightarrow B \stackrel{\mathrm{df. }}{=} \ !A \multimap B$, to model the construction $\Rightarrow$ of the usual \emph{\bfseries implication} (or the \emph{\bfseries function space}).

\begin{example}
\label{ExExponential}
In the linear implication $\boldsymbol{2} \& \boldsymbol{2} \multimap \boldsymbol{2}$, Player may play at most only in one $\boldsymbol{2}$ out of the domain $\boldsymbol{2} \& \boldsymbol{2}$:
\begin{center}
\begin{tabular}{ccccccccccccc}
$\boldsymbol{2}$ & $\&$ & $\boldsymbol{2}$ & $\multimap$ & $\boldsymbol{2}$ &&&& $\boldsymbol{2}$ & $\&$ & $\boldsymbol{2}$ & $\multimap$ & $\boldsymbol{2}$ \\ \cline{1-5} \cline{9-13}
&&&& \tikzmark{cimplication2} $q$ \tikzmark{cimplication1} &&&&&&&& \tikzmark{cimplication5} $q$ \tikzmark{cimplication4} \\
\tikzmark{cimplication3} $q$ \tikzmark{dimplication2} &&&&&&&&&& \tikzmark{cimplication6} $q$ \tikzmark{dimplication5} && \\
\tikzmark{dimplication3} $b^{(1)}$&&&&&&&&&& \tikzmark{dimplication6} $b^{(1)}$&& \\
&&&&$b^{(2)}$ \tikzmark{dimplication1} &&&&&&&&$b^{(2)}$ \tikzmark{dimplication4}
\end{tabular}
\begin{tikzpicture}[overlay, remember picture, yshift=.25\baselineskip]
\draw [->] ({pic cs:dimplication1}) [bend right] to ({pic cs:cimplication1});
\draw [->] ({pic cs:dimplication2}) to ({pic cs:cimplication2});
\draw [->] ({pic cs:dimplication3}) [bend left] to ({pic cs:cimplication3});
\draw [->] ({pic cs:dimplication4}) [bend right] to ({pic cs:cimplication4});
\draw [->] ({pic cs:dimplication5}) to ({pic cs:cimplication5});
\draw [->] ({pic cs:dimplication6}) [bend left] to ({pic cs:cimplication6});
\end{tikzpicture}
\end{center}
where $b^{(1)}, b^{(2)} \in \mathbb{B}$. 
On the other hand, however, positions of the implication $\boldsymbol{2} \& \boldsymbol{2} \Rightarrow \boldsymbol{2} = \ ! (\boldsymbol{2} \& \boldsymbol{2}) \multimap \boldsymbol{2}$ are of the expected form; for instance:
\begin{center}
\begin{tabular}{ccccccccccccc}
$!(\boldsymbol{2}$ & $\&$ & $\boldsymbol{2})$ & $\multimap$ & $\boldsymbol{2}$ &&&& $!(\boldsymbol{2}$ & $\&$ & $\boldsymbol{2})$ & $\multimap$ & $\boldsymbol{2}$ \\ \cline{1-5} \cline{9-13}
&&&& \tikzmark{cimplication17} $q$ \tikzmark{cimplication16} &&&&&&&& \tikzmark{cimplication8} $q$ \tikzmark{cimplication7} \\
\tikzmark{cimplication18} $(q, 0)$ \tikzmark{dimplication17} &&&&&&&&&& \tikzmark{cimplication9} $(q, 10)$ \tikzmark{dimplication8} && \\
\tikzmark{dimplication18} $(b^{(1)}, 0)$&&&&&&&&&& \tikzmark{dimplication9} $(b^{(1)}, 10)$&& \\
&& \tikzmark{cimplication20} $(q, 1)$ \tikzmark{dimplication19} &&&&&& \tikzmark{cimplication11} $(q, 7)$ \tikzmark{dimplication10} &&&& \\
&& \tikzmark{dimplication20} $(b^{(2)}, 1)$&&&&&& \tikzmark{dimplication11} $(b^{(2)}, 7)$&&&& \\
&&&&$b^{(3)}$ \tikzmark{dimplication16} &&&& \tikzmark{cimplication13} $(q, 4)$ \tikzmark{dimplication12} &&&& \\
&&&&&&&& \tikzmark{dimplication13} $(b^{(3)}, 4)$&&&& \\
&&&&&&&&&& \tikzmark{cimplication15} $(q, 100)$ \tikzmark{dimplication14} && \\
&&&&&&&&&& \tikzmark{dimplication15} $(b^{(4)}, 100)$&& \\
&&&&&&&&&&&& $b^{(5)}$ \tikzmark{dimplication7}
\end{tabular}
\begin{tikzpicture}[overlay, remember picture, yshift=.25\baselineskip]
\draw [->] ({pic cs:dimplication7}) [bend right] to ({pic cs:cimplication7});
\draw [->] ({pic cs:dimplication8}) to ({pic cs:cimplication8});
\draw [->] ({pic cs:dimplication9}) [bend left] to ({pic cs:cimplication9});
\draw [->] ({pic cs:dimplication10}) [bend right]  to ({pic cs:cimplication8});
\draw [->] ({pic cs:dimplication11}) [bend left] to ({pic cs:cimplication11});
\draw [->] ({pic cs:dimplication12}) [bend right] to ({pic cs:cimplication8});
\draw [->] ({pic cs:dimplication13}) [bend left] to ({pic cs:cimplication13});
\draw [->] ({pic cs:dimplication14}) to ({pic cs:cimplication8});
\draw [->] ({pic cs:dimplication15}) [bend left] to ({pic cs:cimplication15});
\draw [->] ({pic cs:dimplication16}) [bend right] to ({pic cs:cimplication16});
\draw [->] ({pic cs:dimplication17}) to ({pic cs:cimplication17});
\draw [->] ({pic cs:dimplication18}) [bend left] to ({pic cs:cimplication18});
\draw [->] ({pic cs:dimplication19}) to ({pic cs:cimplication17});
\draw [->] ({pic cs:dimplication20}) [bend left] to ({pic cs:cimplication20});
\end{tikzpicture}
\end{center}
where $b^{(1)}, b^{(2)}, b^{(3)}, b^{(4)}, b^{(5)} \in \mathbb{B}$. 
Hence, e.g., Player may play as conjunction $\wedge : \mathbb{B} \times \mathbb{B} \rightarrow \mathbb{B}$ or disjunction $\vee : \mathbb{B} \times \mathbb{B} \rightarrow \mathbb{B}$ on the implication $\boldsymbol{2} \& \boldsymbol{2} \Rightarrow \boldsymbol{2}$ in the obvious manner, but not on the linear implication $\boldsymbol{2} \& \boldsymbol{2} \multimap \boldsymbol{2}$.
This example illustrates why the standard notion of functions corresponds in game semantics to implication $\Rightarrow$, not linear one $\multimap$.
\end{example}

For the game-semantic CCBoC, let us generalize exponential of games as follows:
\begin{definition}[Promotion of games]
\label{DefPromotionOfDynamicGames}
Given a game $G$ such that $\mathcal{H}^\omega(G) \trianglelefteqslant \ !A \multimap B$ for some normalized games $A$ and $B$, the \emph{\bfseries promotion} $G^\dagger$ of $G$ is defined by:
\begin{itemize}

\item $M_{G^\dagger} \stackrel{\mathrm{df. }}{=} ((M_G \setminus M_{!A}) \times \mathbb{N}) + M_{!A}$;

\item $\lambda_{G^\dagger} : ((m, i) \in (M_G \setminus M_{!A}) \times \mathbb{N}) \mapsto \lambda_G(m), ((a, j) \in M_{!A}) \mapsto \lambda_G(a, j)$;

\item $\star \vdash_{G^\dagger} (m, i) \stackrel{\mathrm{df. }}{\Leftrightarrow} \star \vdash_G m$ for all $i \in \mathbb{N}$;

\item $(m, i) \vdash_{G^\dagger} (n, j) \stackrel{\mathrm{df. }}{\Leftrightarrow} (i = j \wedge m, n \in M_G \setminus M_{!A} \wedge m \vdash_G n) \\ \vee (i = j \wedge m \vdash_A n) \vee (m \in M_G \setminus M_{!A} \wedge (n, j) \in M_{!A} \wedge m \vdash_G (n, j))$;

\item $P_{G^\dagger} \stackrel{\mathrm{df. }}{=} \{ \boldsymbol{s} \in \mathscr{L}_{G^\dagger} \mid \forall i \in \mathbb{N} . \ \! \boldsymbol{s} \upharpoonright i \in P_G \ \! \}$;

\item $\boldsymbol{s} \simeq_{G^\dagger} \boldsymbol{t} \stackrel{\mathrm{df. }}{\Leftrightarrow} \exists \varphi \in \mathcal{P}(\mathbb{N}) . \ \! \forall i \in \mathbb{N} . \ \! \boldsymbol{s} \upharpoonright \varphi(i) \simeq_G \boldsymbol{t} \upharpoonright i \wedge \pi_2^\ast(\boldsymbol{s}) = (\varphi \circ \pi_2)^\ast(\boldsymbol{t})$

\end{itemize}
where $\boldsymbol{s} \upharpoonright i$ is the j-subsequence of $\boldsymbol{s}$ that consists of moves $(m, i)$ with $m \in M_G \setminus M_{!A}$, or $(a, \langle i, j \rangle)$ with $a \in M_A \wedge j \in \mathbb{N}$, yet changed into $m$ or $(a, j)$, respectively.
\end{definition}

Note that we have $(T \multimap A)^\dagger = \ !T \multimap \ !A$ for any game $A$, and therefore promotion of games is indeed a generalization of exponential. 
Also, we shall see later that the \emph{(generalized) promotion $\phi^\dagger$ of a strategy} $\phi : G$ forms a strategy on the promotion $G^\dagger$. 

\begin{example}
Let us consider the promotion $(!A \multimap B)^\dagger$, where $A$ and $B$ are arbitrary normalized games.
If there is the following position of $!A \multimap B$:
\begin{center}
\begin{tabular}{ccc}
$!A$ & $\multimap$ & $B$ \\ \hline 
&&\tikzmark{cpromotion71} $b^{(1)}$ \tikzmark{cpromotion73} \\
\tikzmark{cpromotion72} $(a^{(1)}, i)$ \tikzmark{dpromotion71}&& \\
\tikzmark{dpromotion72} $(a^{(2)}, i)$ && \\
&&$b^{(2)}$ \tikzmark{dpromotion73}
\end{tabular}
\begin{tikzpicture}[overlay, remember picture, yshift=.25\baselineskip]
\draw [->] ({pic cs:dpromotion71}) to ({pic cs:cpromotion71});
\draw [->] ({pic cs:dpromotion72}) [bend left] to ({pic cs:cpromotion72});
\draw [->] ({pic cs:dpromotion73}) [bend right] to ({pic cs:cpromotion73});
\end{tikzpicture}
\end{center}
then there is the following position of the promotion $(!A \multimap B)^\dagger$, where note that $j, j' \in \mathbb{N}$ are arbitrarily chosen by Opponent: 
\begin{center}
\begin{tabular}{ccc}
$!A$ & $\multimap$ & $!B$ \\ \hline 
&&\tikzmark{cpromotion11} $(b^{(1)}, j)$ \tikzmark{cpromotion13} \\
\tikzmark{cpromotion12} $(a^{(1)}, \langle i, j \rangle)$ \tikzmark{dpromotion11} && \\
\tikzmark{dpromotion12} $(a^{(2)}, \langle i, j \rangle)$ && \\
&& $(b^{(2)}, j)$ \tikzmark{dpromotion13} \\
&& \tikzmark{cpromotion14} $(b^{(1)}, j')$ \tikzmark{cpromotion16} \\
\tikzmark{cpromotion15} $(a^{(1)}, \langle i, j' \rangle)$ \tikzmark{dpromotion14} && \\
\tikzmark{dpromotion15} $(a^{(2)}, \langle i, j' \rangle)$ && \\
&& $(b^{(2)}, j')$ \tikzmark{dpromotion16}
\end{tabular}
\begin{tikzpicture}[overlay, remember picture, yshift=.25\baselineskip]
\draw [->] ({pic cs:dpromotion11}) to ({pic cs:cpromotion11});
\draw [->] ({pic cs:dpromotion12}) [bend left] to ({pic cs:cpromotion12});
\draw [->] ({pic cs:dpromotion13}) [bend right] to ({pic cs:cpromotion13});
\draw [->] ({pic cs:dpromotion14}) to ({pic cs:cpromotion14});
\draw [->] ({pic cs:dpromotion15}) [bend left] to ({pic cs:cpromotion15});
\draw [->] ({pic cs:dpromotion16}) [bend right] to ({pic cs:cpromotion16});
\end{tikzpicture}
\end{center}
\end{example}

\begin{theorem}[Well-defined promotion of games]
\label{ThmWellDefinedPromotionOnDynamicGames}
If a (resp. well-founded) game $G$ satisfies $\mathcal{H}^\omega(G) \trianglelefteqslant \ !A \multimap B$ for some normalized games $A$ and $B$, then $G^\dagger$ is a (resp. well-founded) game that satisfies $\mathcal{H}^\omega(G)^\dagger \trianglelefteqslant \ !A \multimap \ !B$.
\end{theorem}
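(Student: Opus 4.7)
The plan is to follow the template of the proofs for tensor and $!$: verify in turn the dynamic-arena axioms on $G^\dagger$, then P1 and DP2 on $P_{G^\dagger}$, then I1, I2, DI3 on $\simeq_{G^\dagger}$, and finally the subgame claim. The arena level is straightforward because every move of $G^\dagger$ is a thread-tagged move of $G$ (with $!A$-moves carrying a pairing $\langle i,k\rangle$ recording both the outer thread and the inner $!A$-thread), and the three parity components of $\lambda$ are preserved by this re-tagging; in particular $\mu(G^\dagger)=\mu(G)<\infty$, and each $G^\dagger$-enabling pair projects to a $G$-enabling (or $A$-enabling) pair, so E1--E4 transfer directly, just as for the exponential $!A$.

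For P1, non-emptiness and prefix-closure are read off immediately from the thread-projection definition of $P_{G^\dagger}$. The heart of the proof is DP2. Suppose $\bm{s}m,\bm{s'}m'\in P_{G^\dagger}^{\mathsf{Odd}}$, $i\in\mathbb{N}$, $\mathcal{H}_{G^\dagger}^i(\bm{s})=\mathcal{H}_{G^\dagger}^i(\bm{s'})$ and $i<\lambda_{G^\dagger}^{\mathbb{N}}(m)=\lambda_{G^\dagger}^{\mathbb{N}}(m')$. Then $m$ is internal, and it lies in a unique outer thread $j$, read off from its tag; because $m$ is retained by $\mathcal{H}_{G^\dagger}^i$ on $\bm{s}m$, its $i$-external justifier $\mathcal{J}_{\bm{s}m}^{\circleddash i}(m)$ lies in the O-view of $\mathcal{H}_{G^\dagger}^i(\bm{s})$ by generalized visibility and must belong to the same thread $j$ by the enabling relation of $G^\dagger$. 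I therefore project to thread $j$ of $\bm{s}$ and to the matching thread $j'$ of $\bm{s'}$ (matched via the equality of the $i$-hidings), obtaining $\mathcal{H}_G^i(\bm{s}\upharpoonright j)=\mathcal{H}_G^i(\bm{s'}\upharpoonright j')$ in $G$, and apply DP2 on $G$ to the odd-length positions $(\bm{s}\upharpoonright j).m$ and $(\bm{s'}\upharpoonright j').m'$. Lifting back by leaving every other thread untouched yields $\mathcal{H}_{G^\dagger}^i(\bm{s}m)=\mathcal{H}_{G^\dagger}^i(\bm{s'}m')$.

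The identification axioms follow the $!A$ pattern: I1 and I2 are immediate from the thread-projection definition of $\simeq_{G^\dagger}$, and for DI3, given $\bm{s}\simeq_{G^\dagger}^d\bm{t}$ with $\bm{s}m\in P_{G^\dagger}$, I use the thread-permutation $\varphi$ witnessing $\simeq_{G^\dagger}$ to pick out the $\bm{t}$-thread $\varphi(j)$ corresponding to $m$'s thread $j$, invoke DI3 on $G$ on the projected position to produce a response $n$, and re-attach it as a move of $\bm{t}$ in thread $\varphi(j)$, leaving all other threads fixed. Finally, for $\mathcal{H}^\omega(G)^\dagger\trianglelefteqslant\ !A\multimap\ !B$, since $\mathcal{H}^\omega(G)\trianglelefteqslant\ !A\multimap B$, the promotion multiplies the $B$-part by $\mathbb{N}$ (producing the move set of $!B$) and re-tags the $!A$-part via the pairing $\langle\_,\_\rangle:\mathbb{N}\times\mathbb{N}\to\mathbb{N}$, which leaves the $!A$-structure intact; labels, enabling, positions (via thread projections) and the identification all match the definition of $!A\multimap\ !B$ restricted to the moves of $\mathcal{H}^\omega(G)^\dagger$, and $\mu(\mathcal{H}^\omega(G)^\dagger)=0=\mu(!A\multimap\ !B)$ since $\mathcal{H}^\omega$ normalizes. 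The main difficulty throughout is DP2, specifically the bookkeeping to justify that the outer thread of an internal move is rigidly determined by the $i$-hidden history; the verification is conceptually routine but notationally delicate.
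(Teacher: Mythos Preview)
Your proposal is correct and takes essentially the same approach that the paper's one-word proof (``Straightforward'') implicitly intends: a routine verification along the lines of the tensor and exponential cases, with DP2 as the only point requiring care. Your identification of the thread-determination step as the delicate part is exactly right; the clean way to close it is the tensor-style argument via IE-switch (the immediately preceding P-move $l$ of $\bm{s}$ has the same internality degree as $m$, hence survives $\mathcal{H}^i$ and so equals $l'$, and an infinite-descent on ``threads whose last move is a P-move of degree $d$'' forces $m$ to lie in the same thread as $l$), which is slightly more direct than the justifier-in-O-view route you sketch.
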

\begin{proof}
Similar to the case of tensor. 
\end{proof}

Now, let us introduce a new, central construction on games, which formalizes the construction for `internal communication' between strategies sketched in Section~\ref{Introduction}:
\begin{definition}[Concatenation and composition of games]
\label{DefConcatenationOfDynamicGames}
Given games $J$ and $K$ that satisfies $\mathcal{H}^\omega(J) \trianglelefteqslant A \multimap B$ and $\mathcal{H}^\omega(K) \trianglelefteqslant B \multimap C$ for some normalized games $A$, $B$ and $C$, the \emph{\bfseries concatenation} $J \ddagger K$ of $J$ and $K$ is defined by:
\begin{itemize}

\item $M_{J \ddagger K} \stackrel{\mathrm{df. }}{=} M_J + M_K$, where `tags' for the disjoint union is chosen in such a way that $\mathcal{H}^\omega(J \ddagger K) \trianglelefteqslant A \multimap C$ holds; 

\item $\lambda_{J \ddagger K} \stackrel{\mathrm{df. }}{=} [\lambda_J \downharpoonright M_{B_{[1]}}, \lambda^{+\mu}_J \upharpoonright M_{B_{[1]}}, \lambda^{+\mu}_K \upharpoonright M_{B_{[2]}}, \lambda_K \downharpoonright M_{B_{[2]}}]$, where $B_{[1]}$ and $B_{[2]}$ are the copies of $B$ that belong to $J$ and $K$, respectively, $\lambda_G^{+ \mu} \stackrel{\mathrm{df. }}{=}  \langle \lambda_G^{\mathsf{OP}}, \lambda_G^{\mathsf{QA}}, n \mapsto \lambda_G^{\mathbb{N}} (n) + \mu \rangle$ ($G$ is $J$ or $K$), and $\mu \stackrel{\mathrm{df. }}{=} \max(\mu(J), \mu(K)) + 1$; 

\item $\star \vdash_{J \ddagger K} m \stackrel{\mathrm{df. }}{\Leftrightarrow} \star \vdash_K m$;

\item $m \vdash_{J \ddagger K} n \ (m \neq \star) \stackrel{\mathrm{df. }}{\Leftrightarrow} m \vdash_J n \vee m \vdash_K n \vee (\star \vdash_{B_{[2]}} m \wedge \star \vdash_{B_{[1]}} n)$;

\item $P_{J \ddagger K} \stackrel{\mathrm{df. }}{=} \{ \boldsymbol{s} \in \mathscr{J}_{J \ddagger K} \mid \boldsymbol{s} \upharpoonright J \in P_J, \boldsymbol{s} \upharpoonright K \in P_K, \boldsymbol{s}  \upharpoonright B_{[1]}, B_{[2]} \in \mathit{pr}_B \ \! \}$;

\item $\boldsymbol{s} \simeq_{J \ddagger K} \boldsymbol{t} \stackrel{\mathrm{df. }}{\Leftrightarrow} (\forall i \in \mathbb{N} . \ \! \boldsymbol{s}(i) \in M_J \Leftrightarrow \boldsymbol{t}(i) \in M_J) \wedge \boldsymbol{s} \upharpoonright J \simeq_J \boldsymbol{t} \upharpoonright J \wedge  \boldsymbol{s} \upharpoonright K \simeq_K \boldsymbol{t} \upharpoonright K$

\end{itemize}
where $\mathit{pr}_B \stackrel{\mathrm{df. }}{=} \{ \bm{s} \in P_{B_{[1]} \multimap B_{[2]}} \mid \forall \bm{t} \preceq{\bm{s}}. \ \mathsf{Even}(\bm{t}) \Rightarrow \bm{t} \upharpoonright B_{[1]} = \bm{t} \upharpoonright B_{[2]} \ \! \}$.
Moreover, the \emph{\bfseries composition} $J ; K$ (or $K \circ J$) of $J$ and $K$ is given by:
\begin{equation*}
J ; K \stackrel{\mathrm{df. }}{=} \mathcal{H}^\omega(J \ddagger K).
\end{equation*}
\end{definition}

\begin{example}
A typical maximal position of the concatenation $(\boldsymbol{2} \multimap \boldsymbol{2}) \ddagger (\boldsymbol{2} \multimap \boldsymbol{2})$ is:
\begin{center}
\begin{tabular}{ccccccc}
$(\boldsymbol{2}_{[0]}$ & $\multimap$ & $\boldsymbol{2}_{[1]})$ & $\ddagger$ & $(\boldsymbol{2}_{[2]}$ & $\multimap$ & $\boldsymbol{2}_{[3]})$ \\ \cline{1-7} 
&&&&&&\tikzmark{cCon1} $q_{[3]}$ \tikzmark{cCon7} \\
&&&&\tikzmark{cCon2} \fbox{$q_{[2]}$} \tikzmark{dCon1}&& \\
&&\tikzmark{cCon3} \fbox{$q_{[1]}$} \tikzmark{dCon2} &&&& \\
\tikzmark{cCon4} $q_{[0]}$ \tikzmark{dCon3} &&&&&& \\
\tikzmark{dCon4} $b^{(1)}_{[0]}$ &&&&&& \\
&& \tikzmark{dCon5} \fbox{$b^{(2)}_{[1]}$} &&&& \\
&&&& \tikzmark{dCon6} \fbox{$b^{(2)}_{[2]}$} && \\
&&&&&& $b^{(3)}_{[3]}$ \tikzmark{dCon7}
\end{tabular}
\begin{tikzpicture}[overlay, remember picture, yshift=.25\baselineskip]
\draw [->] ({pic cs:dCon1}) to ({pic cs:cCon1});
\draw [->] ({pic cs:dCon2}) to ({pic cs:cCon2});
\draw [->] ({pic cs:dCon3}) to ({pic cs:cCon3});
\draw [->] ({pic cs:dCon4}) [bend left] to ({pic cs:cCon4});
\draw [->] ({pic cs:dCon5}) [bend left] to ({pic cs:cCon3});
\draw [->] ({pic cs:dCon6}) [bend left] to ({pic cs:cCon2});
\draw [->] ({pic cs:dCon7}) [bend right] to ({pic cs:cCon7});
\end{tikzpicture}
\end{center}
where $b^{(1)}, b^{(2)}, b^{(3)} \in \mathbb{B}$.
We have marked internal moves by a square box just for clarity.
\end{example}

We shall see that the `non-hiding composition' or \emph{concatenation} $\iota \ddagger \kappa$ of strategies $\iota : J$ and $\kappa : K$ forms a strategy on the concatenation $J \ddagger K$.
It generalizes the particular case, where $J = A \multimap B$ and $K = B \multimap C$, so that $\iota ; \kappa = \mathcal{H}^\omega(\iota \ddagger \kappa) : \mathcal{H}^\omega(J \ddagger K) = J ; K = A \multimap C$ (as we shall establish shortly), which reformulates conventional composition of static strategies as \emph{concatenation plus hiding} of dynamic strategies. 

\begin{theorem}[Well-defined concatenation and composition of games]
\label{ThmWellDefinedConcatenationOnDynamicGames}
(Resp. well-founded) games are closed under concatenation and composition.
\end{theorem}
\begin{proof}
By Theorem~\ref{ThmExternalClosureOfDynamicGames}, it suffices to focus on concatenation, where well-foundedness is clearly preserved under concatenation. 
We first show that the arena $J \ddagger K$ is well-defined.
The set $M_{J \ddagger K}$ and the function $\lambda_{J \ddagger K}$ are clearly well-defined, where the \emph{finite} upper bounds $\mu(J)$ and $\mu(K)$ are crucial.
For the relation $\vdash_{J \ddagger K}$, the axioms E1 and E3 clearly hold. 
For the axiom E2, if $m \vdash_{J \ddagger K} n$ and $\lambda_{J \ddagger K}^{\mathsf{QA}}(n) = \mathsf{A}$, then $m, n \in M_K \setminus M_{B_{[2]}}$, $m, n \in M_{B_{[2]}}$, $m, n \in M_{B_{[1]}}$ or $m, n \in M_J \setminus M_{B_{[1]}}$. 
In either case, $\lambda_{J \ddagger K}^{\mathsf{QA}}(m) = \mathsf{Q}$ and $\lambda_{J \ddagger K}^{\mathbb{N}}(m) = \lambda_{J \ddagger K}^{\mathbb{N}}(n)$.

For the axiom E4, let $m \vdash_{J \ddagger K} n$, $m \neq \star$ and $\lambda_{J \ddagger K}^{\mathbb{N}}(m) \neq \lambda_{J \ddagger K}^{\mathbb{N}}(n)$. We proceed by a case analysis. 
If $(m \vdash_K n) \wedge (m, n \in M_K \setminus M_{B_{[2]}} \vee m, n \in M_{B_{[2]}})$, then we may just apply E4 on $K$. 
It is similar if $(m \vdash_J n) \wedge (m, n \in M_J \setminus M_{B_{[1]}} \vee m, n \in M_{B_{[1]}})$.
Note that the case $\star \vdash_{B_{[2]}} m \wedge \star \vdash_{B_{[1]}} n$ cannot happen.
Now, consider the case $m \vdash_K n \wedge m \in M_K \setminus M_{B_{[2]}} \wedge n \in M_{B_{[2]}}$. 
If $m$ is external, then $m \in M_C$, and so E4 on $J \ddagger K$ is satisfied by the definition of $B \multimap C$; if $m$ is internal, then we may apply E4 on $K$. 
The case $m \vdash_K n \wedge n \in M_K \setminus M_{B_{[2]}} \wedge m \in M_{B_{[2]}}$ is simpler as $m$ must be internal.
The remaining cases $m \vdash_J n \wedge m \in M_J \setminus M_{B_{[1]}} \wedge n \in M_{B_{[1]}}$ and $m \vdash_J n \wedge n \in M_J \setminus M_{B_{[1]}} \wedge m \in M_{B_{[1]}}$ are analogous. 
Hence, we have shown that the arena $J \ddagger K$ is well-defined.

Next, we show that $P_{J \ddagger K} \subseteq \mathscr{L}_{J \ddagger K}$.
For justification, let $\boldsymbol{s}m \in P_{J \ddagger K}$ with $m$ non-initial. The non-trivial case is when $m$ is initial in $J$. But in this case, $m$ is initial in $B_{[1]}$, and so it has a justifier in $B_{[2]}$. 
For alternation and IE-switch, similarly to Table~\ref{TableDoubleParityDiagram} for tensor $\otimes$, we have Table~\ref{TableConcatenationDoubleParityDiagram} for $J \ddagger K$, in which the first (resp. the second) component of each state is about the OP- and IE-parities of the next move of $J$ (resp. $K$). 
For readability, some states are written twice, and the dotted arrow indicates two necessarily consecutive moves of $B$. 
Then, alternation and IE-switch on $J \ddagger K$ immediately follows from this diagram and the corresponding axioms on $J$ and $K$.
\begin{table}
\begin{diagram}
& & (\mathsf{O}^\mathsf{E}, \mathsf{O}^\mathsf{E}) & \rTo^C & (\mathsf{O}^\mathsf{E}, \mathsf{P}^\mathsf{I}) \\
& & \dTo^C \uTo_C & & \dTo^{K} \uTo_{K} \\
(\mathsf{P}^\mathsf{I}, \mathsf{O}^\mathsf{E}) & \lDotsto^{B_{[1]} B_{[2]}} & (\mathsf{O}^\mathsf{E}, \mathsf{P}^\mathsf{E}) & \lTo^{K} & (\mathsf{O}^\mathsf{E}, \mathsf{O}^\mathsf{I}) \\
\dTo^{J} \uTo_{J} & & \dDotsto^{\begin{matrix} B_{[2]} \\ B_{[1]} \end{matrix}} \uDotsto_{\begin{matrix} B_{[2]} \\ B_{[1]} \end{matrix}} & & \dTo^{K} \uTo_{K}  \\
(\mathsf{O}^\mathsf{I}, \mathsf{O}^\mathsf{E}) & \rTo^{J} & (\mathsf{P}^\mathsf{E}, \mathsf{O}^\mathsf{E}) & \rDotsto^{B_{[1]} B_{[2]}} & (\mathsf{O}^\mathsf{E}, \mathsf{P}^\mathsf{I}) \\
\dTo^{J} \uTo_{J} & & \dTo^A \uTo_A & & \\
(\mathsf{P}^\mathsf{I}, \mathsf{O}^\mathsf{E}) & \lTo^A & (\mathsf{O}^\mathsf{E}, \mathsf{O}^\mathsf{E}) & &
\end{diagram}
\caption{The double parity diagram for the concatenation $J \ddagger K$.}
\label{TableConcatenationDoubleParityDiagram}
\end{table}

For generalized visibility, let $\boldsymbol{s} m \in P_{J \ddagger K}$ with $m$ non-initial and $d \in \mathbb{N} \cup \{ \omega \}$ such that $\boldsymbol{s} m$ is $d$-complete. 
Without loss of generality, we may assume $d \in \mathbb{N}$ as $\boldsymbol{s}$ is finite.
It is not hard to see that $\mathcal{H}^d_{J \ddagger K}(\boldsymbol{s}m) \in P_{\mathcal{H}^d(J) \ddagger \mathcal{H}^d(K)}$ if $\mathcal{H}^d(J \ddagger K)$ is not normalized; thus, this case is reduced to the (usual) visibility on $\mathcal{H}^d(J) \ddagger \mathcal{H}^d(K)$. Otherwise, it is no harm to select the \emph{least} $d \in \mathbb{N}^+$ such that $\mathcal{H}^d(J \ddagger K)$ is normalized; then $\mathcal{H}_{J \ddagger K}^{d-1}(\boldsymbol{s}m) \in P_{(A \multimap B_{[1]}) \ddagger (B_{[2]} \multimap C)}$, and thus the visibility of $\mathcal{H}_{J \ddagger K}^d(\boldsymbol{s}m) = \mathcal{H}_{\mathcal{H}^{d-1}(J \ddagger K)}(\mathcal{H}_{J \ddagger K}^{d-1}(\boldsymbol{s}m))$ can be shown completely in the same way as the proof that shows the composition of strategies is well-defined (in particular it satisfies visibility) \cite{mccusker1998games,harmer2004innocent}.
Consequently, it suffices to consider the case $d = 0$, i.e., to show the (usual) visibility. 

For this, we need the following:
\begin{lemma}[Visibility lemma]
\label{LemViewLemma}
Assume that $\boldsymbol{t} \in P_{J \ddagger K}$ and $\boldsymbol{t} \neq \boldsymbol{\epsilon}$. 
\begin{enumerate}
\item If the last move of $\boldsymbol{t}$ is of $M_J \setminus M_{B_{[1]}}$, then $\lceil \boldsymbol{t} \upharpoonright J \rceil_J \preceq \lceil \boldsymbol{t} \rceil_{J \ddagger K} \upharpoonright J$ and $\lfloor \boldsymbol{t} \upharpoonright J \rfloor_J \preceq \lfloor \boldsymbol{t} \rfloor_{J \ddagger K} \upharpoonright J$;

\item If the last move of $\boldsymbol{t}$ is of $M_K \setminus M_{B_{[2]}}$, then $\lceil \boldsymbol{t} \upharpoonright K \rceil_K \preceq \lceil \boldsymbol{t} \rceil_{J \ddagger K} \upharpoonright K$ and $\lfloor \boldsymbol{t} \upharpoonright K \rfloor_K \preceq \lfloor \boldsymbol{t} \rfloor_{J \ddagger K} \upharpoonright K$;

\item If the last move of $\boldsymbol{t}$ is an O-move of $M_{B_{[1]}} \cup \ \! M_{B_{[2]}}$, then $\lceil \boldsymbol{t} \upharpoonright B_{[1]}, B_{[2]} \rceil_{B_{[1]} \multimap B_{[2]}} \preceq \lfloor \boldsymbol{t} \rfloor_{J \ddagger K} \upharpoonright B_{[1]}, B_{[2]}$ and 
$\lfloor \boldsymbol{t} \upharpoonright B_{[1]}, B_{[2]} \rfloor_{B_{[1]} \multimap B_{[2]}} \preceq \lceil \boldsymbol{t} \rceil_{J \ddagger K} \upharpoonright B_{[1]}, B_{[2]}$.
\end{enumerate}
\end{lemma}
\begin{proof}[Proof of the lemma]
By induction on $|\boldsymbol{t}|$ with case analysis on the last move of $\boldsymbol{t}$. 
\end{proof}

Note that we may write $\boldsymbol{s} m = \boldsymbol{s_1} n \boldsymbol{s_2} m$, where $n$ justifies $m$.
If $\boldsymbol{s_2} = \boldsymbol{\epsilon}$, then it is trivial; so assume $\boldsymbol{s_2} = \boldsymbol{s'_2} r$. 
We then proceed by a case analysis on $m$:
\begin{itemize}

\item Assume $m \in M_J \setminus M_{B_{[1]}}$. Then, $n \in M_J$ and $r \in M_J$ by Table \ref{TableConcatenationDoubleParityDiagram}.
By Lemma \ref{LemViewLemma}, $\lceil \boldsymbol{s} \upharpoonright J \rceil \preceq \lceil \boldsymbol{s} \rceil \upharpoonright J$ and $\lfloor \boldsymbol{s} \upharpoonright J \rfloor \preceq \lfloor \boldsymbol{s} \rfloor \upharpoonright J$. 
Also, for $(\boldsymbol{s} \upharpoonright J) . m \in P_J$ and visibility on $J$,
\begin{align*}
&\text{$n$ occurs in $\lceil \boldsymbol{s} \upharpoonright J \rceil$ if $m$ is a P-move}; \\
&\text{$n$ occurs in $\lfloor \boldsymbol{s} \upharpoonright J \rfloor$ if $m$ is an O-move}.
\end{align*}
Hence we may conclude that $n$ occurs in $\lceil \boldsymbol{s} \rceil$ (resp. $\lfloor \boldsymbol{s} \rfloor$) if $m$ is a P- (resp. O-) move. 

\item Assume $m \in M_K \setminus M_{B_{[2]}}$. This case can be handled in a completely analogous way to the above case. 

\item Assume $m \in M_{B_{[1]}}$. If $m$ is a P-move, then $n, r \in M_J$ and so it can be handled in the same way as the case $m \in M_J \setminus M_{B_{[1]}}$; thus, assume that $m$ is an O-move. Then, $r \in M_{B_{[2]}}$ and it is a `copy' of $m$. 
Since $r$ is an O-move of $B_{[1]} \multimap B_{[2]}$, by Lemma \ref{LemViewLemma}, $\lceil \boldsymbol{s} \upharpoonright B_{[1]}, B_{[2]} \rceil \preceq \lfloor \boldsymbol{s} \rfloor \upharpoonright B_{[1]}, B_{[2]}$.
Note that $n$ is a move of $B_{[1]}$ or an initial move of $B_{[2]}$. In either case, we have $(\boldsymbol{s} \upharpoonright B_{[1]}, B_{[2]}) . m \in P_{B_{[1]} \multimap B_{[2]}}$; thus, $n$ occurs in $\lceil \boldsymbol{s} \upharpoonright B_{[1]}, B_{[2]} \rceil$. Hence we may conclude that $n$ occurs in $\lfloor \boldsymbol{s} \rfloor$.

\item Assume $m \in M_{B_{[2]}}$. 
If $m$ is a P-move, then $n, r \in M_K$; so it can be dealt with in the same way as the case $m \in M_K \setminus M_{B_{[2]}}$. 
Thus, assume $m$ is an O-move. 
By Table \ref{TableConcatenationDoubleParityDiagram}, we have $r \in M_{B_{[1]}}$, and it is an O-move of $B_{[1]} \multimap B_{[2]}$. 
Thus by Lemma \ref{LemViewLemma}, $\lceil \boldsymbol{s} \upharpoonright B_{[1]}, B_{[2]} \rceil \preceq \lfloor \boldsymbol{s} \rfloor \upharpoonright B_{[1]}, B_{[2]}$.
Then again, $(\boldsymbol{s} \upharpoonright B_{[1]}, B_{[2]}) . m \in P_{B_{[1]} \multimap B_{[2]}}$; thus, $n$ occurs in $\lceil \boldsymbol{s} \upharpoonright B_{[1]}, B_{[2]} \rceil$, and so $n$ occurs in $\lfloor \boldsymbol{s} \rfloor$.
\end{itemize}

Next, we verify the axioms P1, DP2 and DP3.
For P1, $\boldsymbol{\epsilon} \in P_{J \ddagger K}$ is clear; for prefix-closure, let $\boldsymbol{s} m \in P_{J \ddagger K}$. If $m \in M_J \setminus M_{B_{[1]}}$, then $(\boldsymbol{s} \upharpoonright J) . m = \boldsymbol{s}m \upharpoonright J \in P_J$; thus, $\boldsymbol{s} \upharpoonright J \in P_J$, $\boldsymbol{s} \upharpoonright K = \boldsymbol{s}m \upharpoonright K \in P_K$ and $\boldsymbol{s} \upharpoonright B_{[1]}, B_{[2]} = \boldsymbol{s}m \upharpoonright B_{[1]}, B_{[2]} \in \mathit{pr}_B$, whence $\boldsymbol{s} \in P_{J \ddagger K}$.
The other cases may be handled similarly.
For DP2, assume $\boldsymbol{s} m n \in P_{J \ddagger K}^{\mathsf{Even}}$ and $\lambda_{J \ddagger K}^{\mathbb{N}}(n) > 0$.
If $n \not \in M_{B_{[1]}} \cup M_{B_{[2]}}$, then we may just apply DP2 on $J$ or $K$; and the remaining case is trivial by the definition of $J \ddagger K$.

For DP3, let $i \in \mathbb{N}$ and $\boldsymbol{s}m, \boldsymbol{s'}m' \in P_{J \ddagger K}^{\mathsf{Odd}}$ such that $i < \lambda_{J \ddagger K}^{\mathbb{N}}(m) = \lambda_{J \ddagger K}^{\mathbb{N}}(m')$ and $\mathcal{H}_{J \ddagger K}^i(\boldsymbol{s}) = \mathcal{H}_{J \ddagger K}^i(\boldsymbol{s'})$.
Without loss of generality, we may assume $i = 0$ and $\lambda_{J \ddagger K}^{\mathbb{N}}(m) = 1 = \lambda_{J \ddagger K}^{\mathbb{N}}(m')$ because if $\lambda_{J \ddagger K}^{\mathbb{N}}(m) = \lambda_{J \ddagger K}^{\mathbb{N}}(m') = j > 1$, then we may consider $\mathcal{H}_{J \ddagger K}^{j-1}(\boldsymbol{s}m), \mathcal{H}_{J \ddagger K}^{j-1}(\boldsymbol{s'}m') \in P_{\mathcal{H}^{j-1}(J) \ddagger \mathcal{H}^{j-1}(K)}$ (n.b., the justifiers of $m$ and $m'$ have the same priority order).
Thus, $\boldsymbol{s} = \boldsymbol{s'}$ and $m, m' \in M_J \vee m, m' \in M_K$. If $m, m' \in M_J$ (resp. $m, m' \in M_K$), then $(\boldsymbol{s} \upharpoonright J) . m, (\boldsymbol{s'} \upharpoonright J) . m' \in P_J^{\mathsf{Odd}}$ (resp. $(\boldsymbol{s} \upharpoonright K) . m, (\boldsymbol{s'} \upharpoonright K) . m' \in P_K^{\mathsf{Odd}}$), and so we may just apply DP3 on $J$ (resp. $K$).

Finally, the axioms I1, I2 and DI3 on $\simeq_{J \ddagger K}$ can be verified similarly to the case of tensor, completing the proof.
\end{proof}

For completeness, let us explicitly define the rather trivial \emph{currying} of games:
\begin{definition}[Currying of games]
\label{DefCurryingOfDynamicGames}
Given a game $G$ such that $\mathcal{H}^\omega(G) \trianglelefteqslant A \otimes B \multimap C$ for some normalized games $A$, $B$ and $C$, the \emph{\bfseries currying} $\Lambda(G)$ of $G$ is $G$ up to `tags' that satisfies $\mathcal{H}^\omega(\Lambda(G)) \trianglelefteqslant A \multimap (B \multimap C)$. 
\end{definition}

Trivially, (resp. well-founded) games are closed under currying.

\if0
\begin{proposition}[Well-defined currying of games]
\label{PropWellDefinedCurryingOnDynamicGames}
(Resp. well-founded) games are closed under currying.
\end{proposition}
\begin{proof}
Trivial. 
\end{proof}
\fi

Next, we show that these constructions as well as the hiding operation preserve the subgame relation $\trianglelefteqslant$ (Definition~\ref{DefDynamicSubgames}):

\begin{notation*}
We write $\clubsuit_{i \in I}$, where $I$ is $\{ 1 \}$ or $\{ 1, 2 \}$, for any of the constructions on games introduced so far, i.e., $\clubsuit_{i \in I}$ is either $\otimes$, $\multimap$, $\langle \_, \_ \rangle$, $(\_)^\dagger$, $\ddagger$ or $\Lambda$. 
\end{notation*}

\begin{lemma}[Preservation of subgames]
\label{LemPreservationOfDynamicSubgames}
Let $\clubsuit_{i \in I}$ be a construction on games, and assume $H_i \trianglelefteqslant G_i$ for all $i \in I$. Then, $\clubsuit_{i \in I} H_i \trianglelefteqslant \clubsuit_{i \in I} G_i$.
\end{lemma}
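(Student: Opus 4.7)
The plan is to dispatch the six clauses defining $\trianglelefteqslant$ one at a time, doing case analysis on the construction $\clubsuit \in \{\otimes, \multimap, \langle\_,\_\rangle, (\_)^{\dagger}, \ddagger\}$. The inclusions on moves, enabling, and positions, as well as the restriction equation on labeling functions, are essentially routine: every construction defines the new set of moves by (effectively) a coproduct of component sets, the new labeling by a cotuple of component labelings (possibly with degree-shifts that depend only on $\mu$), and the new enabling relation by a disjoint union augmented by a fixed additional clause (e.g.\ initial-to-initial pointers for $\multimap$, or $B^{[2]}\!\to\!B^{[1]}$ links for $\ddagger$). Thus, assuming the corresponding inclusions at the components, they lift componentwise to the composites. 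For $\mu$, note that each construction computes $\mu(\clubsuit_i G_i)$ from $\mu(G_i)$ by a max (plus a constant in the $\ddagger$-case); since $\mu(H_i)=\mu(G_i)$ by hypothesis, the composite equality $\mu(\clubsuit_i H_i)=\mu(\clubsuit_i G_i)$ follows.

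The genuinely non-trivial clause is
\[
  \simeq_{\clubsuit_i H_i}^{d} \;=\; \simeq_{\clubsuit_i G_i}^{d}\; \cap\;(P_{\clubsuit_i H_i}\times P_{\clubsuit_i H_i}).
\]
Here the $\subseteq$ direction is easy: given $\bm{s}\simeq_{\clubsuit_i H_i}^{d}\bm{t}$ witnessed by $\bm{s'}\simeq_{\clubsuit_i H_i}\bm{t'}$ with matching $d$-hides, restricting to each component produces witnesses in $P_{H_i}$, which by hypothesis are also witnesses in $P_{G_i}$; reassembling them gives a witness in $P_{\clubsuit_i G_i}$, so $\bm{s}\simeq_{\clubsuit_i G_i}^{d}\bm{t}$.

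The main obstacle is the reverse direction $\supseteq$: one must turn a $G$-witness pair into an $H$-witness pair. My plan is to restrict $(\bm{s'},\bm{t'})$ componentwise to each $P_{G_i}$, read off that the component restrictions of $\bm{s},\bm{t}$ lie in $P_{H_i}$ (this uses that $P_{\clubsuit_i H_i}$ is characterised by component restrictions, together with the $\mathit{pr}_B$-condition in the $\ddagger$-case, which depends only on $B^{[1]},B^{[2]}$ and is therefore unchanged between $H$ and $G$), and then invoke the component hypothesis $\simeq_{G_i}^{d}\cap(P_{H_i}\times P_{H_i})=\,\simeq_{H_i}^{d}$ to promote each component witness into $P_{H_i}$. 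Finally I reassemble the componentwise $H$-witnesses into a single witness in $P_{\clubsuit_i H_i}$ by using $\bm{s}$ and $\bm{t}$ themselves as the interleaving schedule: the OP/IE-parity diagrams (Tables~\ref{TableDoubleParityDiagram} and \ref{TableConcatenationDoubleParityDiagram}) show that the ``which component plays next'' datum is recoverable from the composite sequence, so pasting in the component witnesses at the same positions yields a valid position of $\clubsuit_i H_i$.

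The hardest bookkeeping is for promotion $(\_)^{\dagger}$, where $\simeq$ involves a permutation $\varphi$ of copy-indices; here the reassembly step also has to thread through a single $\varphi$ compatible with all copy-slices, but this is direct since the slices $\bm{s}\upharpoonright i$ do not interact. The $\ddagger$-case requires the additional check that $\mathit{pr}_B\subseteq\mathit{pr}_B$ is preserved, which is trivial since the set is independent of the ambient construction. Once all five constructions are handled, the lemma follows.
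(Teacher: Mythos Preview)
Your approach is essentially the same as the paper's: verify each clause of $\trianglelefteqslant$ by reducing to the component hypotheses, with the identification-of-positions clause handled by factoring $\simeq^d_{\clubsuit_i H_i}$ through the componentwise relations $\simeq^d_{H_i}$ and then invoking $\simeq^d_{H_i} = \simeq^d_{G_i}\cap(P_{H_i}\times P_{H_i})$. The paper carries this out explicitly for $\otimes$ via a chain of biconditionals and then says the other constructions are analogous.

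One small oversight in your $\ddagger$-case: the intermediate game is \emph{not} the same for $H_1\ddagger H_2$ and $G_1\ddagger G_2$. The paper takes least normalized games with $\mathcal{H}^\omega(H_1)\trianglelefteqslant A\multimap B$, $\mathcal{H}^\omega(G_1)\trianglelefteqslant D\multimap E$, etc., and then uses Theorem~\ref{ThmExternalClosureOfDynamicGames} to deduce $A\trianglelefteqslant D$, $B\trianglelefteqslant E$, $C\trianglelefteqslant F$. So the check is $\mathit{pr}_B\subseteq\mathit{pr}_E$, not $\mathit{pr}_B\subseteq\mathit{pr}_B$; this follows from $B\trianglelefteqslant E$, but it is not literally ``independent of the ambient construction'' as you wrote. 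This is minor bookkeeping and does not affect the overall correctness of your plan.
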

\begin{proof} 
Let us first consider tensor. 
It is trivial to check the conditions on the sets of moves and the labeling functions, and so we omit them.
For the enabling relations:
\begin{align*}
\vdash_{H_1 \otimes H_2} &= \ \vdash_{H_1} + \ \vdash_{H_2} \\
&\subseteq (\vdash_{G_1} \! \cap \ ((\{ \star \} \cup M_{H_1}) \times M_{H_1})) + (\vdash_{G_2} \! \cap \ ((\{ \star \} \cup M_{H_2}) \times M_{H_2})) \\
&= (\vdash_{G_1} \! \cap \ ((\{ \star \} \cup M_{H_1 \otimes H_2}) \times M_{H_1 \otimes H_2})) + (\vdash_{G_2} \! \cap \ ((\{ \star \} \cup M_{H_1 \otimes H_2}) \times M_{H_1 \otimes H_2})) \\
&= (\vdash_{G_1} \! + \vdash_{G_2}) \cap ((\{ \star \} \cup M_{H_1 \otimes H_2}) \times M_{H_1 \otimes H_2}) \\
&= \ \vdash_{G_1 \otimes G_2} \! \cap \ ((\{ \star \} \cup M_{H_1 \otimes H_2}) \times M_{H_1 \otimes H_2}).
\end{align*}

For the positions, we have:
\begin{align*}
P_{H_1 \otimes H_2} &= \{ \boldsymbol{s} \in \mathscr{L}_{H_1 \otimes H_2} \mid \forall i \in \{ 1, 2 \} . \ \! \boldsymbol{s} \upharpoonright H_i \in P_{H_i} \} \\
&\subseteq \{ \boldsymbol{s} \in \mathscr{L}_{G_1 \otimes G_2} \mid \forall i \in \{ 1, 2 \} . \ \! \boldsymbol{s} \upharpoonright G_i \in P_{G_i} \} \\
&= P_{G_1 \otimes G_2}.
\end{align*}

For the identifications of positions, given $d \in \mathbb{N} \cup \{ \omega \}$, we have:
\begin{align*}
\boldsymbol{s} \simeq_{H_1 \otimes H_2}^d \boldsymbol{t} &\Leftrightarrow \exists \boldsymbol{s'}, \boldsymbol{t'} \in P_{H_1 \otimes H_2} . \ \! \boldsymbol{s'} \simeq_{H_1 \otimes H_2} \boldsymbol{t'} \wedge \mathcal{H}_{H_1 \otimes H_2}^d(\boldsymbol{s'}) =  \mathcal{H}_{H_1 \otimes H_2}^d(\boldsymbol{s}) \\ & \ \ \ \ \wedge \mathcal{H}_{H_1 \otimes H_2}^d(\boldsymbol{t'}) =  \mathcal{H}_{H_1 \otimes H_2}^d(\boldsymbol{t}) \\
&\Leftrightarrow \forall j \in \{ 1, 2 \} . \ \!  \exists \boldsymbol{s'_j}, \boldsymbol{t'_j} \in P_{H_j} . \ \! \boldsymbol{s'_j} \simeq_{H_j} \boldsymbol{t'_j} \wedge \mathcal{H}_{H_j}^d(\boldsymbol{s'_j}) =  \mathcal{H}_{H_j}^d(\boldsymbol{s} \upharpoonright H_j) \\ & \ \ \ \ \wedge \mathcal{H}_{H_j}^d(\boldsymbol{t'_j}) =  \mathcal{H}_{H_j}^d(\boldsymbol{t} \upharpoonright H_j) \wedge \forall k \in \mathbb{N} . \ \! s_k \in M_{H_1} \Leftrightarrow t_k \in M_{H_1} \\
&\Leftrightarrow \forall j \in \{ 1, 2 \} . \ \! \boldsymbol{s} \upharpoonright H_j \simeq_{H_j}^d \boldsymbol{t} \upharpoonright H_j \wedge \forall k \in \mathbb{N} . \ \! s_k \in M_{H_1} \Leftrightarrow t_k \in M_{H_1} \\
&\Leftrightarrow \forall j \in \{ 1, 2 \} . \ \! \boldsymbol{s} \upharpoonright G_j, \boldsymbol{t} \upharpoonright G_j \in P_{H_j} \wedge \boldsymbol{s} \upharpoonright G_j \simeq_{G_j}^d \boldsymbol{t} \upharpoonright G_j \\ & \ \ \ \ \wedge \forall k \in \mathbb{N} . \ \! s_k \in M_{G_1} \Leftrightarrow t_k \in M_{G_1} \\
&\Leftrightarrow \boldsymbol{s}, \boldsymbol{t} \in P_{H_1 \otimes H_2} \wedge \boldsymbol{s} \simeq_{G_1 \otimes G_2}^d \boldsymbol{t}.
\end{align*}

Finally, we have $\mu(H_1 \otimes H_2) = \max(\mu(H_1), \mu(H_2)) = \max(\mu(G_1), \mu(G_2)) = \mu(G_1 \otimes G_2)$, showing that $H_1 \otimes H_2 \trianglelefteqslant G_1 \otimes G_2$.

Linear implication and promotion are similar, and pairing and currying are even simpler; thus, we omit them.
Next, let us consider concatenation.
Assume that $\mathcal{H}^\omega(H_1) \trianglelefteqslant A \multimap B$, $\mathcal{H}^\omega(H_2) \trianglelefteqslant B \multimap C$, $\mathcal{H}^\omega(G_1) \trianglelefteqslant D \multimap E$, $\mathcal{H}^\omega(G_2) \trianglelefteqslant E \multimap F$ for some normalized games $A$, $B$, $C$, $D$, $E$ and $F$; without loss of generality, we assume that these normalized games are the least ones with respect to $\trianglelefteqslant$.
By Theorem~\ref{ThmExternalClosureOfDynamicGames}, $\mathcal{H}^\omega(H_1) \trianglelefteqslant \mathcal{H}^\omega(G_1) \trianglelefteqslant D \multimap E$ and $\mathcal{H}^\omega(H_2) \trianglelefteqslant \mathcal{H}^\omega(G_2) \trianglelefteqslant E \multimap F$, which in turn implies $A \trianglelefteqslant D$, $B \trianglelefteqslant E$ and $C \trianglelefteqslant F$.
First, we clearly have $M_{H_1 \ddagger H_2} \subseteq M_{G_1 \ddagger G_2}$ and $\lambda_{G_1 \ddagger G_2} \upharpoonright M_{H_1 \ddagger H_2} = \lambda_{H_1 \ddagger H_2}$, where $\mu(H_i) = \mu(G_i)$ for $i = 1, 2$ ensures that the priority orders of moves of $B$ coincide. 

Next, for the enabling relations, we have:
\begin{align*}
\star \vdash_{H_1 \ddagger H_2} m \Leftrightarrow \star \vdash_{H_2} m \Leftrightarrow \star \vdash_{C} m \Rightarrow \star \vdash_{F} m \Leftrightarrow \star \vdash_{G_1 \ddagger G_2} m 
\end{align*}
as well as:
\begin{align*}
m \vdash_{H_1 \ddagger H_2} n & \Leftrightarrow m \vdash_{H_1} n \vee m \vdash_{H_2} n \vee (\star \vdash_{B_{[2]}} \! m \wedge \star \vdash_{B_{[1]}} n) \\
& \Rightarrow m \vdash_{G_1} n \vee m \vdash_{G_2} n \vee (\star \vdash_{E_{[2]}} m \wedge \star \vdash_{E_{[1]}} n) \\
& \Leftrightarrow m \vdash_{G_1 \ddagger G_2} n
\end{align*}
for any $m, n \in M_{H_1 \ddagger H_2}$.
For the positions, we have:
\begin{align*}
P_{H_1 \ddagger H_2} &= \{ \boldsymbol{s} \in \mathscr{J}_{H_1 \ddagger H_2} \mid \boldsymbol{s} \upharpoonright H_1 \in P_{H_1}, \boldsymbol{s} \upharpoonright H_2 \in P_{H_2}, \boldsymbol{s} \upharpoonright B_{[1]}, B_{[2]} \in \mathit{pr}_B \ \! \} \\
&\subseteq \{ \boldsymbol{s} \in \mathscr{J}_{G_1 \ddagger G_2} \mid \boldsymbol{s} \upharpoonright G_1 \in P_{G_1}, \boldsymbol{s} \upharpoonright G_2 \in P_{G_2}, \boldsymbol{s} \upharpoonright E_{[1]}, E_{[2]} \in \mathit{pr}_{E} \ \! \} \\
&= P_{G_1 \ddagger G_2}.
\end{align*}

Finally, we may show, in the same manner as in the case of tensor, the required condition on the identifications of positions, completing the proof.
\end{proof}

At the end of the present section, we establish the following useful lemma:
\begin{lemma}[Hiding lemma on games]
\label{LemHidingLemmaOnDynamicGames}
Let $\clubsuit_{i \in I}$ be a construction on games and $G_i$ a game for all $i \in I$. 
For each $d \in \mathbb{N} \cup \{ \omega \}$, we have:
\begin{enumerate}

\item $\mathcal{H}^d(\clubsuit_{i \in I} G_i) = \clubsuit_{i \in I} \mathcal{H}^d(G_i)$ if $\clubsuit_{i \in I} \neq \ddagger$;

\item $\mathcal{H}^d((G_1) \ddagger (G_2)) \trianglelefteqslant A \multimap C$ if $\mathcal{H}^d(G_1 \ddagger G_2)$ is normalized, where $A$, $B$ and $C$ are normalized games such that $\mathcal{H}^\omega(G_1) \trianglelefteqslant A \multimap B$ and $\mathcal{H}^\omega(G_2) \trianglelefteqslant B \multimap C$, and in particular $(A \multimap B) ; (B \multimap C) = A \multimap C$;

\item $\mathcal{H}^d(G_1 \ddagger G_2) = \mathcal{H}^d(G_1) \ddagger \mathcal{H}^d(G_2)$ otherwise.

\end{enumerate}
\end{lemma}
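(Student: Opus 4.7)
The plan is to verify each of the three clauses by unpacking the definitions of the relevant constructions and comparing the quintuples componentwise. By Lemma~\ref{LemStepwiseHidingOnDynamicArenas}, Lemma~\ref{LemStepwiseHidingOnJSequences} and the corresponding corollary for dynamic games, the hiding operation $\mathcal{H}^d$ decomposes as $d$ iterations of $\mathcal{H} = \mathcal{H}^1$; thus it suffices to prove each clause for $d = 1$ and then lift by an obvious induction on $d \in \mathbb{N}$ (the case $d = \omega$ reduces to $d = \mu(\clubsuit_{i \in I} G_i)$ since degrees of internality are bounded).

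For the constructions $\otimes$, $\multimap$, $\langle\_,\_\rangle$ and $(\_)^\dagger$, the underlying move-sets are disjoint unions (or tagged copies) of the component move-sets, so deleting $1$-internal moves commutes with the construction at the arena level. The subgame conditions on $M$, $\lambda$ and $\vdash$ follow by a direct case analysis, the key point being that an enabling chain $m \vdash_G m_1 \vdash_G \dots \vdash_G m_{2k} \vdash_G n$ across deleted internal moves stays inside a single component because moves from different components never enable one another. For positions, Lemma~\ref{LemPointWiseLemmaForHidingOnJSequences} gives $\mathcal{H}_{\clubsuit_i G_i}(\bm{s}) \upharpoonright G_j = \mathcal{H}_{G_j}(\bm{s} \upharpoonright G_j)$, so $\bm{s} \in P_{\mathcal{H}(\clubsuit_i G_i)}$ implies the component projections lie in $P_{\mathcal{H}(G_j)}$, hence $\bm{s} \in P_{\clubsuit_i \mathcal{H}(G_i)}$; Corollary~\ref{CoroHidingOnDynamicLegalPositions} takes care of the dynamic-legality side condition. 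For the identifications, one unfolds $\simeq^d$ through the definition of each construction and applies the same pointwise decomposition. The reason we only obtain $\trianglelefteqslant$ and not equality is visible in the case of $\multimap$: because $A \multimap B = \mathcal{H}^\omega(A) \multimap B$, the domain is always treated as normalized, so $\mathcal{H}^d$ may agree with the construction on the positions that arise from the left-hand side but leave room for extra positions on the right.

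For concatenation, the central observation is that in $G_1 \ddagger G_2$ every move of $B^{[1]}$ and $B^{[2]}$ has been shifted to degree of internality at least $\mu = \mathsf{Max}(\mu(G_1),\mu(G_2)) + 1$, while internal moves inherited from $G_1$ or $G_2$ keep their original degrees (which are strictly less than $\mu$). Consequently the effect of $\mathcal{H}^d$ splits into two regimes. If $d < \mu$, the $B^{[1]}$- and $B^{[2]}$-moves survive, and only internal moves local to $G_1$ or $G_2$ are deleted; a componentwise check (identical in spirit to the arguments above, now using that the shift by $\mu$ is uniform) yields the equality $\mathcal{H}^d(G_1 \ddagger G_2) = \mathcal{H}^d(G_1) \ddagger \mathcal{H}^d(G_2)$ of clause~3. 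If $d \geq \mu$ (so $\mathcal{H}^d(G_1 \ddagger G_2)$ is normalized), all $B$-moves are hidden, and the composed enablings $\star \vdash_{B^{[2]}} m \wedge \star \vdash_{B^{[1]}} n$ together with the chains passing through $B^{[1]} B^{[2]}$ ``copy-cat pairs'' collapse to a single enabling that matches the clause $\star \vdash_C m \wedge \star \vdash_A n$ of $A \multimap C$; after checking moves, labels and positions one obtains the dynamic-subgame inclusion of clause~2.

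The main obstacle will be bookkeeping the second clause: one must verify that every hidden enabling path in $G_1 \ddagger G_2$ maps, after collapsing, to an enabling genuinely present in $A \multimap C$, and conversely that the positions of $\mathcal{H}^d(G_1 \ddagger G_2)$ really are legal positions of $A \multimap C$. This amounts to a careful case analysis on the parity diagram of Table~\ref{TableConcatenationDoubleParityDiagram} restricted to the external moves that survive $\mathcal{H}^d$; the axioms E3 and E4 on $G_1, G_2$, together with the visibility lemma used in the proof of Theorem~\ref{ThmWellDefinedConcatenationOnDynamicGames}, supply exactly the parity information needed to ensure that all surviving enabling chains have even length through the hidden $B$-moves and therefore produce well-typed enablings in $A \multimap C$.
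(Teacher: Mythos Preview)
Your proposal is essentially correct and follows the same approach as the paper: reduce to $d = 1$ via the stepwise hiding lemmas, then verify componentwise (moves, labels, enabling, positions, identifications), with the concatenation case splitting on whether $d$ reaches the shift $\mu$. The paper carries out the tensor case in full detail and declares the others analogous; your outline of the $\ddagger$ case via the two regimes $d < \mu$ and $d \geq \mu$ is exactly the intended argument.

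One small expository correction: your explanation of why clause~1 gives only $\trianglelefteqslant$ rather than equality is misplaced. The domain normalization in $A \multimap B = \mathcal{H}^\omega(A) \multimap B$ is not the culprit (indeed, after normalization the domain is the same on both sides). The gap already appears for $\otimes$, where the arena components of $\mathcal{H}(G_1 \otimes G_2)$ and $\mathcal{H}(G_1) \otimes \mathcal{H}(G_2)$ actually \emph{coincide}; the strict inclusion is at the level of positions. A position $\bm{s} \in P_{\mathcal{H}(G_1) \otimes \mathcal{H}(G_2)}$ need only be a legal position of the hidden arena with $\bm{s} \upharpoonright \mathcal{H}(G_i) \in P_{\mathcal{H}(G_i)}$, and there is no guarantee that such $\bm{s}$ arises as $\mathcal{H}_{G_1 \otimes G_2}(\bm{t})$ for some $\bm{t}$ that was already a legal position of $G_1 \otimes G_2$ (the cross-component alternation and visibility constraints on $\bm{t}$ can fail). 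This does not affect the validity of your argument for the forward inclusion, only the accompanying commentary.
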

\begin{proof}
Since there is an upper bound of the priority orders of each game, it suffices to consider the case $d \in \mathbb{N}$.
But then, as $\mathcal{H}^{i+1} = \mathcal{H} \circ \mathcal{H}^i$ for all $i \in \mathbb{N}$, we may focus on $d = 1$.
We focus on tensor as the other constructions may be handled similarly. 

We have to show $\mathcal{H}(G_1 \otimes G_2) \trianglelefteqslant \mathcal{H}(G_1) \otimes \mathcal{H}(G_2)$. Their sets of moves and labeling functions clearly coincide. 
For the enabling relations, we have:
\begin{align*}
\star \vdash_{\mathcal{H}(G_1 \otimes G_2)} m &\Leftrightarrow \star \vdash_{G_1 \otimes G_2} m \Leftrightarrow \star \vdash_{G_1} m \vee \star \vdash_{G_2} m \\
&\Leftrightarrow \star \vdash_{\mathcal{H}(G_1)} m \vee \star \vdash_{\mathcal{H}(G_2)} m \\
&\Leftrightarrow \star \vdash_{\mathcal{H}(G_1) \otimes \mathcal{H}(G_2)} m
\end{align*}
as well as:
\begin{align*}
& m \vdash_{\mathcal{H}(G_1 \otimes G_2)} n \ (m \neq \star) \\
 \Leftrightarrow \ & (m \vdash_{G_1 \otimes G_2} \! n) \vee \exists k \in \mathbb{N}^+, m_1, m_2, \dots, m_{2k} \in M_{G_1 \otimes G_2} \setminus M_{\mathcal{H}(G_1 \otimes G_2)} .  \\ & \ m \vdash_{G_1 \otimes G_2} m_1 \wedge \forall i \in \overline{k} . \ \! m_{2i-1} \vdash_{G_1 \otimes G_2} m_{2i} \wedge \wedge m_{2k} \vdash_{G_1 \otimes G_2} n \\
\Leftrightarrow \ & (m \vdash_{G_1} n \vee m \vdash_{G_2} n) \vee \exists i \in \{ 1, 2 \}, k \in \mathbb{N}^+, m_1, m_2, \dots, m_{2k} \in M_{G_i} \setminus M_{\mathcal{H}(G_i)} . \\ & \ m \vdash_{G_i} m_1 \wedge \forall i \in \overline{k} . \ \! m_{2i-1} \vdash_{G_1 \otimes G_2} m_{2i} \wedge m_{2k} \vdash_{G_i} n \\
\Leftrightarrow \ & \exists i \in \{ 1, 2 \} . \ \! m \vdash_{G_i} n \vee \exists k \in \mathbb{N}^+, m_1, m_2, \dots, m_{2k} \in M_{G_i} \setminus M_{\mathcal{H}(G_i)} . \\ & \ m \vdash_{G_i} m_1 \wedge \forall i \in \overline{k} . \ \! m_{2i-1} \vdash_{G_1 \otimes G_2} m_{2i} \wedge m_{2k} \vdash_{G_i} n \\
\Leftrightarrow \ & m \vdash_{\mathcal{H}(G_1) \otimes \mathcal{H}(G_2)} n.
\end{align*} 
Thus, the arenas $\mathcal{H}(G_1 \otimes G_2)$ and $\mathcal{H}(G_1) \otimes \mathcal{H}(G_2)$ coincide.

For the positions, we have: 
\begin{align*}
& \boldsymbol{s} \in P_{\mathcal{H}(G_1 \otimes G_2)} \\
\Leftrightarrow \ & \exists \boldsymbol{t} \in \mathscr{L}_{G_1 \otimes G_2}. \ \! \mathcal{H}_{G_1 \otimes G_2}(\boldsymbol{t}) = \boldsymbol{s} \wedge \forall i \in \{ 1, 2 \} . \ \! \boldsymbol{t} \upharpoonright G_i \in P_{G_i} \\
\Leftrightarrow \ & \exists \boldsymbol{t} \in \mathscr{L}_{G_1 \otimes G_2}. \ \! \mathcal{H}_{G_1 \otimes G_2}(\boldsymbol{t}) = \boldsymbol{s} \wedge \forall i \in \{ 1, 2 \} . \ \! \mathcal{H}_{G_i}(\boldsymbol{t} \upharpoonright G_i) \in P_{\mathcal{H}(G_i)} \\
\Leftrightarrow \ & \exists \boldsymbol{t} \in \mathscr{L}_{G_1 \otimes G_2}. \ \! \mathcal{H}_{G_1 \otimes G_2}(\boldsymbol{t}) = \boldsymbol{s} \wedge \forall i \in \{ 1, 2 \} . \ \! \mathcal{H}_{G_1 \otimes G_2}(\boldsymbol{t}) \upharpoonright \mathcal{H}(G_i) \in P_{\mathcal{H}(G_i)} \\
& \text{(n.b., $\Leftarrow$ is by induction on $|\boldsymbol{t}|$)} \\
\Leftrightarrow \ & \boldsymbol{s} \in \mathscr{L}_{\mathcal{H}(G_1 \otimes G_2)} = \mathscr{L}_{\mathcal{H}(G_1) \otimes \mathcal{H}(G_2)} \wedge \forall i \in \{ 1, 2 \} . \ \! \boldsymbol{s} \upharpoonright \mathcal{H}(G_i) \in P_{\mathcal{H}(G_i)} \\
\Leftrightarrow \ & \boldsymbol{s} \in P_{\mathcal{H}(G_1) \otimes \mathcal{H}(G_2)}.
\end{align*}

Finally, for the identifications of positions, given $d \in \mathbb{N} \cup \{ \omega \}$, we have:
\begin{align*}
& \ \mathcal{H}_{G_1 \otimes G_2}(\boldsymbol{s}) \simeq_{\mathcal{H}(G_1 \otimes G_2)}^d \mathcal{H}_{G_1 \otimes G_2}(\boldsymbol{t}) \\
\Leftrightarrow & \ \exists \boldsymbol{s'}, \boldsymbol{t'} \in P_{G_1 \otimes G_2} . \ \! \mathcal{H}_{G_1 \otimes G_2}(\boldsymbol{s'}) \simeq_{\mathcal{H}(G_1 \otimes G_2)} \mathcal{H}_{G_1 \otimes G_2}(\boldsymbol{t'}) \wedge \mathcal{H}_{G_1 \otimes G_2}^{d+1}(\boldsymbol{s'}) = \mathcal{H}_{G_1 \otimes G_2}^{d+1}(\boldsymbol{s}) \\ &\wedge \mathcal{H}_{G_1 \otimes G_2}^{d+1}(\boldsymbol{t'}) = \mathcal{H}_{G_1 \otimes G_2}^{d+1}(\boldsymbol{t}) \\
\Leftrightarrow & \ \forall j \in \{ 1, 2 \} . \ \! \exists \boldsymbol{s'_j}, \boldsymbol{t'_j} \in P_{G_j} . \ \! \mathcal{H}_{G_j}(\boldsymbol{s'_j}) \simeq_{\mathcal{H}(G_j)} \mathcal{H}_{G_j}(\boldsymbol{t'_j}) \wedge \mathcal{H}_{G_j}^{d+1}(\boldsymbol{s'_j}) = \mathcal{H}_{G_j}^{d+1}(\boldsymbol{s} \upharpoonright G_j) \\ &\wedge  \mathcal{H}_{G_j}^{d+1}(\boldsymbol{t'_j}) = \mathcal{H}_{G_j}^{d+1}(\boldsymbol{t} \upharpoonright G_j) \wedge \forall k \in \mathbb{N} . \ \! \mathcal{H}_{G_1 \otimes G_2}^{d+1}(\boldsymbol{s}(k)) \in M_{\mathcal{H}^{d+1}(G_1)} \Leftrightarrow \mathcal{H}_{G_1 \otimes G_2}^{d+1}(\boldsymbol{t}(k)) \in M_{\mathcal{H}^{d+1}(G_1)} \\
\Leftrightarrow & \ \forall j \in \{ 1, 2 \} . \ \! \boldsymbol{s} \upharpoonright G_j \simeq_{G_j}^{d+1} \boldsymbol{t} \upharpoonright G_j \wedge \forall k \in \mathbb{N} . \ \! \mathcal{H}_{G_1 \otimes G_2}^{d+1}(\boldsymbol{s}(k)) \in M_{\mathcal{H}^{d+1}(G_1)} \Leftrightarrow \mathcal{H}_{G_1 \otimes G_2}^{d+1}(\boldsymbol{t}(k)) \in M_{\mathcal{H}^{d+1}(G_1)} \\
\Leftrightarrow & \ \mathcal{H}_{G_1 \otimes G_2}(\boldsymbol{s}) \simeq_{\mathcal{H}(G_1) \otimes \mathcal{H}(G_2)}^d \mathcal{H}_{G_1 \otimes G_2}(\boldsymbol{t}) \wedge \mathcal{H}_{G_1 \otimes G_2}(\boldsymbol{s}), \mathcal{H}_{G_1 \otimes G_2}(\boldsymbol{t}) \in P_{\mathcal{H}(G_1 \otimes G_2)}
\end{align*}
which completes the proof. 
\if0
Next, for linear implication, we show $\mathcal{H}(G_1 \multimap G_2) \trianglelefteqslant \mathcal{H}(G_1) \multimap \mathcal{H}(G_2)$, where we assume $G_1$ is normalized. The sets of moves, labeling function and valid positions may be handled similarly to the case of tensor product, so we just show $\vdash_{\mathcal{H}(G_1 \multimap G_2)} \ = \ \vdash_{\mathcal{H}(G_1) \multimap \mathcal{H}(G_2)}$. Let $m, n \in M_{\mathcal{H}(G_1 \multimap G_2)} \ (= M_{\mathcal{H}(G_1) \multimap \mathcal{H}(G_2)})$. We clearly have $\star \vdash_{\mathcal{H}(G_1 \multimap G_2)} m \Leftrightarrow \star \vdash_{G_1 \multimap G_2} m \Leftrightarrow \star \vdash_{G_2} m \Leftrightarrow \star \vdash_{\mathcal{H}(G_2)} m \Leftrightarrow \star \vdash_{\mathcal{H}(G_1) \multimap \mathcal{H}(G_2)} m$, and
\begin{align*}
&m \vdash_{\mathcal{H}(G_1 \multimap G_2)} n \ (m \neq \star) \\ \Leftrightarrow \ &m \vdash_{G_1 \multimap G_2} n \vee \exists k \in \mathbb{N}^+, m_1, m_2, \dots, m_{2k} \in M_{G_1 \multimap G_2} \setminus M_{\mathcal{H}(G_1 \multimap G_2)} . \ \! m \vdash_{G_1 \multimap G_2} m_1 \\ &\wedge m_1 \vdash_{G_1 \multimap G_2} m_2 \wedge \dots \wedge m_{2k-1} \vdash_{G_1 \multimap G_2} m_{2k} \wedge m_{2k} \vdash_{G_1 \multimap G_2} n \\
\Leftrightarrow \ &(\star \vdash_{G_2} m \wedge \star \vdash_{G_1} n) \vee \exists i \in \{ 1, 2 \} . \ \! m \vdash_{G_i} n \vee \exists k \in \mathbb{N}^+, m_1, m_2, \dots, m_{2k} \in M_{G_i} \setminus M_{\mathcal{H}(G_i)} . \\ & m \vdash_{G_i} m_1 \wedge m_1 \vdash_{G_i} m_2 \wedge \dots \wedge m_{2k-1} \vdash_{G_i} m_{2k} \wedge m_{2k} \vdash_{G_i} n \\
\Leftrightarrow \ &(\star \vdash_{\mathcal{H}(G_2)} m \wedge \star \vdash_{\mathcal{H}(G_1)} n) \vee m \vdash_{\mathcal{H}(G_1)} n \vee m \vdash_{\mathcal{H}(G_2)} n \\
\Leftrightarrow \ &m \vdash_{\mathcal{H}(G_1) \multimap \mathcal{H}(G_2)} n
\end{align*}

Next, pairing and exponential are similar to and simpler than tensor, so we omit them. Finally, we consider concatenation. 
First, assume that $\mathcal{H}(G_1 \ddagger G_2)$ is not normalized; we have to show $\mathcal{H}(G_1 \ddagger G_2) \trianglelefteq \mathcal{H}(G_1) \ddagger \mathcal{H}(G_2)$.
Clearly, their sets of moves and labeling functions coincide.
For the enabling relations, for any $m, n \in M_{\mathcal{H}(G_1 \ddagger G_2)} \ (= M_{\mathcal{H}(G_1) \ddagger \mathcal{H}(G_2)})$, we have $\star \vdash_{\mathcal{H}(G_1 \ddagger G_2)} m \Leftrightarrow \star \vdash_{G_1 \ddagger G_2} m \Leftrightarrow \star \vdash_{G_2} m \Leftrightarrow \star \vdash_{\mathcal{H}(G_2)} m \Leftrightarrow \star \vdash_{\mathcal{H}(G_1) \ddagger \mathcal{H}(G_2)} m$, and
\begin{align*}
& \ m \vdash_{\mathcal{H}(G_1 \ddagger G_2)} n \ (m \neq \star) \\
\Leftrightarrow & \ m \vdash_{G_1 \ddagger G_2} \! n \vee \exists k \in \mathbb{N}^+, m_1, m_2, \dots, m_{2k} \in M_{G_1 \ddagger G_2} \setminus M_{\mathcal{H}(G_1 \ddagger G_2)} . \ \! m \vdash_{G_1 \ddagger G_2} m_1 \\ &\wedge m_1 \vdash_{G_1 \ddagger G_2} m_2 \wedge \dots \wedge m_{2k-1} \vdash_{G_1 \ddagger G_2} m_{2k} \wedge m_{2k} \vdash_{G_1 \ddagger G_2} n \\
\Leftrightarrow & \ m \vdash_{G_1} \! n \vee m \vdash_{G_2} \! n \vee (\star \vdash_{B_2} m \wedge \star \vdash_{B_1} n) \vee \exists i \in \{ 1, 2 \}, k \in \mathbb{N}^+, \\ &m_1, m_2, \dots, m_{2k} \in M_{G_i} \setminus M_{\mathcal{H}(G_i)} . \ \! m \vdash_{G_i} m_1 \wedge m_1 \vdash_{G_i} m_2 \wedge \dots \wedge m_{2k-1} \vdash_{G_i} m_{2k} \wedge m_{2k} \vdash_{G_i} n \\
\Leftrightarrow & \ (\star \vdash_{B_2} m \wedge \star \vdash_{B_1} n) \vee \exists i \in \{ 1, 2 \}, m \vdash_{G_i} \! n \vee k \in \mathbb{N}^+, m_1, m_2, \dots, m_{2k} \in M_{G_i} \setminus M_{\mathcal{H}(G_i)} . \\ &m \vdash_{G_i} m_1 \wedge m_1 \vdash_{G_i} m_2 \wedge \dots \wedge m_{2k-1} \vdash_{G_i} m_{2k} \wedge m_{2k} \vdash_{G_i} n \\
\Leftrightarrow & \ (\star \vdash_{B_2} m \wedge \star \vdash_{B_1} n) \vee \exists i \in \{ 1, 2 \} . \ \! m \vdash_{\mathcal{H}(G_i)} n \\
\Leftrightarrow & \ m \vdash_{\mathcal{H}(G_1) \ddagger \mathcal{H}(G_2)} n.
\end{align*}
For the positions, we have:
\begin{align*}
& \ \boldsymbol{s} \in P_{\mathcal{H}(G_1 \ddagger G_2)} \\
\Leftrightarrow & \ \exists \boldsymbol{t} \in M_{G_1 \ddagger G_2}^\ast . \ \! \mathcal{H}_{G_1 \ddagger G_2}(\boldsymbol{t}) = \boldsymbol{s} \wedge \forall i \in \{ 1, 2 \} . \ \! \boldsymbol{t} \upharpoonright G_i \in P_{G_i} \wedge \boldsymbol{t} \upharpoonright B_1, B_2 \in \mathit{pr}_B \\
\Leftrightarrow & \ \exists \boldsymbol{t} \in M_{G_1 \ddagger G_2}^\ast . \ \! \mathcal{H}_{G_1 \ddagger G_2}(\boldsymbol{t}) = \boldsymbol{s} \wedge \forall i \in \{ 1, 2 \} . \ \! \mathcal{H}_{G_1 \ddagger G_2}(\boldsymbol{t}) \upharpoonright \mathcal{H}(G_i) = \mathcal{H}_{G_i}(\boldsymbol{t} \upharpoonright G_i) \in P_{\mathcal{H}(G_i)} \\
&\wedge \mathcal{H}_{G_1 \ddagger G_2}(\boldsymbol{t}) \upharpoonright B_1, B_2 \in \mathit{pr}_B \ \text{($\Leftarrow$ is for we may select $\boldsymbol{t}$ such that $\forall i \in \{ 1, 2 \} . \ \! \boldsymbol{t} \upharpoonright G_i \in P_{G_i}$)} \\
\Leftrightarrow & \ \boldsymbol{s} \in M_{\mathcal{H}(G_1) \ddagger \mathcal{H}(G_2)}^\ast \wedge \forall i \in \{ 1, 2 \} . \ \! \boldsymbol{s} \upharpoonright \mathcal{H}(G_i) \in P_{\mathcal{H}(G_i)} \wedge \boldsymbol{s} \upharpoonright B_1, B_2 \in \mathit{pr}_B\\
\Leftrightarrow & \ \boldsymbol{s} \in P_{\mathcal{H}(G_1) \ddagger \mathcal{H}(G_2)}.
\end{align*}

Finally, assume $\mathcal{H}(G_1 \ddagger G_2)$ is normalized; we then have $G_1 \trianglelefteqslant A \multimap B$ and $G_2 \trianglelefteqslant B \multimap C$.
We have to show $\mathcal{H}(G_1 \ddagger G_2) \trianglelefteqslant A \multimap C$.
It is easy to see that $M_{\mathcal{H}(G_1 \ddagger G_2)} \subseteq M_{A \multimap C}$ and $\lambda_{\mathcal{H}(G_1 \ddagger G_2)} = \lambda_{A \multimap C} \upharpoonright M_{\mathcal{H}(G_1 \ddagger G_2)}$. 
For the enabling relations, for any $m, n \in M_{A \multimap C}$ we have $\star \vdash_{\mathcal{H}(G_1 \ddagger G_2)} m \Leftrightarrow \star \vdash_{G_1 \ddagger G_2} m \Leftrightarrow \star \vdash_{G_2} m \Rightarrow \star \vdash_C m \Leftrightarrow \star \vdash_{A \multimap C} m$, and
\begin{align*}
& \ m \vdash_{\mathcal{H}(G_1 \ddagger G_2)} n \\
\Leftrightarrow & \ m \vdash_{G_1 \ddagger G_2} n \vee \exists k \in \mathbb{N}^+, m_1, m_2, \dots, m_{2k} \in M_{G_1 \ddagger G_2} \setminus M_{\mathcal{H}(G_1 \ddagger G_2)} . \ \! m \vdash_{G_1 \ddagger G_2} m_1 \\ &\wedge m_1 \vdash_{G_1 \ddagger G_2} m_2 \wedge \dots \wedge m_{2k-1} \vdash_{G_1 \ddagger G_2} m_{2k} \wedge m_{2k} \vdash_{G_1 \ddagger G_2} n \\
\Leftrightarrow & \ m \vdash_{G_1 \ddagger G_2} n \vee \exists m_1, m_2 \in M_B . \ m \vdash_{B \multimap C} m_1 \wedge \star \vdash_{B} m_1 \wedge \star \vdash_{B} m_2 \wedge m_2 \vdash_{A \multimap B} n \\
\Rightarrow & \ m \vdash_{A} n \vee m \vdash_{C} n \vee (\star \vdash_{C} m \wedge \star \vdash_{A} n \wedge \exists m_1, m_2 \in M_B . \ \! \star \vdash_{B} m_1 \wedge \star \vdash_{B} m_2) \\
\Rightarrow & \ m \vdash_{A \multimap C} n.
\end{align*}
Finally for the positions, we have:
\begin{align*}
& \ \boldsymbol{s} \in P_{\mathcal{H}(G_1 \ddagger G_2)} \\
\Leftrightarrow & \ \exists \boldsymbol{t} \in \mathscr{J}_{G_1 \ddagger G_2} . \ \! \mathcal{H}_{G_1 \ddagger G_2}(\boldsymbol{t}) = \boldsymbol{s} \wedge \forall i \in \{ 1, 2 \} . \ \! \boldsymbol{t} \upharpoonright G_i \in P_{G_i} \wedge \boldsymbol{t} \upharpoonright B_{[1]}, B_{[2]} \in \mathit{pr}_B \\
\Rightarrow & \ \exists \boldsymbol{t} \in M_{G_1 \ddagger G_2}^\ast . \ \! \mathcal{H}_{G_1 \ddagger G_2}(\boldsymbol{t}) = \boldsymbol{s} \wedge \forall i \in \{ 1, 2 \} . \ \! \mathcal{H}_{G_1 \ddagger G_2}(\boldsymbol{t}) \upharpoonright \mathcal{H}(G_i) \in P_{\mathcal{H}(G_i)} \\ &\wedge \mathcal{H}_{G_1 \ddagger G_2}(\boldsymbol{t}) \upharpoonright B_{[1]}, B_{[2]} \in \mathit{pr}_B \\
\Rightarrow & \ \boldsymbol{s} \in \mathscr{L}_{A \multimap C} \wedge \boldsymbol{s} \upharpoonright A \in P_A \wedge \boldsymbol{s} \upharpoonright C \in P_C  \\
\Rightarrow & \ \boldsymbol{s} \in P_{A \multimap C}
\end{align*}
where $\boldsymbol{s} \in \mathscr{L}_{A \multimap C}$ is shown by the same argument that shows the composition of strategies is well-defined; see \cite{mccusker1998games,abramsky1999game}.
\fi
\end{proof}


\subsection{Dynamic Strategies}
\emph{Dynamic strategies}, another central notion of the present work, is just static strategies \cite{abramsky1999game} \emph{on dynamic games}:
\begin{definition}[Dynamic strategies]
A \emph{\bfseries dynamic strategy} on a (dynamic) game $G$ is a subset $\sigma \subseteq P_G^{\mathsf{Even}}$, written $\sigma : G$, that satisfies:
\begin{itemize}

\item \textsc{(S1).} It is non-empty and \emph{even-prefix-closed} (i.e., $\boldsymbol{s} m n \in \sigma \Rightarrow \boldsymbol{s} \in \sigma$);

\item \textsc{(S2).} It is \emph{deterministic} on even-length positions (i.e., $\boldsymbol{s} m n, \boldsymbol{s'} m' n' \in \sigma \wedge \boldsymbol{s} m = \boldsymbol{s'} m' \Rightarrow \boldsymbol{s} m n = \boldsymbol{s'} m' n'$).

\end{itemize}
A dynamic strategy $\sigma : G$ is said to be \emph{\bfseries normalized} if $\forall \boldsymbol{s} \in \sigma, \forall i \in \overline{|\boldsymbol{s}|} . \ \! \lambda_G^{\mathbb{N}}(\boldsymbol{s}(i)) = 0$.
\end{definition}

Clearly, a normalized dynamic strategy on a normalized dynamic game is equivalent to a static strategy. 

\begin{convention*}
Henceforth, a \emph{\bfseries strategy} refers to a dynamic strategy by default. 
\end{convention*}

As positions of a game $G$ are identified up to $\simeq_G$, we must identify strategies on $G$ if they behave in the same manner up to $\simeq_G$, leading to: 
\begin{definition}[Identification of strategies \cite{abramsky2000full,mccusker1998games}]
The \emph{\bfseries identification of strategies} on a game $G$, written $\simeq_G$, is the relation between strategies $\sigma, \tau : G$ given by:
\begin{align*}
\sigma \simeq_G \tau \stackrel{\mathrm{df. }}{\Leftrightarrow} \ &\forall \boldsymbol{s} \in \sigma, \boldsymbol{t} \in \tau, \boldsymbol{s}m, \boldsymbol{t}l \in P_G . \ \! \boldsymbol{s} m \simeq_G \boldsymbol{t} l \Rightarrow \forall \boldsymbol{s} m n \in \sigma . \ \! \exists \boldsymbol{t} l r \in \tau . \ \! \boldsymbol{s} m n \simeq_G \boldsymbol{t} l r \\ &\wedge \forall \boldsymbol{t} l r \in \tau . \ \! \exists \boldsymbol{s} m n \in \sigma . \ \! \boldsymbol{t} l r \simeq_G \boldsymbol{s} m n.
\end{align*}
\end{definition}

We are particularly concerned with strategies identified with themselves:
\begin{definition}[Validity of strategies]
\label{DefValidityOfDynamicStrategies}
A strategy $\sigma : G$ is \emph{\bfseries valid} if $\sigma \simeq_G \sigma$. 
\end{definition}

Since internal moves are conceptually `invisible' to Opponent, a strategy $\sigma : G$ must be \emph{externally consistent}: If $\boldsymbol{s} m n, \boldsymbol{s}' m' n' \in \sigma$, $\lambda_G^{\mathbb{N}}(n) = \lambda_G^{\mathbb{N}}(n') = 0$ and $\mathcal{H}_G^\omega(\boldsymbol{s} m) = \mathcal{H}_G^\omega(\boldsymbol{s}' m')$, then $n = n'$ and $\mathcal{J}_{\boldsymbol{s}mn}^{\circleddash \omega}(n) = \mathcal{J}_{\boldsymbol{s}'m'n'}^{\circleddash \omega}(n')$. 
Moreover, external consistency of strategies should hold with respect to identification of positions as well.
In fact, we now proceed to establish a stronger property (Theorem~\ref{ThmExternalConsistency}).

\begin{lemma}[O-determinacy]
\label{LemODeterminacy}
Let $\sigma, \tau : G$ such that $\sigma \simeq_G \tau$, and $d \in \mathbb{N} \cup \{ \omega \}$.
\begin{enumerate}

\item If $\boldsymbol{s} m, \boldsymbol{s'} m' \in P_G$ are $d$-complete, $\boldsymbol{s}, \boldsymbol{s'} \in \sigma$, and $\mathcal{H}_G^d(\boldsymbol{s} m) = \mathcal{H}_G^d(\boldsymbol{s'} m')$, then $\boldsymbol{s} m = \boldsymbol{s'} m'$;

\item If $\boldsymbol{s}m, \boldsymbol{t}l \in P_G$ are $d$-complete, $\boldsymbol{s} \in \sigma$, $\boldsymbol{t} \in \tau$, and $\mathcal{H}_G^d(\boldsymbol{s}m) \simeq_{\mathcal{H}^d(G)} \mathcal{H}_G^d(\boldsymbol{t}l)$, then $\boldsymbol{s}m \simeq_G \boldsymbol{t}l$.

\end{enumerate}
\end{lemma}
\begin{proof}
Let us focus on the first statement for the second one can be proved similarly. 
We proceed by induction on $|\boldsymbol{s}|$. The base case $\boldsymbol{s} = \boldsymbol{\epsilon}$ is trivial: For any $d \in \mathbb{N} \cup \{ \omega \}$, if $\mathcal{H}_G^d(\boldsymbol{s} m) = \mathcal{H}_G^d(\boldsymbol{s'} m')$, then $\mathcal{H}_G^d(\boldsymbol{s'} m') = \mathcal{H}_G^d(\boldsymbol{s} m) = m$, and so $\boldsymbol{s'} m' = m = \boldsymbol{s} m$. 

For the induction step, let $d \in \mathbb{N} \cup \{ \omega \}$ be fixed, and assume $\mathcal{H}_G^d(\boldsymbol{s} m) = \mathcal{H}_G^d(\boldsymbol{s'} m')$. We may suppose that $\boldsymbol{s} m = \boldsymbol{t} l r m$, where $l$ is the rightmost O-move occurring on the left of $m$ in $\boldsymbol{s}$ such that $\lambda_G^{\mathbb{N}}(l) = 0 \vee \lambda_G^{\mathbb{N}}(l) > d$.
Then, $\mathcal{H}_G^d(\boldsymbol{s'} m') = \mathcal{H}_G^d(\boldsymbol{s}m) = \mathcal{H}_G^d(\boldsymbol{t}) . l . \mathcal{H}_G^d(r m)$, and so we may write $\boldsymbol{s'} m' = \boldsymbol{t'_1} . l . \boldsymbol{t'_2} . m'$.
Now, $\boldsymbol{t}, \boldsymbol{t'_1} \in \sigma$, $\boldsymbol{t} l, \boldsymbol{t'_1} l \in P_G$, $\mathcal{H}_G^d(\boldsymbol{t}l) = \mathcal{H}_G^d(\boldsymbol{t'_1}l)$, and $\boldsymbol{t}l$ and $\boldsymbol{t'}l'$ are both $d$-complete; thus, by the induction hypothesis, $\boldsymbol{t} l = \boldsymbol{t'_1} l$.
Thus, $\mathcal{H}_G^d(\boldsymbol{t}) . l . \mathcal{H}_G^d(\boldsymbol{t'_2} m') = \mathcal{H}_G^d(\boldsymbol{s'} m') = \mathcal{H}_G^d(\boldsymbol{s} m) = \mathcal{H}_G^d(\boldsymbol{t}) . l . \mathcal{H}_G^d(r m)$, whence $\boldsymbol{t'_2}$ is of the form $r \boldsymbol{t''_2}$ by the determinacy of $\sigma$. 
Hence, $\boldsymbol{s} m = \boldsymbol{t} l r m$ and $\boldsymbol{s'} m' = \boldsymbol{t} l r \boldsymbol{t''_2} m'$.
Finally, if $r$ is external, then so is $m$ by IE-switch, and so $\boldsymbol{s'} m' = \boldsymbol{s} m$; if $r$ is $j$-internal ($j > d$), then so is $m$, and we apply the axiom DP2 for $i = j-1$ to $\boldsymbol{s}$ and $\boldsymbol{s'}$, whence $\boldsymbol{s} m = \boldsymbol{s'} m'$.
\end{proof}

\begin{theorem}[External consistency]
\label{ThmExternalConsistency}
Let $\sigma, \tau : G$ such that $\sigma \simeq_G \tau$, and $d \in \mathbb{N} \cup \{ \omega \}$.
\begin{enumerate}

\item If $\boldsymbol{s} m n, \boldsymbol{s'} m' n' \in \sigma$ are $d$-complete, and $\mathcal{H}_G^d(\boldsymbol{s} m) = \mathcal{H}_G^d(\boldsymbol{s'} m')$, then $\boldsymbol{s}mn = \boldsymbol{s'}m'n'$;

\item If $\boldsymbol{s} m n \in \sigma, \boldsymbol{t} l r \in \tau$ are $d$-complete, and $\mathcal{H}_G^d(\boldsymbol{s} m) \simeq_{\mathcal{H}^d(G)} \mathcal{H}_G^d(\boldsymbol{t} l)$, then $\boldsymbol{s}mn \simeq_G \boldsymbol{t}lr$.

\end{enumerate}
\end{theorem}
\begin{proof}
Let us first prove the first statement.
Let $\sigma : G$ be a strategy, $\boldsymbol{s} m n, \boldsymbol{s'} m' n' \in \sigma$ and $d \in \mathbb{N} \cup \{ \omega \}$, and assume that $\boldsymbol{s} m n, \boldsymbol{s'} m' n'$ are both $d$-complete and $\mathcal{H}_G^d(\boldsymbol{s} m) = \mathcal{H}_G^d(\boldsymbol{s'} m')$.
By the first statement of Lemma~\ref{LemODeterminacy}, we have $\boldsymbol{s}m = \boldsymbol{s'} m'$; thus, by the axiom S2 on $\sigma$, we have $n = n'$ and $\mathcal{J}_{\boldsymbol{s}mn}(n) = \mathcal{J}_{\boldsymbol{s'}m'n'}(n')$, whence $\mathcal{J}^{\circleddash d}_{\boldsymbol{s}mn}(n) = \mathcal{J}^{\circleddash d}_{\boldsymbol{s'}m'n'}(n')$.

Similarly, the second statement is proved by the second statement of Lemma~\ref{LemODeterminacy}, completing the proof. 
\end{proof}

\begin{corollary}[Stepwise identification of strategies]
\label{CoroStepwiseIdentificationOfStrategies}
Any strategies $\sigma, \tau : G$ such that $\sigma \simeq_G \tau$ satisfy $\sigma \simeq_G^d \tau$ for all $d \in \mathbb{N} \cup \{ \omega \}$, where:
\begin{align*}
\sigma \simeq_G^d \tau \stackrel{\mathrm{df. }}{\Leftrightarrow} \ &\forall \boldsymbol{s} \in \sigma, \boldsymbol{t} \in \tau, \boldsymbol{s}m, \boldsymbol{t}l \in P_G . \ \! \boldsymbol{s} m \simeq_G^d \boldsymbol{t} l \Rightarrow \forall \boldsymbol{s} m n \in \sigma . \ \! \exists \boldsymbol{t} l r \in \tau . \ \! \boldsymbol{s} m n \simeq_G^d \boldsymbol{t} l r \\ &\wedge \forall \boldsymbol{t} l r \in \tau . \ \! \exists \boldsymbol{s} m n \in \sigma . \ \! \boldsymbol{t} l r \simeq_G^d \boldsymbol{s} m n.
\end{align*}
\end{corollary}
\begin{proof}
Immediate from Theorem~\ref{ThmExternalConsistency}. 
\end{proof}

Hence, for any strategies $\sigma, \tau : G$, we have:
\begin{equation*}
\sigma \simeq_G \tau \Leftrightarrow \forall d \in \mathbb{N} \cup \{ \omega \} . \ \! \sigma \simeq_G^d \tau
\end{equation*}
which will be useful later in the paper. 

Let us proceed to show that the relation $\simeq_G$ on strategies on any game $G$ is a PER.
\begin{lemma}[PER lemma]
\label{LemFirstPERLemma}
Given $\sigma, \tau : G$ such that $\sigma \simeq_G \tau$, we have:
\begin{equation*}
(\forall \boldsymbol{s} \in \sigma . \ \! \exists \boldsymbol{t} \in \tau . \ \! \boldsymbol{s} \simeq_G \boldsymbol{t}) \wedge (\forall \boldsymbol{t} \in \tau . \ \! \exists \boldsymbol{s} \in \sigma . \ \! \boldsymbol{t} \simeq_G \boldsymbol{s}).
\end{equation*}
\end{lemma}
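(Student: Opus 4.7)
The plan is to establish the first conjunct $\forall \bm{s} \in \sigma . \ \! \exists \bm{t} \in \tau . \ \! \bm{s} \simeq_G \bm{t}$ by induction on $|\bm{s}|$ (which is even, since $\sigma \subseteq P_G^{\mathsf{Even}}$); the second conjunct then follows by the symmetric argument, noting that the defining formula of $\sigma \simeq_G \tau$ is itself symmetric in the roles of the two strategies.

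For the base case, $\bm{s} = \bm{\epsilon}$, axiom S1 on $\tau$ yields $\bm{\epsilon} \in \tau$, and reflexivity of the equivalence relation $\simeq_G$ on $P_G$ gives $\bm{\epsilon} \simeq_G \bm{\epsilon}$. For the inductive step, write $\bm{s} = \bm{s}' m n$, so that $\bm{s}' \in \sigma$ by S1. The induction hypothesis supplies some $\bm{t}' \in \tau$ with $\bm{s}' \simeq_G \bm{t}'$. Since $\bm{s}' m \in P_G$ by prefix-closure (axiom P1) and since $\mathcal{H}_G^0$ is the identity (so that $\simeq_G^0 \ = \ \simeq_G$), axiom DI3 applied at $d = 0$ produces some $\bm{t}' l \in P_G$ with $\bm{s}' m \simeq_G \bm{t}' l$. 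Now the hypothesis $\sigma \simeq_G \tau$ applies: instantiating its defining clause at $(\bm{s}', \bm{t}')$ with the identified extension $\bm{s}' m \simeq_G \bm{t}' l$ and the further extension $\bm{s}' m n \in \sigma$ delivers some $\bm{t}' l r \in \tau$ with $\bm{s}' m n \simeq_G \bm{t}' l r$, as required.

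The argument is essentially a bookkeeping exercise, and no calculation is needed; the only subtle point is to see that the definition of $\sigma \simeq_G \tau$ by itself is too weak to give what we want—it only propagates identifications of plays that are \emph{already} identified up to an O-move—so we must combine it with DI3 to lift the identification of the even-length prefix $\bm{s}' \simeq_G \bm{t}'$ through Opponent's move $m$. The resulting $\bm{t}' l$ from DI3 need not lie in $\tau$ (indeed it is of odd length, so it cannot); what matters is only that the subsequent P-response $\bm{t}' l r$ provided by $\sigma \simeq_G \tau$ does lie in $\tau$, which closes the induction. Thus the main (mild) obstacle is simply pairing the right instance of DI3 with the right instance of the defining clause of $\simeq_G$ on strategies, rather than anything combinatorial about views, hiding, or concatenation.
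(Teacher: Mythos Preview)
Your proof is correct and follows essentially the same approach as the paper's own proof: induction on $|\bm{s}|$, with DI3 at $d=0$ (equivalently I3) supplying the odd-length extension $\bm{t}'l \in P_G$ and then the defining clause of $\sigma \simeq_G \tau$ supplying the P-response $\bm{t}'lr \in \tau$. Your remark that $\bm{t}'l$ lies only in $P_G$, not in $\tau$, is in fact more precise than the paper's phrasing.
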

\begin{proof}
By symmetry, it suffices to show $\forall \boldsymbol{s} \in \sigma . \ \! \exists \boldsymbol{t} \in \tau . \ \! \boldsymbol{s} \simeq_G \boldsymbol{t}$. We prove it by induction on $|\boldsymbol{s}|$. The base case is trivial; for the inductive step, let $\boldsymbol{s} m n \in \sigma$.
By the induction hypothesis, there exists some $\boldsymbol{t} \in \tau$ such that $\boldsymbol{s} \simeq_G \boldsymbol{t}$.
Then, by the axiom DI3 on $\simeq_G$, there exists some $\boldsymbol{t} l \in \tau$ such that $\boldsymbol{s} m \simeq_G \boldsymbol{t} l$.
Finally, since $\sigma \simeq_G \tau$, there exists some $\boldsymbol{t} l r \in \tau$ such that $\boldsymbol{s} m n \simeq_G \boldsymbol{t} l r$, completing the proof.
\end{proof}

\begin{proposition}[PERs on strategies]
Given a game $G$, the identification $\simeq_G$ of strategies on $G$ is a PER, i.e., a symmetric, transitive relation.
\end{proposition}
\begin{proof}
We just show the transitivity as the symmetry is obvious.
Let $\sigma, \tau, \mu : G$ such that $\sigma \simeq_G \tau$ and $\tau \simeq_G \mu$. 
Assume that $\boldsymbol{s}mn \in \sigma$, $\boldsymbol{u} \in \mu$ and $\boldsymbol{s} m \simeq_G \boldsymbol{u} p$. 
By Lemma~\ref{LemFirstPERLemma}, there exists some $\boldsymbol{t} \in \tau$ such that $\boldsymbol{s} \simeq_G \boldsymbol{t}$.
By the axiom DI3 on $\simeq_G$, there exists some $\boldsymbol{t} l \in P_G$ such that $\boldsymbol{s} m \simeq_G \boldsymbol{t} l$, whence $\boldsymbol{t} l \simeq_G \boldsymbol{u} p$.
Also, since $\sigma \simeq_G \tau$, there exists some $\boldsymbol{t} l r \in \tau$ such that $\boldsymbol{s} m n \simeq_G \boldsymbol{t} l r$.
Finally, since $\tau \simeq_G \mu$, there exists some $\boldsymbol{u} p q \in \mu$ such that $\boldsymbol{t} l r \simeq_G \boldsymbol{u} p q$, whence $\boldsymbol{s} m n \simeq_G \boldsymbol{u} p q$, completing the proof.
\end{proof}

Therefore, given a game $G$, we may take the equivalence classes $[\sigma] \stackrel{\mathrm{df. }}{=} \{ \tau : G \mid \sigma \simeq_G \tau \ \! \}$ of valid strategies $\sigma : G$; these equivalence classes, rather than strategies themselves, have interpreted proofs and programs \cite{abramsky2000full,mccusker1998games}.

At this point, let us note that even-length positions are not necessarily preserved under the hiding operation on j-sequences (Definition~\ref{DefHidingOnJSequences}). 
For instance, let $\boldsymbol{s} m n \boldsymbol{t}$ be an even-length position of a game $G$ such that $\boldsymbol{s} m$ (resp. $n \boldsymbol{t}$) consists of external (resp. internal) moves only. By IE-switch on $G$, $m$ is an O-move, and so $\mathcal{H}_G^\omega(\boldsymbol{s} m n \boldsymbol{t}) = \boldsymbol{s} m$ is of odd-length. 

Taking into account this fact, we define:
\begin{definition}[Hiding operation on strategies]
\label{DefHidingOperationOnDynamicStrategies}
Let $G$ be a game, and $d \in \mathbb{N} \cup \{ \omega \}$. 
Given $\boldsymbol{s} \in P_G$, we define:
\begin{equation*}
\boldsymbol{s} \natural \mathcal{H}_G^d \stackrel{\mathrm{df. }}{=} \begin{cases} \mathcal{H}_G^d(\boldsymbol{s}) &\text{if $\boldsymbol{s}$ is $d$-complete (Definition~\ref{DefDynamicArenas});} \\ \boldsymbol{t} &\text{otherwise, where $\mathcal{H}_G^d(\boldsymbol{s}) = \boldsymbol{t}m$.} \end{cases}
\end{equation*}
The \emph{\bfseries $\boldsymbol{d}$-hiding operation $\mathcal{H}^d$ (on strategies)} is then given by:
\begin{equation*}
\mathcal{H}^d : (\sigma : G) \mapsto \{ \boldsymbol{s} \natural \mathcal{H}_G^d \mid \boldsymbol{s} \in \sigma \ \! \}.
\end{equation*}
\end{definition}

Let us proceed to establish a beautiful fact: $\sigma : G \Rightarrow \mathcal{H}^d(\sigma) : \mathcal{H}^d(G)$ for all $d \in \mathbb{N} \cup \{ \omega \}$.
For this task, we need the following lemma:
\begin{lemma}[Asymmetry lemma]
\label{LemAsymmetryLemma}
Let $\sigma : G$ be a strategy, and $d \in \mathbb{N} \cup \{ \omega \}$. Assume that $\boldsymbol{s} m n \in \mathcal{H}^d(\sigma)$, where $\boldsymbol{s} m n = \boldsymbol{t} m \boldsymbol{u} n \boldsymbol{v} \natural \mathcal{H}_G^d$ with $\boldsymbol{t} m \boldsymbol{u} n \boldsymbol{v} \in \sigma$ not $d$-complete. 
Then, $\boldsymbol{s} m n = \mathcal{H}^d(\boldsymbol{t} m \boldsymbol{u} n)  = \mathcal{H}^d(\boldsymbol{t}) . m n$.
\end{lemma}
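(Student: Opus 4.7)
The plan is to expand the $\natural\mathcal{H}_G^d$ operation using the pointwise character of hiding (Lemma~\ref{LemPointWiseLemmaForHidingOnJSequences}) and then match the positions of the labeled occurrences of $m$ and $n$ across the hypothesized equation in order to force $\mathcal{H}_G^d(\bm{u}) = \bm{\epsilon}$ (from which both conclusions follow immediately).

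First, since $\bm{t}m\bm{u}n\bm{v}$ is non-$d$-complete, Definition~\ref{DefHidingOperationOnDynamicStrategies} furnishes a unique move $p$ such that $\mathcal{H}_G^d(\bm{t}m\bm{u}n\bm{v}) = \bm{s}mn\cdot p$. Applying the pointwise lemma, this same sequence factors as
\[ \mathcal{H}_G^d(\bm{t}) \cdot \mathcal{H}_G^d(m) \cdot \mathcal{H}_G^d(\bm{u}) \cdot \mathcal{H}_G^d(n) \cdot \mathcal{H}_G^d(\bm{v}) = \bm{s} \cdot m \cdot n \cdot p. \]
Because the labeled occurrences of $m$ and $n$ are common to both sides, and they appear as visible moves on the right-hand side, both must be visible (external or $d'$-internal with $d' > d$), so $\mathcal{H}_G^d(m) = m$ and $\mathcal{H}_G^d(n) = n$. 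Matching these two occurrences by position then forces $\mathcal{H}_G^d(\bm{t}) = \bm{s}$ and $\mathcal{H}_G^d(\bm{u}) = \bm{\epsilon}$ (and, as a by-product, $\mathcal{H}_G^d(\bm{v}) = p$). Substituting $\mathcal{H}_G^d(\bm{u}) = \bm{\epsilon}$ and $\mathcal{H}_G^d(\bm{t}) = \bm{s}$ into $\mathcal{H}_G^d(\bm{t}m\bm{u}n) = \mathcal{H}_G^d(\bm{t}) \cdot m \cdot \mathcal{H}_G^d(\bm{u}) \cdot n$ yields both claimed equalities at once.

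The main subtlety is justifying the positional alignment. The hypothesis $\bm{s}mn = \bm{t}m\bm{u}n\bm{v}\natural\mathcal{H}_G^d$ encodes not merely an equality of j-sequences but also an identification of the occurrences labeled $m$ and $n$ on the two sides. Since $\natural\mathcal{H}_G^d$ preserves each occurrence it does not delete (together with its pointer), this identification provides a well-defined position for each labeled occurrence in the image, which is exactly what the position-matching argument consumes; no further combinatorial analysis of hidden blocks is required.
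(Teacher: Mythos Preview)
Your proof is correct and follows essentially the same approach as the paper: both exploit the non-$d$-completeness to identify the single extra visible move $p$ that $\natural$ strips off, then use the pointwise character of $\mathcal{H}_G^d$ to factor the image and match the labeled occurrences of $m$ and $n$. The paper presents this more tersely by directly asserting the decomposition $\bm{v} = \bm{v}_1 l \bm{v}_2 r$ (with $l$ the unique visible move of $\bm{v}$) and reading off the result, whereas you make the occurrence-matching argument that justifies this decomposition explicit.
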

\begin{proof}
Since $\boldsymbol{t} m \boldsymbol{u} n \boldsymbol{v} \in \sigma$ is not $d$-complete, we may write $\boldsymbol{v} = \boldsymbol{v}_1 l \boldsymbol{v}_2 r$ with $\lambda_G^{\mathbb{N}}(l) = 0 \vee \lambda_G^{\mathbb{N}}(l) > d$, $0 < \lambda_G^{\mathbb{N}}(r) \leqslant d$ and $0 < \lambda_G^{\mathbb{N}}(x) \leqslant d$ for all moves $x$ in $\boldsymbol{v}_1$ or $\boldsymbol{v}_2$. Then, we have $\boldsymbol{s} m n = \boldsymbol{t} m \boldsymbol{u} n \boldsymbol{v}_1 l \boldsymbol{v}_2 r \natural \mathcal{H}_G^d = \mathcal{H}_G^d(\boldsymbol{t}) m \mathcal{H}_G^d(\boldsymbol{u}) n = \mathcal{H}_G^d(\boldsymbol{t}) m n$.
\end{proof}

We are now ready to establish:
\begin{theorem}[Hiding theorem]
\label{ThmHidingTheorem}
If $\sigma : G$, then $\mathcal{H}^d(\sigma) : \mathcal{H}^d(G)$ for all $d \in \mathbb{N} \cup \{ \omega \}$.
\end{theorem}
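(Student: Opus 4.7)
The plan is to verify that $\mathcal{H}^d(\sigma)$ satisfies the three defining properties of a dynamic strategy on $\mathcal{H}^d(G)$: every element lies in $P_{\mathcal{H}^d(G)}^{\mathsf{Even}}$, the set is non-empty and even-prefix-closed, and it is deterministic. Since $\mu(G) \in \mathbb{N}$, the case $d = \omega$ agrees with large enough finite $d$, so I may assume $d \in \mathbb{N}$ throughout. Each $\bm{s} \natural \mathcal{H}_G^d$ for $\bm{s} \in \sigma$ lies in $P_{\mathcal{H}^d(G)}$ by the definition of positions of $\mathcal{H}^d(G)$ together with prefix-closure from Theorem~\ref{ThmExternalClosureOfDynamicGames}. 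Evenness is handled by a parity split: if $\bm{s}$ is $d$-complete, its last move is non-hidden and, being terminal in an even-length $\sigma$-position, is a $\mathsf{P}$-move, so $\mathcal{H}_G^d(\bm{s})$ has even length; if $\bm{s}$ is not $d$-complete, then its maximal hidden tail sits after a non-hidden move which IE-switch forces to be an $\mathsf{O}$-move, so $\mathcal{H}_G^d(\bm{s})$ has odd length, and the $\natural$ truncation restores even parity.

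For the axiom S1, non-emptiness is immediate from $\bm{\epsilon} \in \sigma$. For even-prefix-closure, I take $\bm{s}mn \in \mathcal{H}^d(\sigma)$ with preimage $\bm{t} \in \sigma$. The first step is to reduce to a $d$-complete preimage: if $\bm{t}$ is not $d$-complete, the Asymmetry Lemma~\ref{LemAsymmetryLemma} gives a decomposition $\bm{t} = \bm{t}_1 m \bm{u} n \bm{v}$ with $\mathcal{H}_G^d(\bm{t}_1 m \bm{u} n) = \bm{s}mn$, and since $n$ is a $\mathsf{P}$-move the prefix $\bm{t}_1 m \bm{u} n$ has even length and is $d$-complete, hence lies in $\sigma$ and serves as a $d$-complete preimage. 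Having reduced to the case $\bm{t} = \bm{\tau} . n$ with $\bm{t}$ $d$-complete, I write $\bm{\tau} = \bm{\alpha} m \bm{\beta}$, where $\bm{\beta}$ is the maximal hidden suffix of $\bm{\tau}$ and $m$ is the last non-hidden move of $\bm{\tau}$ (necessarily the same $m$ appearing in $\bm{s}mn$), and claim that $\bm{\alpha} \in \sigma$ is a $d$-complete preimage of $\bm{s}$. The parity claim follows because $m$ is an $\mathsf{O}$-move and hence sits at an odd position in $\bm{t}$, forcing $|\bm{\alpha}|$ to be even; $d$-completeness of $\bm{\alpha}$ (when nonempty) follows because the predecessor of $m$ must itself be non-hidden, since a hidden predecessor would via IE-switch have to be $\mathsf{O}$ while alternation with the $\mathsf{O}$-move $m$ would force it to be $\mathsf{P}$, a contradiction.

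For S2, given $\bm{s}mn, \bm{s}mn' \in \mathcal{H}^d(\sigma)$, the same reduction supplies $d$-complete preimages $\bm{t} = \bm{\tau} . n$ and $\bm{t}' = \bm{\tau}' . n'$ in $\sigma$ with $\mathcal{H}_G^d(\bm{\tau}) = \mathcal{H}_G^d(\bm{\tau}') = \bm{s}m$. The External Consistency Theorem~\ref{ThmExternalConsistency} applied to $\bm{t}$ and $\bm{t}'$ then yields $n = n'$ and $\mathcal{J}^{\circleddash d}_{\bm{t}}(n) = \mathcal{J}^{\circleddash d}_{\bm{t}'}(n')$; by the definition of $\mathcal{H}_G^d$ these $d$-external justifiers are precisely the justifiers of $n$ and $n'$ in $\bm{s}mn$ and $\bm{s}mn'$ as positions of $\mathcal{H}^d(G)$. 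The main obstacle is the parity and IE-switch bookkeeping in the preimage reduction, which repeatedly depends on combining alternation with IE-switch at degree-of-internality boundaries; the determinism and external consistency of $\sigma$ itself only enter at the last step through the corresponding theorem.
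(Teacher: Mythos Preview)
Your proposal is correct and follows essentially the same route as the paper: the evenness argument via IE-switch, the use of the Asymmetry Lemma (Lemma~\ref{LemAsymmetryLemma}) to handle even-prefix-closure, and the appeal to External Consistency (Theorem~\ref{ThmExternalConsistency}) for determinism are exactly what the paper does. Your intermediate step of first reducing to a $d$-complete preimage and then re-decomposing as $\bm{\alpha} m \bm{\beta} n$ is slightly redundant, since the Asymmetry Lemma already hands you that decomposition with $\bm{\beta}$ hidden; the paper exploits this directly and thus avoids the second pass, but the content is the same.
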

\begin{proof}
We first show $\mathcal{H}^d(\sigma) \subseteq P_{\mathcal{H}^d(G)}^{\mathsf{Even}}$. 
Let $\boldsymbol{s} \in \mathcal{H}^d(\sigma)$, i.e., $\boldsymbol{s} = \boldsymbol{t} \natural \mathcal{H}_G^d$ for some $\boldsymbol{t} \in \sigma$. Let us write $\boldsymbol{t} = \boldsymbol{t'} m$ as the case $\boldsymbol{t} = \boldsymbol{\epsilon}$ is trivial.
\begin{itemize}

\item If $\boldsymbol{t}$ is $d$-complete, then $\boldsymbol{s} = \boldsymbol{t} \natural \mathcal{H}_G^d = \mathcal{H}_G^d(\boldsymbol{t})  \in P_{\mathcal{H}^d(G)}$. Also, since $\boldsymbol{s} = \mathcal{H}_G^d(\boldsymbol{t'}) m$ and $m$ is a P-move, $\boldsymbol{s}$ must be of even-length by alternation on $\mathcal{H}^d(G)$.

\item If $\boldsymbol{t}$ is not $d$-complete, then we may write $\boldsymbol{t} = \boldsymbol{t''} m_0 m_1 \dots m_k$, where $m_k = m$, $\boldsymbol{t''} m_0$ is $d$-complete, and $0 < \lambda_G^{\mathbb{N}}(m_i) \leqslant d$ for $i = 1, 2, \dots, k$.
By IE-switch, $m_0$ is an O-move, and thus $\boldsymbol{s} = \mathcal{H}_G^d(\boldsymbol{t''}) \in P_{\mathcal{H}^d(G)}$ is of even-length.
\end{itemize}
It remains to verify the axioms S1 and S2. 
For S1, $\mathcal{H}^d(\sigma)$ is non-empty as $\boldsymbol{\epsilon} \in \mathcal{H}^d(\sigma)$.
For the even-prefix-closure, let $\boldsymbol{s} m n \in \mathcal{H}^d(\sigma)$; we have to show $\boldsymbol{s} \in \mathcal{H}^d(\sigma)$.
We have some $\boldsymbol{t} m \boldsymbol{u} n \boldsymbol{v} \in \sigma$ such that $\boldsymbol{t} m \boldsymbol{u} n \boldsymbol{v} \natural \mathcal{H}_G^d = \boldsymbol{s} m n$. By Lemma~\ref{LemAsymmetryLemma}, $\boldsymbol{s} m n = \mathcal{H}_G^d(\boldsymbol{t}) m n$, whence $\boldsymbol{s} = \mathcal{H}_G^d(\boldsymbol{t})$.
For $\boldsymbol{t}m$ is $d$-complete, so is $\boldsymbol{t}$ by IE-switch.
Thus, $\boldsymbol{s} = \mathcal{H}_G^d(\boldsymbol{t}) = \boldsymbol{t} \natural \mathcal{H}_G^d \in \mathcal{H}^d(\sigma)$.

Finally for S2, let $\boldsymbol{s} m n, \boldsymbol{s} m n' \in \mathcal{H}^d(\sigma)$; we have to show $n = n'$ and $\mathcal{J}^{\circleddash d}_{\boldsymbol{s}m}(n) = \mathcal{J}^{\circleddash d}_{\boldsymbol{s}m}(n')$.
Clearly, $\boldsymbol{s} m n = \boldsymbol{t} m \boldsymbol{u} n \boldsymbol{v} \natural \mathcal{H}_G^d$, $\boldsymbol{s} m n' = \boldsymbol{t'} m \boldsymbol{u'} n' \boldsymbol{v'} \natural \mathcal{H}_G^d$ for some $\boldsymbol{t} m \boldsymbol{u} n \boldsymbol{v}, \boldsymbol{t'} m \boldsymbol{u'} n' \boldsymbol{v'} \in \sigma$.
Then, by Lemma~\ref{LemAsymmetryLemma}, $\boldsymbol{s} m n = \mathcal{H}_G^d(\boldsymbol{t} m \boldsymbol{u}) n$ and $\boldsymbol{s} m n' = \mathcal{H}_G^d(\boldsymbol{t'} m \boldsymbol{u'}) n'$. Therefore, by Theorem~\ref{ThmExternalConsistency}, $n = n'$ and $\mathcal{J}^{\circleddash d}_{\boldsymbol{s}mn}(n) = \mathcal{J}^{\circleddash d}_{\boldsymbol{s}mn'}(n')$, completing the proof.
\end{proof}

Next, let us review standard constraints on strategies.
First, recall that a programming language is \emph{total} if its computation always terminates in a finite period of time. 
This programming concept is interpreted in game semantics by \emph{totality} of strategies in the sense similar to totality of partial functions: 
\begin{definition}[Totality of strategies \cite{abramsky1997semantics}]
\label{DefTotalityOfStrategies}
A strategy $\sigma : G$ is \emph{\bfseries total} if it satisfies $\forall \boldsymbol{s} \in \sigma, \boldsymbol{s} m \in P_G . \ \! \exists \boldsymbol{s} m n \in \sigma$.
\end{definition}

Nevertheless, it is well-known that totality of strategies is \emph{not} preserved under composition due to the problem of `infinite chattering' \cite{abramsky1997semantics,clairambault2010totality}. 
For this point, one usually imposes a condition on strategies stronger than totality, e.g., \emph{winning} \cite{abramsky1997semantics}, that is preserved under composition.
We may certainly just apply the winning condition of \cite{abramsky1997semantics}, but it requires an additional structure on games, which may be criticized as extrinsic and/or ad-hoc; thus, we prefer another, simpler solution.
A natural idea is then to require that strategies should not contain any strictly increasing (with respect to $\preceq$) infinite sequence of positions. 
However, we have to relax this constraint: The dereliction $\mathit{der}_A$ (Definition~\ref{DefDerelictions}), the $\beta$-identity on a game $A$ in the game-semantic CCBoC given in Section~\ref{DynamicGameSemantics}, satisfies it iff so does $A$, but we cannot impose it on games as the operation $\Rightarrow \ = \ !(\_) \multimap (\_)$ on games, which is the $\beta$-exponential construction in the CCBoC, does not preserve it.

Thus, instead, we apply the same idea to \emph{P-views}, arriving at:
\begin{definition}[Noetherianity of strategies \cite{clairambault2010totality}]
\label{DefNoetherianityOfStrategies}
A strategy $\sigma : G$ is \emph{\bfseries noetherian} if it does not contain any strictly increasing (with respect to $\preceq$) infinite sequence of P-views of $G$.
\end{definition}
It has been shown in \cite{clairambault2010totality} that total, noetherian static strategies are closed under composition. 

Next, recall that one of the highlights of \emph{HO-games} \cite{hyland2000full} is to give a one-to-one correspondence between PCF B\"{o}hm trees and \emph{innocent}, \emph{well-bracketed} static strategies (on static games modeling types of PCF).
That is, the two conditions narrow down the hom-sets of the codomain of the interpretation functor, i.e., the category of HO-games, so that the interpretation becomes \emph{full}.
Roughly, a strategy is innocent if its computation depends only on P-views, and well-bracketed if every `question-answering' by the strategy is achieved in the `last-question-first-answered' fashion. Formally:
\begin{definition}[Innocence of strategies \cite{hyland2000full}]
\label{DefInnocenceOfStrategies}
A strategy $\sigma : G$ is \emph{\bfseries innocent} if $\forall \boldsymbol{s}mn, \boldsymbol{t} \in \sigma, \boldsymbol{t} m \in P_G . \ \! \lceil \boldsymbol{t} m \rceil = \lceil \boldsymbol{s} m \rceil \Rightarrow \boldsymbol{t}mn \in \sigma \wedge \lceil \boldsymbol{t} m n \rceil = \lceil \boldsymbol{s} m n \rceil$.
\end{definition}

\begin{definition}[Well-bracketing of strategies \cite{hyland2000full}]
\label{DefWellBracketingOfStrategies}
A strategy $\sigma : G$ is \emph{\bfseries well-bracketed (wb)} if, given $\boldsymbol{s} q \boldsymbol{t} a \in \sigma$, where $\lambda_G^{\mathsf{QA}}(q) = \mathsf{Q}$, $\lambda_G^{\mathsf{QA}}(a) = \mathsf{A}$ and $\mathcal{J}_{\boldsymbol{s}q\boldsymbol{t}a}(a) = q$, each occurrence of a question in $\boldsymbol{t'}$, defined by $\lceil \boldsymbol{s} q \boldsymbol{t} \rceil_G = \lceil \boldsymbol{s} q \rceil_G . \boldsymbol{t'}$\if0 \footnote{Note that $\lceil \boldsymbol{s} q \boldsymbol{t} \rceil_G$ must be of the form $\lceil \boldsymbol{s} q \rceil_G . \boldsymbol{t'}$ by visibility on $\boldsymbol{s} q \boldsymbol{t} a$.}\fi, justifies an occurrence of an answer in $\boldsymbol{t'}$.
\end{definition}


Now, let us show that the standard constraints on strategies except totality are all preserved under the hiding operation, which implies that dynamic strategies are a reasonable generalization of static strategies in a certain sense.
\begin{corollary}[Preservation of constraints on strategies under hiding]
\label{CoroPreservationOfConstraintsOnDynamicStrategiesUnderHiding}
If a strategy $\sigma : G$ is valid, innocent, wb or noetherian, then so is $\mathcal{H}^d(\sigma) : \mathcal{H}^d(G)$, and if another $\tau : G$ satisfies $\sigma \simeq_G \tau$, then $\mathcal{H}^d(\sigma) \simeq_{\mathcal{H}^d(G)} \mathcal{H}^d(\tau)$, for all $d \in \mathbb{N} \cup \{ \omega \}$.
\end{corollary}
\begin{proof}
Let $d \in \mathbb{N} \cup \{ \omega \}$ be arbitrarily fixed. 
We have $\mathcal{H}^d(\sigma) : \mathcal{H}^d(G)$ by Theorem~\ref{ThmHidingTheorem}.
\begin{itemize}

\item Preservation of validity is by Lemma~\ref{LemODeterminacy}, Corollary~\ref{CoroStepwiseIdentificationOfStrategies} and the axiom DI3 on $\simeq_G$;

\item Preservation of innocence and noetherianity holds because $\lceil \mathcal{H}_G^d(\boldsymbol{s}m) \rceil_{\mathcal{H}^d(G)}$ is a j-subsequence of $\mathcal{H}_G^d(\lceil \boldsymbol{s} m \rceil_G)$ for any $\boldsymbol{s} m \in P_G^{\mathsf{Odd}}$;

\item Well-bracketing is preserved under the $d$-hiding operation $\mathcal{H}^d$ because \emph{both} of the question and the answer of each `QA-pair' are either deleted or retained.

\end{itemize}
Finally, preservation of identification of strategies is proved similarly to that of validity, completing the proof. 
\end{proof}

\begin{remark*}
Totality of strategies is \emph{not} preserved under the $d$-hiding operation $\mathcal{H}^d$ on strategies for all $d \in \mathbb{N} \cup \{ \omega \}$.
For instance, consider any total strategy that always performs a 1-internal P-move, which is no longer total when $\mathcal{H}$ is applied.
As we shall see shortly, it is why totality is preserved under concatenation of strategies but not under composition (i.e., composition coincides with \emph{concatenation plus hiding}).
\end{remark*}

At the end of the present section, we establish an inductive property of the $d$-hiding operation on strategies for each $d \in \mathbb{N} \cup \{ \omega \}$:
\begin{notation*}
Given $\sigma : G$ and $d \in \mathbb{N} \cup \{ \omega \}$, we define $\sigma^d_\downarrow \stackrel{\mathrm{df. }}{=} \{ \boldsymbol{s} \in \sigma \mid \text{$\boldsymbol{s}$ is $d$-complete} \ \! \}$ and $\sigma^d_\uparrow \stackrel{\mathrm{df. }}{=} \sigma \setminus \sigma^d_{\downarrow}$.
\end{notation*}

\begin{lemma}[Hiding and complete positions]
\label{LemHidingAndCompletePositions}
Let $\sigma : G$. 
Given $i, d \in \mathbb{N}$ such that $i \geqslant d$, we have $\mathcal{H}^i(\sigma) = \mathcal{H}^i(\sigma^d_\downarrow) \stackrel{\mathrm{df. }}{=} \{ \boldsymbol{s} \natural \mathcal{H}_G^i \mid \boldsymbol{s} \in \sigma^d_\downarrow \ \! \}$.
\end{lemma}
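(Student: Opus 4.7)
The plan is to verify the two inclusions. The direction $\mathcal{H}^i(\sigma^d_\downarrow) \subseteq \mathcal{H}^i(\sigma)$ is immediate from $\sigma^d_\downarrow \subseteq \sigma$ together with the definition of $\mathcal{H}^i$ on dynamic strategies. For the converse inclusion, I will show that for every $\bm{s} \in \sigma$ there is some $\bm{t} \in \sigma^d_\downarrow$ satisfying $\bm{s} \natural \mathcal{H}^i_G = \bm{t} \natural \mathcal{H}^i_G$. If $\bm{s}$ is already $d$-complete, then $\bm{s} \in \sigma^d_\downarrow$ and $\bm{t} := \bm{s}$ works; the rest of the argument is for the case where $\bm{s}$ is not $d$-complete, i.e., its last occurrence has internality in $\{1, 2, \dots, d\}$.

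In that case I would decompose $\bm{s} = \bm{u}\, v$, where $v$ is the longest suffix all of whose occurrences have internality in $\{1, \dots, d\}$. Since the initial occurrence of $\bm{s}$ is external by E1, the prefix $\bm{u}$ is nonempty. Writing $\bm{u} = \bm{u}'\, u_{\mathrm{last}}$, the occurrence $u_{\mathrm{last}}$ has internality either $0$ or $> d$ by maximality of $v$. Applying IE-switch to the boundary $u_{\mathrm{last}}\, v_1$ (which witnesses a change of internality) forces $u_{\mathrm{last}}$ to be an O-move, so $|\bm{u}|$ is odd by alternation. Moreover, if $\bm{u}' \neq \bm{\epsilon}$, then its last occurrence $u_{\mathrm{last}-1}$ is a P-move by alternation, and so another application of IE-switch to $u_{\mathrm{last}-1}\, u_{\mathrm{last}}$ prevents an internality change, i.e., $u_{\mathrm{last}-1}$ inherits the internality of $u_{\mathrm{last}}$. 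Consequently, setting $\bm{t} := \bm{u}'$, the position $\bm{t}$ has even length, lies in $\sigma$ by the even-prefix-closure axiom S1, and is $d$-complete (either empty or ending in an occurrence of internality $0$ or $> d$); hence $\bm{t} \in \sigma^d_\downarrow$.

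To verify $\bm{s} \natural \mathcal{H}^i_G = \bm{t} \natural \mathcal{H}^i_G$, I would split on the internality $d_u$ of $u_{\mathrm{last}}$. In either subcase, the occurrences of $v$ all have internality $\leq d \leq i$ and so are deleted by $\mathcal{H}^i_G$, yielding $\mathcal{H}^i_G(\bm{s}) = \mathcal{H}^i_G(\bm{u})$ via the point-wise lemma (Lemma~\ref{LemPointWiseLemmaForHidingOnJSequences}). If $d_u = 0$ or $d_u > i$, then $u_{\mathrm{last}}$ survives, so $\mathcal{H}^i_G(\bm{s}) = \mathcal{H}^i_G(\bm{u}')\, u_{\mathrm{last}}$; since $\bm{s}$ is not $i$-complete while $\bm{t}$ is, both $\bm{s} \natural \mathcal{H}^i_G$ and $\bm{t} \natural \mathcal{H}^i_G$ reduce to $\mathcal{H}^i_G(\bm{u}')$. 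If instead $d < d_u \leq i$, then $u_{\mathrm{last}}$ and $u_{\mathrm{last}-1}$ are both deleted, so $\mathcal{H}^i_G(\bm{s}) = \mathcal{H}^i_G(\bm{u}')$; here neither $\bm{s}$ nor $\bm{t}$ is $i$-complete, and both $\natural$-operations strip the terminal occurrence, giving the common value $\mathcal{H}^i_G(\bm{u}')$ minus its last move. The hypothesis $i \geq d$ is crucial for ensuring the full disappearance of $v$ under $\mathcal{H}^i_G$, and the main obstacle is precisely matching the trimmings of the two sides across both subcases, which is controlled by the coincidence of internalities at $u_{\mathrm{last}-1}$ and $u_{\mathrm{last}}$ obtained from IE-switch and alternation.
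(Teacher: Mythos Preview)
Your proof is correct. The paper's proof takes essentially the same ingredients (IE-switch, alternation, even-prefix closure) but organises the decomposition around $i$ rather than $d$: it locates the last occurrence $n$ in $\bm{s}$ of internality $0$ or $>i$, observes via IE-switch that its predecessor $m$ is a P-move of the same internality, and takes the even prefix ending at $m$ as the witness in $\sigma^d_\downarrow$. Because this prefix is automatically $i$-complete (and hence $d$-complete, since $i\geq d$), the equality $m_0\bm{t_1}m \natural \mathcal{H}^i_G = \bm{s}\natural \mathcal{H}^i_G$ falls out in one line with no case split.

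Your decomposition instead strips the maximal suffix of internality $\leq d$, which directly gives a $d$-complete witness but one that need not be $i$-complete; this is precisely what forces your second subcase $d < d_u \leq i$. That subcase is handled correctly (both $\bm{s}$ and $\bm{t}$ fail $i$-completeness, both $\natural$-operations trim the same last move of $\mathcal{H}^i_G(\bm{u}')$), so nothing is wrong---the paper's route just buys a shorter argument by choosing a witness that is already $i$-complete from the start.
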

\begin{proof}
$\mathcal{H}^i(\sigma^d_\downarrow) \subseteq \mathcal{H}^i(\sigma)$ is obvious.
For the opposite inclusion, let $\boldsymbol{s} \in \mathcal{H}^i(\sigma)$, i.e., $\boldsymbol{s} = \boldsymbol{t} \natural \mathcal{H}_G^i$ for some $\boldsymbol{t} \in \sigma$; we have to show $\boldsymbol{s} \in \mathcal{H}^i(\sigma^d_\downarrow)$.
If $\boldsymbol{t} \in \sigma^d_\downarrow$, then we are done; thus, assume otherwise. 
If there is no external or $j$-internal move with $j > i$ other than the first move $m_0$ in $\boldsymbol{t}$, then $\boldsymbol{s} = \boldsymbol{\epsilon} \in \mathcal{H}^i(\sigma^d_\downarrow)$; so assume otherwise.
As a result, we may write $\boldsymbol{t} = m_0 \boldsymbol{t_1} m n \boldsymbol{t_2} r$, where $\boldsymbol{t_2} r$ consists only of $j$-internal moves with $0 < j \leqslant i$, and $m$ and $n$ are P- and O-moves, respectively, such that $\lambda_G^{\mathbb{N}}(m) = \lambda_G^{\mathbb{N}}(n) = 0 \vee \lambda_G^{\mathbb{N}}(m) = \lambda_G^{\mathbb{N}}(n) > i$.
Take $m_0 \boldsymbol{t_1} m \in \sigma^d_\downarrow$ such that $m_0 \boldsymbol{t_1} m \natural \mathcal{H}_G^i = m_0 \mathcal{H}_G^i(\boldsymbol{t_1}) m = \boldsymbol{t} \natural \mathcal{H}_G^i = \boldsymbol{s}$, whence $\boldsymbol{s} \in \mathcal{H}^i(\sigma^d_\downarrow)$.
\end{proof}

We are now ready to show:
\begin{lemma}[Stepwise hiding on strategies]
Given $\sigma : G$, we have $\mathcal{H}^{i+1}(\sigma) = \mathcal{H}^1(\mathcal{H}^i(\sigma))$ for all $i \in \mathbb{N}$.
\end{lemma}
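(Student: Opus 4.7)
The plan is to reduce the set equality to a clean pointwise identity by using Lemma~\ref{LemHidingAndCompletePositions} to strip away all the "drop-if-incomplete" ($\natural$) complications on both sides, so that what remains can be closed off directly by Lemma~\ref{LemStepwiseHidingOnJSequences}. Concretely, I will show that both $\mathcal{H}^{i+1}(\sigma)$ and $\mathcal{H}^1(\mathcal{H}^i(\sigma))$ are equal to the set
\[ T \stackrel{\mathrm{df.}}{=} \{ \mathcal{H}_G^{i+1}(\bm{s}) \mid \bm{s} \in \sigma^{i+1}_{\downarrow} \}. \]

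For $\mathcal{H}^{i+1}(\sigma) = T$: this is immediate from Lemma~\ref{LemHidingAndCompletePositions} applied with $d = i+1$, since on $(i+1)$-complete positions $\bm{s} \natural \mathcal{H}_G^{i+1}$ reduces to $\mathcal{H}_G^{i+1}(\bm{s})$.

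For $\mathcal{H}^1(\mathcal{H}^i(\sigma)) = T$: first use Lemma~\ref{LemHidingAndCompletePositions} (applied to the dynamic strategy $\mathcal{H}^i(\sigma)$ on $\mathcal{H}^i(G)$, which is indeed a dynamic strategy by Theorem~\ref{ThmHidingTheorem}) with $d = 1$ to rewrite the outer hiding as $\{ \mathcal{H}_{\mathcal{H}^i(G)}^1(\bm{t}) \mid \bm{t} \in (\mathcal{H}^i(\sigma))^1_{\downarrow} \}$, and similarly $\mathcal{H}^i(\sigma) = \{ \mathcal{H}_G^i(\bm{s}) \mid \bm{s} \in \sigma^i_{\downarrow} \}$. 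The key observation is then the biconditional: a position $\bm{s} \in \sigma^i_{\downarrow}$ satisfies $\mathcal{H}_G^i(\bm{s}) \in (\mathcal{H}^i(\sigma))^1_{\downarrow}$ if and only if $\bm{s} \in \sigma^{i+1}_{\downarrow}$. This follows by inspecting the last move of $\bm{s}$: since $\bm{s}$ is $i$-complete, this move survives $\mathcal{H}_G^i$ and becomes the last move of $\mathcal{H}_G^i(\bm{s})$; and a move has degree $0$ or $> i+1$ in $G$ precisely when its shifted degree in $\mathcal{H}^i(G)$ is $0$ or $> 1$. Combining this biconditional with the j-sequence identity $\mathcal{H}_G^{i+1}(\bm{s}) = \mathcal{H}_{\mathcal{H}^i(G)}^1(\mathcal{H}_G^i(\bm{s}))$ from Lemma~\ref{LemStepwiseHidingOnJSequences} gives $\mathcal{H}^1(\mathcal{H}^i(\sigma)) = T$.

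The main obstacle is the subtlety that the $\natural$ operation depends on completeness of the \emph{input} position, not of its image, so naively iterating two single-step hidings does not obviously agree with one $(i+1)$-step hiding when the input ends in a run of internal moves of mixed degrees. The rescue is precisely Lemma~\ref{LemHidingAndCompletePositions}, which lets us restrict attention to inputs whose completeness is as strong as the corresponding hiding level, removing the case analysis on trailing internal moves. Once this reduction is in place, the remaining work is the small degree-shift calculation $\mathsf{deg}_{\mathcal{H}^i(G)} = \mathsf{deg}_G \circleddash i$ encoded in the biconditional above, and the theorem closes by a two-inclusion argument against $T$.
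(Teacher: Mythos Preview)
Your proof is correct and uses the same toolkit as the paper (Lemma~\ref{LemHidingAndCompletePositions} and Lemma~\ref{LemStepwiseHidingOnJSequences}), but it is organised more economically. The paper argues the two inclusions separately: for $\mathcal{H}^{i+1}(\sigma) \subseteq \mathcal{H}^1(\mathcal{H}^i(\sigma))$ it applies Lemma~\ref{LemHidingAndCompletePositions} at level $i+1$ just as you do; for the converse it applies the lemma only to the \emph{inner} hiding (writing elements as $(\bm{s}\natural\mathcal{H}_G^i)\natural\mathcal{H}_{\mathcal{H}^i(G)}^1$ with $\bm{s}\in\sigma^i_\downarrow$) and then performs a manual case analysis on whether $\bm{s}$ is also $(i{+}1)$-complete, unfolding the structure of $\bm{s}$ in the non-complete case. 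Your extra step of applying Lemma~\ref{LemHidingAndCompletePositions} a second time, to the \emph{outer} hiding on the strategy $\mathcal{H}^i(\sigma)$, together with the degree-shift biconditional ($\mathcal{H}_G^i(\bm{s})$ is $1$-complete in $\mathcal{H}^i(G)$ iff $\bm{s}$ is $(i{+}1)$-complete in $G$), replaces that case analysis entirely and lets both sides collapse to the common set $T$. The paper's route is slightly more hands-on; yours is a cleaner packaging of the same ideas.
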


\begin{proof}
We first show the inclusion $\mathcal{H}^{i+1}(\sigma) \subseteq \mathcal{H}^1(\mathcal{H}^i(\sigma))$.
By Lemma~\ref{LemHidingAndCompletePositions}, we may write any element of the set $\mathcal{H}^{i+1}(\sigma)$ as $\boldsymbol{s} \natural \mathcal{H}_G^{i+1}$ for some $\boldsymbol{s} \in \sigma^{i+1}_\downarrow$.
Then observe that:
\begin{align*}
\boldsymbol{s} \natural \mathcal{H}_G^{i+1} = \mathcal{H}_G^{i+1}(\boldsymbol{s}) = \mathcal{H}_{\mathcal{H}^i(G)}(\mathcal{H}_G^i(\boldsymbol{s})) = (\boldsymbol{s} \natural \mathcal{H}_G^i) \natural \mathcal{H}^1_{\mathcal{H}^i(G)} \in \mathcal{H}^1(\mathcal{H}^i(\sigma)).
\end{align*}

For the opposite inclusion $\mathcal{H}^1(\mathcal{H}^i(\sigma)) \subseteq \mathcal{H}^{i+1}(\sigma)$, again by Lemma~\ref{LemHidingAndCompletePositions}, we may write any element of $\mathcal{H}^1(\mathcal{H}^i(\sigma))$ as $(\boldsymbol{s} \natural \mathcal{H}_G^i) \natural \mathcal{H}_{\mathcal{H}^i(G)}^1$ for some $\boldsymbol{s} \in \sigma^i_\downarrow$.
We have to show that $(\boldsymbol{s} \natural \mathcal{H}_G^i) \natural \mathcal{H}_{\mathcal{H}^i(G)}^1 \in \mathcal{H}^{i+1}(\sigma)$. 
If $\boldsymbol{s} \in \sigma_\downarrow^{i+1}$, then it is completely analogous to the above argument; so assume otherwise. Also, if an external or $j$-internal move with $j > i+1$ in $\boldsymbol{s}$ is only the first move $m_0$, then $(\boldsymbol{s} \natural \mathcal{H}_G^i) \natural \mathcal{H}_{\mathcal{H}^i(G)}^1 = \boldsymbol{\epsilon} \in \mathcal{H}^{i+1}(\sigma)$; thus assume othewise. 
Now, we may write: 
\begin{equation*}
\boldsymbol{s} = \boldsymbol{s'} m n m_1 m_2 \dots m_{2k} r
\end{equation*}
where $\lambda_G^{\mathbb{N}}(r) = i+1$, $m_1, m_2, \dots, m_{2k}$ are $j$-internal with $0 < j \leqslant i+1$, and $m$ and $n$ are external or $j$-internal P- and O-moves with $j > i+1$, respectively. Then, 
\begin{align*}
(\boldsymbol{s} \natural \mathcal{H}_G^i) \natural \mathcal{H}^1_{\mathcal{H}^i(G)} &= \mathcal{H}_G^i(\boldsymbol{s}) \natural \mathcal{H}^1_{\mathcal{H}^i(G)} \\
&= \mathcal{H}_{\mathcal{H}^i(G)}(\mathcal{H}_G^i(\boldsymbol{s'})) . m \\
&= \mathcal{H}_G^{i+1}(\boldsymbol{s'}) . m \ \text{(by Lemma~\ref{LemStepwiseHidingOnJSequences})} \\
&= \boldsymbol{s} \natural \mathcal{H}_G^{i+1} \in \mathcal{H}^{i+1}(\sigma)
\end{align*}
which completes the proof.
\end{proof}

Thus, as in the case of games, we may focus on the operation $\mathcal{H}^1$:
\begin{convention*}
Henceforth, we write $\mathcal{H}$ for $\mathcal{H}^1$ and call it the \emph{\bfseries hiding operation (on strategies)}; $\mathcal{H}^i$ denotes the $i$-times iteration of $\mathcal{H}$ for all $i \in \mathbb{N}$.
\end{convention*}


\subsection{Constructions on Dynamic Strategies}
\label{ConstructionsOnDynamicStrategies}
Next, let us recall standard constructions on strategies \cite{abramsky1999game}. Note that since (dynamic) strategies are simply `static strategies on (dynamic) games', they are clearly closed under all the constructions on static strategies. 

Nevertheless, the CCBoC of games and strategies given in Section~\ref{DynamicGameSemantics} has normalized games as 0-cells and strategies $\phi : G$ such that $\mathcal{H}^\omega(G) \trianglelefteqslant A \Rightarrow B$ as 1-cells $A \rightarrow B$, and therefore we need to generalize \emph{pairing} and \emph{promotion} of static strategies; in fact, we have generalized product and exponential of static games respectively to pairing and promotion of dynamic games for this purpose.
Also, we shall decompose and generalize composition of static strategies as \emph{concatenation plus hiding} of dynamic strategies, for which we have introduced concatenation of dynamic games. 

Let us begin with recalling \emph{tensor} $\otimes$ of strategies:
\begin{definition}[Tensor of strategies \cite{abramsky1999game}]
\label{DefTensorOfStrategies}
Given games $A$, $B$, $C$ and $D$, and strategies $\phi : A \multimap C$ and $\psi : B \multimap D$, the \emph{\bfseries tensor (product)} $\phi \otimes \psi$ of $\phi$ and $\psi$ is given by:
\begin{equation*}
\phi \otimes \psi \stackrel{\mathrm{df. }}{=} \{ \boldsymbol{s} \in \mathscr{L}_{A \otimes B \multimap C \otimes D} \mid \boldsymbol{s} \upharpoonright A, C \in \phi, \boldsymbol{s} \upharpoonright B, D \in \psi \ \! \}.
\end{equation*}
\end{definition}

Intuitively the tensor $\phi \otimes \psi : A \otimes B \multimap C \otimes D$ of $\phi : A \multimap C$ and $\psi : B \multimap D$ plays by $\phi$ if the last O-move is of $A$ or $C$, and by $\psi$ otherwise.

\begin{example}
The tensor $\mathit{succ} \otimes\mathit{double} : N \otimes N \multimap N \otimes N$, where $\mathit{succ}, \mathit{double} : N \multimap N$ are given in Section~\ref{Introduction}, plays, e.g., as follows:
\begin{center}
\begin{tabular}{ccccccccccccccccc}
$N$&$\otimes$&$N$&$\stackrel{\mathit{succ} \otimes\mathit{double}}{\multimap}$&$N$ & $\otimes$ & $N$ &&&&$N$&$\otimes$&$N$&$\stackrel{\mathit{succ} \otimes\mathit{double}}{\multimap}$& $N$ & $\otimes$ & $N$ \\ \cline{1-7} \cline{11-17}
&&&& \tikzmark{ctensor57} $q$ \tikzmark{ctensor59} &&&&&&&&&&&& \tikzmark{ctensor51} $q$ \tikzmark{ctensor53} \\
\tikzmark{ctensor58} $q$ \tikzmark{dtensor57} &&&&&&&&&&&& \tikzmark{ctensor52} $q$ \tikzmark{dtensor51} &&&& \\
&&&&&& \tikzmark{ctensor60} $q$ \tikzmark{ctensor62} &&&&&& \tikzmark{dtensor52} $5$&&&& \\
&& \tikzmark{ctensor61} $q$ \tikzmark{dtensor60} &&&&&&&&&&&&&&$10$ \tikzmark{dtensor53} \\
&& \tikzmark{dtensor61} $2$&&&&&&&&&&&& \tikzmark{ctensor54} $q$ \tikzmark{ctensor56} && \\
&&&&&&$4$ \tikzmark{dtensor62} &&&& \tikzmark{ctensor55} $q$ \tikzmark{dtensor54} &&&&&& \\
\tikzmark{dtensor58} $2$&&&&&&&&&& \tikzmark{dtensor55} $7$ &&&&&& \\
&&&&$3$ \tikzmark{dtensor59} &&&&&&&&&&$8$ \tikzmark{dtensor56} &&
\end{tabular}
\begin{tikzpicture}[overlay, remember picture, yshift=.25\baselineskip]
\draw [->] ({pic cs:dtensor51}) to ({pic cs:ctensor51});
\draw [->] ({pic cs:dtensor52}) [bend left] to ({pic cs:ctensor52});
\draw [->] ({pic cs:dtensor53}) [bend right] to ({pic cs:ctensor53});
\draw [->] ({pic cs:dtensor54}) to ({pic cs:ctensor54});
\draw [->] ({pic cs:dtensor55}) [bend left] to ({pic cs:ctensor55});
\draw [->] ({pic cs:dtensor56}) [bend right] to ({pic cs:ctensor56});
\draw [->] ({pic cs:dtensor57}) to ({pic cs:ctensor57});
\draw [->] ({pic cs:dtensor58}) [bend left] to ({pic cs:ctensor58});
\draw [->] ({pic cs:dtensor59}) [bend right] to ({pic cs:ctensor59});
\draw [->] ({pic cs:dtensor60}) to ({pic cs:ctensor60});
\draw [->] ({pic cs:dtensor61}) [bend left] to ({pic cs:ctensor61});
\draw [->] ({pic cs:dtensor62}) [bend right] to ({pic cs:ctensor62});
\end{tikzpicture}
\end{center}
\end{example}

\begin{lemma}[Well-defined tensor of strategies]
\label{LemWellDefinedTensorOfStrategies}
Given games $A$, $B$, $C$ and $D$, and strategies $\phi : A \multimap C$ and $\psi : B \multimap D$, $\phi \otimes \psi$ is a  strategy on  $A \otimes B \multimap C \otimes D$. 
If $\phi$ and $\psi$ are innocent (resp. wb, total, noetherian), then so is $\phi \otimes \psi$.
Given $\phi' : A \multimap C$ and $\psi' : B \multimap D$ with $\phi \simeq_{A \multimap C} \phi'$ and $\psi \simeq_{B \multimap D} \psi'$, $\phi \otimes \psi \simeq_{A \otimes B \multimap C \otimes D} \phi' \otimes \psi'$.
\end{lemma}
\begin{proof}
Straightforward; see \cite{mccusker1998games,abramsky2000full}.
\end{proof}

We proceed to recall \emph{pairing} of strategies:
\begin{definition}[Pairing of strategies \cite{abramsky1999game}]
\label{DefPairingOfStrategies}
Given games $A$, $B$ and $C$, and strategies $\phi : C \multimap A$ and $\psi : C \multimap B$, the \emph{\bfseries pairing} $\langle \phi, \psi \rangle$ of $\phi$ and $\psi$ is defined by:
\begin{align*}
\langle \phi, \psi \rangle \stackrel{\mathrm{df. }}{=} \{ \boldsymbol{s} \in \mathscr{L}_{C \multimap A \& B} \mid (\boldsymbol{s} \upharpoonright C, A \in \phi \wedge \boldsymbol{s} \upharpoonright B = \boldsymbol{\epsilon}) \vee (\boldsymbol{s} \upharpoonright C, B \in \psi \wedge \boldsymbol{s} \upharpoonright A = \boldsymbol{\epsilon}) \ \! \}.
\end{align*}
\end{definition}

That is, the pairing $\langle \phi, \psi \rangle : C \multimap A \& B$ of $\phi : C \multimap A$ and $\psi : C \multimap B$ plays by $\phi$ if the play is of $C \multimap A$, and by $\psi$ otherwise.

\begin{example}
The pairing $\langle \mathit{succ}, \mathit{double} \rangle : N \multimap N \& N$ plays as either of the following:
\begin{center}
\begin{tabular}{ccccccccccccccc}
$N$&$\stackrel{\langle \mathit{succ}, \mathit{double} \rangle}{\multimap}$&$N$ & $\&$ & $N$ &&&& $N$&$\stackrel{\langle \mathit{succ}, \mathit{double} \rangle}{\multimap}$& $N$ & $\&$ & $N$ \\ \cline{1-5} \cline{9-13}
&& \tikzmark{cpairing14} $q$ \tikzmark{cpairing16} &&&&&&&&&& \tikzmark{cpairing11} $q$ \tikzmark{cpairing13} \\
\tikzmark{cpairing15} $q$ \tikzmark{dpairing14} &&&&&&&& \tikzmark{cpairing12} $q$ \tikzmark{dpairing11} &&&& \\
\tikzmark{dpairing15} $n$&&&&&&&& \tikzmark{dpairing12} $n$&&&& \\
&&$n+1$ \tikzmark{dpairing16} &&&&&&&&&&$2 \cdot n$ \tikzmark{dpairing13}
\end{tabular}
\begin{tikzpicture}[overlay, remember picture, yshift=.25\baselineskip]
\draw [->] ({pic cs:dpairing11}) to ({pic cs:cpairing11});
\draw [->] ({pic cs:dpairing12}) [bend left] to ({pic cs:cpairing12});
\draw [->] ({pic cs:dpairing13}) [bend right] to ({pic cs:cpairing13});
\draw [->] ({pic cs:dpairing14}) to ({pic cs:cpairing14});
\draw [->] ({pic cs:dpairing15}) [bend left] to ({pic cs:cpairing15});
\draw [->] ({pic cs:dpairing16}) [bend right] to ({pic cs:cpairing16});
\end{tikzpicture}
\end{center}
where $n \in \mathbb{N}$, depending on the first O-move.
\end{example}

\begin{lemma}[Well-defined pairing of strategies]
\label{LemWellDefinedPairingOfStrategies}
Given games $A$, $B$ and $C$, and strategies $\phi : C \multimap A$ and $\psi : C \multimap B$, $\langle \phi, \psi \rangle$ is a  strategy on $C \multimap A \& B$. 
If $\phi$ and $\psi$ are innocent (resp. wb, total, noetherian), then so is $\langle \phi, \psi \rangle$.
Given $\phi' : C \multimap A$ and $\psi' : C \multimap B$ with $\phi \simeq_{C \multimap A} \phi'$ and $\psi \simeq_{C \multimap B} \psi'$,  $\langle \phi, \psi \rangle \simeq_{C \multimap A \& B} \langle \phi', \psi' \rangle$.
\end{lemma}
\begin{proof}
Straightforward; see \cite{mccusker1998games,abramsky2000full}.
\end{proof}

Next, let us recall \emph{promotion} of strategies:
\begin{definition}[Promotion of strategies \cite{mccusker1998games}]
\label{DefPromotionOfStrategies}
Given games $A$ and $B$, and a strategy $\varphi : \ !A \multimap B$, the \emph{\bfseries promotion} $\varphi^{\dagger}$ of $\varphi$ is defined by: 
\begin{equation*}
\varphi^{\dagger} \stackrel{\mathrm{df. }}{=} \{ \boldsymbol{s} \in \mathscr{L}_{!A \multimap !B} \mid \forall i \in \mathbb{N} . \ \! \boldsymbol{s} \upharpoonright i \in \varphi \ \! \}.
\end{equation*}
\end{definition}

That is, the promotion $\varphi^\dagger : \ !A \multimap \ !B$ of $\varphi : A \Rightarrow B$ plays, during a play $\boldsymbol{s}$ of $!A \multimap \ !B$, as $\varphi$ for each j-subsequence $\boldsymbol{s} \upharpoonright i$ or \emph{thread}.  
We could have defined noetherianity of strategies in terms of positions, but then it would not be preserved under promotion by the obvious reason; it is why we have defined it in terms of P-views (Definition~\ref{DefNoetherianityOfStrategies}).

\begin{example}
Let $\mathit{succ} : N \Rightarrow N$ be the successor strategy (n.b., it is on the implication $\Rightarrow$, not the linear implication $\multimap$), which specifically selects, say, the `tag' $(\_, 0)$ in the domain $!N$. 
Then, the promotion $\mathit{succ}^\dagger : \ !N \multimap \ !N$ plays, e.g., as follows:
\begin{center}
\begin{tabular}{ccc}
$!N$ & $\stackrel{\mathit{succ}^\dagger}{\multimap}$ & $!N$ \\ \hline 
&& \tikzmark{cpromotion1} $(q, i)$ \tikzmark{cpromotion3} \\
\tikzmark{cpromotion2} $(q, \langle i, 0 \rangle)$ \tikzmark{dpromotion1} && \\
\tikzmark{dpromotion2} $(n, \langle i, 0 \rangle)$&& \\
&&$(n+1, i)$ \tikzmark{dpromotion3} \\
&& \tikzmark{cpromotion4} $(q, j)$ \tikzmark{cpromotion6} \\
\tikzmark{cpromotion5} $(q, \langle j, 0 \rangle)$ \tikzmark{dpromotion4} && \\
&& \tikzmark{cpromotion64} $(q, k)$ \tikzmark{cpromotion66} \\
\tikzmark{cpromotion65} $(q, \langle k, 0 \rangle)$ \tikzmark{dpromotion64} && \\
\tikzmark{dpromotion65} $(l, \langle k, 0 \rangle)$&& \\
&&$(l+1, k)$ \tikzmark{dpromotion66} \\
\tikzmark{dpromotion5} $(m, \langle j, 0 \rangle)$&& \\
&&$(m+1, j)$ \tikzmark{dpromotion6} 
\end{tabular}
\begin{tikzpicture}[overlay, remember picture, yshift=.25\baselineskip]
\draw [->] ({pic cs:dpromotion1}) to ({pic cs:cpromotion1});
\draw [->] ({pic cs:dpromotion2}) [bend left] to ({pic cs:cpromotion2});
\draw [->] ({pic cs:dpromotion3}) [bend right] to ({pic cs:cpromotion3});
\draw [->] ({pic cs:dpromotion4}) to ({pic cs:cpromotion4});
\draw [->] ({pic cs:dpromotion5}) [bend left] to ({pic cs:cpromotion5});
\draw [->] ({pic cs:dpromotion6}) [bend right] to ({pic cs:cpromotion6});
\draw [->] ({pic cs:dpromotion64}) to ({pic cs:cpromotion64});
\draw [->] ({pic cs:dpromotion65}) [bend left] to ({pic cs:cpromotion65});
\draw [->] ({pic cs:dpromotion66}) [bend right] to ({pic cs:cpromotion66});
\end{tikzpicture}
\end{center}
where $i, j, k, n, m, l \in \mathbb{N}$ such that $i \neq j$, $i \neq k$ and $j \neq k$, and they are all selected by Opponent.
Note that $\mathit{succ}^\dagger$ consistently plays as $\mathit{succ}$ for each thread.
\end{example}

\begin{lemma}[Well-defined promotion of strategies]
\label{LemWellDefinedPromotionOfStrategies}
Given games $A$ and $B$, and a strategy $\varphi : \ !A \multimap B$, the promotion $\varphi^{\dagger}$ is a strategy on $!A \multimap \ !B$. 
If $\varphi$ is innocent (resp. wb, total, noetherian), then so is $\varphi^\dagger$.
Given $\tilde{\varphi} : \ !A \multimap B$ with $\varphi \simeq_{!A \multimap B} \tilde{\varphi}$, $\varphi^\dagger \simeq_{!A \multimap !B} \tilde{\varphi}^\dagger$.
\end{lemma}
\begin{proof}
Straightforward; see \cite{mccusker1998games,abramsky2000full}.
\end{proof}

\if0
Before defining \emph{derelictions} $\mathit{der}_A \in \mathcal{QS}(!A \multimap A)$, we make a brief detour:
\begin{itemize}

\item As explained in \cite{mccusker1998games}, we cannot define the dereliction $\mathit{der}_A$ for an arbitrary game $A$ due to our simplified version of exponential $!$ (Definition~\ref{DefExponential});

\item However, also as shown in \cite{mccusker1998games}, we may solve this problem by focusing on \emph{well-opened} games;

\item Note that the implication $A \Rightarrow B$ is well-opened (for any game $A$) if so is $B$.

\end{itemize}
\fi

We proceed to recall a simple kind strategies, which are $\beta$-identities of our game-semantic CCBoC given in Section~\ref{DynamicGameSemantics}:
\begin{definition}[Derelictions \cite{abramsky2000full,mccusker1998games}]
\label{DefDerelictions}
The \emph{\bfseries dereliction} $\mathit{der}_A : \ ! A \multimap A$ on a normalized game $A$ is defined by:
\begin{equation*}
\mathit{der}_A \stackrel{\mathrm{df. }}{=} \{ \boldsymbol{s} \in P_{!A \multimap A}^{\mathsf{Even}} \mid \forall \boldsymbol{t} \preceq \boldsymbol{s} . \ \! \mathsf{Even}(\boldsymbol{t}) \Rightarrow (\boldsymbol{t} \upharpoonright \ !A) \upharpoonright 0 = \boldsymbol{t} \upharpoonright A \ \! \}.
\end{equation*}
\end{definition}

Note that any `tag' $(\_, i)$ such that $i \in \mathbb{N}$ would work; our choice $(\_, 0)$ does not matter.
\begin{lemma}[Well-defined derelictions]
\label{LemWellDefinedDerelictions}
Given a normalized game $A$, $\mathit{der}_A$ is a valid, innocent, wb, total strategy on $!A \multimap A$. 
It is noetherian if $A$ is well-founded.
\end{lemma}
\begin{proof}
We just show that $\mathit{der}_A$ is noetherian if $A$ is well-founded for the other points are trivial, e.g., validity of $\mathit{der}_A$ is immediate from the definition of $\simeq_{!A \multimap A}$. 
Given $\boldsymbol{s}mm \in \mathit{der}_A$, it is easy to see by induction on $|\boldsymbol{s}|$ that  the P-view $\lceil \boldsymbol{s} m \rceil$ is of the form $m_1 m_1 m_2 m_2 \dots m_k m_k m$, and thus there is a sequence $\star \vdash_A m_1 \vdash_A m_2 \dots \vdash_A m_k \vdash_A m$ of enabling pairs. 
Therefore, if $A$ is well-founded, then $\mathit{der}_A$ must be noetherian.
\end{proof}

Let us proceed to introduce some generalizations of existing constructions.
Note that tensor, pairing and promotion of static strategies have been already generalized slightly because they allow non-normalized dynamic games and strategies.
However, for the game-semantic CCBoC in Section~\ref{DynamicGameSemantics}, we need further generalizations:
\begin{definition}[Generalized pairing of strategies]
\label{DefPairingOfDynamicStrategies}
Given strategies $\phi : L$ and $\psi : R$ such that $\mathcal{H}^\omega(L) \trianglelefteqslant C \multimap A$ and $\mathcal{H}^\omega(R) \trianglelefteqslant C \multimap B$ for some normalized games $A$, $B$ and $C$, the \emph{\bfseries (generalized) pairing} $\langle \phi, \psi \rangle$ of $\phi$ and $\psi$ is defined by:
\begin{equation*}
\langle \phi, \psi \rangle \stackrel{\mathrm{df. }}{=} \{ \boldsymbol{s} \in \mathscr{L}_{\langle L, R \rangle} \mid (\boldsymbol{s} \upharpoonright L \in \phi \wedge \boldsymbol{s} \upharpoonright R = \boldsymbol{\epsilon}) \vee (\boldsymbol{s} \upharpoonright R \in \psi \wedge \boldsymbol{s} \upharpoonright L = \boldsymbol{\epsilon}) \ \! \}.
\end{equation*}
\end{definition}

\begin{theorem}[Well-defined generalized pairing of strategies]
\label{ThmWellDefinedPairingOfDynamicStrategies}
Given strategies $\phi : L$ and $\psi : R$ such that $\mathcal{H}^\omega(L) \trianglelefteqslant C \multimap A$ and $\mathcal{H}^\omega(R) \trianglelefteqslant C \multimap B$ for some normalized games $A$, $B$ and $C$, $\langle \phi, \psi \rangle$ is a strategy on $\langle L, R \rangle$.
If $\phi$ and $\psi$ are innocent (resp. wb, total, noetherian), then so is $\langle \phi, \psi \rangle$.
Given $\phi' : L$ and $\psi' : R$ such that $\phi \simeq_L \phi'$ and $\psi \simeq_R \psi'$, we have $\langle \phi, \psi \rangle \simeq_{\langle L, R \rangle} \langle \phi', \psi' \rangle$.
\end{theorem}
\begin{proof}
Straightforward. 
\end{proof}

\begin{convention*}
Henceforth, \emph{\bfseries pairing of strategies} refers to the generalized one. 
\end{convention*}

\begin{definition}[Generalized promotion of strategies]
\label{DefPromotionOfDynamicStrategies}
Given a strategy $\varphi : G$ such that $\mathcal{H}^\omega(G) \trianglelefteqslant \ !A \multimap B$ for some normalized games $A$ and $B$, the \emph{\bfseries (generalized) promotion} $\varphi^\dagger$ of $\varphi$ is defined by:
\begin{equation*}
\varphi^\dagger \stackrel{\mathrm{df. }}{=} \{ \boldsymbol{s} \in \mathscr{L}_{G^\dagger} \mid \forall i \in \mathbb{N} . \ \! \boldsymbol{s} \upharpoonright i \in \varphi \ \! \}.
\end{equation*}
\end{definition}

\begin{theorem}[Well-defined generalized promotion on strategies]
\label{ThmWellDefinedPromotionOnDynamicStrategies}
Given a strategy $\varphi : G$ such that $\mathcal{H}^\omega(G) \trianglelefteqslant \ !A \multimap B$ for some normalized games $A$ and $B$, $\varphi^\dagger$ is a strategy on $G^\dagger$.
If $\varphi$ is innocent (resp. wb, total, noetherian), then so is $\varphi^\dagger$.
Given $\varphi' : G$ such that $\varphi \simeq_G \varphi'$, we have $\varphi^\dagger \simeq_{G^\dagger} \varphi'^\dagger$.
\end{theorem}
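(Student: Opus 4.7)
The plan is to verify the three claims in turn, following the same thread-wise pattern used for promotion on dynamic games (Theorem~\ref{ThmWellDefinedPromotionOnDynamicGames}) and for the classical case in \cite{abramsky2000full,mccusker1998games}. The guiding observation is that in $G^\dagger$, justification pointers of a non-initial occurrence determine a unique thread index $i$ such that the occurrence belongs to $\bm{s} \upharpoonright i$, and the operation $\bm{s} \mapsto \bm{s} \upharpoonright i$ commutes with views, prefixes, and (by its definition in Definition~\ref{DefPromotionOfDynamicGames}) with the valid-position structure. Hence axioms on $\varphi$ can be transported onto $\varphi^\dagger$ one thread at a time.

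First I would show $\varphi^\dagger : G^\dagger$. Since $\bm{\epsilon} \upharpoonright i = \bm{\epsilon} \in \varphi$ for every $i$, we have $\bm{\epsilon} \in \varphi^\dagger$. For even-prefix-closure of $\bm{s} m n \in \varphi^\dagger$, a switching argument (two consecutive Player-initiated moves must lie in the same thread, mirroring the claim $\diamondsuit$ in the proof of Theorem~\ref{ThmWellDefinedConcatenationOnDynamicStrategies}) shows that $m$ and $n$ share a thread index $i_0$; then $\bm{s} \upharpoonright i \in \varphi$ for every $i$ (using even-prefix-closure of $\varphi$ for $i = i_0$ and $\bm{s} \upharpoonright i = \bm{s} m n \upharpoonright i$ for $i \neq i_0$), so $\bm{s} \in \varphi^\dagger$. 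Determinism is similar: if $\bm{s} m n, \bm{s} m n' \in \varphi^\dagger$, both $n$ and $n'$ lie in the thread $i_0$ of $m$, so $\bm{s} m n \upharpoonright i_0, \bm{s} m n' \upharpoonright i_0 \in \varphi$, and S2 on $\varphi$ forces $n = n'$ with matching justifiers.

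Next, the preservation of innocence, well-bracketing, totality, and noetherianity all reduce to the same thread-wise principle: P-views, question-answer pairings, and enablings in $G^\dagger$ project along $\upharpoonright i$ to the corresponding structures in $G$, so a violation in $\varphi^\dagger$ would produce a violation in $\varphi$ in some thread $i_0$. Noetherianity needs slightly more care: a strictly increasing infinite sequence of P-views in $\varphi^\dagger$ can only grow by extending within a single thread at each step (since P-views are confined to the indices reachable through the justification chain from a fixed initial move), whence it restricts in some thread to an infinite strictly increasing chain of P-views of $\varphi$, contradicting noetherianity of $\varphi$. Finally, the identification statement $\varphi \simeq_G \varphi' \Rightarrow \varphi^\dagger \simeq_{G^\dagger} \varphi'^\dagger$ is obtained by combining the definition of $\simeq_{G^\dagger}$ (Definition~\ref{DefPromotionOfDynamicGames}, which quantifies over a permutation $\varphi \in \mathcal{P}(\mathbb{N})$ of thread indices) with the definition of $\simeq_G$ on strategies (Definition~\ref{DefIdentificationOfStrategies}): given any matching thread $i$, apply $\varphi \simeq_G \varphi'$ to $\bm{s} \upharpoonright i$ and $\bm{t} \upharpoonright \varphi(i)$, and glue the resulting extensions back using the identity permutation on the other threads.

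The main obstacle I anticipate is the switching lemma needed for S1 and S2, namely that two consecutive moves $m n$ of any $\bm{s} m n \in \varphi^\dagger$ share a thread index. This requires a careful case analysis using alternation, the enabling relation of $G^\dagger$ (which links threads only through initial moves of $B$), and the IE-switch axiom on the underlying dynamic game, and it is the only place where the dynamic-specific axioms intervene non-trivially; once it is in hand, the remaining verifications are routine.
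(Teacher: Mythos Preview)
Your proposal is correct and is exactly the standard thread-wise argument the paper has in mind; the paper's own proof is merely the single word ``Straightforward.'' One minor remark: the switching claim you flag as the main obstacle is in fact just visibility in $G^\dagger$ (the P-view of $\bm{s}m$ lies in the thread of $m$, so the justifier of $n$ and hence $n$ itself does too), and the IE-switch axiom is not actually needed there.
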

\begin{proof}
Straightforward. 
\end{proof}

\begin{convention*}
Henceforth, \emph{\bfseries promotion of strategies} refers to the generalized one. 
\end{convention*}

Next, let us introduce a new construction on strategies, which plays a fundamental role in the present work:
\begin{definition}[Concatenation of strategies]
\label{DefConcatenationOnDynamicStrategies}
Let $\iota : J$ and $\kappa : K$ be strategies such that $\mathcal{H}^\omega(J) \trianglelefteqslant A \multimap B$ and $\mathcal{H}^\omega(K) \trianglelefteqslant B \multimap C$ for some normalized games $A$, $B$ and $C$. 
The \emph{\bfseries concatenation} $\iota \ddagger \kappa$ of $\iota$ and $\kappa$ is defined by: 
\begin{equation*}
\iota \ddagger \kappa \stackrel{\mathrm{df. }}{=} \{ \boldsymbol{s} \in \mathscr{J}_{J \ddagger K} \mid \boldsymbol{s} \upharpoonright J \in \iota, \boldsymbol{s} \upharpoonright K \in \kappa, \boldsymbol{s} \upharpoonright B_{[1]}, B_{[2]} \in \mathit{pr}_B \ \! \}.
\end{equation*}
\end{definition}

\begin{theorem}[Well-defined concatenation of strategies]
\label{ThmWellDefinedConcatenationOnDynamicStrategies}
Let $\iota : J$ and $\kappa : K$ be strategies such that $\mathcal{H}^\omega(J) \trianglelefteqslant A \multimap B$ and $\mathcal{H}^\omega(K) \trianglelefteqslant B \multimap C$, where $A$, $B$ and $C$ are normalized games. 
Then, $\iota \ddagger \kappa : J \ddagger K$ and $\mathcal{H}^\omega(\iota) ; \mathcal{H}^\omega(\kappa) = \mathcal{H}^\omega(\iota \ddagger \kappa) : A \multimap C$, where $\mathcal{H}^\omega(\iota) ; \mathcal{H}^\omega(\kappa)$ is the \emph{composition} of $\mathcal{H}^\omega(\iota) : A \multimap B$ and $\mathcal{H}^\omega(\kappa) : B \multimap C$ \cite{abramsky1999game}.
If $\iota$ and $\kappa$ are innocent (resp. wb, noetherian, winning), then so is $\iota \ddagger \kappa$.
Given $\iota' : J$ and $\kappa' : K$ with $\iota \simeq_J \iota'$ and $\kappa \simeq_K \kappa'$, we have $\iota \ddagger \kappa \simeq_{J \ddagger K} \iota' \ddagger \kappa'$.
\end{theorem}

\begin{proof}
We just show the first statement as the other ones are straightforward. 
It then suffices to prove $\iota \ddagger \kappa : J \ddagger K$ and $\mathcal{H}^\omega(\iota \ddagger \kappa) = \iota ; \kappa$ since it implies $\iota ; \kappa = \mathcal{H}^\omega(\iota \ddagger \kappa) : \mathcal{H}^\omega(J \ddagger K) \trianglelefteqslant A \multimap C$ by Lemmata~\ref{LemHidingLemmaOnDynamicGames} and \ref{ThmHidingTheorem}.
However, $\mathcal{H}^\omega(\iota \ddagger \kappa) = \iota ; \kappa$ is immediate from the definition of concatenation; thus, we focus on $\iota \ddagger \kappa : J \ddagger K$.

First, we have $\iota \ddagger \kappa \subseteq P_{J \ddagger K}$ as any $\boldsymbol{s} \in \iota \ddagger \kappa$ satisfies $\boldsymbol{s} \in \mathscr{J}_{J \ddagger K}$, $\boldsymbol{s} \upharpoonright J \in \iota \subseteq P_J$, $\boldsymbol{s} \upharpoonright K \in \kappa \subseteq P_K$ and $\boldsymbol{s} \upharpoonright B_{[1]}, B_{[2]} \in \mathit{pr}_B$. It is also immediate that such $\boldsymbol{s}$ is of even-length. 
It remains to verify the axioms S1 and S2. For this, we need:
\begin{center}
($\diamondsuit$) Each $\boldsymbol{s} \in \iota \ddagger \kappa$ consists of adjacent pairs $mn$ such that $m, n \in M_J$ or $m, n \in M_K$.
\end{center}
\begin{proof}[Proof of the claim $\diamondsuit$] By induction on $|\boldsymbol{s}|$. The base case is trivial. For the inductive step, let $\boldsymbol{s} m n \in \iota \ddagger \kappa$. If $m \in M_J$, then $(\boldsymbol{s} \upharpoonright J) . m . (n \upharpoonright J) \in \sigma$, where $\boldsymbol{s} \upharpoonright J$ is of even-length by the induction hypothesis. Thus, we must have $n \in M_J$. If $m \in M_K$, then $n \in M_K$ by the same argument.
\end{proof}

\begin{itemize}

\item \textsc{(S1).} Since $\boldsymbol{\epsilon} \in \iota \ddagger \kappa$, we have $\iota \ddagger \kappa \neq \emptyset$. For even-prefix-closure, assume $\boldsymbol{s} m n \in \iota \ddagger \kappa$. By the claim $\diamondsuit$, either $m, n \in M_J$ or $m, n \in M_K$. In either case, it is straightforward to see that $\boldsymbol{s} \in P_{J \ddagger K}$, $\boldsymbol{s} \upharpoonright J \in \iota$, $\boldsymbol{s} \upharpoonright K \in \kappa$ and $\boldsymbol{s} \upharpoonright B_{[1]}, B_{[2]} \in \mathit{pr}_B$, i.e., $\boldsymbol{s} \in \iota \ddagger \kappa$.

\item \textsc{(S2).} Assume $\boldsymbol{s} m n, \boldsymbol{s} m n' \in \iota \ddagger \kappa$. 
By the claim $\diamondsuit$, either $m, n, n' \in M_J$ or $m, n, n' \in M_K$. In the former case, 
$(\boldsymbol{s} \upharpoonright J) . m n, (\boldsymbol{s} \upharpoonright J) . m n' \in \iota$.
Thus, $n = n'$ and $\mathcal{J}_{\boldsymbol{s}mn}(n) = \mathcal{J}_{(\boldsymbol{s} \upharpoonright J).mn}(n) =\mathcal{J}_{(\boldsymbol{s} \upharpoonright J).mn'}(n') = \mathcal{J}_{\boldsymbol{s}mn'}(n')$ by S2 on $\iota$, where note that $n$ and $n'$ are both P-moves and thus non-initial in $J$. The latter case may be handled similarly.
\end{itemize}
Therefore, we have shown that $\iota \ddagger \kappa : J \ddagger K$.
\if0
Finally, we establish preservation of winning under concatenation. 
Assume that $\sigma$ and $\tau$ are winning; then, since $\sigma$ and $\tau$ are in particular noetherian, any play of the concatenation $\sigma \ddagger \tau$ is finite as shown in \cite{clairambault2010totality}.
The non-trivial case is a response to an internal O-move in the play; however, by Definition~\ref{DefTotalityOfDynamicStrategies}, there must be a desired response by a P-move not in $M_{B_{[1]}} \cup M_{B_{[2]}}$ since otherwise the play would had been infinite by the `infinite chattering'.
Thus, we have shown that $\sigma \ddagger \tau$ is total.
Preservation of innocence and noetherianity is immediate. 
\fi
\end{proof}

Note that totality of (dynamic) strategies is \emph{not} preserved under composition, but it is preserved under concatenation.
This phenomenon is essentially because totality is not preserved under the hiding operation as already remarked above.

For completeness, let us explicitly define the rather trivial \emph{currying} of strategies: 
\begin{definition}[Currying of strategies]
\label{DefCurryingOfDynamicStrategies}
Given $\sigma : G$ with $\mathcal{H}^\omega(G) \trianglelefteqslant A \otimes B \multimap C$ for some normalized  games $A$, $B$ and $C$, the \emph{\bfseries currying} $\Lambda(\sigma) : \Lambda(G)$ of $\sigma$ is $\sigma$ up to `tags'.
\end{definition}

\begin{proposition}[Well-defined currying of strategies]
\label{PropWellDefinedCurryingOnDynamicStrategies}
Strategies are closed under currying, and currying preserves totality, innocence, well-bracketing, noetherianity and identification of strategies. 
\end{proposition}
\begin{proof}
Obvious. 
\end{proof}

Now, as in the case of games, we establish the \emph{hiding lemma} on strategies (Lemma~\ref{LemHidingLemmaOnDynamicStrategies}).
We first need the following:
\begin{lemma}[Hiding on legal positions in the second form]
\label{LemHidingOnLegalPositionsInSecondForm}
For any arena $G$ and number $d \in \mathbb{N} \cup \{ \omega \}$, we have $\mathscr{L}_{\mathcal{H}^d(G)} = \{ \boldsymbol{s} \natural \mathcal{H}_G^d \mid \boldsymbol{s} \in \mathscr{L}_G \ \! \}$.
\end{lemma}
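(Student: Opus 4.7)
The plan is to establish both inclusions separately, leaning on Corollary~\ref{CoroHidingOnDynamicLegalPositions} (which already gives $\{ \mathcal{H}_G^d(\bm{s}) \mid \bm{s} \in \mathscr{L}_G \} = \mathscr{L}_{\mathcal{H}^d(G)}$) and on the observation that dynamic legal positions are prefix-closed, since alternation, generalized visibility, and IE-switch are all preserved by taking prefixes. It also suffices, as in previous hiding lemmas, to handle only the case $d \in \mathbb{N}$ because $\mu(G) \in \mathbb{N}$.

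For the inclusion $\{ \bm{s} \natural \mathcal{H}_G^d \mid \bm{s} \in \mathscr{L}_G \} \subseteq \mathscr{L}_{\mathcal{H}^d(G)}$, fix $\bm{s} \in \mathscr{L}_G$ and split on whether $\bm{s}$ is $d$-complete. If it is, then $\bm{s} \natural \mathcal{H}_G^d = \mathcal{H}_G^d(\bm{s})$, which lies in $\mathscr{L}_{\mathcal{H}^d(G)}$ by Corollary~\ref{CoroHidingOnDynamicLegalPositions}. Otherwise, decompose $\bm{s} = \bm{s}'\,m_0\,m_1 \cdots m_k$ where $m_0$ is external or $d'$-internal with $d' > d$, while $m_1, \ldots, m_k$ are all $d''$-internal with $0 < d'' \leq d$; by definition of the hiding operation on j-sequences, $\mathcal{H}_G^d(\bm{s}) = \mathcal{H}_G^d(\bm{s}')\,m_0$, so $\bm{s} \natural \mathcal{H}_G^d = \mathcal{H}_G^d(\bm{s}')$. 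Since $\bm{s}' \preceq \bm{s}$ and $\mathscr{L}_G$ is prefix-closed, $\bm{s}' \in \mathscr{L}_G$, and Corollary~\ref{CoroHidingOnDynamicLegalPositions} again places $\mathcal{H}_G^d(\bm{s}')$ in $\mathscr{L}_{\mathcal{H}^d(G)}$.

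For the reverse inclusion, given $\bm{t} \in \mathscr{L}_{\mathcal{H}^d(G)}$, it suffices to exhibit a $d$-complete $\bm{s} \in \mathscr{L}_G$ with $\mathcal{H}_G^d(\bm{s}) = \bm{t}$, because then $\bm{s} \natural \mathcal{H}_G^d = \mathcal{H}_G^d(\bm{s}) = \bm{t}$. I will construct such an $\bm{s}$ by induction on $|\bm{t}|$, following the scheme already used in the proof of Corollary~\ref{CoroHidingOnDynamicLegalPositions}. The base case $\bm{t} = \bm{\epsilon}$ uses $\bm{s} = \bm{\epsilon}$, which is vacuously $d$-complete. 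For the inductive step $\bm{t}\,m$, take the $d$-complete preimage $\bm{s}$ of $\bm{t}$ given by the hypothesis and case-split on whether the justifier of $m$ already enables $m$ in $G$ directly, or only through a chain $n \vdash_G n_1 \vdash_G \cdots \vdash_G n_{2k} \vdash_G m$ of internal intermediaries with $0 < \lambda_G^{\mathbb{N}}(n_i) \leq d$ (arising from the definition of $\vdash_{\mathcal{H}^d(G)}$); set $\bm{s}^* \stackrel{\mathrm{df.}}{=} \bm{s}\,m$ or $\bm{s}^* \stackrel{\mathrm{df.}}{=} \bm{s}\,n_1 \cdots n_{2k}\,m$ accordingly, with the obvious pointers. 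In either case $\bm{s}^*$ ends at $m$, which is external or $d'$-internal with $d' > d$, so $\bm{s}^*$ is $d$-complete, and $\mathcal{H}_G^d(\bm{s}^*) = \bm{t}\,m$ by construction.

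The main obstacle is verifying that each $\bm{s}^*$ produced in the inductive step genuinely lies in $\mathscr{L}_G$, i.e.\ satisfies alternation, generalized visibility, and IE-switch. This is essentially the same case analysis carried out in the proof of Corollary~\ref{CoroHidingOnDynamicLegalPositions}: alternation and IE-switch follow from the axioms E3 and E4 on $G$ together with the fact that the $n_i$'s form an even-length chain of internal moves; generalized visibility at degrees $\geq d$ reduces to the visibility of $\bm{t}\,m$ in $\mathcal{H}^d(G)$, while at smaller degrees the inserted internal chain $n_1 \cdots n_{2k}$ sits in the view by construction. Once these verifications are made, the two inclusions combine to give the claimed equality.
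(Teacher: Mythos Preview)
Your argument is correct, but for the reverse inclusion you work harder than necessary by redoing the inductive construction of a $d$-complete preimage from the proof of Corollary~\ref{CoroHidingOnDynamicLegalPositions}. The paper takes a shorter route: it observes, via a chain of equalities, that
\[
\{\,\bm{s}\natural\mathcal{H}_G^d \mid \bm{s}\in\mathscr{L}_G\,\}
=\{\,\bm{s}\natural\mathcal{H}_G^d \mid \bm{s}\in\mathscr{L}_G,\ \bm{s}\text{ $d$-complete}\,\}
=\{\,\mathcal{H}_G^d(\bm{s}) \mid \bm{s}\in\mathscr{L}_G,\ \bm{s}\text{ $d$-complete}\,\}
=\{\,\mathcal{H}_G^d(\bm{s}) \mid \bm{s}\in\mathscr{L}_G\,\},
\]
and then invokes Corollary~\ref{CoroHidingOnDynamicLegalPositions} once. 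The first and third equalities both rest on the same prefix-closure observation you already used for your forward inclusion: if $\bm{s}\in\mathscr{L}_G$ is not $d$-complete, truncate it to its longest $d$-complete prefix $\bm{s}'$; then $\bm{s}'\in\mathscr{L}_G$ and $\mathcal{H}_G^d(\bm{s}')=\mathcal{H}_G^d(\bm{s})$, so neither $\natural$ nor $\mathcal{H}_G^d$ sees the difference. In particular, for the reverse inclusion you never need to rebuild a preimage by induction: take any preimage $\bm{s}$ supplied by Corollary~\ref{CoroHidingOnDynamicLegalPositions} and pass to its $d$-complete prefix. This spares you the re-verification of alternation, generalized visibility, and IE-switch for the inserted internal chains (which, incidentally, in Corollary~\ref{CoroHidingOnDynamicLegalPositions} is carried out only for $d=1$ and would otherwise need either iteration via the stepwise hiding lemmas or a direct extension to arbitrary $d$).
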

\begin{proof}
Observe that:
\begin{align*}
\{ \boldsymbol{s} \natural \mathcal{H}_G^d \mid \boldsymbol{s} \in \mathscr{L}_G \ \! \} &= \{ \boldsymbol{s} \natural \mathcal{H}_G^d \mid \boldsymbol{s} \in \mathscr{L}_G, \text{$\boldsymbol{s}$ is $d$-complete} \ \! \} \\ 
&= \{ \mathcal{H}_G^d(\boldsymbol{s}) \mid \boldsymbol{s} \in \mathscr{L}_G, \text{$\boldsymbol{s}$ is $d$-complete} \ \! \} \\
&= \{ \mathcal{H}_G^d(\boldsymbol{s}) \mid \boldsymbol{s} \in \mathscr{L}_G \ \! \} \ \text{(by the same argument as above)} \\
&= \mathscr{L}_{\mathcal{H}^d(G)} \ \text{(by Corollary~\ref{CoroHidingOperationOnDynamicLegalPositions})}
\end{align*}
completing the proof. 
\end{proof}

\begin{notation*}
We write $\spadesuit_{i \in I}$, where $I$ is $\{ 1 \}$ or $\{ 1, 2 \}$, for any of the constructions on strategies introduced so far, i.e., $\spadesuit_{i \in I}$ is either $\otimes$, $(\_)^\dagger$,  $\langle \_, \_ \rangle$, $\ddagger$, $;$ or $\Lambda$. 
\end{notation*}

\begin{lemma}[Hiding lemma on strategies]
\label{LemHidingLemmaOnDynamicStrategies}
Let $\spadesuit_{i \in I}$ be a construction on strategies, and $\sigma_i : G_i$ for each $i \in I$. 
Then, for all $d \in \mathbb{N} \cup \{ \omega \}$, we have:
\begin{enumerate}

\item $\mathcal{H}^d(\spadesuit_{i \in I} \sigma_i) = \spadesuit_{i \in I} \mathcal{H}^d(\sigma_i)$ if $\spadesuit_{i \in I}$ is $\otimes$, $(\_)^\dagger$, $\langle \_, \_ \rangle$ or $\Lambda$;

\item $\mathcal{H}^d(\sigma_1 \ddagger \sigma_2) = \mathcal{H}^d(\sigma_1) \ddagger \mathcal{H}^d(\sigma_2)$ if $\mathcal{H}^d(\sigma_1 \ddagger \sigma_2)$ is not normalized; 

\item $\mathcal{H}^d(\sigma_1 \ddagger \sigma_2) = \mathcal{H}^d(\sigma_1) ; \mathcal{H}^d(\sigma_2)$ otherwise.

\end{enumerate}
\end{lemma}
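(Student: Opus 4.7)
The overall plan is to reduce every assertion to the case $d=1$ and then verify the remaining equations by unfolding the defining projection/restriction conditions of each construction, using the hiding lemma on dynamic games (Lemma~\ref{LemHidingLemmaOnDynamicGames}) and Lemma~\ref{LemHidingAndCompletePositions}. Since every dynamic game carries a finite bound $\mu(\cdot) \in \mathbb{N}$ on degrees of internality, one may assume $d \in \mathbb{N}$ throughout, and the stepwise identities $\mathcal{H}^d = \mathcal{H} \circ \mathcal{H}^{d-1}$ (both for games and for strategies) reduce the problem to $d=1$ by a straightforward induction on $d$; in the inductive step for $\ddagger$, one then re-invokes either item 2 or item 3 depending on whether $\mathcal{H}(\mathcal{H}^{d-1}(\sigma_1 \ddagger \sigma_2))$ remains non-normalized or becomes normalized.

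For $\otimes$, $(\_)^\dagger$, and $\langle\_,\_\rangle$ the case $d=1$ is then a direct unfolding: each such strategy $\spadesuit_{i \in I} \sigma_i$ is presented as the set of legal positions whose restrictions to the component subgames lie in the corresponding $\sigma_i$. The key identity is that restriction commutes with the point-wise hiding operation, i.e.\ $(\bm{s} \natural \mathcal{H}^1_{\spadesuit_{i \in I} G_i}) \upharpoonright G_j = (\bm{s} \upharpoonright G_j) \natural \mathcal{H}^1_{G_j}$ for each $j \in I$, which follows from Lemma~\ref{LemPointWiseLemmaForHidingOnJSequences} together with Lemma~\ref{LemHidingLemmaOnDynamicGames}. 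Double inclusion then yields the equality; the preservation of $d$-complete/odd-tail bookkeeping encoded by $\natural$ is handled by Lemma~\ref{LemHidingAndCompletePositions}, which lets us always represent elements of $\mathcal{H}^1(\spadesuit_{i \in I}\sigma_i)$ via $1$-complete positions in $\spadesuit_{i \in I}\sigma_i$.

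The concatenation case is the main obstacle, and I split it along the normalization dichotomy. If $\mathcal{H}^1(J \ddagger K)$ is \emph{not} normalized, then $\mu(J \ddagger K) = \mathsf{Max}(\mu(J),\mu(K))+1 > 1$, so hiding at level $1$ does not touch the bridge moves of $B^{[1]}, B^{[2]}$ (whose degree of internality is $\geq \mu \geq 2$) and acts only on moves interior to $J$ or $K$. Consequently $\mathcal{H}^1_{J \ddagger K}(\bm{s}) \upharpoonright B^{[1]}, B^{[2]} = \bm{s} \upharpoonright B^{[1]}, B^{[2]}$ and $\mathcal{H}^1_{J \ddagger K}(\bm{s}) \upharpoonright J = \mathcal{H}^1_J(\bm{s} \upharpoonright J)$ (similarly for $K$), which together with the restriction-commutation identity already used for $\otimes$ yields $\mathcal{H}^1(\sigma_1 \ddagger \sigma_2) = \mathcal{H}^1(\sigma_1) \ddagger \mathcal{H}^1(\sigma_2)$. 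If, on the other hand, $\mathcal{H}^1(J \ddagger K)$ \emph{is} normalized, then necessarily $\mu(J) = \mu(K) = 0$, so $J,K$ are already normalized, $\mathcal{H}^1(\sigma_i) = \sigma_i$, and $\mathcal{H}^1_{J \ddagger K} = \mathcal{H}^\omega_{J \ddagger K}$; the claimed equality then reduces to the very definition $\sigma_1 ; \sigma_2 \stackrel{\mathrm{df. }}{=} \mathcal{H}^\omega(\sigma_1 \ddagger \sigma_2)$.

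Finally, the composition case of item~1 is a direct corollary: $\sigma_1 ; \sigma_2 = \mathcal{H}^\omega(\sigma_1 \ddagger \sigma_2)$ is already normalized, so $\mathcal{H}^d(\sigma_1 ; \sigma_2) = \sigma_1 ; \sigma_2$, whereas $\mathcal{H}^d(\sigma_1) ; \mathcal{H}^d(\sigma_2) = \mathcal{H}^\omega(\mathcal{H}^d(\sigma_1) \ddagger \mathcal{H}^d(\sigma_2))$, and applying the already-established $\omega$-hiding version of items~2/3 to $\sigma_1 \ddagger \sigma_2$ identifies these two strategies. I expect the main conceptual difficulty to lie in the bookkeeping of degrees of internality in $J \ddagger K$ and in checking that the restriction/hiding commutation respects the external-justifier reparenting $\mathcal{J}_{\bm{s}}^{\circleddash d}$ when pointers cross between components $J$ and $K$ through bridge moves; the axioms DP2 and IE-switch, together with Theorem~\ref{ThmExternalConsistency}, are the tools I anticipate needing for that bookkeeping.
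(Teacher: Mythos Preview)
Your proposal is correct and follows essentially the same approach as the paper: reduce to $d=1$ via stepwise hiding, then verify each construction by unfolding the defining restriction conditions and showing the two inclusions, with the concatenation case split along the normalization dichotomy exactly as you describe. The paper's proof is in fact terser---it works out only the pairing case explicitly and declares the others analogous---and it does not end up needing DP2, IE-switch, or Theorem~\ref{ThmExternalConsistency} for the pointer bookkeeping you anticipate, since the commutation of restriction with hiding (via Lemma~\ref{LemPointWiseLemmaForHidingOnJSequences} and Lemma~\ref{LemHidingOnLegalPositionsInSecondForm}) already handles justifiers correctly.
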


\begin{proof}
As in the case of games, it suffices to assume $d = 1$.
Here, we just focus on pairing since the other constructions may be handled analogously.

Let $\sigma_i : G_i$, $i = 1, 2$, be strategies such that $\mathcal{H}^\omega(G_1) \trianglelefteqslant C \multimap A$, $\mathcal{H}^\omega(G_2) \trianglelefteqslant C \multimap B$ for some normalized games $A$, $B$ and $C$. 
For $\mathcal{H}(\langle \sigma_1, \sigma_2 \rangle) \subseteq \langle \mathcal{H}(\sigma_1), \mathcal{H}(\sigma_2) \rangle$, observe that:
\begin{align*}
\boldsymbol{s} \in \mathcal{H}(\langle \sigma_1, \sigma_2 \rangle) &\Rightarrow \exists \boldsymbol{t} \in \langle \sigma_1, \sigma_2 \rangle . \ \! \boldsymbol{t} \natural \mathcal{H}^1_{\langle G_1, G_2 \rangle} = \boldsymbol{s} \\
&\Rightarrow \exists \boldsymbol{t} \in \mathscr{L}_{\langle G_1, G_2 \rangle} . \ \! \boldsymbol{t} \natural \mathcal{H}^1_{\langle G_1, G_2 \rangle} = \boldsymbol{s} \wedge ((\boldsymbol{t} \upharpoonright G_1 \in \sigma_1 \wedge \boldsymbol{t} \upharpoonright G_2 = \boldsymbol{\epsilon}) \vee (\boldsymbol{t} \upharpoonright G_2 \in \sigma_2 \wedge \boldsymbol{t} \upharpoonright G_1 = \boldsymbol{\epsilon})) \\
&\Rightarrow \boldsymbol{s} \in \mathscr{L}_{\mathcal{H}(\langle G_1, G_2 \rangle)} \wedge (\boldsymbol{s} \upharpoonright \mathcal{H}(G_1) \in \mathcal{H}(\sigma_1) \wedge \boldsymbol{s} \upharpoonright \mathcal{H}(G_2) = \boldsymbol{\epsilon}) \\ & \ \ \ \ \vee (\boldsymbol{s} \upharpoonright \mathcal{H}(G_2) \in \mathcal{H}(\sigma_2) \wedge \boldsymbol{s} \upharpoonright \mathcal{H}(G_1) = \boldsymbol{\epsilon})) \ \text{(by Lemma~\ref{LemHidingOnLegalPositionsInSecondForm})} \\
&\Rightarrow \boldsymbol{s} \in \langle \mathcal{H}(\sigma_1), \mathcal{H}(\sigma_2) \rangle.
\end{align*}
Next, we show the converse:
\begin{align*}
\boldsymbol{s} \in \langle \mathcal{H}(\sigma_1), \mathcal{H}(\sigma_2) \rangle &\Rightarrow \boldsymbol{s} \in \mathscr{L}_{\mathcal{H}(\langle G_1, G_2 \rangle)} \wedge (\boldsymbol{s} \upharpoonright \mathcal{H}(G_1) \in \mathcal{H}(\sigma_1) \wedge \boldsymbol{s} \upharpoonright \mathcal{H}(G_2) = \boldsymbol{\epsilon}) \\ & \ \ \ \ \vee (\boldsymbol{s} \upharpoonright \mathcal{H}(G_2) \in \mathcal{H}(\sigma_2) \wedge \boldsymbol{s} \upharpoonright \mathcal{H}(G_1) = \boldsymbol{\epsilon})) \\
&\Rightarrow (\exists \boldsymbol{u} \in \sigma_1 . \ \! \boldsymbol{u} \natural \mathcal{H}_{G_1}^1 = \boldsymbol{s} \upharpoonright \mathcal{H}(G_1) \wedge \boldsymbol{u} \upharpoonright G_2 = \boldsymbol{\epsilon}) \\ & \ \ \ \ \vee (\exists \boldsymbol{v} \in \sigma_2 . \ \! \boldsymbol{v} \natural \mathcal{H}_{G_2}^1 = \boldsymbol{s} \upharpoonright \mathcal{H}(G_2) \wedge \boldsymbol{v} \upharpoonright \mathcal{H}(G_1) = \boldsymbol{\epsilon}) \\
&\Rightarrow \exists \boldsymbol{w} \in \langle \sigma_1, \sigma_2 \rangle . \ \! \boldsymbol{w} \natural \mathcal{H}_{\langle G_1, G_2 \rangle}^1 = \boldsymbol{s} \\
&\Rightarrow \boldsymbol{s} \in \mathcal{H}(\langle \sigma_1, \sigma_2 \rangle)
\end{align*}
which completes the proof. 
\if0
Next, for promotion, let $\psi : J$ be a strategy. Then we have:
\begin{align*}
\mathcal{H}(\psi^{\dagger}) &= \{ \boldsymbol{s} \natural \mathcal{H}_{!J}^1 \ \! | \! \ \boldsymbol{s} \in \psi^\dagger \ \! \} \\
&= \{ \boldsymbol{s} \natural \mathcal{H}_{J}^1 \ \! | \! \ \boldsymbol{s} \in L_{J}, \boldsymbol{s} \upharpoonright m \in \psi \ \text{for all initial $m$}\ \! \} \\
&\subseteq \{ \boldsymbol{s} \natural \mathcal{H}_{J}^1 \ \! | \! \ \boldsymbol{s} \in L_J, (\boldsymbol{s} \upharpoonright m) \natural \mathcal{H}_J^1 \in \mathcal{H}(\psi) \ \text{for all initial $m$}\ \! \} \\
&= \{ \boldsymbol{s} \natural \mathcal{H}_J^1 \ \! | \! \ \boldsymbol{s} \in L_J, (\boldsymbol{s} \natural \mathcal{H}_J^1) \upharpoonright m \in \mathcal{H}(\psi) \ \text{for all initial $m$}\ \! \} \\
&= \{ \boldsymbol{t} \in L_{\mathcal{H}(J)} \ \! | \! \ \boldsymbol{t} \upharpoonright m \in \mathcal{H}(\psi) \ \text{for all initial $m$} \ \! \} \ \text{(by Lemma \ref{LemHidingOnLegalPositionsInSecondForm})} \\
&= \mathcal{H}(\psi)^{\dagger}.
\end{align*}

For the opposite inclusion, observe the following: 
\begin{align*}
\boldsymbol{s} \in \mathcal{H}(\psi)^\dagger &\Rightarrow \boldsymbol{s} \in L_{\mathcal{H}(J)} \wedge \forall m \in \mathsf{InitOcc}(\boldsymbol{s}) . \ \! \boldsymbol{s} \upharpoonright m \in \mathcal{H}(\psi) \\
&\Rightarrow \boldsymbol{s} \in L_{\mathcal{H}(J)} \wedge \forall m \in \mathsf{InitOcc}(\boldsymbol{s}) . \ \! \exists \boldsymbol{t}_m \in \psi. \ \! \boldsymbol{t}_m \natural \mathcal{H}_J^1 = \boldsymbol{s} \upharpoonright m \\
&\Rightarrow \exists \boldsymbol{t} \in \psi^\dagger. \ \! \boldsymbol{t} \natural \mathcal{H}_{!J}^1 = \boldsymbol{s} \\
&\Rightarrow \boldsymbol{s} \in \mathcal{H}(\psi^\dagger).
\end{align*}
\fi
\end{proof}


Finally, as a technical preparation for the next section, let us define:
\begin{definition}[Dereliction games]
\label{DefDerelictionGames}
The \emph{\bfseries dereliction game} on a game $G$ is the subgame $\varXi_G \trianglelefteqslant G \Rightarrow G$ given by $M_{\varXi_G} \stackrel{\mathrm{df. }}{=} M_{G \Rightarrow G}$, $\lambda_{\varXi_G} \stackrel{\mathrm{df. }}{=} \lambda_{G \Rightarrow G}$, $\vdash_{\varXi_G} \stackrel{\mathrm{df. }}{=} \ \! \vdash_{G \Rightarrow G}$, $P_{\varXi_G} \stackrel{\mathrm{df. }}{=} \{ \boldsymbol{s} \in P_{G_{[0]} \Rightarrow G_{[1]}} \mid \forall \boldsymbol{t} \preceq \boldsymbol{s} . \ \! \mathsf{Even}(\boldsymbol{t}) \Rightarrow \boldsymbol{t} \upharpoonright G_{[0]} = \boldsymbol{t} \upharpoonright G_{[1]} \}$, and $\simeq_{\varXi_G} \stackrel{\mathrm{df. }}{=} \ \! \simeq_{G \Rightarrow G} \ \upharpoonright P_{\varXi_G} \times P_{\varXi_G}$.
Given normalized games $A$ and $B$, we define:
\begin{itemize}

\item $\varPi^{A, B}_1 \trianglelefteqslant A \& B \Rightarrow A$ to be $\varXi_A$ up to `tags', where we often abbreviate it as $\varPi_1$;

\item $\varPi^{A, B}_2 \trianglelefteqslant A \& B \Rightarrow B$ to be $\varXi_B$ up to `tags', where we often abbreviate it as $\varPi_2$;

\item $\varUpsilon_{A, B} \trianglelefteqslant B^A \& A \Rightarrow B$ to be $\varXi_{A \Rightarrow B}$ up to `tags', where we often abbreviate it as $\varUpsilon$.

\end{itemize}
\end{definition}

That is, the dereliction game $\varXi_G$ on a game $G$ is the subgame of $G \Rightarrow G$, in which only plays by the dereliction $\mathit{der}_G$ are possible.

\begin{lemma}[D-lemma]
\label{LemDerelictionLemma}
Given normalized games $A$, $B$, $C$, $L \trianglelefteqslant C \Rightarrow A$, $R \trianglelefteqslant C \Rightarrow B$, $P \trianglelefteqslant C \Rightarrow A \& B$, $U \trianglelefteqslant A \& B \Rightarrow C$ and $V \trianglelefteqslant A \Rightarrow C^B$, we have:
\begin{align*}
\langle L, R \rangle^\dagger ; \varPi^{A, B}_1 &= L \\
\langle L, R \rangle^\dagger ; \varPi^{A, B}_2 &= R \\
\langle P^\dagger ; \varPi^{A, B}_1, P^\dagger ; \varPi^{A, B}_2 \rangle &= P \\
\langle (\varPi^{A, B}_1)^\dagger ; \Lambda(U), \varPi^{A, B}_2 \rangle^\dagger ; \varUpsilon_{B, C} &= U \\
\Lambda(\langle (\varPi^{A, B}_1)^\dagger ; V, \varPi^{A, B}_2 \rangle^\dagger ; \varUpsilon_{B, C}) &= V.
\end{align*}
\end{lemma}
\begin{proof}
Straightforward.
\end{proof}

\section{Dynamic Game Semantics of Finitary PCF}
\label{DynamicGameSemantics}
This section is the climax of the present work. 
We first define a game-semantic CCBoC $\mathcal{LDG}$ (Definition~\ref{DefCCBoCCDG}) and a standard structure $\mathcal{S_G}$ for FPCF in $\mathcal{LDG}$ (Definition~\ref{DefGameSemanticStructureForFPCF}) in Section~\ref{DynamicGameSemanticsOfFPCF}. 
Then, as the main result, we show that the induced interpretation $\llbracket \_ \rrbracket_{\mathcal{LDG}}^{\mathcal{S_G}}$ satisfies the PDCP (Theorem~\ref{ThmDynamicTheorem}), and thus the DCP by Theorem~\ref{ThmDynamicSemanticsOfSystemTVartheta}, in Section~\ref{MainResult}, giving the first instance of dynamic game semantics.

\subsection{Dynamic Game Semantics of Finitary PCF}
\label{DynamicGameSemanticsOfFPCF}
Let us give the CCBoC $\mathcal{LDG}$ of dynamic games and strategies:
\begin{definition}[The CCBoC $\boldsymbol{\mathcal{LDG}}$]
\label{DefCCBoCCDG}
The CCBoC $\mathcal{LDG} = (\mathcal{LDG}, \mathcal{H})$ is defined by:
\begin{itemize}

\item Objects are normalized, well-founded games;

\item A $\beta$-morphisms $A \to B$ is a pair $(J, [\phi]_{\mathsf{W}})$ of a game $J$ such that $\mathcal{H}^\omega(J) \trianglelefteqslant A \Rightarrow B$ and the equivalence class $[\phi]_{\mathsf{W}} \stackrel{\mathrm{df. }}{=} \{ \psi : J \mid \text{$\psi$ is winning}, \psi \simeq_J \phi \ \! \}$ of a valid, winning strategy $\phi : J$; 

\item The $\beta$-composition $A \stackrel{(J, [\phi]_{\mathsf{W}})}{\to} B \stackrel{(K, [\psi]_{\mathsf{W}})}{\to} C$ is the pair $(J^\dagger \ddagger K, [\phi^\dagger \ddagger \psi]_{\mathsf{W}})$;

\item The $\beta$-identity $\mathit{id}_A : A \to A$ on each object $A$ is the pair $(\varXi_A, [\mathit{der}_A]_{\mathsf{W}})$;

\item The evaluation $\mathcal{H}$ maps morphisms $(J, [\phi]_{\mathsf{W}}) : A \to B$ to $\mathcal{H}(J, [\phi]_{\mathsf{W}}) \stackrel{\mathrm{df. }}{=} (\mathcal{H}(J), [\mathcal{H}(\phi)]_{\mathsf{W}})$;

\item The $\beta$-terminal object is the terminal game $T$ (Example~\ref{ExTerminalGame});

\item $\beta$-product and $\beta$-exponential are respectively given by $A \times B \stackrel{\mathrm{df. }}{=} A \& B$ and $B^A \stackrel{\mathrm{df. }}{=} A \Rightarrow B = \ !A \multimap B$ for any objects $A, B \in \mathcal{LDG}$;

\item $\beta$-pairing is given by $\langle (L, [\alpha]_{\mathsf{W}}), (R, [\beta]_{\mathsf{W}}) \rangle \stackrel{\mathrm{df. }}{=} (\langle L, R \rangle, [\langle \alpha, \beta \rangle]_{\mathsf{W}}) : C \to A \& B$ for any objects $A, B, C \in \mathcal{LDG}$, and morphisms $(L, [\alpha]_{\mathsf{W}}) : C \to A$ and $(R, [\beta]_{\mathsf{W}}) : C \to B$;

\item The $\beta$-projections $\pi_1 : A \& B \to A$ and $\pi_2 : A \& B \to B$ are respectively the pairs $(\varPi^{A, B}_1, [\varpi^{A, B}_1]_{\mathsf{W}})$ and $(\varPi^{A, B}_2, [\varpi^{A, B}_2]_{\mathsf{W}})$ for any objects $A, B \in \mathcal{LDG}$, where $\varpi^{A, B}_1 : \varPi^{A, B}_1$ and $\varpi^{A, B}_2 : \varPi^{A, B}_2$ are respectively the derelictions $\mathit{der}_A$ and $\mathit{der}_B$ up to `tags';

\item $\beta$-currying is given by $\Lambda(G, [\varphi]_{\mathsf{W}}) \stackrel{\mathrm{df. }}{=} (\Lambda(G), [\Lambda(\varphi)]_{\mathsf{W}}) : A \to (B \Rightarrow C)$ for any objects $A, B, C \in \mathcal{LDG}$, and morphism $(G, [\varphi]_{\mathsf{W}}) : A \& B \to C$; 

\item The $\beta$-evaluation $\mathit{ev}_{B, C} : C^B \& B \to C$ for any objects $B, C \in \mathcal{LDG}$ is the pair $(\varUpsilon_{B, C}, [\upsilon_{B, C}]_{\mathsf{W}})$, where $\upsilon_{B, C} : \varUpsilon_{B, C}$ is the dereliction $\mathit{der}_{B \Rightarrow C}$ up to `tags'.

\end{itemize}
\end{definition}

Note that we have made the \emph{underlying game} of each $\beta$-morphism in $\mathcal{LDG}$ explicit in order to take the equivalence class of strategies.
Also, we have focused on \emph{well-founded} games and \emph{winning} strategies for the full completeness result (Corollary~\ref{CoroDynamicFullCompleteness}), where note that games must be well-founded for derelictions to be noetherian (Lemma~\ref{LemWellDefinedDerelictions}).

\begin{theorem}[Well-defined $\boldsymbol{\mathcal{LDG}}$]
\label{ThmLDG}
The structure $\mathcal{LDG}$ forms a CCBoC.
\end{theorem}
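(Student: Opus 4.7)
My plan is to apply Lemma~\ref{LemCharacterizationOfBoCs}: it suffices to exhibit $\mathcal{DG}$ as a CCC up to the relation $\cong$ induced by $\mathcal{H}$, such that composition, pairing and currying preserve $\cong$, and to verify the four BoC axioms for the evaluation $\mathcal{H}$. The governing idea throughout will be that $\mathcal{H}^\omega$ collapses the non-normalized data in $\mathcal{DG}$ onto normalized dynamic strategies, where McCusker's game model \cite{abramsky1999game,mccusker1998games} already provides a CCC, and that every relevant construction commutes with $\mathcal{H}^\omega$.

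First I would verify that the given data descend to equivalence classes: that $[\phi];[\psi] := [\phi^\dagger \ddagger \psi]$, $\langle [\alpha],[\beta] \rangle := [\langle \alpha,\beta\rangle]$ and $\Lambda([\varphi]) := [\Lambda(\varphi)]$ are independent of representatives. This is immediate from Theorems~\ref{ThmWellDefinedConcatenationOnDynamicStrategies}, \ref{ThmWellDefinedPromotionOnDynamicStrategies} and \ref{ThmWellDefinedPairingOnDynamicStrategies} together with Lemma~\ref{LemHidingLemmaOnDynamicStrategies}, which yields $\mathcal{H}^\omega(\phi^\dagger \ddagger \psi) = \mathcal{H}^\omega(\phi)^\dagger ; \mathcal{H}^\omega(\psi)$ and the analogous identities for pairing, currying and projections. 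The same lemmata, together with Theorem~\ref{ThmHidingTheorem} and Corollary~\ref{CoroPreservationOfValidityUnderHiding}, also give subject reduction; the identity axiom is immediate because each $\mathit{der}_A$ already lives on a normalized dynamic game; the composition axiom holds because Definition~\ref{DefConcatenationOfDynamicGames} endows $\phi^\dagger \ddagger \psi$ with a fresh internal degree $\mu \geqslant 1$, so $\mathcal{H}(\phi^\dagger \ddagger \psi) \neq \phi^\dagger \ddagger \psi$, while $\mu(J \ddagger K) \in \mathbb{N}$ guarantees that $\mathcal{H}^{\mu(J \ddagger K)}(\phi^\dagger \ddagger \psi)$ is a value whenever $\mathcal{H}^\omega(\phi)$ and $\mathcal{H}^\omega(\psi)$ are; and the 2-cell axiom is built into the definition of $\simeq_{A \to B}$.

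Next I would discharge the categorical laws and the cartesian closed structure up to $\cong$. For unit laws, associativity, and the universal properties of $T$, $\&$ and $\Rightarrow$, the strategy is to apply $\mathcal{H}^\omega$ to each equation: by Lemmata~\ref{LemHidingLemmaOnDynamicGames} and \ref{LemHidingLemmaOnDynamicStrategies} the left- and right-hand sides are intertwined with the corresponding operations on normalized strategies, where the equations hold on the nose in McCusker's CCC, hence hold up to $\cong$ in $\mathcal{DG}$. Preservation of $\cong$ by composition, pairing and currying then reduces to the fact that $\cong$ is defined via $\mathcal{H}^\omega$ and that $\mathcal{H}^\omega$ commutes with all three operations, as just noted; concretely, $\mathcal{H}^\omega(\phi) \simeq_{A \Rightarrow B} \mathcal{H}^\omega(\phi')$ and $\mathcal{H}^\omega(\psi) \simeq_{B \Rightarrow C} \mathcal{H}^\omega(\psi')$ transport, via $\mathcal{H}^\omega(\phi^\dagger \ddagger \psi) = \mathcal{H}^\omega(\phi)^\dagger ; \mathcal{H}^\omega(\psi)$ and the analogous McCusker-level identity for $\simeq$, to $\mathcal{H}^\omega(\phi^\dagger \ddagger \psi) \simeq \mathcal{H}^\omega(\phi'^\dagger \ddagger \psi')$.

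The main obstacle I anticipate is the bookkeeping of internal degrees and of the subgame containments $\mathcal{H}^\omega(J) \trianglelefteqslant A \Rightarrow B$ across iterated compositions, in particular when checking associativity $\mathcal{H}^\omega((\phi^\dagger \ddagger \psi)^\dagger \ddagger \rho) = \mathcal{H}^\omega(\phi^\dagger \ddagger (\psi^\dagger \ddagger \rho))$ and the triangle and pentagon equations at the level of pairing and currying, where successive concatenations introduce incompatible $\mu$-shifts that must be shown to wash out under $\mathcal{H}^\omega$. Lemma~\ref{LemPreservationOfDynamicSubgames} and the hiding lemmata (Lemmata~\ref{LemHidingLemmaOnDynamicGames} and \ref{LemHidingLemmaOnDynamicStrategies}) were engineered precisely to absorb these subtleties, so the verification should reduce to a routine, if somewhat lengthy, diagram chase on hiding operations.
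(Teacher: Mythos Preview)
Your proposal is correct and follows essentially the same route as the paper: invoke Lemma~\ref{LemCharacterizationOfBoCs}, verify the BoC axioms for $\mathcal{H}$, and reduce every CCC law (associativity, unit, product and exponential universal properties) to the normalized setting by applying $\mathcal{H}^\omega$ and the hiding lemmata (Lemmata~\ref{LemHidingLemmaOnDynamicGames} and \ref{LemHidingLemmaOnDynamicStrategies}), where McCusker's CCC supplies the equalities. The paper's proof carries out exactly this programme, writing out the chains of equalities explicitly rather than appealing to a diagram chase, and your anticipated obstacle with the $\mu$-shifts is indeed absorbed by those lemmata just as you say.
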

\begin{proof}
First, for $\beta$-composition, let $A, B, C \in \mathcal{LDG}$, $(J, [\phi]_{\mathsf{W}}) : A \to B$ and $(K, [\psi]_{\mathsf{W}}) : B \to C$ in $\mathcal{LDG}$. 
Then, $\phi^\dagger : J^\dagger$ by Theorem~\ref{ThmWellDefinedPromotionOnDynamicStrategies}, and $\mathcal{H}^\omega(J^\dagger) \trianglelefteqslant \ !A \multimap \ !B$ by Theorem~\ref{ThmWellDefinedPromotionOnDynamicGames}; thus, we may form $\phi^\dagger \ddagger \psi : J^\dagger \ddagger K$ such that $\mathcal{H}^\omega(J^\dagger \ddagger K) \trianglelefteqslant A \Rightarrow C$ by Theorem~\ref{ThmWellDefinedConcatenationOnDynamicStrategies}.
Also, promotion and concatenation both preserve validity and winning of strategies (by Theorems~\ref{ThmWellDefinedPromotionOnDynamicStrategies} and \ref{ThmWellDefinedConcatenationOnDynamicStrategies}). 
Hence, the pair $(J^\dagger \ddagger K, [\phi^\dagger \ddagger \psi]_{\mathsf{W}})$ is a $\beta$-morphism $A \to C$ in $\mathcal{LDG}$.
Note that the composition does not depend on the representatives $\phi$ and $\psi$.

Moreover, $\beta$-composition preserves $\simeq$: For any $A, B, C \in \mathcal{LDG}$, $(J, [\iota]_{\mathsf{W}}), (\tilde{J}, [\tilde{\iota}]_{\mathsf{W}}) : A \rightarrow B$ and $(K, [\kappa]_{\mathsf{W}}), (\tilde{K}, [\tilde{\kappa}]_{\mathsf{W}}) : B \rightarrow C$ in $\mathcal{LDG}$, if $\mathcal{H}^\omega(J, [\iota]_{\mathsf{W}}) = \mathcal{H}^\omega(\tilde{J}, [\tilde{\iota}]_{\mathsf{W}})$ and $\mathcal{H}^\omega(K, [\kappa]_{\mathsf{W}}) = \mathcal{H}^\omega(\tilde{K}, [\tilde{\kappa}]_{\mathsf{W}})$, then $\mathcal{H}^\omega(J^\dagger \ddagger K) = \mathcal{H}^\omega(J)^\dagger ; \mathcal{H}^\omega(K) = \mathcal{H}^\omega(\tilde{J})^\dagger ; \mathcal{H}^\omega(\tilde{K}) = \mathcal{H}^\omega(\tilde{J}^\dagger \ddagger \tilde{K})$ by Lemma~\ref{LemHidingLemmaOnDynamicGames}, and $\mathcal{H}^\omega(\iota^\dagger \ddagger \kappa) \simeq_{\mathcal{H}^\omega(J^\dagger \ddagger K)} \mathcal{H}^\omega(\tilde{\iota}^\dagger \ddagger \tilde{\kappa})$ by Corollary~\ref{CoroPreservationOfConstraintsOnDynamicStrategiesUnderHiding}, whence $\mathcal{H}^\omega(J^\dagger \ddagger K, [\iota^\dagger \ddagger \kappa]_{\mathsf{W}}) = \mathcal{H}^\omega(\tilde{J}^\dagger \ddagger \tilde{K}, [\tilde{\iota}^\dagger \ddagger \tilde{\kappa}]_{\mathsf{W}})$.

Then clearly, associativity of $\beta$-composition up to $\simeq$ holds: Given $D \in \mathcal{LDG}$, and $(G, [\varphi]) : C \to D$ in $\mathcal{LDG}$, by Lemma~\ref{LemHidingLemmaOnDynamicGames} we have: 
\begin{align*}
\mathcal{H}^\omega((J^\dagger \ddagger K)^\dagger \ddagger G) &= (\mathcal{H}^\omega(J^\dagger) ; \mathcal{H}^\omega(K))^\dagger ; \mathcal{H}^\omega(G) \\
&= (\mathcal{H}^\omega(J)^\dagger ; \mathcal{H}^\omega(K)^\dagger) ; \mathcal{H}^\omega(G) \\
&= \mathcal{H}^\omega(J)^\dagger ; (\mathcal{H}^\omega(K)^\dagger ; \mathcal{H}^\omega(G)) \\
&= \mathcal{H}^\omega(J^\dagger) ; (\mathcal{H}^\omega(K^\dagger) ; \mathcal{H}^\omega(G)) \\
&= \mathcal{H}^\omega(J^\dagger \ddagger (K^\dagger \ddagger G))
\end{align*}
as well as by Lemma~\ref{LemHidingLemmaOnDynamicStrategies}:
\begin{align*}
 \mathcal{H}^\omega((\phi^\dagger \ddagger \psi)^\dagger \ddagger \varphi) &= (\mathcal{H}^\omega(\phi^\dagger) ; \mathcal{H}^\omega(\psi))^\dagger ; \mathcal{H}^\omega(\varphi) \\
&= (\mathcal{H}^\omega(\phi^\dagger) ; \mathcal{H}^\omega(\psi)^\dagger) ; \mathcal{H}^\omega(\varphi) \\
&= (\mathcal{H}^\omega(\phi^\dagger) ; (\mathcal{H}^\omega(\psi^\dagger) ; \mathcal{H}^\omega(\varphi)) \\
&= \mathcal{H}^\omega(\phi^\dagger \ddagger (\psi^\dagger \ddagger \varphi)) 
\end{align*}
whence $((J, [\phi]_{\mathsf{W}}) ; (K, [\psi]_{\mathsf{W}})) ; (G, [\varphi]_{\mathsf{W}}) \simeq (J, [\phi]_{\mathsf{W}}) ; ((K, [\psi]_{\mathsf{W}}) ; (G, [\varphi]_{\mathsf{W}}))$.

Similarly, unit law up to $\simeq$ holds; we leave the details to the reader. 
\if0
Given a $\beta$-morphism $(J, [\phi]) : A \rightarrow B$, by Lemma~\ref{LemHidingLemmaOnDynamicGames} we have:
\begin{align*}
\mathcal{H}^\omega((A \Rightarrow A)^\dagger \ddagger J) &= (A \Rightarrow A)^\dagger ; \mathcal{H}^\omega(J) \\
&= \mathcal{H}^\omega(J)
\end{align*}
as well as:
\begin{align*}
\mathcal{H}^\omega([\mathit{der}_A^\dagger \ddagger \phi]) &= \mathcal{H}^\omega([\phi]).
\end{align*}
Similarly, $\mathcal{H}^\omega(J^\dagger \ddagger (B \Rightarrow B)) = \mathcal{H}^\omega(J)$ and $\mathcal{H}^\omega([\phi^\dagger \ddagger \mathit{der}_B]) = \mathcal{H}^\omega([\phi])$ hold, whence 
\begin{equation*}
(A \Rightarrow A, [\mathit{der}_A]) ; (J, [\phi]) \simeq (J, [\phi]) \simeq (J, [\phi]) ; (B \Rightarrow B, [\mathit{der}_B]).
\end{equation*}
\fi

Also, $\mathcal{H}$ clearly satisfies the four axioms of BoC (Definition~\ref{DefBoCs}), having shown that $\mathcal{LDG}$ is a BoC.
It remains to verify its cartesian closed structure up to $\simeq$.

The universal property of the $\beta$-terminal game $T$ up to $\simeq$ is obvious, where we define $!_A \stackrel{\mathrm{df. }}{=} (A \Rightarrow T, [\{ \boldsymbol{\epsilon} \}]_{\mathsf{W}}) : A \rightarrow T$ for each $A \in \mathcal{LDG}$.
The $\beta$-projections are clearly values in $\mathcal{LDG}$.
Given $\beta$-morphisms $(L, [\alpha]_{\mathsf{W}}) : C \to A$ and $(R, [\beta]_{\mathsf{W}}) : C \rightarrow B$ in $\mathcal{LDG}$, i.e., $\alpha : L$, $\beta : R$, $\mathcal{H}^\omega(L) \trianglelefteqslant C \Rightarrow A$ and $\mathcal{H}^\omega(R) \trianglelefteqslant C \Rightarrow B$, we may obtain the valid, winning pairing $\langle \alpha, \beta \rangle : \langle L, R \rangle$ such that $\mathcal{H}^\omega(\langle L, R \rangle) \trianglelefteqslant C \Rightarrow A \& B$ by Theorem~\ref{ThmWellDefinedPairingOnDynamicGames}.
Hence, the pair $(\langle L, R \rangle, [\langle \alpha, \beta \rangle]_{\mathsf{W}})$ is a $\beta$-morphism $C \rightarrow A \& B$ in $\mathcal{LDG}$, which does not depend on the representatives $\alpha$ and $\beta$.
Note also that the $\beta$-pairing clearly preserves values in $\mathcal{LDG}$.

Also, we have by Lemmata~\ref{LemHidingLemmaOnDynamicGames} and \ref{LemDerelictionLemma}:
\begin{align*}
\mathcal{H}^\omega(\langle L, R \rangle^\dagger \ddagger \varPi^{A, B}_1) = \langle \mathcal{H}^\omega(L), \mathcal{H}^\omega(R) \rangle^\dagger ; \varPi^{A, B}_1 = \mathcal{H}^\omega(L)
\end{align*}
as well as by Lemma~\ref{LemHidingLemmaOnDynamicStrategies}:
\begin{align*}
\mathcal{H}^\omega(\langle \alpha, \beta \rangle^\dagger \ddagger \varpi^{A, B}_1) &= \langle \mathcal{H}^\omega(\alpha)^\dagger, \mathcal{H}^\omega(\beta)^\dagger \rangle ; \varpi^{A, B}_1 \\
&= \mathcal{H}^\omega(\alpha).
\end{align*}
Similarly, $\mathcal{H}^\omega(\langle L, R \rangle^\dagger \ddagger \varPi^{A, B}_2) = \mathcal{H}^\omega(R)$ and $\mathcal{H}^\omega(\langle \alpha, \beta \rangle^\dagger \ddagger \varpi^{A, B}_2) = \mathcal{H}^\omega(\beta)$.
Hence, $\langle (L, [\alpha]_{\mathsf{W}}), (R, [\beta]_{\mathsf{W}}) \rangle ; \pi_1 \simeq (L, [\alpha]_{\mathsf{W}})$ and $\langle (L, [\alpha]_{\mathsf{W}}), (R, [\beta]_{\mathsf{W}}) \rangle ; \pi_2 = (R, [\beta]_{\mathsf{W}})$ hold.

Next, given any $\beta$-morphism $(P, [\rho]_{\mathsf{W}}) : C \rightarrow A \& B$ in $\mathcal{LDG}$, we have: 
\begin{align*}
\mathcal{H}^\omega(\langle P^\dagger \ddagger \varPi^{A, B}_1, P^\dagger \ddagger \varPi^{A, B}_2 \rangle) &= \langle \mathcal{H}^\omega(P)^\dagger ; \varPi^{A, B}_1, \mathcal{H}^\omega(P)^\dagger ; \varPi^{A, B}_2 \rangle \\
&= \mathcal{H}^\omega(P) 
\end{align*}
again by Lemmata~\ref{LemHidingLemmaOnDynamicGames} and \ref{LemDerelictionLemma}, as well as by Lemma~\ref{LemHidingLemmaOnDynamicStrategies}:
\begin{align*}
 \mathcal{H}^\omega(\langle \rho^\dagger \ddagger \varpi^{A, B}_1, \rho^\dagger \ddagger \varpi^{A, B}_2 \rangle) &= \langle \mathcal{H}^\omega(\rho)^\dagger ; \varpi^{A, B}_1, \mathcal{H}^\omega(\rho)^\dagger ; \varpi^{A, B}_2 \rangle \\
&= \mathcal{H}^\omega(\rho).
\end{align*}
Hence, $\langle (P, [\rho]) ; \pi_1, (P, [\rho]) ; \pi_2 \rangle \simeq (P, [\rho])$ holds.

It is also straightforward to check that $\beta$-pairing in $\mathcal{LDG}$ preserves $\simeq$: Given any $\beta$-morphisms $(L, [\alpha]_{\mathsf{W}}), (\tilde{L}, [\tilde{\alpha}]_{\mathsf{W}}) : C \rightarrow A$ and $(R, [\beta]_{\mathsf{W}}), (\tilde{R}, [\tilde{\beta}]_{\mathsf{W}}) : C \rightarrow B$ in $\mathcal{LDG}$ such that $\mathcal{H}^\omega(L, [\alpha]_{\mathsf{W}}) = \mathcal{H}^\omega(\tilde{L}, [\tilde{\alpha}]_{\mathsf{W}})$ and $\mathcal{H}^\omega(R, [\beta]_{\mathsf{W}}) = \mathcal{H}^\omega(\tilde{R}, [\tilde{\beta}]_{\mathsf{W}})$, we have:
\begin{align*}
\mathcal{H}^\omega(\langle (L, [\alpha]_{\mathsf{W}}), (R, [\beta]_{\mathsf{W}}) \rangle) &= (\mathcal{H}^\omega(\langle L, R \rangle), [\mathcal{H}^\omega(\langle \alpha, \beta \rangle)]_{\mathsf{W}}) \\
&= (\langle \mathcal{H}^\omega(L), \mathcal{H}^\omega(R) \rangle, [\langle \mathcal{H}^\omega(\alpha), \mathcal{H}^\omega(\beta) \rangle]_{\mathsf{W}}) \\
&= (\langle \mathcal{H}^\omega(\tilde{L}), \mathcal{H}^\omega(\tilde{R}) \rangle, [\langle \mathcal{H}^\omega(\tilde{\alpha}), \mathcal{H}^\omega(\tilde{\beta}) \rangle]_{\mathsf{W}}) \\
&= (\mathcal{H}^\omega(\langle \tilde{L}, \tilde{R} \rangle), [\mathcal{H}^\omega(\langle \tilde{\alpha}, \tilde{\beta} \rangle)]_{\mathsf{W}}) \\
&= \mathcal{H}^\omega(\langle (\tilde{L}, [\tilde{\alpha}]_{\mathsf{W}}), (\tilde{R}, [\tilde{\beta}]_{\mathsf{W}}) \rangle).
\end{align*}

Finally, the requirements for $\beta$-exponentials, $\beta$-currying and $\beta$-evaluations are proved more or less similarly to the case of $\beta$-products, $\beta$-pairing and $\beta$-projections, and thus we leave the details to the leader.
\if0
The $\beta$-evaluation $\mathit{ev}_{B, C} = (\varUpsilon_{B, C}, [\upsilon_{B, C}])$ for any $B, C \in \mathcal{LDG}$ is clearly a value $\mathit{ev}_{B, C} : C^B \& B \rightarrow C$ in $\mathcal{LDG}$ that satisfies by Lemmata~\ref{LemHidingLemmaOnDynamicGames} and \ref{LemDerelictionLemma}:
\begin{align*}
\mathcal{H}^\omega(\langle (\varPi^{A, B}_1)^\dagger \ddagger \Lambda(U), \varPi^{A, B}_2 \rangle^\dagger \ddagger \varUpsilon_{B, C}) &= \langle (\varPi^{A, B}_1)^\dagger ; \Lambda \circ \mathcal{H}^\omega(U), \varPi^{A, B}_2 \rangle^\dagger ; \varUpsilon_{B, C} \\
&= \mathcal{H}^\omega(U)
\end{align*}
as well as by Lemma~\ref{LemHidingLemmaOnDynamicStrategies}:
\begin{align*}
\mathcal{H}^\omega([\langle (\varpi^{A, B}_1)^\dagger \ddagger \Lambda(\mu), (\varpi^{A, B}_2)^\dagger \ddagger \mathit{der}_B \rangle^\dagger \ddagger \upsilon_{B, C}]) &= [\langle (\varpi^{A, B}_1)^\dagger ; \Lambda \circ \mathcal{H}^\omega(\mu), (\varpi^{A, B}_2)^\dagger ; \mathit{der}_B \rangle^\dagger ; \upsilon_{B, C}] \\
&= [\mathcal{H}^\omega(\mu)] \\
&= \mathcal{H}^\omega([\mu]) 
\end{align*}
which establishes $(\Lambda(U, [\mu]) \times \mathit{id}_B) ; \mathit{ev}_{B, C} \simeq (U, [\mu])$ for any $\beta$-morphism $(U, [\mu]) : A \& B \rightarrow C$ in $\mathcal{LDG}$.
Note also that $\beta$-currying in $\mathcal{LDG}$ does not depend on the choice of representatives, and it preserves values in $\mathcal{LDG}$.

Similarly, we have again by Lemmata~\ref{LemHidingLemmaOnDynamicGames} and \ref{LemDerelictionLemma}:
\begin{align*}
\mathcal{H}^\omega(\Lambda(\langle (\varPi^{A, B}_1)^\dagger \ddagger V, \varPi^{A, B}_2 \rangle^\dagger \ddagger \varUpsilon_{B, C})) &= \Lambda(\langle (\varPi^{A, B}_1)^\dagger ; \mathcal{H}^\omega(V), \varPi^{A, B}_2 \rangle^\dagger ; \varUpsilon_{B, C}) \\
&= \mathcal{H}^\omega(V)
\end{align*}
as well as by Lemma~\ref{LemHidingLemmaOnDynamicStrategies}:
\begin{align*}
\mathcal{H}^\omega[\Lambda(\langle (\varpi^{A, B}_1)^\dagger \ddagger \nu, (\varpi^{A, B}_2)^\dagger \ddagger \mathit{der}_B \rangle^\dagger \ddagger \upsilon_{B, C})] &= [\Lambda(\langle (\varpi^{A, B}_1)^\dagger ; \mathcal{H}^\omega(\nu), (\varpi^{A, B}_2)^\dagger ; \mathit{der}_B \rangle^\dagger ; \upsilon_{B, C})] \\ 
&= [\mathcal{H}^\omega(\nu)] \\
&= \mathcal{H}^\omega([\nu])
\end{align*}
which establishes $\Lambda(((V, [\nu]) \times \mathit{id}_B) ; \mathit{ev}_{B, C}) \simeq (V, [\nu])$ for any $\beta$-morphism $(V, [\nu]) : A \rightarrow (B \Rightarrow C)$ in $\mathcal{LDG}$. 

Finally, it is easy to see that $\beta$-currying in $\mathcal{LDG}$ preserves the equivalence relation $\simeq$ by Lemma~\ref{LemHidingLemmaOnDynamicStrategies} and Proposition~\ref{PropWellDefinedCurryingOnDynamicStrategies}, completing the proof.
\fi
\end{proof}

We proceed to give a standard structure (Definition~\ref{DefStructuresForFPCF}) for FPCF in $\mathcal{LDG}$:
\begin{definition}[Standard structure in $\boldsymbol{\mathcal{LDG}}$]
\label{DefGameSemanticStructureForFPCF}
The standard structure 
\begin{equation*}
\mathcal{S}_{\mathcal{G}} = (\boldsymbol{2}, T, \&, \pi, \Rightarrow, \mathit{ev}, \underline{\mathit{tt}}, \underline{\mathit{ff}}, \vartheta)
\end{equation*}
of games and strategies for FPCF in $\mathcal{LDG}$ is given by:
\begin{itemize}

\item $\boldsymbol{2}$ is the game of booleans (Example~\ref{ExFlatGames}), and $T$ is the terminal game (Example~\ref{ExTerminalGame});

\item $\&$ is product of games, and $\pi_1^{A, B} \stackrel{\mathrm{df. }}{=} (\varPi^{A, B}_1, [\varpi^{A, B}_1]_{\mathsf{W}})$ ($i = 1, 2$) for any $A, B \in \mathcal{LDG}$;

\item $\Rightarrow$ is function space of games, and $\mathit{ev}_{A, B} \stackrel{\mathrm{df. }}{=} (\varUpsilon_{A, B}, [\upsilon_{A, B}]_{\mathsf{W}})$ for any $A, B \in \mathcal{LDG}$; 

\item $\underline{\mathit{tt}} \stackrel{\mathrm{df. }}{=} (T \Rightarrow \boldsymbol{2}, [\mathsf{Pref}(\{ q . \mathit{tt} \})^{\mathsf{Even}}]_{\mathsf{W}}), \underline{\mathit{ff}} \stackrel{\mathrm{df. }}{=} (T \Rightarrow \boldsymbol{2}, [\mathsf{Pref}(\{ q . \mathit{ff} \})^{\mathsf{Even}}]_{\mathsf{W}}) : T \rightarrow \boldsymbol{2}$;

\item $\vartheta \stackrel{\mathrm{df. }}{=} (\boldsymbol{2} \& (\boldsymbol{2} \& \boldsymbol{2}) \Rightarrow \boldsymbol{2}, [\mathit{case}]_{\mathsf{W}}) : \boldsymbol{2} \& (\boldsymbol{2} \& \boldsymbol{2}) \rightarrow \boldsymbol{2}$, where $\mathit{case} : \boldsymbol{2} \& (\boldsymbol{2} \& \boldsymbol{2}) \Rightarrow \boldsymbol{2},$ is the standard game semantics of the $\mathsf{case}$-construction \cite{hyland2000full,abramsky1999game} modified to a normalized  strategy in the obvious manner.

\end{itemize}
\end{definition}
\if0
\begin{remark*}
Our interpretation of values coincides with the interpretation of \textsf{PCF} B\"{o}hm trees as innocent strategies in \cite{hyland2000full} (see the proof of the \emph{strong definability} result). In a sense, this interpretation is generalized here: Configurations, which are constructed from values via application, are interpreted utilizing the ``non-hiding composition'' $\ddagger$, and the rule $\mathsf{c}_2$ is interpreted in the same way as the rule $\mathsf{c}_1$. Moreover, the rule $\mathsf{c}_1$ itself is reformulated in terms $\vartheta$, $\ddagger$ and $\mathcal{H}$.
\end{remark*}
\fi

\begin{lemma}[Standardness of $\boldsymbol{\mathcal{S}_{\mathcal{G}}}$]
\label{LemStandardnessOfS_G}
The structure $\mathcal{S}_{\mathcal{G}}$ for FPCF in $\mathcal{LDG}$ is standard in the sense defined in Definition~\ref{DefStructuresForFPCF}. 
\end{lemma}
\begin{proof}
Straightforward. 
\end{proof}

\subsection{Game-Semantic Dynamic Correspondence Property for FPCF}
\label{MainResult}
At last, we are now ready to prove that our game semantics satisfies a DCP:
\begin{theorem}[PDCP-theorem]
\label{ThmDynamicTheorem}
The interpretation $\llbracket \_ \rrbracket_{\mathcal{LDG}}^{\mathcal{S_G}}$ of FPCF (Definitions~\ref{DefStructuresForFPCF} and \ref{DefCCBoCCDG}) satisfies the PDCP (Definition~\ref{DefPDCP}).
\end{theorem}
\begin{proof}
To establish the PDCP, the only non-trivial case is to show for any reduction of FPCF of the form $\mathsf{(\lambda x^A . \ \! V)W \rightarrow U}$, where $\mathsf{V}$, $\mathsf{W}$ and  $\mathsf{U}$ are values, $\mathcal{H}(\llbracket \mathsf{(\lambda x^A . \ \! V)W} \rrbracket_{\mathcal{LDG}}^{\mathcal{S_G}}) = \llbracket \mathsf{U} \rrbracket_{\mathcal{LDG}}^{\mathcal{S_G}}$ (n.b., $\llbracket \mathsf{(\lambda x^A . \ \! V)W} \rrbracket_{\mathcal{LDG}}^{\mathcal{S_G}} \neq \llbracket \mathsf{U} \rrbracket_{\mathcal{LDG}}^{\mathcal{S_G}}$ is immediate from the first component of each $\beta$-morphism in $\mathcal{LDG}$ and the third axiom on standardness of $\mathcal{S_G}$); the other conditions for the PDCP follow from Lemmata~\ref{LemHidingLemmaOnDynamicGames} and \ref{LemHidingLemmaOnDynamicStrategies}. 
Let us focus on the non-trivial case, for which we define the \emph{\bfseries height} $\mathit{Ht}(\mathsf{B}) \in \mathbb{N}$ of each type $\mathsf{B}$ by $\mathit{Ht}(o) \stackrel{\mathrm{df. }}{=} 0$ and $\mathit{Ht}(\mathsf{B_1 \Rightarrow B_2}) \stackrel{\mathrm{df. }}{=} \max(\mathit{Ht}(\mathsf{B_1})\!+\!1, \mathit{Ht}(\mathsf{B_2}))$.
We proceed by induction on the height of the type $\mathsf{A}$ of $\mathsf{W}$.

Below, given $\beta$-morphisms $(H, [\tau]_{\mathsf{W}}) : C \rightarrow (A \Rightarrow B)$ and $(G, [\sigma]_{\mathsf{W}}) : C \rightarrow A$ in $\mathcal{LDG}$, we define the $\beta$-morphism $(H, [\tau]_{\mathsf{W}}) \lfloor (G, [\sigma]_{\mathsf{W}}) \rfloor \stackrel{\mathrm{df. }}{=} (\langle G, H \rangle^\dagger \ddagger \varUpsilon_{A, B}, [\langle \tau, \sigma \rangle^\dagger \ddagger \upsilon_{A, B}]_{\mathsf{W}}) : C \rightarrow B$ in $\mathcal{LDG}$. 
If $(H, [\tau]_{\mathsf{W}}) : C \to (A_1 \Rightarrow A_2 \Rightarrow \dots \Rightarrow A_k \Rightarrow B)$ and $(G_i, [\sigma_i]_{\mathsf{W}}) : C \to A_i$ for $i = 1, 2, \dots, k$, then we write $(H, [\tau]_{\mathsf{W}}) \lfloor (G_1, [\sigma_1]_{\mathsf{W}}), (G_2, [\sigma_2]_{\mathsf{W}}), \dots, (G_k, [\sigma_k]_{\mathsf{W}}) \rfloor$ for $(H, [\tau]_{\mathsf{W}}) \lfloor (G_1, [\sigma_1]_{\mathsf{W}}) \rfloor \lfloor (G_2, [\sigma_2]_{\mathsf{W}}) \rfloor \dots \lfloor (G_k, [\sigma_k]_{\mathsf{W}}) \rfloor : C \rightarrow B$.
We abbreviate in this proof the interpretation $\llbracket \_ \rrbracket_{\mathcal{LDG}}^{\mathcal{S_G}}$ as $\llbracket \_ \rrbracket$.
Let $\mathsf{\Gamma}$ be the context of $\mathsf{(\lambda x^A . \ \! V)W}$ (as well as $\mathsf{U}$).
In the following, we abbreviate each $\beta$-morphism $(G, [\sigma]_{\mathsf{W}})$ in $\mathcal{LDG}$ as $[\sigma]$ for brevity, and focus on the second components (i.e., the equivalence classes of strategies); the corresponding equations on the first components (i.e., games) may be obtained, thanks to Lemmata~\ref{LemHidingLemmaOnDynamicGames} and \ref{LemDerelictionLemma}, similarly to the ways for the first components shown below.

For the base case, assume $\mathit{Ht}(\mathsf{A}) = 0$, i.e., $\mathsf{A} \equiv o$. 
By induction on $|\mathsf{V}|$, we have:
\begin{itemize}

\item If $\mathsf{V} \equiv \mathsf{tt}$, then $\mathsf{(\lambda x^A . \ \! \mathsf{tt})W \to \mathsf{tt}}$, and clearly $\mathcal{H}(\llbracket \mathsf{(\lambda x^A . \ \! \mathsf{tt}) W} \rrbracket) = \langle \Lambda (\llbracket \mathsf{tt} \rrbracket), \llbracket \mathsf{W} \rrbracket \rangle^\dagger ; [\upsilon] = \llbracket \mathsf{tt}\rrbracket$.
The case of $\mathsf{V} \equiv \mathsf{ff}$ is analogous. 

\item If $\mathsf{V \equiv \lambda y^C . \ \! V'}$, then $\mathsf{(\lambda x^A y^C . \ \! V')W \to \lambda y^C . U'}$ with $\mathsf{(\lambda x^A . \ \! V')W \to U'}$ (for $\mathit{nf}((\mathsf{\lambda x^A y^C . \ \! V')W}) \equiv \mathit{nf} (\mathsf{\lambda y^C . \ \! V'[W/x]}) \equiv \mathsf{\lambda y^C. \mathit{nf} (V'[W/x]) \equiv \lambda y^C . \ \! \mathit{nf} ((\lambda x^A . V')W)}$). 
Then, we have, by the induction hypothesis, $\mathcal{H}(\llbracket (\mathsf{\lambda x^A . \ \! V')W} \rrbracket) = \llbracket \mathsf{U'} \rrbracket$. Hence, we get:
\begin{align*}
\mathcal{H}(\llbracket \mathsf{V W} \rrbracket) &= \mathcal{H}(\langle \Lambda_{\llbracket \mathsf{A} \rrbracket}(\Lambda_{\llbracket \mathsf{C} \rrbracket}(\llbracket \mathsf{V'} \rrbracket)), \llbracket \mathsf{W} \rrbracket \rangle^\dagger \ddagger [\upsilon]) \\
&= \langle \Lambda_{\llbracket \mathsf{A} \rrbracket}(\Lambda_{\llbracket \mathsf{C} \rrbracket}(\llbracket \mathsf{V'} \rrbracket)), \llbracket \mathsf{W} \rrbracket \rangle^\dagger ; [\upsilon] \ \text{(by Lemma~\ref{LemHidingLemmaOnDynamicStrategies})} \\
&= \Lambda_{\llbracket \mathsf{C} \rrbracket}(\langle \Lambda_{\llbracket \mathsf{A} \rrbracket}(\llbracket \mathsf{V'} \rrbracket), \llbracket \mathsf{W} \rrbracket \rangle^\dagger ; [\upsilon]) \\
&= \Lambda_{\llbracket \mathsf{C} \rrbracket}(\mathcal{H}(\langle \Lambda_{\llbracket \mathsf{A} \rrbracket}(\llbracket \mathsf{V'} \rrbracket), \llbracket \mathsf{W} \rrbracket \rangle^\dagger \ddagger [\upsilon])) \\
&= \Lambda_{\llbracket \mathsf{C} \rrbracket}(\mathcal{H}(\llbracket (\mathsf{\lambda x^A . V')W} \rrbracket)) \\
&= \Lambda_{\llbracket \mathsf{C} \rrbracket}(\llbracket \mathsf{U'} \rrbracket) \\
&= \llbracket \mathsf{\lambda y^C . U'} \rrbracket.
\end{align*}

\item If $\mathsf{V \equiv \mathsf{case} (y V_1 \dots V_k)[\tilde{V}_1 ; \tilde{V}_2]}$ with $\mathsf{x \neq y}$, then $\mathsf{(\lambda x^A . V)W \to U}$, where
\begin{equation*}
\mathsf{U \equiv \mathsf{case} (y \mathit{nf} (V_1[W/x]) \dots \mathit{nf} (V_k[W/x]))[\mathit{nf} (\tilde{V}_1[W/x]) ; \mathit{nf} (\tilde{V}_1[W/x])]}.
\end{equation*}
By the induction hypothesis and the interpretation of the variable $\mathsf{y}$, we have:
\begin{align*}
& \ \llbracket \mathsf{(\lambda x^A . V)W} \rrbracket \\
= & \ \mathcal{H}^\omega(\Lambda_{\llbracket \mathsf{A} \rrbracket}(\langle \llbracket \mathsf{y} \rrbracket \lfloor \llbracket \mathsf{V_1} \rrbracket, \dots, \llbracket \mathsf{V_k} \rrbracket \rfloor, \langle \llbracket \mathsf{\tilde{V}_1} \rrbracket, \llbracket \mathsf{\tilde{V}_2} \rrbracket \rangle \rangle^\dagger \ddagger [\mathit{case}]) \lfloor \llbracket \mathsf{W} \rrbracket \rfloor) \\
= & \ \mathcal{H}^\omega(\langle \Lambda_{\llbracket \mathsf{A} \rrbracket}(\llbracket \mathsf{y} \rrbracket)\lfloor \llbracket \mathsf{W} \rrbracket \rfloor \lfloor \Lambda_{\llbracket \mathsf{A} \rrbracket}(\llbracket \mathsf{V_1} \rrbracket)\lfloor \llbracket \mathsf{W} \rrbracket \rfloor, \dots, \Lambda_{\llbracket \mathsf{A} \rrbracket}(\llbracket \mathsf{V_k} \rrbracket)\lfloor \llbracket \mathsf{W} \rrbracket \rfloor \rfloor, \langle \Lambda_{\llbracket \mathsf{A} \rrbracket}(\llbracket \mathsf{\tilde{V}_1} \rrbracket)\lfloor \llbracket \mathsf{W} \rrbracket \rfloor, \\& \ \Lambda_{\llbracket \mathsf{A} \rrbracket}(\llbracket \mathsf{\tilde{V}_2} \rrbracket)\lfloor \llbracket \mathsf{W} \rrbracket \rfloor \rangle \rangle^\dagger \ddagger [\mathit{case}]) \\
= & \ \mathcal{H}^\omega(\langle \llbracket \mathsf{(\lambda x . \ \! y) W} \rrbracket \lfloor \llbracket \mathsf{(\lambda x . \ \! V_1) W} \rrbracket, \dots, \llbracket \mathsf{(\lambda x . \ \! V_k) W} \rrbracket \rfloor, \langle \llbracket \mathsf{(\lambda x . \ \! \tilde{V}_1) W} \rrbracket, \llbracket \mathsf{(\lambda x . \ \! \tilde{V}_2) W} \rrbracket \rangle \rangle^\dagger \ddagger [\mathit{case}]) \\
= & \ \mathcal{H}^\omega(\langle \llbracket \mathsf{y} \rrbracket \lfloor \llbracket \mathsf{\mathit{nf}(V_1[W/x])} \rrbracket, \dots, \llbracket \mathsf{\mathit{nf}(V_k[W/x])} \rrbracket \rfloor, \langle \llbracket \mathsf{\mathit{nf}(\tilde{V}_1[W/x])} \rrbracket, \llbracket \mathsf{\mathit{nf}(\tilde{V}_2[W/x])} \rrbracket \rangle \rangle^\dagger \ddagger [\mathit{case}]) \\ 
= & \ \llbracket \mathsf{U} \rrbracket.
\end{align*}

\item If $\mathsf{V \equiv \mathsf{case} (x)[\tilde{V}_1 ; \tilde{V}_2]}$, then $\mathsf{(\lambda x^A . V)W \to U}$, where
\begin{equation*}
\mathsf{U \equiv \mathsf{case} (W)[\mathit{nf} (\tilde{V}_1[W/x]) ; \mathit{nf} (\tilde{V}_2[W/x])]}.
\end{equation*}
By the same reasoning as the above case, we get $\mathcal{H}(\llbracket (\mathsf{\lambda x^A . V)W} \rrbracket) = \llbracket \mathsf{U} \rrbracket$.

\end{itemize}

Next, for the inductive step, assume $\mathit{Ht}(\mathsf{A}) = h + 1$. We may proceed in the same way as the base case, i.e., by induction on $|\mathsf{V}|$, except that the last case is generalized to $\mathsf{V \equiv \mathsf{case} (x V_1 \dots V_k)[\tilde{V}_1 ; \tilde{V}_2]}$, where $\mathsf{A} \equiv \mathsf{A_1} \Rightarrow \mathsf{A_2} \Rightarrow \dots \Rightarrow \mathsf{A_k} \Rightarrow o$ ($k \geqslant 0$). 
We have to consider the additional case of $k \geqslant 1$;  then we have $\mathsf{(\lambda x^A . \ \! V)W \to U}$, where
\begin{equation*}
\mathsf{U \equiv \mathsf{case} (\mathit{nf} (W (V_1[W/x]) \dots (V_k[W/x]))) [\mathit{nf} (\tilde{V}_1[W/x]) ; \mathit{nf} (\tilde{V}_2[W/x])]}.
\end{equation*}
We then have the following chain of equations:
\begin{align*}
& \mathcal{H} \llbracket \mathsf{(\lambda x . V)W}  \rrbracket \\
= \ & \mathcal{H} (\Lambda (\mathcal{H}^{\omega}(\langle \llbracket \mathsf{x} \rrbracket \lfloor \llbracket \mathsf{V_1} \rrbracket, \dots, \llbracket \mathsf{V_k} \rrbracket \rfloor, \langle \llbracket \mathsf{\tilde{V}_1} \rrbracket, \llbracket \mathsf{\tilde{V}_2} \rrbracket \rangle \rangle^\dagger \ddagger [\mathit{case}])) \lfloor \llbracket \mathsf{W} \rrbracket \rfloor) \\
= \ & \mathcal{H}^{\omega} (\langle \Lambda (\llbracket \mathsf{x} \rrbracket) \lfloor \llbracket \mathsf{W} \rrbracket \rfloor \lfloor \Lambda (\llbracket \mathsf{V_1} \rrbracket) \lfloor \llbracket \mathsf{W} \rrbracket \rfloor, \dots, \Lambda (\llbracket \mathsf{V_k} \rrbracket) \lfloor \llbracket \mathsf{W} \rrbracket \rfloor \rfloor, \langle \Lambda (\llbracket \mathsf{\tilde{V}_1} \rrbracket) \lfloor \llbracket \mathsf{W} \rrbracket \rfloor, \Lambda (\llbracket \mathsf{\tilde{V}_2} \rrbracket) \lfloor \llbracket \mathsf{W} \rrbracket \rfloor \rangle \rangle^\dagger \ddagger [\mathit{case}]) \\
= \ & \mathcal{H}^{\omega} (\langle \llbracket \mathsf{(\lambda x . \ \! x) W} \rrbracket \lfloor \llbracket \mathsf{(\lambda x . \ \! V_1) W} \rrbracket, \dots, \llbracket \mathsf{(\lambda x . \ \! V_k) W} \rrbracket \rfloor, \langle \llbracket \mathsf{(\lambda x . \ \! \tilde{V}_1) W} \rrbracket, \llbracket \mathsf{(\lambda x . \ \! \tilde{V}_2) W} \rrbracket \rangle \rangle^\dagger \ddagger [\mathit{case}]) \\
= \ & \mathcal{H}^{\omega} (\langle \llbracket \mathsf{W} \rrbracket \lfloor \llbracket \mathit{nf}(\mathsf{V_1[W/x]}) \rrbracket, \dots, \llbracket \mathit{nf}(\mathsf{V_k[W/x]}) \rrbracket \rfloor, \langle \llbracket \mathit{nf} (\mathsf{\tilde{V}_1[W/x]}) \rrbracket, \llbracket \mathit{nf} (\mathsf{\tilde{V}_2[W/x]}) \rrbracket \rangle \rangle^\dagger \ddagger [\mathit{case}]) \\
&\text{(by the induction hypothesis with respect to $|\mathsf{V}|$)} \\
= \ & \mathcal{H}^{\omega} (\langle \llbracket \mathit{nf} (\mathsf{W (V_1[W/x]}) \dots (\mathsf{V_k[W/x])}) \rrbracket, \langle \llbracket \mathit{nf} (\mathsf{\tilde{V}_1[W/x]}) \rrbracket, \llbracket \mathit{nf} (\mathsf{\tilde{V}_2[W/x]}) \rrbracket \rangle \rangle^\dagger \ddagger [\mathit{case}]) \\
& \text{(by the induction hypothesis (applied $k$-times) with respect to the hight of types $\mathsf{A}$)} \\
= \ & \llbracket \mathsf{case} (\mathit{nf} (\mathsf{W (V_1[W/x]}) \dots (\mathsf{V_k[W/x])})) [\mathit{nf} (\mathsf{\tilde{V}_1[W/x]}) ;  \mathit{nf} (\mathsf{\tilde{V}_2[W/x]})] \rrbracket \\
= \ & \llbracket \mathsf{U} \rrbracket
\end{align*}
which completes the proof.
\end{proof}

\begin{corollary}[Dynamic game semantics of FPCF]
\label{CoroFirstDynamicGameSemantics}
The interpretation $\llbracket \_ \rrbracket_{\mathcal{LDG}}^{\mathcal{S_G}}$ of FPCF and the hiding operation $\mathcal{H}$ satisfy the DCP in the sense of Definition~\ref{DefDCP}.
\end{corollary}
\begin{proof}
By Lemma~\ref{LemStandardnessOfS_G}, and Theorems~\ref{ThmDynamicSemanticsOfSystemTVartheta}, \ref{ThmLDG} and \ref{ThmDynamicTheorem}.
\end{proof}

The relation between the syntax and the semantics of FPCF is actually tighter than Corollary~\ref{CoroFirstDynamicGameSemantics}: Exploiting the strong definability result \cite{amadio1998domains,hyland2000full}, FPCF can be seen as a \emph{formal calculus} for computations in the CCBoC $\mathcal{LDG}$.
In addition, FPCF represents every computation in $\mathcal{LDG}$ by the following full completeness result \cite{curien2007definability}: Any strategy on a game that interprets a type of FPCF is the denotation of some term of FPCF:
\begin{corollary}[Dynamic full completeness]
\label{CoroDynamicFullCompleteness}
Let $G$ be a game such that for some strategy $\sigma : G$ the pair $(G, [\sigma]_{\mathsf{W}})$ is the interpretation $\llbracket \mathsf{\Gamma \vdash M : B} \rrbracket_{\mathcal{LDG}}^{\mathcal{S_G}}$ of a program $\mathsf{\Gamma \vdash M : B}$ of FPCF.
Then, for any strategy $\tilde{\sigma} : G$, there is a program $\mathsf{\Gamma \vdash \tilde{M} : B}$ of FPCF such that $\llbracket \mathsf{\Gamma \vdash \tilde{M} : B} \rrbracket_{\mathcal{LDG}}^{\mathcal{S_G}} = (G, [\tilde{\sigma}]_{\mathsf{W}})$.
\end{corollary}
\begin{proof}
Note that the game $G$ is constructed along with the construction of type $\mathsf{B}$ of FPCF.
We proceed by induction on the construction of $G$ (or $\mathsf{B}$).

First, since values of FPCF are PCF B\"{o}hm trees except that the natural number type $\iota$ is replaced with the boolean type $o$, and the bottom term $\mathsf{\bot}$ is deleted, the conventional full completeness and the strong definability hold for values of FPCF in the same way as that of the conventional game semantics of PCF, where the winning condition on strategies excludes the denotation of the bottom term $\mathsf{\bot}$; see \cite{abramsky1999game,curien2006notes} for the details. 

It remains to consider the rule A for applications, i.e., the case where $G$ is of the form $\langle U, V \rangle^\dagger \ddagger \varUpsilon$.
But then, note that only plays by the dereliction (up to `tags') are possible in $\varUpsilon$ (Definition~\ref{DefDerelictionGames}), and therefore we may just apply the induction hypothesis.
\end{proof}

\section{Conclusion and Future Work}
\label{ConclusionAndFutureWorkOfDynamicGameSemantics}
We have presented a \emph{mathematical} (and \emph{syntax-independent}) formulation of dynamics and intensionality of computation in terms of bicategories as well as games and strategies. 
From the opposite angle, we have developed bicategorical and game-semantic frameworks for dynamic, intensional computation with a convenient formal calculus.  

Let us emphasize that the dynamic, intensional nature of our semantics stands in sharp contrast to the static, extensional nature of conventional (categorical or game) semantics.
In particular, our semantics satisfies the highly non-trivial DCP with respect to FPCF.

Note also that the present work refines and generalizes standard categorical and game semantics of type theories.
For instance, composition of static strategies is decomposed and generalized as \emph{concatenation plus hiding} of dynamic strategies.
Also, standard constructions and constraints on static games and strategies are naturally accommodated in the framework of dynamic games and strategies.
Moreover, from the category-theoretic point, the present work refines the standard CCC-interpretation of type theories by the CCBoC-interpretation.
In this sense, our approach is natural and general, achieving \emph{mathematics} of dynamics and intensionality of computation as promised in Section~\ref{Introduction}. 

Let us remark that our result does not contradict the standard result \cite{danos1996game}, i.e., the correspondence between the execution of linear head reduction (LHR) and the step-by-step `internal communication' between conventional strategies.
In fact, LHR is a finer reduction strategy than the operational semantics of FPCF (Definition~\ref{DefFPCF}), and the work by Danos et al. implies that LHR corresponds in conventional game semantics what should be called a `move-wise' execution of the hiding operation.
On the other hand, our operational semantics is executed in a much coarser, `type-wise' fashion, and thus it may be seen as executing at a time a certain `chunk' of LHR in a specific order.
Our dynamic game semantics captures such a coarser dynamics of computation, and therefore it does not contradict the work \cite{danos1996game}.
Of course, it is highly interesting to refine the present work to capture LHR or another, finer reduction strategy such as \emph{explicit substitution} \cite{rose1996explicit} and the \emph{differential $\lambda$-calculus} \cite{ehrhard2003differential}, which we leave as future work.

More generally, the most immediate future work is to apply the framework of dynamic game semantics to various logics and computations as in the case of conventional game semantics.
Also, it would be interesting to see how accurately our game-semantic approach can measure the computational complexity of (higher-order) programming.

Finally, the notion of (CC)BoCs can be a concept of interest in its own right. 
For instance, it might be fruitful to develop it further to accommodate various models of computations in the same spirit of \cite{longley2015higher} but on \emph{computation}, not computability. 
Also, it might be interesting to consider their relation with \emph{computations as monads} in the sense introduced by Eugenio Moggi \cite{moggi1991notions}.

\section*{Acknowledgements}
The first author acknowledges the financial support from Funai Overseas Scholarship, and Luke Ong and Sam Staton for fruitful discussions. 
The second author acknowledges support from the EPSRC grant EP/K015478/1 on Quantum Mathematics and Computation,
and U.S. AFOSR FA9550-12-1-0136.
\bibliographystyle{apalike}
\bibliography{CategoricalLogic,GamesAndStrategies,RecursionTheory,PCF,TypeTheoriesAndProgrammingLanguages,HoTT,GoI,LinearLogic}

\if0
\begin{verbatim}
\begin{thebibliography}{}
 \bibitem[Augustsson and Johnsson 1987]{AJ1}
   Augustsson,~L. and Johnsson,~T. (1987) LML users' manual. PMG
   Report, Department of Computer Science, Chalmers University of
   Technology, Goteborg, Sweden.
 \bibitem[Conklin 1987]{JC}
   Conklin,~J. (1987) Hypertext: an introduction and survey.
   \emph{IEEE Computer} \textbf{20}~(9), 17--41.
\bibitem[Dijkstra 1976]{EWD}
   Dijkstra,~E.\,W. (1976) \emph{A Discipline of Programming}.
   Prentice-Hall.
\bibitem[Knuth 1984]{DEK}
   Knuth,~D.\,E. (1984) Literate programming. \emph{BCS Comput.\ J.}
   \textbf{27}~(2), 97--111 (May).
\bibitem[Danvy and Filinski 1992]{DO}
   Danvy, O. and Filinski, A. (1992) Representing control: a study
   of the CPS transformation \emph{Math. Struct. in Comp. Science}
   \textbf{2}~(2), 361--391.
\bibitem[Reynolds 1968]{JCR}
   Reynolds,~J.\,C. (1969) Transformation systems and the
   algebraic structure of atomic formulas. In B. Meltzer and
   D. Michie (editors), \emph{Machine Intelligence} \textbf{5}, 135--151.
   Edinburgh University Press.
\bibitem[Toyn \emph{et al.} 1987]{TDR}
   Toyn,~I., Dix,~A. and Runciman,~C. (1987) Performance
   polymorphism. In \emph{Functional Programming Languages and
   Computer Architecture, Lecture Notes in Computer Science}
   \textbf{274}, 325--346. Springer-Verlag.
\end{thebibliography}
\end{verbatim}

\fi

\newpage
\appendix

\label{lastpage}

\end{document}